\newcommand{\be}{\begin{equation}}
\newcommand{\ee}{\end{equation}}
\newtheorem{theorem}{Theorem}[section]
\newtheorem{lemma}[theorem]{Lemma}
\newtheorem{proposition}[theorem]{Proposition}   
\newtheorem{proposition and  definition}[theorem]{Proposition and definition}   
\newtheorem{definition}[theorem]{Definition}
\newtheorem{definition and theorem}[theorem]{Definition and
Theorem}
\newtheorem{remark}[theorem]{Remark}
\newtheorem{remarks}[theorem]{Remarks}
\newtheorem{*remark}[theorem]{$^* $Remark}
\newtheorem{*exercise}[theorem]{$^* $Exercise}
\newtheorem{**exercise}[theorem]{$^{** } $Exercise}
\newtheorem{examples}[theorem]{Examples}
\newtheorem{theorem-definition}[theorem]{Theorem and Definition}   
\newcommand{\bbs}{\mbox{\scriptsize $\mathbb{S } $} }
\title[Conditional expectations and causal interpretations]{Conditional expectations in Quantum Mechanics and causal interpretations: the Bohm momentum as a best predictor}   
\author{Raymond Brummelhuis}   
\address{
Laboratoire de Math\'ematiques de Reims CNRS - UMR9008, Universit\'e de Reims-Champagne-Ardenne, BP 1039, 51687 cedex 2, Reims, France \\   
\ \\   
e-mail: raymondus.brummelhuis@univ-reims.fr }   
\begin{document}   
   
   

\begin{abstract}   
   
Given a normalized state-vector $\psi $, we define the conditional expectation $\mathbb{E }_{\psi } (A | B ) $ of a Hermitian operator $A $ with respect to a strongly commuting family of self-adjoint operators $B $ as the best approximation, in the operator mean square norm associated to $\psi $, of $A $ by a real-valued function of $B . $ A fundamental example is the conditional expectation of the momentum operator $P $ given the position operator $X $, which is found to be the Bohm momentum. After developing the Bohm theory from this point of view we treat conditional expectations with respect to general $B $, which we apply to non-relativistic spin $1/2 $-particles modeled by the Pauli equation, conditioning with respect $X $ and to a component of the spin operator. We derive the dynamics of the conditional expectations of momentum and spin which can be interpreted in terms of a classical two-component fluid whose components carry intrinsic angular momentum.    
      
The joint spectrum of the conditioning operators $B $ can be interpreted as a space of beables. To arrive at a Bohm-style causal interpretation one additionally needs a particle dynamics on this  beable space which is compatible with the time evolution of the Born probability. This can be done for spin $1/2 - $particles by combining the de Broglie-Bohm guidance condition with J. Bell's idea of using Markov jump processes when the spectrum is discrete. A basic problem is that such auxiliary particle dynamics are far from unique.   
   
We finally examine the relation of $\mathbb{E }_{\psi } (A | B ) $ with conditional expectations of $C^* $-algebras when $B $ has discrete spectrum. As an application, we derive an evolution equation for $\mathbb{E }_{\psi } (A | B ) $ when $\psi $ satisfies an abstract Schr\"odinger equation $i d \psi / dt = H \psi $       
which takes a simple form when $A = H . $        
Two appendices re-interpret the classical Bohm model as an integrable constrained Hamiltonian system, and provide the details of the two-fluid interpretation.      
   
\end{abstract}   
   
\keywords{ABC }   
   
\maketitle

\section{\bf Introduction}   
   
The Bohm interpretation \cite{Bo1}, \cite{Bo2} associates to a solution     
$\psi = R e^{i S } $   
of the non-relativistic time-dependent Schr\"odinger equation $i \partial _t \psi = (- \Delta + V )\psi $ on $\mathbb{R }^n $   
a particle (or a  Gibbsian ensemble of particles) moving on the configuration space $\mathbb{R }^n $ with both a well-defined position and velocity or momentum. The probability density for the particle's position is given by the Born-rule, $|\psi |^2 = R^2 $, and its momentum is given by the {\it Bohm momentum} $p_B := \nabla S = {\rm Im } (\nabla \psi / \psi ) = {\rm Im } (\overline{\psi } \nabla \psi / |\psi |^2 ) . $ Bohm's motivation for interpreting $\nabla S $ as a momentum came from re-writing the Schr\"odinger equation as a coupled system of evolution equations for $R^2 $ and $S  $ and observing that the equation for $S $ has the form of a Hamilton-Jacobi equation for $S $ while the equation for $R^2 $ can be interpreted as the   
continuity equation for the density of a particle flow with velocity field $\nabla S . $   
Here, as elsewhere in this paper, we have set both the particle mass and the reduced Planck constant equal to 1. 
\medskip   
   
The starting point of the present paper is the observation that the Bohm momentum can be given a natural interpretation as the {\it conditional expectation of the momentum operator $P = i^{-1 } \nabla $ given the position operator $X $} (= multiplication by the position variable $x $), with an appropriate definition conditional expectation. In classical elementary probability theory, the definition of conditional probability is based upon the notion of joint probability, which is notoriously controversial in a quantum mechanical context. However,  as is well-known, there exists an alternative definition of conditional expectation as an orthogonal projection, or distance minimizing-map, and in this form it generalizes    
naturally within the framework of quantum mechanics formulated  in terms of operators on Hilbert spaces. Specifically, we can regard Hermitian operators $A $ as the quantum analogues of classical real-valued random variables. The wave function $\psi $     
supplies the probability measure, with the   
expectation of $A $ given by $\mathbb{E }_{\psi } (A ) := (A \psi , \psi ) $, where we introduce the notation $\mathbb{E }_{\psi } (\cdot ) $ to emphasize the analogy with classical probability.   
If we then define the conditional expectation of $P $ given $X $ as the   
{\it real-valued} function $f(X) $ of $X $ which minimizes $\mathbb{E }_{\psi } \left( (P - f(X) )^2 \right) = || \, (P - f(X) ) \psi \, ||^2 $, then an elementary computation (for which see the proof of theorem \ref{thm:CEofA|X} below) shows that the minimizing function $f $ is precisely the Bohm momentum: $f(x) = p_B (x) $ (assuming $\psi (x) \neq 0 $ a.e.). If we drop the requirement of $f $ having to be real, then it turns out that the solution of this minimization problem is given by the so-called {\it weak value} of $P $ in the state $\psi $, $P(\psi ) / \psi . $ 

The same construction can be applied to define the conditional expectation    
$\mathbb{E }_{\psi } (A | X ) $ with respect to $X $ of an arbitrary Hermitian operator $A $ having $\psi $ in its domain: see theorem 2.1. 
In particular, if we take $A $ to be the kinetic energy operator $\frac{1 }{2 } P^2 $, we find for example that Bohm's quantum potential is equal to $1/2 $ times the conditional variance of $P $ with respect to $X $, $\mathbb{E }_{\psi } (P^2 | X ) - (\mathbb{E }_{\psi } (P | X ) )^2 $: see proposition   
\ref{prop:CE_and_QP}. The quantum potential associated to a wave-function $\psi $ can therefore be seen as a measure of the fluctuation of the momentum operator around its conditional expectation in the  state $\psi $, giving it an operational meaning within the framework of orthodox quantum mechanics and making it more natural and less ad hoc than it is sometimes claimed to be (see for example \cite{Gol}). See Hiley \cite{Hi_WM2} for a related observation based on the Wigner-Moyal theory, where it should be noted that conditional expectations based on the Wigner-Moyal distribution will in general differ from our's: see the next paragraph.   
Since the conditional expectation is defined as a distance-minimizing object, the corresponding minimum distance becomes an interesting quantity to examine.     
One finds, for example, that for the Bohm momentum this minimum distance squared,   
$\mathbb{E }_{\psi } \left( (P - p_B (X) )^2 \right) $, is equal to $1/4 $-th of the Fisher information of the probability density $|\psi |^2 $, and is also equal to twice the expected value of the quantum potential: see section 2.1 and 2.2.   
   
In classical probability theory conditional expectations and joint probabilities are related, and we can use this to define a "joint probability density" of position and momentum operators $(X, P ) $, starting from the conditional expectations $\mathbb{E }_{\psi } (f(P) | X ) $ of real-valued functions of $P $, in particular, indicator functions.  This will not actually produce a classical probability density, but will give a quasi-probability density, in the sense that negative probabilities are possible. A well-known example of such a quasi-probability density is given by the Wigner distribution but the quasi-probability density we obtain from our conditional expectations is different from Wigner's. Conversely, starting with the Wigner distribution one can define a conditional expectations of $P $ and of functions of $P $    
by using the classical      
expression of a conditional expectation in terms of a joint probability density. Such conditional expectations of powers of the momentum operator were already considered by Moyal \cite{Mo} under the name of space conditional moments of the momentum: see also Hiley \cite{Hi_WM}, \cite{Hi_WM2}. For $P $ itself this will again    
give the Bohm momentum, but already for the kinetic energy operator $P^2 $ the answer is   
found to be different from $\mathbb{E }_{\psi } (P^2 | X ) . $ The advantage of the latter is that it can be generalized to general pairs of self-adjoint operators on Hilbert spaces: see below.     
   
The definition of the Bohm momentum as a conditional expectation is independent of any Schr\"odinger dynamics and, in a sense, is purely kinematical. In case the wave function $\psi $ is a solution of the non-relativistic time-dependent Schr\"odinger equation,   
we will show in section 4 that if $x(t) $ satisfies the de Broglie-Bohm guidance equation $dx / dt = p_B (x, t ) $, then the pair $(x(t), p_B (x(t), t ) ) $ is a solution of Hamilton's equations   
with a Hamiltonian in which the quantum potential is added to the classical Hamiltonian, establishing the connection with Bohm's original approach which was based on interpreting the PDE   
satisfied by the phase of the wave function as a Hamilton-Jacobi equation. We will more generally verify this for the Schr\"odinger equation for a particle in an electromagnetic field, with   
an appropriately modified guidance equation when a vector potential is present. The guidance equation can also be considered as a constraint on the Hamiltonian system, which is satisfied for all times iff it is satisfied at an initial time. As we note in Appendix A, as a set of constraints it makes the Hamiltonian system in extended phase space completely integrable when restricted to the constraint surface. This is natural, since we only have to integrate the guidance equation to determine the particle trajectories.         

Conditional expectations can be defined in the general axiomatic framework of orthodox Quantum mechanics. In section 5 we replace the position operator $X $ by an arbitrary commuting set of self-adjoint operators $B $, and use the spectral theorem to establish the existence and (essential) uniqueness of the conditional expectation $\mathbb{E }_{\psi } (A | B ) $ of an Hermitian operator $A $ with respect to $B $,   
which is again defined as the best approximation, in mean square sense, of $A $ by a real-valued function $f(B) $ of $B $, for a system in the state $\psi . $  If $B = P $, we find for example that $\mathbb{E }_{\psi } (X | P ) $ is the particle position of Epstein's (incomplete) momentum-based variant of the Bohm interpretation (incomplete, since there is no natural analogue of Bohm's guidance equation, except in special cases, such as a Schr\"odinger operator with quadratic potential). Another interesting example is that of spin, where we can consider the conditional expectations of two of the spin operators respect to the third. In this paper we limit ourselves to the spin $1/2 $ case.   
   
If $\sigma _1, \sigma _2 $ and $\sigma _3 $ are the Pauli-matrices, then the conditional expectations of the spin operators, $s_j := \mathbb{E }_{\psi } (\frac{1 }{2 } \sigma _j | \sigma _3 ) $ are    diagonal matrices or, equivalently, pairs     
$(s_{j, + } , s_{j, - } ) $   
or functions on ${\rm spec } (\sigma _3 ) = \{ \pm      
\} $, the spectrum of $\sigma _3 . $ The latter would be the analogue of the configuration space $\mathbb{R }^n = {\rm spec } (X) $ of the classical Bohm theory.   
If we consider the    
elements of this spectrum as the {\it beables} (or,     
in older terminology, as the hidden variables) of the model, then we can think of $\bbs _{\pm } = (s_{1, \pm } , s_{2, \pm } , s_{3, \pm } ) $ as the value of the spin-vector when the beable is $\pm $, similar to $p_B(x) $ being the momentum when the particle is located at $x . $  Here $s_{3, \pm } = \pm 1/2 $, the respective eigenvalue of $\frac{1 }{2 } \sigma _3 $, so the third component identifies with the beable. In section 6 we examine the dynamics of these      
conditional expectations $s_j $ for a fixed (non-moving, infinitely massive) spin-$1/2 $ particle in a possibly time-dependent magnetic field by deriving a system of differential equations for $\bbs _{\pm } := (s_{1, \pm } , s_{2, \pm } , s_{3, \pm } ) . $ The equations for $\bbs _+ $ and $\bbs _- $ decouple, and the two spin vectors evolve autonomously with the beable having at each instant $t $, a probability of $\frac{1 }{4 } || s_+ ||^{-2 } $ and $\frac{1 }{4 } || \bbs _- ||^{-2 } $ of being $+ \frac{1 }{2 } $ respectively $ - \frac{1 }{2 } . $ In fact, $\bbs _+ $ determines $\bbs _- $  and vice versa, and it suffices to solve the ODE for either one  (excepting possible singularities which may develop).      
There is at this point no talk yet of a causal mechanism  (deterministic or classically stochastic) for the change     
of $\bbs _+ $ to $\bbs _- $ or vice versa. In the Bohm interpretation, such a mechanism is provided by the guidance equation, which introduces a mouvement on the beable space ${\rm spec } (X ) = \mathbb{R }^n $, and which can be motivated by writing the evolution equation for the probability density of position as the continuity equation for a particle flow. This can no longer be done in the case of a spin $1/2 $ particle but we can, following an idea of Bell \cite{Be}, introduce a continuous time Markov chain process on $\{ \pm \} $ which is compatible with the time-evolution of the occupation probabilities of the $\pm $-states as given by the Born rule. The conditional expectation $\bbs := (s_1, s_2 , s_3 ) $ of quantum mechanical spin is then found to follow a stochastic differential equation driven by the Markov process. This necessitates the introduction of an additional force term which may be thought of as an analogue of Bohm's quantum force (the gradient of the quantum potential).   
   
One problem with the Bell approach is that the Markov chain process is not unique, a problem shared by {\it Bohmian mechanics}, which in essence retains the kinematics of the Bohm interpretation while putting aside the dynamical aspects involving the quantum potential: there are infinitely many deterministic (and also classical stochastic, cf. Bacciagaluppi \cite{Ba}) guidance equations which are compatible with the time evolution of the probability density $|\psi (x, t ) |^2 $ if $\psi $ is a solution of the non-relativistic Schr\"odinger equation with Hamiltonian $ - \frac{1 }{2 } \Delta + V $, but   
the de Broglie-Bohm guidance equation is singled out by the quantum Hamilton-Jacobi equation or, equivalently, by the dynamics of the Bohm momentum. See also the discussion following theorem 4.3 in section 4 below, where it is pointed that changing the guidance equation by adding a divergence-free vector field to the probability flux   
amounts to considering a particle in a magnetic field, which is not the natural thing to do if the quantum Hamiltonian does not already include a vector potential, thereby singling out the de Broglie-Bohm guidance equation in such cases. We did not succeed in finding a similar physical criterion for privileging one Markov process over the others in the spin-$1/2 $ model. Bell's specific choice in \cite{Be} corresponds to the Markov process for which the sum of the instantaneous transition probabilities from the $+ $-state to the $- $ state and from the $- $-state to the $+ $-state is as small as possible, which has the effect of, at each point in time, forbidding one of the two transitions (such processes are called minimal jump processes in \cite{DGTZ}). It is not clear, though, whether this translates into a criterion based on physical notions such as action, energy, or entropy.   
      
In section 7, we turn to non-relativistic spin $1/2 $ particles with finite mass (normalized to 1), whose wave function $\psi = (\psi _+ , \psi _- ) $ satisfies the Pauli equation with an electro-magnetic field.   
We condition with respect to the commuting set of operators $(X, \sigma _3 ) $, and derive two systems of equations for the conditional expectations $s_{j, \pm } (x, t ) $ of the spin operators (which are now also functions of $x $, not just of $t $) and for the (analogue of the) Bohm momentum, $\mathbb{E }_{\psi } (P | X , \sigma _3 ) $: cf. theorem \ref{eq:system_CE_Pauli} and theorems \ref{thm:Paulii_bis}, \ref{thm:Paulii_bis_b}.   
The Bohm momentum, like the $s_j $, is now a diagonal matrix-valued function or, equivalently, a pair of functions $(p_+ (x, t ), p_- (x, t ) ) $, which can be thought of\footnote{One should be slightly careful with such interpretations, which are based on a classical probabilistic intuition for conditional expectation, when random variables are functions on a sample space. In Quantum mechanics there is no such thing as a common sample space for all of the random variables, which are  the self-adjoint operators. There is one, though, for a maximal commuting family of self-adjoint operators, its joint spectrum, which here, for $(X, \sigma _3 ) $, is $\mathbb{R }^n \times \{ \pm \} . $} as the conditional expectation of the momentum when $\sigma _3 $ is in the up- or down state and the particle is located at $x $ at time $t . $ As we show in Appendix B, the system of theorems \ref{thm:Paulii_bis} and \ref{thm:Paulii_bis_b} can be interpreted in terms of a two-component polar fluid, generalizing Madelung's fluid-dynamical interpretation \cite{Ma} of the Schr\"odinger equation. Madelung fluids differ from classical Eulerian compressible fluids in that their Cauchy stress tensor is non-isotropic and furthermore depends on derivatives up to order two of the density, instead of just the density itself, making them special cases of so-called Korteweg fluids. This feature persists for the stress tensors of the two components of the two-fluid interpretation of the Pauli equation, to which are added couple stress tensors which govern the evolution of the total angular momentum, plus additional terms for the mass-, momentum- and angular momentum transfer between the components: details are given in Appendix B.   
      
It is also possible to give a Bohm-Bell type interpretation in terms of a particle moving on the joint spectrum $\mathbb{R }^3 \times \{ \pm \} $ of $(X, \sigma _3 ) $ with a velocity of $p_{\pm } - A $ on $\mathbb{R }^3 \times \{\pm \} $, where $A $ is the vector potential, while being allowed to jump between $\mathbb{R }^3 \times \{ + \} $ and $\mathbb{R }^3 \times \{ -\} $ with transition probabilities which are compatible with the time-evolution of the probability densities $|\psi _{\pm } |^2 $ of the spin up and -down states.  As for the pure-spin case of section 7, such an interpretation, while attractive for a purely   
kinematical causal description of quantum mechanics (as in Bohmian mechanics), complicates matters considerably as far as the dynamics of the spin-vector $\bbs _{\pm } $ is concerned, since the introduction of the jump process has to be compensated for by a stochastic force whose physical meaning is not transparent. This perhaps makes such a Bohm-Bell interpretation less attractive, from a physical point of view, than the two-fluid interpretation, at least at this point in time.   
   
Section 8 returns to general considerations, and discusses the relation of our conditional expectations with the conditional expectations of the theory of $C^* $-algebras, in particular the $C^* $-algebraic conditional expectation onto the von Neumann algebra generated by the conditioning operator $B . $ For technical reasons (depending on the definition of $C^* $-algebraic conditional expectation, the latter may not always exists if it is for example required to preserve a semi-finite state, or may not be unique if it isn't) we limit ourselves to finite quantum systems, for which the underlying Hilbert space is finite-dimensional, or, slightly more generally, to operators $B $ whose spectrum is discrete. As an application, we derive an evolution equation for $\mathbb{E }_{\psi } (A | B ) $ when $\psi $ satisfies an abstract Schr\"odinger equation $i d \psi / dt = H \psi $, which takes a simple form when $A = H . $

The final section concludes with some general observations, open problems and potentially interesting directions for future research.   
\medskip   
   
We end this introduction by going into some more detail regarding our definition of conditional expectation of operators and its classical probabilistic motivation. Let $(\Omega, \mathcal{F } , \mathbb{P } ) $ be a (classical) probability space, with   
$\mathcal{F } $ a $\sigma $-algebra of subsets of $\Omega $ and $\mathbb{P } $ a probability measure on $\mathcal{F } . $ If $X \in L^2 (\Omega , \mathcal{F } , \mathbb{P } ) $ is a square-integrable random variable (equivalently, a random variable having finite variance) and if $\mathcal{G } \subset \mathcal{F } $ is a sub-$\sigma $-algebra, then the conditional expectation $\mathbb{E } (X | \mathcal{G } ) $ can be defined as the orthogonal projection of $X $ onto the subspace $L^2 (\Omega , \mathcal{G } , \mathbb{P } ) $ of $\mathcal{G } $-measurable $L^2 $-functions. It is therefore the unique element of this subspace which minimises the $L^2 $-distance to $X . $ If $Y $ is another random variable, or vector of random variables $Y =   
(Y_1 , \ldots , Y_n ) $, then the conditional expectation of $X $ given $Y $ is, by definition the conditional expectation of $X $ with respect to the $\sigma $-algebra $\mathcal{F }_Y $ generated by $Y $: $\mathbb{E }_{\mathbb{P } } (X | Y ) := \mathbb{E }_{\mathbb{P } } (X | \mathcal{F }_Y . $ Now $\mathcal{F }_Y $-measurable functions are all of the form $f(Y ) $ with $f $ a Borel-measurable function,  
a result which is sometimes known as the Doob-Dynkin lemma. It follows that we can define $\mathbb{E }_{\mathbb{P } } (X | Y ) $ as the, essentially unique, function $f(Y) $ of $Y $ which minimizes $\mathbb{E }_{\mathbb{P } } \left( (X - f(Y) )^2 \right) $ over all Borel functions $f $ for which $f(Y)^2 $ has finite expectation.   
Using statistical terminology, the conditional expectation of $X $ given $Y $ is the {\it best predictor},     
with respect to the mean square error norm, of $X $ by a function of     
$Y $, and in this form it generalizes naturally to Quantum Mechanics.   
   
In Quantum Mechanics, the r\^ole of the random variables is played by the quantum mechanical observables, which are modeled by possibly unbounded self-adjoint operators $A $ on some Hilbert space $\mathcal{H } $, and that of the probability measure by a normalized vector $\psi \in \mathcal{H } $ (or more precisely, by an element of the  
projective space $P(\mathcal{H } ) = \mathcal{H } / \mathbb{C }^* $).      
The expectation of  $A $ in a state $\psi $ which is in the domain of $A $ is defined  
as   
\begin{equation}   
\mathbb{E }_{\psi } (A ) = ( A \psi , \psi ) ,   
\end{equation}   
where $( \cdot , \cdot ) $ is the inner product on $\mathcal{H } $; we will sometimes also use the more condensed notation   
$\langle A \rangle _{\psi } . $   
If $B $ is a self-adjoint operator, or $n $-tuple $B = (B_1 , \ldots , B_n ) $ of strongly commuting self-adjoint operators, we define the conditional expectation of a self-adjoint operator $A $ given $B $,    
relative to the normalized state $\psi $, as a function $f(B) $ of $B $ which minimizes $\mathbb{E }_{\psi } ( (A - f(B) )^2 ) = || (A - f(B) ) \psi ||^2 $, where the minimization is over real-valued  Borel measurable functions $f $, so that $f(B) $ is self-adjoint. Such a minimizer always exists by theorem \ref{thm:CE_general_A_B} below, and the conditional expectation is unique modulo functions of $B $ which have $\psi $ in their kernel. The conditional expectation can be seen as the best predictor, as in non-linear statistical regression, of a self-adjoint operator $A $ by a self-adjoint operator of the form $f(B) $ when the system is in the state $\psi . $ The value of the minimum, which is mostly non-zero, becomes a potentially interesting quantity in its own right.   
   
It is clear that the definition also makes sense for non-self-adjoint $A $, but it is then no longer natural to minimize over real-valued $f $'s only. If we allow complex-valued $f $'s, a minimizer still exists and is related to a so-called {\it weak value} of $A . $  The minimum can now always be 0, for example when $B $ is the position operator on $\mathbb{R }^n $ or, more generally, if the (joint) spectrum of the $B_j $'s is multiplicity free.     
   
We finally note that we can replace the pure state $\psi $ by  a density operator $\rho = \sum _{\nu } w_{\nu } | \psi _{\nu }\rangle \langle \psi _{\nu } | $ with $w_{\nu } \geq 0 $ and $\sum _{\nu } w_{\nu } = 1 $ and $\mathbb{E }_{\psi } (A) $ by $\mathbb{E }_{\rho } (A) := {\rm Tr } (\rho A ) $, and define conditional expectations $\mathbb{E }_{\rho } ( A | B ) $ with respect to $\rho . $ We will briefly examine this generalization in section \ref{section:CE_density_op} below, but will otherwise restrict ourselves to conditional expectations   
with respect to pure states.   
\medskip   
   
\section{\bf Conditional expectations with respect to the position operator}   
   
In this and the next two sections we will first study conditional expectations of operators on $L^2 (\mathbb{R }^n ) = L^2 (\mathbb{R }^n , dx ) $, with $dx $ the Lebesgue measure, in the special case when we condition with respect to the position operators $X = (X_1, \ldots , X_N ) $, and discuss the relationship with the Bohm interpretation. The general case, which uses the spectral theorem, will be treated in section 5.   
   
Let $\psi \in L^2 (\mathbb{R }^n ) $ be a normalized state, and let $A $ be a symmetric operator with $\psi \in {\rm Dom }(A) $, the domain of $A . $ We want to determine a real-valued measurable function $f $ on $ \mathbb{R }^n $ which best approximates $A $ in the state $\psi $ in the sense that $\mathbb{E }_{\psi } ( (A - f(X) )^2 ) = || \, (A - f(X) ) \psi \, ||^2 $ is as small as possible, while $\psi \in {\rm Dom } (f(X) ) . $  An elementary argument shows that this minimisation problem has a solution, which is unique modulo multiplication operators having $\psi $ in their kernel.   
    
\begin{theorem} \label{thm:CEofA|X} If   
$\psi $ is a normalized state in the domain of a symmetric operator $A $, then the minimization problem   
\begin{equation} \label{eq:min}   
\min \, \{ || \, (A - f(X) ) \psi \, ||^2  : \, f    
: \mathbb{R }^n \to \mathbb{R } \ \mbox{Borel measurable}, 
f \psi \in L^2 (\mathbb{R }^n ) \} .   
\end{equation}   
has as solution the function      
\begin{equation} \label{eq:CE1}   
f(x) =   
\frac{{\rm Re } \left( \overline{\psi (x) } A(\psi ) (x) \, \right) }{|\psi (x) |^2 } \mathbf{1 }_{ \{ \psi (x) \neq 0 \} }          
\end{equation} 
and any two minimizers differ by a    
function which is a.e. to $0 $ on $\{ x : \psi (x) \neq 0 \} . $ In particular, the minimizer is unique if     
$\psi (x) \neq 0 $ a.e.   
The minimum (\ref{eq:min}) is equal to   
\begin{equation} \label{CE_min_2}   
\int _{ \{ \psi = 0 \} } | A(\psi ) (x) |^2 dx + \int _{ \{ \psi \neq 0 \} } \left( {\rm Im } \, \left( \frac{A (\psi ) (x) }{\psi (x) } \right) \right)^2 |\psi (x) | ^2 \, dx .   
\end{equation}      
\end{theorem}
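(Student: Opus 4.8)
The plan is to reduce the variational problem to a pointwise (in $x$) optimization. First I would note that, since $\psi \in {\rm Dom}(A)$, the vector $A(\psi)$ is a genuine element of $L^2(\mathbb{R}^n)$, and that for real-valued Borel $f$ with $f\psi \in L^2$ the operator $f(X)$ acts by multiplication by $f$, so that
\begin{equation}
|| \, (A - f(X))\psi \, ||^2 = \int_{\mathbb{R}^n} | A(\psi)(x) - f(x)\psi(x) |^2 \, dx .
\end{equation}
Expanding the integrand and using that $f$ is real gives $|A(\psi)(x)|^2 - 2 f(x)\,{\rm Re}\,(\overline{\psi(x)}A(\psi)(x)) + f(x)^2 |\psi(x)|^2$, which for each fixed $x$ is a quadratic polynomial in the real number $f(x)$ with leading coefficient $|\psi(x)|^2 \geq 0$.

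Second, I would minimize this quadratic pointwise. On $\{\psi(x)\neq 0\}$ the leading coefficient is strictly positive, so the quadratic is strictly convex with unique minimizer $f(x) = {\rm Re}\,(\overline{\psi(x)}A(\psi)(x))/|\psi(x)|^2$, which is exactly \eqref{eq:CE1}, the resulting minimal value of the integrand being $|A(\psi)(x)|^2 - ({\rm Re}\,(\overline{\psi(x)}A(\psi)(x)))^2/|\psi(x)|^2$. On $\{\psi(x) = 0\}$ the integrand equals $|A(\psi)(x)|^2$ independently of $f(x)$, so $f$ is unconstrained there and we take it to be $0$, which accounts for the indicator $\mathbf{1}_{\{\psi\neq 0\}}$. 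The function $x\mapsto f(x)$ so defined is Borel measurable (being a ratio of measurable functions on the set where the denominator is nonzero) and satisfies $|f(x)\psi(x)| \leq |A(\psi)(x)|$, hence $f\psi\in L^2$, so it is an admissible competitor; and since no competitor can beat the pointwise minimum for a.e. $x$, integrating the pointwise minimal values shows that $f$ is a global minimizer and gives the value of the minimum. To match this with \eqref{CE_min_2} I would write $A(\psi)/\psi = u + iv$ on $\{\psi\neq 0\}$ with $u,v$ real; then ${\rm Re}\,(\overline{\psi}A(\psi)) = u|\psi|^2$ and $|A(\psi)|^2 = (u^2+v^2)|\psi|^2$, so the pointwise minimum collapses to $v^2|\psi|^2 = ({\rm Im}\,(A(\psi)/\psi))^2|\psi|^2$, producing the second integral in \eqref{CE_min_2}, while the contribution of $\{\psi = 0\}$ is the first integral.

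Finally, for the uniqueness statement I would argue that any minimizer $g$ must, on $\{\psi\neq 0\}$, minimize the strictly convex pointwise quadratic for a.e. $x$, whence $g = f$ a.e. there; on $\{\psi = 0\}$ the difference $g - f$ is an arbitrary measurable function (finite a.e.), i.e.\ $g-f$ vanishes a.e.\ on $\{\psi\neq 0\}$, which is precisely the assertion that $(g-f)(X)$ is a multiplication operator having $\psi$ in its kernel; in particular $g = f$ a.e.\ when $\psi \neq 0$ a.e. I do not anticipate a serious obstacle: the only points requiring a little care are the measurability of the minimizer (immediate from that of $\psi$ and of $A(\psi)$) and the justification that pointwise minimization of a nonnegative integrand yields the minimum of the integral, which is the standard interchange argument, valid here precisely because the pointwise minimizer can be chosen measurably.
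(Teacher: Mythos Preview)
Your proof is correct and follows essentially the same approach as the paper: expand the squared norm as an integral, recognize the integrand as a pointwise quadratic in $f(x)$, and minimize pointwise (the paper phrases this as completing the square). You are in fact a bit more careful than the paper in verifying measurability of the minimizer and checking that $f\psi\in L^2$, which the paper leaves implicit.
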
   
   
\noindent {\it Proof}. Since $f $ is real-valued we have   
\begin{eqnarray*}   
&&|| \, (A - f(X) ) \psi \, ||^2 - || A \psi ||^2 \\      
&=& \int _{\mathbb{R }^n } \, \left( f(x)^2 |\psi (x) |^2 - 2 f(x) \, {\rm Re } \, \overline{\psi } (x) A(\psi )(x) \right) \mathbf{1 }_{\psi \neq 0 } \, dx \\   
&=& \int  _{\mathbb{R }^n } \, \left( f(x) - \frac{{\rm Re } \, A(\psi ) (x) \overline{\psi (x) } }{|\psi (x) |^2 } \right)^2 |\psi (x) |^2 \mathbf{1 }_{\psi \neq 0 } \, dx - \int _{\mathbb{R }^n } \, \frac{ \left( {\rm Re } \, A(\psi ) (x) )\overline{\psi (x) } \right)^2 }{|\psi (x) |^2 } \mathbf{1 }_{\psi \neq 0 } \, dx .   
\end{eqnarray*}   
Clearly, $f = | \psi |^{- 2 } {\rm Re } \left( \overline{\psi } A(\psi ) \right) \cdot \mathbf{1 }_{\psi \neq 0 }$ minimises this expression, and the minimiser is unique a.e. with respect to $|\psi (x) |^2 dx . $   
The minimum   
is equal to   
\begin{eqnarray*}   
&&\int _{\mathbb{R }^n } | A (\psi ) |^2 dx  - \int _{\{ \psi \neq 0 \} } \left ( {\rm Re } \frac{A (\psi ) }{\psi } \right )^2 |\psi |^2 dx \\   
&=& \int _{ \{ \psi = 0 \} } |A(\psi ) |^2 dx + \int _{ \{ \psi \neq 0 \} } \left( \left | \frac{A(\psi ) }{\psi } \right |^2 - \left ( {\rm Re } \frac{A (\psi ) }{\psi } \right )^2 \right) |\psi |^2 dx ,   
\end{eqnarray*}   
which is (\ref{CE_min_2}).   
\hfill $\square $   
   
\medskip   
   
\begin{definition} \label{def:CE} \rm{We define the {\it conditional expectation $\mathbb{E }_{\psi } (A \ X ) $ of a self-adjoint or symmetric operator $A $ in the state $\psi \in {\rm Dom }(A) $ with respect to the position operator $X $} as the operator of multiplication       
\begin{equation} \label{eq:CE2}   
\mathbb{E }_{\psi } (A | X ) := f (X) ,         
\end{equation}   
with $f $ given by (\ref{eq:CE1}).   
}   
\end{definition}   
   
We will call (\ref{CE_min_2}) the (mean square) {\it prediction error} (of $A $ by a function of $X $ in the state $\psi $). The terminology is motivated by non-linear statistical regression: if one wants to predict the values of a (classical) random variable $Y $, considered as the dependent variable, on the basis of observed values of another random variable $Z $, the optimal predictor, in mean-squared sense, is precisely the conditional expectation $\mathbb{E }_{\mathbb{P } } (Y | Z ) $, and the prediction error is $\mathbb{E }_{\mathbb{P } } \bigl ( Y - \mathbb{E }_{\mathbb{P } } (Y | Z ) )^2 \bigr ) = \mathbb{E }_{\mathbb{P } } (Y^2 ) - \mathbb{E }_{\mathbb{P } } \bigl ( \mathbb{E }_{\mathbb{P } } (Y | Z )^2 \bigr ) . $ Similarly, the prediction error ((\ref{CE_min_2}) equals $\mathbb{E }_{\psi } (A ^2 ) - \mathbb{E }_{\psi } \left( \mathbb{E }_{\psi } (A | X ) ^2 \right) $, by the first line of its proof above.       
\medskip

The conditional expectation of a self-adjoint operator is again a self-adjoint operator (with     
its natural, maximal, domain     
as a multiplication operator), much as the conditional expectation of a random variable in classical probability is a random variable. Distinguishing between $f $ and the multiplication operator $f(X) $ may seem a bit pedantic, but    
the operator point of view becomes important 
when conditioning with respect to arbitrary Hermitian operators in section 5 below,   
for example when relating the conditional expectations with respect to operators which are linked by a unitary transformation. This being said, in this section and the next two, we will usually confound the two, when no confusion is possible, and simply consider $\mathbb{E }_{\psi } (A | X ) $ to be the function defined by (\ref{eq:CE1}).   
\medskip   
   
The function $\mathbb{E }_{\psi } (A | X ) $ is defined for any operator $A $, not necessarily symmetric, and will still minimize $|| (\, A - f(X) ) \psi \, ||^2 $, which we can (subject to suitable domain conditions on $\psi $ and $A \psi $) interpret as $\mathbb{E }_{\psi } ( | A - f(X) |^2 ) $, where for any operator $B $, $|B |^2 := B^* B . $ We will however not interpret it as a conditional expectation if $A $ is not Hermitian, since it then no longer is natural to restrict the minimization to real-valued $f $'s . If we allow complex-valued $f $, the minimum of $ || \, (A - f(X) ) \psi \, ||^2 $ is 0, with minimizer      
$$   
f(x) = \frac{A(\psi ) (x) }{\psi (x) } \mathbf{1 }_{ \{ \psi \neq 0 \} } .   
$$   
Using physicists' bra-ket notation (which by convention is linear in the second argument),   
\begin{equation} \label{eq:weak_value}   
\frac{A (\psi ) (x) }{\psi (x) } = \frac{ \langle x | A | \psi \rangle }{\langle x |\psi \rangle } ,   
\end{equation}   
which is the so-called {\it weak value  of $A $} in the pre-selected state $|\psi \rangle $ and post-selected state $| x \rangle = \delta _x $, introduced by Aharonov, Albert and Vaidman \cite{AAV}. When $A $ is Hermitian, its real part is $\mathbb{E }_{\psi } (A | X ) . $ The expectation of its imaginary part equals the prediction error (\ref{CE_min_2}). If $A $ is the momentum operator, it is rminus  Nelson's osmotic velocity \cite{Ne} and if $A $ is the free Schr\"odinger operator, it is related to the minus divergence of the probability flux: see (\ref{eq:WV_P}) and (\ref{eq:WV_H_0}) below. For arbitrary $A $ it can be written as the conditional expectation of $i [A , \Pi _{\psi } ] $, where $\Pi _{\psi } $ is the orthogonal projection onto (the subspace generated by) $\psi $:   
   
\begin{proposition} If $A $ is Hermitian,   
\begin{equation} \label{eq:im_WV}   
{\rm Im } \left( \frac{A (\psi ) }{\psi } \right) = - \mathbb{E }_{\psi } ( \, i [A, \Pi _{\psi } ] \, | \, X \, )   
\end{equation}   
\end{proposition}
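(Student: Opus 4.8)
The plan is to prove (\ref{eq:im_WV}) by a direct computation, feeding the single vector $i[A,\Pi_\psi]\psi$ into the explicit formula (\ref{eq:CE1}) for the conditional expectation with respect to $X$; no genuine analysis is needed beyond checking that Theorem~\ref{thm:CEofA|X} applies.

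First I would record how $\Pi_\psi$ acts on the two vectors that occur. Writing $\Pi_\psi\varphi = (\varphi,\psi)\psi$ and using that $\psi$ is normalized, one has $\Pi_\psi\psi=\psi$ and $\Pi_\psi(A\psi)=(A\psi,\psi)\psi=\langle A\rangle_\psi\,\psi$, where $\langle A\rangle_\psi=(A\psi,\psi)\in\mathbb{R}$ because $A$ is Hermitian. Since $\Pi_\psi$ is bounded with range contained in $\mathrm{Dom}(A)$ (it is spanned by $\psi\in\mathrm{Dom}(A)$), the commutator $[A,\Pi_\psi]$ is defined on all of $\mathrm{Dom}(A)$; it is anti-symmetric there, so $i[A,\Pi_\psi]$ is a symmetric operator having $\psi$ in its domain, and Theorem~\ref{thm:CEofA|X} applies to it. Applying it to $\psi$ collapses to
\[
i[A,\Pi_\psi]\,\psi \;=\; i\big(A\Pi_\psi\psi-\Pi_\psi A\psi\big) \;=\; i\big(A\psi-\langle A\rangle_\psi\,\psi\big).
\]

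Next I would substitute this into (\ref{eq:CE1}), with $i[A,\Pi_\psi]$ in the role of $A$: on $\{\psi\neq0\}$,
\[
\mathbb{E}_\psi\big(i[A,\Pi_\psi]\,\big|\,X\big)(x) \;=\; \frac{{\rm Re}\big(\overline{\psi(x)}\,i(A\psi(x)-\langle A\rangle_\psi\,\psi(x))\big)}{|\psi(x)|^2} \;=\; \frac{{\rm Re}\big(i\,\overline{\psi(x)}A\psi(x)\big)-\langle A\rangle_\psi\,{\rm Re}\big(i\,|\psi(x)|^2\big)}{|\psi(x)|^2}.
\]
The second term in the numerator vanishes because $\langle A\rangle_\psi|\psi(x)|^2$ is real, and the elementary identity ${\rm Re}(iz)=-{\rm Im}(z)$ turns the first term into $-{\rm Im}\big(\overline{\psi(x)}A\psi(x)\big)$. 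Dividing by $|\psi(x)|^2$ and using $\overline{\psi}A\psi/|\psi|^2=A\psi/\psi$ on $\{\psi\neq0\}$ gives $\mathbb{E}_\psi\big(i[A,\Pi_\psi]\,|\,X\big)=-{\rm Im}\big(A\psi/\psi\big)$ there, which is (\ref{eq:im_WV}) (both sides carrying the implicit factor $\mathbf{1}_{\{\psi\neq0\}}$).

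I do not expect any real obstacle: the statement is essentially a one-line consequence of (\ref{eq:CE1}). The only points deserving a word of care are the bookkeeping that $i[A,\Pi_\psi]$ is genuinely symmetric on $\mathrm{Dom}(A)$ so that Theorem~\ref{thm:CEofA|X} is legitimately invoked, and the conceptual observation that makes the identity work — that the ``diagonal'' contribution $\Pi_\psi A\psi=\langle A\rangle_\psi\psi$ is a \emph{real} multiple of $\psi$ and hence disappears upon taking the imaginary part, so that only the off-diagonal piece $A\psi$ survives.
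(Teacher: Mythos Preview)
Your proof is correct and follows essentially the same route as the paper: compute $i[A,\Pi_\psi]\psi = i(A\psi-\langle A\rangle_\psi\psi)$, plug into (\ref{eq:CE1}), and drop the $\langle A\rangle_\psi$-term using that it is real. You are slightly more explicit than the paper about the domain/symmetry bookkeeping for $i[A,\Pi_\psi]$, but the argument is the same.
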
   
   
\begin{proof} Since $\psi $ is, by assumption,  normalized, $\Pi _{\psi } (\chi ) = (\chi , \psi ) \psi $, and $i [ A , \Pi _{\psi } ] $ sends $\chi $ to $i \left( (\chi , \psi ) A (\psi ) - (A \chi , \psi ) \psi \right) . $ In particular, assuming that $\psi (x) \neq 0 $ a.e. (if not, multiply by the indicator function of $\{ \psi \neq 0 \} $),   
$$   
\mathbb{E }_{\psi } \left( i [ A , \Pi _{\psi } ] | X \right) = {\rm Re } \left( \frac{i A \psi - i (A \psi , \psi ) \psi }{\psi } \right) = {\rm Re } \left( i \frac{A \psi }{\psi } \right) ,   
$$   
since $(A \psi , \psi ) \in \mathbb{R } . $   
   
\end{proof}   
       
The linear functional $A \to \mathbb{E }_{\psi } (A | X ) $ is $\mathbb{R } $-linear and, when restricted to the real vector space of symmetric operators having $\psi $ in their domain, satisfies the iterated expectations property         
\begin{equation} \label{eq:iterated_exp}   
\mathbb{E }_{\psi } ( \mathbb{E }_{\psi } (A | X ) ) = \mathbb{E }_{\psi }(A ) .       
\end{equation}     
It is possible to find a $\mathbb{C } $-linear extension to the 
space of all operators which still satisfies (\ref{eq:iterated_exp}): see subsection \ref{section:CE_density_op} below.   
   
\begin{remarks} \rm{(i) Holland \cite{Ho} calls ${\rm Re } ( \overline{\psi (x) } A(\psi ) (x) )/ |\psi (x) |^2 $ the {\it local expectation value} of $A $ in the state $\psi $ in the position representation, and observes that it equals the Bohm momentum if $A $ is the momentum operator $P $, and gives rise to the quantum potential if $A = \frac{1 }{2 } P^2 $, the free (classical) energy operator: see below.        
Theorem \ref{thm:CEofA|X} provides an interpretation for this local expectation value in terms of the standard formalism of Quantum mechanics,  and suggests the value of the minimum (\ref{CE_min_2}) as a potentially interesting new quantity to be computed and, if possible, physically interpreted. We will do so below for the momentum and kinetic energy operators.     
\medskip   
   
\noindent (ii) If we define the variance of a symmetric operator $B $ in the (normalized) state $\psi $ by ${\rm Var }_{\psi } (B) := \mathbb{E }_{\psi } (B^2 ) - \mathbb{E }_{\psi } (B) ^2 = (B \psi , B \psi ) - (B \psi , \psi )^2 $, then the minimum (\ref{CE_min_2}) can also be written as        
$$   
|| A \psi ||^2 - \mathbb{E }_{\psi } \left( \mathbb{E }_{\psi } (A | X ) ^2 \right) = {\rm Var }_{\psi } (A) - {\rm V	ar }_{\psi } \bigl ( \mathbb{E }_{\psi } (A | X ) \bigr ) ,   
$$   
we used (\ref{eq:iterated_exp}). In particular, since the minimum is positive or zero, the map $A \to \mathbb{E }_{\psi } (A | X ) $ is variance-reducing. If $A $ is {\it dispersion-free} in the state $\psi $, in the sense that $\mathbb{E }_{\psi } (A^2 ) = \mathbb{E }_{\psi } (A )^2 $, then $\psi $ will then be an eigenvector of $A $, by the case of equality of Cauchy-Schwarz, and $\mathbb{E }_{\psi } (A | X ) = \lambda $ will be a constant. Another potentially interesting quantity to consider is the conditional variance,   
$$   
\mathbb{E }_{\psi } (A^2 | X ) - \mathbb{E }_{\psi } (A | X )^2 .   
$$   
If $A = P $ this turns out to be the Bohm-de Broglie quantum potential: see section \ref{subsection:QP} below.   
\medskip   
   
\noindent (iii) The analogy with classical conditional expectations is not perfect: $A \geq 0 $ does not imply that $\mathbb{E }_{\psi } (A | X ) \geq 0 . $ Similarly, the analogue of $\mathbb{E }_{\mathbb{P } } ( f(Z) Y | Z ) = f(Z) \mathbb{E }_{\mathbb{P } } (Y | Z ) $ for ordinary random variables $Y $ and $Z $ and bounded measurable $f $, is no longer true: while we do have that $E_{\psi } (f(X) A ) = f \cdot E_{\psi } (A) $ for bounded real-valued $f $, right-multiplication of $A $ with $f(X) $ is more complicated: the best we can say is that if $f(X) \psi \in {\rm Dom } (A) $ and $f(x) \neq 0 $ a.e. with respect to $|\psi |^2 dx $, then   
$E_{\psi } (A f(X) ) = f \cdot E_{f(X) \psi } (A ) . $ In particular (remembering that we only speak of conditional expectations of Hermitian operators)    
$$   
\mathbb{E }_{\psi } ( f(X) A f(X) | X ) = f(X)^2 \, \mathbb{E }_{f(X) \psi } (A | X )   
$$   
for such $f . $    
}   
\end{remarks}   
   
\subsection{The Bohm momentum} \label{section:Bohm_momentum} If $P_j = \frac{1 }{i } \nabla _j $, the $j $-th component of the momentum operator on $\mathbb{R }^n $ with $\nabla _j = \partial / \partial x_j $, and if we write the wave function in complex polar coordinates, $\psi = R e^{iS } $, where we will from now on assume for simplicity that $\psi (x) \neq 0 $ a.e., then  
\begin{equation} \label{eq:CE:p_B}   
E_{\psi } (P_j ) (x) =      
{\rm Re } \left( \frac{ P_j (\psi )(x) }{\psi (x) } \right) = {\rm Im } \left( \frac{\nabla _j  \psi (x) }{\psi (x) } \right) = \nabla _j S (x) ,   
\end{equation}   
which is the $j $-th component of the {\it Bohm momentum}   
$$   
p_B (x) := p_{B, \psi } (x) := \nabla S (x) ,        
$$   
introduced by Bohm \cite{Bo1, Bo2}. It follows that $\mathbb{E }_{\psi } (P | X ) = p_B (X) $ (as vectors of multiplication operators), and that $p_{B, \psi } (X) $ is the best approximation, in mean-square sense, of the momentum operator by a multiplication operator  
for a system   
in a quantum state $\psi . $  This gives a new perspective on the Bohm momentum and on the causal interpretation of Quantum mechanics, one of whose main building blocks is seen to have a natural interpretation within the standard formalism of Quantum mechanics.   
\medskip   
   
The minimum (\ref{eq:min}) when $A = P_j $ is given by   
\begin{equation} \label{eq:min_P_bis}   
\int _{\mathbb{R } } \left( {\rm Re } \frac{\nabla _j \psi (x) }{\psi (x) } \right)^2 | \psi (x) |^2 \, dx = \int _{\mathbb{R } } \, (\nabla _j R (x) )^2 \,  dx ,   
\end{equation}   
where $\psi = R e^{iS } $ as before (note for example that $|\psi |^{-2 } {\rm Re } \overline{\psi } \nabla \psi = \frac{1 }{2 } |\psi |^{-2 } \nabla |\psi |^2 = R^{-1 } \nabla R $). It may be interesting to note that the right hand side is related to the {\it Fisher information} of the probability density function     
$$   
\rho _{\psi } (x) :=  |\psi (x) |^2 = R(x)^2 ,   
$$   
which the Born interpretation associates to the wave function, assumed to be normalized. This is clearest when $n = 1 $: the Fisher information for the location parameter of a univariate probability density function\footnote{In fact, the Fisher information is introduced in the context of parametrized statistics as the variance of the score function of a parametrized pdf $\rho (x, \theta ) $ with parameter $\theta $: $I(\theta ) = \int (\partial _{\theta } \log \rho (x, \theta ) )^2 \rho (x, \theta ) dx $, the expected value of the score as computed with the probability density  $\rho (x, \theta ) $ being $0 . $ In case $\theta $ is a location parameter, meaning that $\rho (x, \theta ) = \rho (x - \theta ) $, we find (\ref{eq:Fisher_2}).} $\rho = \rho (x) $ is defined as        
\begin{equation} \label{eq:Fisher_2}   
I _{F } (\rho ) := \int _{\mathbb{R } } \left( \frac{d }{dx } \log \rho (x) \right)^2 \rho (x) dx .      
\end{equation}   
Since $I_F (\rho _{\psi } ) = I_F (R^2 ) = 4 \int (R' )^2 dx $, we have      
\begin{equation} \label{eq:min_P}   
\min \left \{ \left( (P - f(X) )^2 \psi , \psi \right) : \, f : \mathbb{R } \to \mathbb{R } \ {\rm Borel },   
\psi \in D(f(X) ) \right \} = \frac{1 }{4 } I_F (\rho _{\psi } ) .   
\end{equation}   
where $P = i^{-1 } d / dx . $ If $n > 1 $, then $\rho = \rho_{\psi } $ is a multivariate probability density, whose Fisher information is now defined as the variance-covariance matrix of $\nabla (\log \rho ) $ with respect to $\rho $,       
\begin{equation} \label{eq:I_F}   
I_F (\rho ) = \left( \int _{\mathbb{R }^n } ( \nabla _j \log \rho ) (\nabla _k \log \rho ) \rho   
\, dx \right) _{1 \leq j, k \leq n } ,    
\end{equation}   
noting that the expectation $\int \nabla _j \log \rho ) \rho dx = \int \nabla _j \rho dx = 0 $ (assuming $\rho $ tends to 0 at infinity).       
It follows that if $A = P_j $ and $\psi (x) \neq 0 $ a.e.,    
the minimum (\ref{eq:min}) is equal to   
\begin{equation} \label{eq:min_P}   
\min _f || (P_j - f(X) ) \psi ||^2 = \frac{1 }{4 } I_F (\rho _{\psi } )_{jj} .   
\end{equation}   
We will see below that the Fisher information is also proportional to the expected value of the quantum potential.   
\medskip   
   
Recall that the {\it osmotic momentum} of the wave-function $\psi $ is defined as   
\begin{equation} \nonumber   
p_O := p_{O, \psi } := \nabla R / R ;       
\end{equation}   
 cf. Nelson \cite{Ne}, Bohm and Hiley \cite{BH_osmotic}.   
Since $ {\rm Im } (P\psi / \psi ) =  - {\rm Re } (\nabla \psi / \psi ) = - \frac{1 }{2 } ( \overline{\psi } \nabla \psi + \psi \nabla \overline{\psi } ) / |\psi |^2 = - \frac{1 }{2 } \nabla \log |\psi |^2 $, we see that the weak value (\ref{eq:weak_value}) of $P $ is given by   
\begin{equation} \label{eq:WV_P}   
\frac{P (\psi ) }{\psi } = p_{B, \psi }  - i p_{O, \psi } ,   
\end{equation}   
cf. Flack and Hiley \cite{FH}. Note that the minimum (\ref{eq:min_P_bis}) can also be interpreted as $\mathbb{E }_{\psi } (p_{O, j } ^2 ) . $   
      
\subsection{Conditional expectation of kinetic energy and the quantum potential} \label{subsection:QP}   
The conditional expectation $\mathbb{E } _{\psi } (H_0 | X ) $ of the free Hamiltonian $H_0 = \frac{1 }{2 } \sum _j P_j ^2 = - \frac{1 }{2 } \Delta $ is given by multiplication by   
\begin{equation}    
E_{\psi }(H_0 ) (x) =   
- {\rm Re } \left( \, \frac{\Delta \psi }{2 \psi } \, \right) = \frac{1 }{2 } (\sum _j \nabla _j S )^2 - \frac{\Delta R  }{2 R } ,         
\end{equation}   
where $\psi = R e^{iS } . $ We recognize the first term on the right as the square of   
the Bohm momentum, $p_B ^2 := p_B \cdot p_B $, the dot representing the euclidian inner product on $\mathbb{R }^n $, while the final term is the de Broglie-Bohm {\it quantum potential}:   
\begin{equation} \label{eq:QP}   
Q(x) := Q_{\psi } (x) := - \frac{\Delta R(x) }{ 2 R(x) } .   
\end{equation}   
Recall that we are      
assuming that $\psi (x) \neq 0 $ a.e., otherwise we would multiply by the indicator function of $\{ x :  \psi (x) \neq 0 \} . $ With this notation, $\mathbb{E }_{\psi } ( H_0 | X ) $, which we will call the {\it Bohm energy}, can be identified with (the operator of multiplication by) $\frac{1 }{2 } p_B (\psi )^2 + Q_{\psi } . $ This can also be stated as follows:   
      
\begin{proposition} \label{prop:CE_and_QP} The (operator of multiplication by the) quantum potential  $Q_{\psi } $ is equal to 1/2 times the conditional variance of $P $ given $X $:   
\begin{equation} \label{eq:conditional_variance}   
Q_{\psi } (X) = \frac{1 }{2 } \left( \mathbb{E }_{\psi } (P^2 | X ) - \mathbb{E }_{\psi } (P | X ) ^2  \right) ,      
\end{equation}   
where for any vector $V $, $V^2 := V \cdot V $, the Euclidean inner product of $V $ with itself.      
\end{proposition}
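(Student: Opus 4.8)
The plan is to read the identity off the two conditional expectations already computed in this subsection, using nothing more than the $\mathbb{R}$-linearity of $A\mapsto\mathbb{E}_\psi(A|X)$ recorded around (\ref{eq:iterated_exp}). Recall that $\mathbb{E}_\psi(P|X)=p_B(X)=(\nabla S)(X)$ by (\ref{eq:CE:p_B}), and that the Bohm energy was found to be $\mathbb{E}_\psi(H_0|X)=\tfrac12 p_B^2+Q_\psi$ as a multiplication operator, where $H_0=\tfrac12\sum_j P_j^2$ and $\psi=Re^{iS}$ with $\psi(x)\neq0$ a.e.

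First I would unwind the convention $P^2:=P\cdot P=\sum_j P_j^2$, so that $P^2=2H_0$, and hence, by $\mathbb{R}$-linearity of the conditional expectation, $\mathbb{E}_\psi(P^2|X)=2\,\mathbb{E}_\psi(H_0|X)=p_B^2+2Q_\psi$, all equalities being between multiplication operators on $L^2(\mathbb{R}^n)$ under the standing domain hypothesis $\psi\in\mathrm{Dom}(P_j^2)$ for each $j$, which guarantees that these conditional expectations are defined. Second, $\mathbb{E}_\psi(P|X)^2=(\nabla S)\cdot(\nabla S)=p_B^2$. Subtracting the two displays gives $\mathbb{E}_\psi(P^2|X)-\mathbb{E}_\psi(P|X)^2=2Q_\psi$, which is (\ref{eq:conditional_variance}).

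To make the argument self-contained one can instead compute $\mathbb{E}_\psi(P_j^2|X)$ directly from (\ref{eq:CE1}) with $A=P_j^2=-\partial_j^2$: writing $\psi=Re^{iS}$ and expanding $\partial_j^2\psi$, the real part of $-\overline{\psi}\,\partial_j^2\psi$ equals $R^2(\partial_j S)^2-R\,\partial_j^2 R$, so dividing by $|\psi|^2=R^2$ and summing over $j$ gives $\mathbb{E}_\psi(P^2|X)=|\nabla S|^2-\Delta R/R=p_B^2+2Q_\psi$, and one concludes as before.

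There is essentially no obstacle here: the proposition is a corollary of the computations of $\mathbb{E}_\psi(P|X)$ and $\mathbb{E}_\psi(H_0|X)$ already performed. The only point requiring care is domain bookkeeping — one must assume $\psi$ lies in the domain of each $P_j^2$ so that $\mathbb{E}_\psi(P^2|X)$, and therefore the conditional variance, is well defined — together with the standing convention $\psi(x)\neq0$ a.e.; without the latter all identities acquire the factor $\mathbf{1}_{\{\psi\neq0\}}$ and $Q_\psi$ is only defined off the zero set of $\psi$.
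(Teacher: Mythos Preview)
Your proof is correct and follows exactly the paper's approach: the proposition is stated immediately after the computation $\mathbb{E}_\psi(H_0|X)=\tfrac12 p_B^2+Q_\psi$ with the words ``This can also be stated as follows,'' so the paper treats it as a direct restatement, and your argument via $P^2=2H_0$ and $\mathbb{R}$-linearity is precisely what is intended.
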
   
\noindent The quantum potential can therefore be interpreted as the conditional variance of the momentum operator with respect to the position operators.    
\medskip   
   
We note the following relation between Fisher information and the expected value of $Q_{\psi } (X) $, cf. \cite{Re}.       
   
\begin{proposition} \label{FI_EVQP}   
\begin{equation}   
\mathbb{E }_{\psi } (Q_{\psi } (X) ) = \frac{1 }{8 } {\rm Tr } \bigl( I_F (\rho _{\psi } ) \bigr)   
\end{equation}   
\end{proposition}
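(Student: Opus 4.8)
The plan is to compute both sides explicitly in terms of the polar decomposition $\psi = R e^{iS}$ and see that they agree. First I would recall from Proposition \ref{prop:CE_and_QP} (or directly from the formula for $\mathbb{E}_\psi(H_0\mid X)$ preceding it) that the quantum potential is
\[
Q_\psi(x) = -\frac{\Delta R(x)}{2R(x)},
\]
so that
\[
\mathbb{E}_\psi\bigl(Q_\psi(X)\bigr) = \int_{\mathbb{R}^n} Q_\psi(x)\,|\psi(x)|^2\,dx = -\frac{1}{2}\int_{\mathbb{R}^n} R(x)\,\Delta R(x)\,dx,
\]
using $|\psi|^2 = R^2$. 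Next I would integrate by parts — this is the one analytic step that needs a word of justification, namely that the boundary terms vanish, which holds under the standing decay assumptions on $\psi$ (equivalently $R \to 0$ at infinity, as already invoked in the derivation of (\ref{eq:I_F})), together with $R \in H^1$. Integration by parts turns $-\int R\,\Delta R\,dx$ into $\int |\nabla R|^2\,dx = \sum_j \int (\nabla_j R)^2\,dx$, so that
\[
\mathbb{E}_\psi\bigl(Q_\psi(X)\bigr) = \frac{1}{2}\int_{\mathbb{R}^n} |\nabla R(x)|^2\,dx.
\]

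It then remains to identify the right-hand side. By (\ref{eq:I_F}), the $(j,j)$ entry of the Fisher information matrix is $\int (\nabla_j \log \rho_\psi)^2\,\rho_\psi\,dx$; since $\rho_\psi = R^2$ we have $\nabla_j \log\rho_\psi = 2\nabla_j R / R$, hence $(\nabla_j\log\rho_\psi)^2\rho_\psi = 4(\nabla_j R)^2$, and therefore
\[
\operatorname{Tr}\bigl(I_F(\rho_\psi)\bigr) = \sum_j \int_{\mathbb{R}^n} 4\,(\nabla_j R)^2\,dx = 4\int_{\mathbb{R}^n} |\nabla R|^2\,dx.
\]
Comparing the two displays gives $\mathbb{E}_\psi(Q_\psi(X)) = \tfrac18 \operatorname{Tr}(I_F(\rho_\psi))$, as claimed. (Alternatively, one can bypass the trace computation by summing (\ref{eq:min_P}) over $j$ and combining with the identity $\sum_j \int(\nabla_j R)^2\,dx = -\int R\,\Delta R\,dx = 2\,\mathbb{E}_\psi(Q_\psi(X))$, which recovers the stated relation together with the remark that the Fisher information is proportional to the expected value of the quantum potential.)

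The only genuine obstacle is the integration by parts: one must ensure that $R\,\nabla R$ has no flux at infinity and that $R$ is regular enough for $\Delta R$ to make sense distributionally and pair with $R$. Under the hypotheses already in force in this section — $\psi$ nowhere zero a.e., $\psi$ (hence $R$) in the form domain of $-\Delta$, and $R\to 0$ at infinity — this is routine, and for smooth rapidly decaying $\psi$ it is immediate. Everything else is bookkeeping with the polar decomposition.
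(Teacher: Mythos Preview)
Your proof is correct and follows essentially the same route as the paper: compute $\mathbb{E}_\psi(Q_\psi(X)) = -\tfrac12\int R\,\Delta R\,dx$, integrate by parts to $\tfrac12\int|\nabla R|^2\,dx$, and identify $\operatorname{Tr}(I_F(\rho_\psi)) = 4\int|\nabla R|^2\,dx$ via $\rho_\psi = R^2$. The paper is terser about the justification of the integration by parts (it simply notes it is valid if $\psi$ lies in the domain of $H_0$), but the argument is the same.
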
   
   
\begin{proof} Recalling that $R = |\psi | $, an integration by parts (justified if $\psi $ is in the domain of $H_0 $) shows that,   
$$   
(Q _{\psi } \psi , \psi ) = - \frac{1 }{2 } \int _{\mathbb{R }^n } R \Delta R \, dx = \frac{1 }{2 } \int _{\mathbb{R }^n } (\nabla R )^2 \, dx .   
$$   
On the other hand,   
$$   
I_F (\psi )_{jj } = \int _{\mathbb{R }^n } \left( \frac{\nabla _j R^2 }{R^2 } \right)^2 R^2 \, dx = 4 \int _{\mathbb{R }^n } (\nabla _j R )^2 \, dx ,      
$$   
and summation over $j $ finishes the proof.   
\end{proof}   
\medskip   
   
More generally, one can relate the conditional correlations of the components of $P $ to the matrix elements of the Fisher information matrix (\ref{eq:I_F}): if we define the conditional covariance of $P_j $ and $P_k $ by   
\begin{equation}   
{\rm cov }_{\psi } (P_j , P_k | X ) := \mathbb{E }_{\psi } (P_j P_k ) - \mathbb{E }_{\psi } (P_j | X ) \mathbb{E }_{\psi } (P_k | X )   
\end{equation}   
then we have:      
   
\begin{proposition}   
\begin{equation}   
\mathbb{E }_{\psi } \left( \, {\rm cov }_{\psi } (P_j , P_k | X ) \, \right) = \frac{1 }{4 } I_F (\rho _{\psi } )_{jk }   
\end{equation}   
\end{proposition}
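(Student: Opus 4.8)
The plan is to follow the pattern of the proof of Proposition \ref{FI_EVQP}: apply the (unconditional) expectation $\mathbb{E}_{\psi}$ to the defining identity ${\rm cov}_{\psi}(P_j,P_k\,|\,X) = \mathbb{E}_{\psi}(P_jP_k\,|\,X) - \mathbb{E}_{\psi}(P_j\,|\,X)\,\mathbb{E}_{\psi}(P_k\,|\,X)$ and reduce both terms to integrals of $R$ and $\nabla S$, where $\psi = Re^{iS}$ (recall we assume $\psi\neq 0$ a.e.). The second term is immediate: by (\ref{eq:CE:p_B}) one has $\mathbb{E}_{\psi}(P_j\,|\,X) = \nabla_j S$, so by (\ref{eq:iterated_exp}) (or just by integrating against $|\psi|^2\,dx$) $\mathbb{E}_{\psi}\bigl(\mathbb{E}_{\psi}(P_j\,|\,X)\,\mathbb{E}_{\psi}(P_k\,|\,X)\bigr) = \int_{\mathbb{R}^n}(\nabla_j S)(\nabla_k S)\,R^2\,dx$.

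For the first term, note that since the momenta commute $P_jP_k$ is symmetric, so $\mathbb{E}_{\psi}(P_jP_k\,|\,X)$ is a genuine conditional expectation and (\ref{eq:iterated_exp}) applies. Concretely I would use the explicit formula (\ref{eq:CE1}) with $A = P_jP_k$, for which $A(\psi) = -\partial_j\partial_k\psi$; integrating the resulting multiplication operator against $|\psi|^2\,dx$ gives $\mathbb{E}_{\psi}\bigl(\mathbb{E}_{\psi}(P_jP_k\,|\,X)\bigr) = -{\rm Re}\int_{\mathbb{R}^n}\overline{\psi}\,\partial_j\partial_k\psi\,dx$, which after one integration by parts equals ${\rm Re}\int_{\mathbb{R}^n}(\partial_j\overline{\psi})(\partial_k\psi)\,dx$. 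Substituting $\psi = Re^{iS}$ and expanding, the real part of $(\partial_j\overline{\psi})(\partial_k\psi) = (\partial_j R - iR\,\partial_j S)(\partial_k R + iR\,\partial_k S)$ is $(\partial_j R)(\partial_k R) + R^2(\partial_j S)(\partial_k S)$, hence $\mathbb{E}_{\psi}\bigl(\mathbb{E}_{\psi}(P_jP_k\,|\,X)\bigr) = \int_{\mathbb{R}^n}(\partial_j R)(\partial_k R)\,dx + \int_{\mathbb{R}^n}R^2(\partial_j S)(\partial_k S)\,dx$.

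Subtracting the two contributions, the $\nabla S$ terms cancel and $\mathbb{E}_{\psi}\bigl({\rm cov}_{\psi}(P_j,P_k\,|\,X)\bigr) = \int_{\mathbb{R}^n}(\partial_j R)(\partial_k R)\,dx$. It then remains to match this with $\tfrac14 I_F(\rho_{\psi})_{jk}$: from (\ref{eq:I_F}), using $\rho_{\psi} = R^2$ and thus $\nabla_j\log\rho_{\psi} = 2\,\partial_j R/R$, one gets $I_F(\rho_{\psi})_{jk} = \int_{\mathbb{R}^n}(2\partial_j R/R)(2\partial_k R/R)\,R^2\,dx = 4\int_{\mathbb{R}^n}(\partial_j R)(\partial_k R)\,dx$, which completes the proof; for $j=k$ and summing over $j$ this recovers (\ref{eq:min_P}) and Proposition \ref{FI_EVQP}. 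There is no serious obstacle here — the only point to be careful about is the justification of the integration by parts, for which I would assume, as in Proposition \ref{FI_EVQP}, that $\psi\in{\rm Dom}(H_0)$ (so $R\in H^1$ and the pointwise identities hold a.e.), and the handling of possible zeros of $\psi$, which is taken care of by the usual indicator-function modification in (\ref{eq:CE1}) and contributes nothing extra since $\overline{\psi}\,\partial_j\partial_k\psi$ vanishes on $\{\psi = 0\}$.
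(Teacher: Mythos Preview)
Your proof is correct and follows essentially the same computation as the paper (polar decomposition plus one integration by parts, then identification with the Fisher matrix). The only organizational difference is that the paper first computes the conditional covariance \emph{pointwise}, obtaining ${\rm cov}_{\psi}(P_j,P_k\,|\,X) = -\nabla_{jk}^2 R/R$, and then integrates against $R^2\,dx$, whereas you take the unconditional expectation of each term separately and never isolate this local formula; the paper's ordering therefore yields a slightly stronger intermediate identity, but the final argument is the same.
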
   
   
\begin{proof} By direct computation, $\mathbb{E }_{\psi } (P_j P_k | X ) $ equals (multiplication by) ${\rm Re } \Bigl( - \nabla _{j k }^2 \psi / \psi \Bigr) = - \nabla _{jk }^2 R / R - \nabla _j S \, \nabla _k S $ which shows that ${\rm cov }_{\psi } (P_j , P_k | X ) = - \nabla _{jk } ^2 R / R . $ Its expectation (in the state $\psi $) equals, after integration by parts,    
$$   
- \int _{\mathbb{R }^n } (\nabla _{jk } ^2 R ) R dx = \int _{\mathbb{R }^n } \nabla _j R \nabla _k R dx = \frac{1 }{4 } I_F (\rho _{\psi } )_{jk } ,   
$$   
on substituting $\rho _{\psi } = R^2 $ in (\ref{eq:I_F}).   
   
\end{proof}   
   
We finally compute the weak value of $H_0 $ and the value of the minimum (\ref{eq:min}) when $A = H_0 . $      
   
\begin{proposition} \label{prop:CE_H_0}   Let $\psi $ be in the domain of $H_0 $ with   
$\psi (x) \neq 0 $ a.e., and let $p_B := p_{B, \psi } $, the Bohm-momentum,    
$p_O := p_{O, \psi } $, the osmotic momentum, and $\rho = \rho _{\psi } = |\psi |^2 $, the probability density for position.   
If      
\begin{equation} \nonumber   
h_{0, B } := \frac{1 }{2 } p_B ^2 + Q_{\psi } ,   
\end{equation}   
the Bohm energy of a free Schr\"odinger particle in the state $\psi $, then the weak value of $H_0 $ equals   
\begin{eqnarray} \label{eq:WV_H_0}   
\frac{H_0 \psi }{\psi } &=& h_{0, B } - \frac{i }{2 } \rho ^{-1 } \nabla \cdot (\rho p_B ) \\   
&=&h_{0, B } - \frac{i }{2 } \nabla \cdot p_B - i p_B \cdot p_O \nonumber   
\end{eqnarray}
and the prediciton error 
\begin{equation} \label{eq:min_H_0}   
\min _{f \, {\rm real } } \, \mathbb{E }_{\psi } \left( \left( H_0 - f(X) \right)^2 \right) =   
\frac{1 }{4 } \int _{\mathbb{R }^n } \rho ^{-1 } ( \nabla \cdot \rho p_B )^2 \, dx   
\end{equation}   
\end{proposition}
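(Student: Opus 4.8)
The plan is to reduce everything to a direct computation with the complex polar representation $\psi = R e^{iS}$ (legitimate since $\psi \neq 0$ a.e.\ by hypothesis) and then to feed the resulting expression into Theorem~\ref{thm:CEofA|X}.

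First I would record, wherever $\psi\neq 0$, the identities
\[
\frac{\nabla\psi}{\psi} = \frac{\nabla R}{R} + i\,\nabla S = p_O + i p_B , \qquad \frac{\Delta\psi}{\psi} = \bigl(\nabla\cdot p_O + p_O^2 - p_B^2\bigr) + i\bigl(\nabla\cdot p_B + 2\,p_O\cdot p_B\bigr),
\]
the second being obtained by differentiating the first once more, the dot and the squares referring throughout to the (bilinear) Euclidean product. Multiplying by $-\tfrac12$ gives $H_0\psi/\psi$. For the real part I would use $\nabla\cdot p_O + p_O^2 = \nabla\cdot(\nabla R/R) + |\nabla R/R|^2 = \Delta R/R$, so that, recalling $Q_\psi = -\Delta R/(2R)$ from~(\ref{eq:QP}),
\[
\mathrm{Re}\,\frac{H_0\psi}{\psi} = \tfrac12 p_B^2 - \tfrac12\,\frac{\Delta R}{R} = \tfrac12 p_B^2 + Q_\psi = h_{0,B}.
\]
For the imaginary part, $\mathrm{Im}\,(H_0\psi/\psi) = -\tfrac12(\nabla\cdot p_B + 2\,p_O\cdot p_B)$, and since $\nabla\log\rho = \nabla\log R^2 = 2\,\nabla R/R = 2 p_O$ we have $\rho^{-1}\nabla\cdot(\rho p_B) = \nabla\cdot p_B + (\nabla\log\rho)\cdot p_B = \nabla\cdot p_B + 2\,p_O\cdot p_B$; hence $\mathrm{Im}\,(H_0\psi/\psi) = -\tfrac12\,\rho^{-1}\nabla\cdot(\rho p_B)$. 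Putting the two parts together yields both displayed forms of~(\ref{eq:WV_H_0}).

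For the prediction error, since $\psi\neq 0$ a.e.\ the first integral in~(\ref{CE_min_2}) vanishes, so Theorem~\ref{thm:CEofA|X} gives $\min_f\|(H_0-f(X))\psi\|^2 = \int\bigl(\mathrm{Im}\,(H_0\psi/\psi)\bigr)^2|\psi|^2\,dx$. Substituting the expression just obtained and using $|\psi|^2 = \rho$ turns this into $\tfrac14\int\rho^{-2}(\nabla\cdot(\rho p_B))^2\rho\,dx = \tfrac14\int\rho^{-1}(\nabla\cdot(\rho p_B))^2\,dx$, which is~(\ref{eq:min_H_0}).

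I do not expect a serious obstacle: the computation is short and elementary. The only points that need care are the regularity/domain issues hidden in the hypothesis ``$\psi$ in the domain of $H_0$'' --- one should check that $R$, $p_B$, $p_O$ are regular enough for the termwise differentiation above to be justified (interpreting the derivatives distributionally suffices), and that $\nabla\cdot(\rho p_B)\in L^2(\rho^{-1}dx)$, which is automatic once $H_0\psi\in L^2$ --- together with keeping straight the distinction between the bilinear Euclidean product appearing in $p_B^2$, $(p_O+ip_B)^2$, etc.\ and the Hermitian inner product on $L^2(\mathbb{R}^n)$. No nontrivial estimate is required.
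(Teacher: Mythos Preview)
Your argument is correct. The computation via $\nabla\psi/\psi = p_O + ip_B$ and its divergence is exactly the alternative route the paper itself mentions in Remark~\ref{remark:section3}(ii). The paper's own proof is slightly different: the real part $h_{0,B}$ having already been identified earlier in the section, the proof focuses only on the imaginary part and obtains it via the probability-current identity $\mathrm{Im}(\overline{\psi}\,\Delta\psi) = \nabla\cdot j_\psi$ with $j_\psi = \mathrm{Im}(\overline{\psi}\,\nabla\psi) = \rho\,p_B$, rather than by expanding $\Delta\psi/\psi$ from the polar form. Your approach is self-contained and recovers the real part as well; the paper's is a line shorter because it recycles what was already computed. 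Either way the prediction-error formula follows immediately from~(\ref{CE_min_2}).
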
   
   
\begin{proof}Equation (\ref{eq:WV_H_0}) follows from   
$$   
{\rm Im } \frac{H_0 \psi }{\psi } = - \frac{1 }{4i } \frac{\overline{\psi } \Delta \psi - \psi \overline{\Delta \psi } }{|\psi |^2 } =     
- \frac{1 }{2 } \frac{\nabla \cdot j_{\psi } }{|\psi |^2 } ,   
$$   
where    
$j_{\psi }    
= \frac{1 }{2 i } \left( \overline{\psi } \nabla \psi - \psi \nabla \overline{\psi } \right)  
= {\rm Im } \, \overline{\psi } \nabla \psi $, the probability current, together with the relations         
$p_B (\psi ) = \rho _{\psi } ^{-1 } j_{\psi } $ and $\rho ^{-1 } \nabla \rho = 2 p_O $, and (\ref{eq:min_H_0}) is then a consequence of Theorem \ref{thm:CEofA|X}.   
\end{proof}    
   
\begin{remarks} \label{remark:section3} \rm{(i) If the wave-function $\psi = \psi (x, t ) $ satisfies the time-dependend Schr\"odinger equation,   
$$   
i \partial _t \psi = H_0 \psi + V \psi ,   
$$   
with real-valued potential $V $, then   
$\partial _t |\psi |^2 + \nabla_x j_{\psi } = 0 $, 
and  the minimum (\ref{eq:min_H_0}) at time $t $ can also be written as      
$$   
\frac{1 }{4 } \int \left( \partial _t \rho _{\psi } \right) ^2 \, \rho _{\psi } ^{-1 } \,  dx ,   
$$   
where $\rho _{\psi } = |\psi |^2 $ as before. The integral is the expected value,      
with respect to $\rho _{\psi } $, of   
$(\partial _t \log \rho _{\psi (\cdot , t ) } )^2 $ which, somewhat curiously, is the Fisher information   
of $\rho _{\psi } (x, t ) $ when the latter is interpreted as  a parametrized family of probability densities with time $t $ as the parameter.  It is not clear to the author how to interpret this in physical terms.   
See Frieden \cite{Frie} for Fisher information as a unifying concept for physics, and Lavis and Streatham \cite{LS} for a critical review.   
\medskip   
   
\noindent (ii) Another way to derive (\ref{eq:WV_H_0}) is to take the real and imaginary parts of the identity   
$$   
\nabla \cdot \left( \frac{\nabla \psi }{\psi } \right) = \frac{\Delta \psi }{\psi } - \frac{\nabla \psi }{\psi } \cdot \frac{\nabla \psi }{\psi } ,   
$$   
and then to use (\ref{eq:WV_P}). This gives an alternative expression for the quantum potential in terms of the osmotic momentum $p_O $:   
\begin{equation} \label{eq:QP_p_O}   
Q_{\psi } = - \frac{1 }{2 } \bigl ( \nabla \cdot p_O + p_O ^2 \bigr ) ,   
\end{equation}   
a formula going back to Wyatt and Bittner \cite{WB} if we substitute $p_O = \nabla \log |\psi | $; see also Sbitnev \cite{Sbi}. This can of course also be verified directly from the polar decomposition of the wave function. Another expression for the quantum potential which is sometimes useful is      
\begin{equation}   
Q_{\psi } = \frac{1 }{2 } \left( \frac{\nabla R }{R } \right) ^2 - \frac{1 }{4 } \frac{\Delta \rho }{\rho } = \frac{1 }{8 } \left( \frac{ \nabla \rho }{\rho } \right) ^2 - \frac{1 }{4 } \frac{\Delta \rho }{\rho } ,   
\end{equation}   
as follows for example by calculating $\Delta \rho / \rho = \nabla \cdot \nabla R^2 / R^2 . $ Note that the first term on the right corresponds to the kinetic energy associated to the osmotic momentum.   
}   
\end{remarks}

We next consider some natural extensions of the theory.   
   
\subsection{Vector-valued wave functions} One easily checks that the first 
formula of (\ref{eq:CE1}) for the conditional expectation remains valid  
for operators $A $ acting vector-valued wave functions $\psi = (\psi _1 , \ldots , \psi _N  ) \in      
L^2 (\mathbb{R } , \mathbb{C }^N ) $, with its standard inner product, if we replace $\overline{\psi } A (\psi ) $ by $\psi ^* A(\psi ) = \sum _{\nu } \overline{\psi }_{\nu } A(\psi )_{\nu } $ and $|\psi |^2 $ by $\psi ^* \psi = \sum _{\nu } \overline{\psi }_{\nu } \psi _{\nu } $, where $\psi ^* = \psi (x)^* $ is the conjugate transpose of $\psi (x) \in \mathbb{C }^N . $ Concretely, if $A $ is a Hermitian operator on $L^2 (\mathbb{R }^n ; \mathbb{C }^N ) $, then $\mathbb{E }_{\psi } (A | X ) $ is the operator of multiplication by the function    
\begin{equation}   
E_{\psi } (A ) (x) := \frac{ {\rm Re } \, \psi (x)^* A(\psi )(x) }{\psi (x) ^* \psi (x)  } \cdot I 
\end{equation}   
where $I $ is the $N \times N $ identity matrix. The minimum is now easily found to be   
\begin{equation} \label{eq:min_vector}   
\min _f \mathbb{E }_{\psi } ( (A - f(X) )^2 ) = \int _{\mathbb{R }^n } \left( \frac{ ( {\rm Im } \, \psi ^* A(\psi ) )^2 }{ \psi ^* \psi } +  A (\psi ) ^* A(\psi ) - \frac{ |  \psi ^* A (\psi ) |^2 }{\psi ^* \psi } \right) \, dx ,   
\end{equation}   
($f $ ranging as before over real-valued Borel functions on $\mathbb{R }^n $).   
Compared to the scalar case, there are two additional terms in the integrand on the right ,whose net-contribution is point-wise positive because of the Cauchy-Schwartz inequality on $\mathbb{C }^N . $ We therefore have  an inequality   
\begin{equation} \label{eq:min_vector_lower_bound}   
\inf _f \mathbb{E }_{\psi } \left( (A - f(X) )^2 \right) \geq \int _{\mathbb{R }^n } \frac{ ( {\rm Im } \, \psi ^* A(\psi ) )^2 }{ \psi ^* \psi } dx ,   
\end{equation}   
where for $N = 1 $ we had an equality. If $A = P _j = i^{-1 } \nabla _j $ on $L^2 (\mathbb{R } ; \mathbb{C }^N ) $ and $\psi = (\psi _1 , \ldots , \psi _N ) $, then   
\begin{equation} \label{eq:BM_Pauli}   
E_{\psi } (P_j ) (x) = \frac{ {\rm Im } \sum _{\nu } \overline{\psi }_{\nu } \nabla _j \psi _{\nu } }{\sum _{\nu } \overline{\psi }_{\nu } \psi _{\nu } } ,    
\end{equation}   
which for $N = 2 $ is the familiar expression of the Bohm momentum in the causal interpretation of the Pauli theory of a non-relativistic particle with spin \cite{BST_A}. Its motivation in that paper, as for the ordinary Schr\"odinger equation \cite{Bo1}, came from the continuity equation for the probability density (the right hand side of (\ref{eq:BM_Pauli}) is the probability flux divided by the density) which is a consequence of quantum {\it dynamics}, in the form of the time-dependent Pauli equation.  Here the dynamics has not yet played a r\^ole and the definition of $E_{\psi } (P) $ as the best approximation of $P $ by multiplication operators is, in a sense, purely kinematical.   
   
If $A = P_j $, then ${\rm Im } \, \psi ^* P_j \psi = - {\rm Re } \, \psi ^* \nabla _j \psi = - \frac{1 }{2 } \nabla _j \psi ^* \psi $, and the right hand side of (\ref{eq:min_vector_lower_bound}) equals $\frac{1 }{4 } I_F (\rho _{\psi } )_{jj } $, where $I_F (\rho _{\psi } ) $ is the Fisher information matrix of $\rho _{\psi } = \psi ^* \psi = \sum | \psi _{\nu } |^2 $, but this is now only a lower bound.      
\medskip   
   
In the vector-valued case we can also enlarge the set of commuting operators with respect to which we condition, for example by adding a number of commuting orthogonal projections $\Pi _{\nu } $ of the target space $\mathbb{C }^N . $  
If we take for the $\Pi _{\nu } $'s the orthogonal projections onto the vectors of the standard basis of $\mathbb{C }^N $, then functions of $(X , \Pi _1 , \ldots , \Pi _N ) $ are operators of multiplication by diagonal matrix functions    
$$   
\begin{pmatrix} f_1 (x) &0 &\cdots &0 \\   
0 &f_2 (x) &\cdots &0 \\   
\vdots &\vdots &\ddots &0 \\   
0 &0 &\cdots &f_N (x)   
\end{pmatrix}   
$$   
with $f_1 , \ldots , f_N $ Borel, and one finds that the best approximation $\mathbb{E }_{\psi } (A | X , \Pi _1 , ... \Pi _N ) $ of $A $ by such multiplication operators in the state $\psi $ is given by multiplication by the diagonal matrix function       
\begin{equation}   
\begin{pmatrix} {\rm Re } (A(\psi )_1 / \psi _1 ) &0 &\cdots &0 \\   
0 &{\rm Re } ( A(\psi )_2  / \psi _2 ) &\cdots &0 \\   
\vdots &\vdots &\ddots &0 \\   
0 &0 &\cdots &{\rm Re } ( A(\psi )_N  / \psi _N )      
\end{pmatrix}   
\end{equation}   
where $A(\psi )_j = A(\psi )_j (x) $ are the components of $A(\psi ) $, $j = 1 , \ldots , N . $ See section 5 below for such conditional expectations in a more general framework and sections 6 and 7 for applications to spin and to the Pauli operator.   
   
\subsection{Conditional expectation with respect to a mixed state} \label{section:CE_density_op} Let $\rho = \sum _{\nu } w_{\nu } \Pi _{\psi _{\nu } } $ be a mixed state, with $\Pi _{\nu } $ the orthogonal projection onto the (normalized) state $\psi _{\nu } $ and $w_{\nu } \in [0, 1 ] $ summing to 1 ($\rho = \sum _{\nu } w_{\nu } | \psi _{\nu } \rangle \langle \psi _{\nu } | $ in physicist's notation), and let 
\begin{equation}   
\mathbb{E }_{\rho } (A) := \sum _{\nu } w_{\nu } ( A (\psi _{\nu } ) , \psi _{\nu } )    
\end{equation}   
the associated expectation value. If, as before, we look for a function $f(X) $ of the position variable $X $, with $f $ real-valued Borel-measurable, which minimizes $\mathbb{E }_{\rho } ( (A - f(X) )^2 ) $, then one easily finds that the minimum is attained for   
\begin{equation} \label{eq:CE_mixed_state}
f(x) = \frac{ \sum _{\nu } w_{\nu } {\rm Re } \, \overline{\psi _{\nu } (x) } A(\psi _{\nu } )(x)  }{\sum _{\nu } w_{\nu } |\psi _{\nu } (x) |^2 } ,   
\end{equation}   
an expression found by Maroney \cite{Ma} for the case of the momentum operator $A = P $ using Brown and Hiley's approach to the Bohm theory. (We again assume for simplicity that  the denominator never vanishes, if not (\ref{eq:CE_mixed_state}) has to be multiplied by the indicator function of where the denominator is non-zero, which then singles out a particular minimizer, as before.) The expression for the minimum is less clean than for a pure state: it is equal to the integral over $\mathbb{R }^n $ of  the function   
\begin{eqnarray*}   
&&\sum _{\nu } w_{\nu } | A(\psi _{\nu } ) |^2 - \frac{\left( \sum _{\nu } w_{\nu } {\rm Re } \, \overline{\psi _{\nu } }  A(\psi _{\nu } ) \right)^2 }{\sum _{\nu } w_{\nu } |\psi _{\nu } | ^2 } \\   
&=&      
\frac{\left( \sum _{\nu } w_{\nu } {\rm Im } \, \overline{\psi _{\nu } } A(\psi _{\nu } ) \right)^2 }{\sum _{\nu } w_{\nu } |\psi _{\nu } |^2 } + \sum _{\nu } w_{\nu } | A(\psi _{\nu } |^2 - \frac{\left | \sum _{\nu } w_{\nu } \, \overline{\psi _{\nu } } A(\psi _{\nu } ) \right |^2 }{\sum _{\nu } w_{\nu } |\psi _{\nu } |^2 }   
\end{eqnarray*}   
Note that if $\rho = \sum _{\nu = 1 } ^N w_{\nu } | \psi _{\nu } \rangle \langle \psi _{\nu } | $ is of finite rank, then these formulas are the same as for a vector-valued wave-function $(\psi _1 , \ldots , \psi _N ) \in L^2 (\mathbb{R }^n, \mathbb{C }^N ) $, with  inner product $(\psi , \varphi ) = \int \sum _{\nu } w_{\nu } \psi _{\nu } \overline{\psi }_{\nu } \, dx $ and operator $A \otimes I_{\mathbb{C }^N } = A \oplus \cdots \oplus A $ acting   
diagonally. In particular, if $A = P_j $, then the minimum can again be bounded from below by 1/4-th times the corresponding diagonal element of the Fisher information matrix of the (in general multivariate) density $\sum _{\nu } w_{\nu } | \psi _{\nu } (x) |^2 . $     
   
\subsection{Conditional expectations of non-Hermitian operators} \label{subsection:C-linear_CE} As already noted, the minimization problem (\ref{eq:min}) also has solution (\ref{eq:CE1}) for non-Hermitian $A $, but approximating $A $ by real-valued functions $f(X) $ of $X $ 
seems less natural then. Besides,  the resulting conditional expectation fails to be complex linear in $A $,  and the law of iterated expectation will not hold if $A $ is not Hermitian. As we have seen, allowing complex-valued $f $'s leads to weak values.   
   
Another possible extension of conditional expectations is to write
$A = {\rm Re } (A) + i \, {\rm Im } (A)  $ with ${\rm Re } (A) :=  \frac{1 }{2 } (A + A^* ) $ and ${\rm Im } (A)= \frac{1 }{2i } (A - A^* )  $    
and to define   
$$   
\mathbb{E }^c _{\psi } (A | X ) := \mathbb{E }_{\psi } (\, {\rm Re } (A) \, | X ) + i \, \mathbb{E }_{\psi } (\, {\rm Im } (A) \, | X ) ,   
$$   
which is the operator of multiplication by   
\begin{eqnarray*}   
&&{\rm Re } \left( \frac{A \psi + A^* \psi }{2 \psi } \right) + i \, {\rm Re } \left( \frac{A \psi - A^* \psi }{2 i \psi } \right) = \frac{1 }{2 } \left( \frac{A \psi }{\psi } + \overline{ \frac{A^* \psi }{\psi } } \right) = \frac{ (A \psi ) \overline{\psi } + \psi (\overline{A^* \psi } ) }{2 |\psi |^2 } .   
\end{eqnarray*}   
The application $A \to \mathbb{E }^c _{\psi } (A) $ is $\mathbb{C } $-linear and we now do have the iterated expectation property $\mathbb{E }_{\psi } (\mathbb{E } ^c _{\psi } ( A | X ) ) = \mathbb{E }_{\psi } (A ) . $   
Note that to compute it we need to know both $A (\psi ) $ and $A^* (\psi ) . $   
   
\section{\bf Conditional expectations and quasiprobability distributions for $(X, P ) . $ }   
   
In classical probability there is a simple connection between the conditional expectation of a random variable $Y $ given another random variable $X $ and the joint distribution of $Y $ and $X $: if $I $ and $J $ are two Borel-measurable subsets of $\mathbb{R } $, or of $\mathbb{R }^n $ if $X $ and $Y $ are vector-valued, then $\mathbb{P } (X \in I , Y \in J ) = \mathbb{E }_{\mathbb{P } } \left( \mathbf{1 }_I (X) \mathbb{E }_{\mathbb{P } }  (\mathbf{1 }_J (Y ) | X ) \right) $, where $\mathbf{1 }_E $ denotes the indicator function of a set $E $: $\mathbf{1 }_E (x) = 1 $ of $x \in E $ and 0 if not. Taking this at face-value, one can try and {\it define} a "joint probability" of position variable $X $ and momentum $P $ for a quantum mechanical system in a state $\psi \in L^2 (\mathbb{R }^n ) $ by the analogous expression
\begin{equation} \label{eq:jp_1}   
\mathbb{P }_{\psi } (X \in I , P \in J ) := \mathbb{E }_{\psi } \left( \mathbf{1 }_I (X) \, \mathbb{E }_{\psi } (\mathbf{1 }_J (P ) | X ) \right) .      
\end{equation}   
$\mathbb{E }_{\psi } (\mathbf{1 }_J (P ) | X ) $ is not necessarily a bounded operator, but if we note that for bounded real functions $f $ and bounded symmetric operator $A $, 
$$   
\mathbb{E } _{\psi } \bigl( f(X) \mathbb{E }_{\psi } ( A | X ) \bigr) = \int _{\mathbb{R }^n } f \, {\rm Re } (\overline{\psi } A(\psi ) ) dx = {\rm Re } (A \psi , f(X) \psi ) ,   
$$   
we see that (\ref{eq:jp_1}) is well-defined for arbitrary wave-functions $\psi \in L^2 = L^2 (\mathbb{R }^n ) $, and also equal to   
\begin{eqnarray} \nonumber      
\mathbb{P }_{\psi } (X \in I , P \in J )   
&=& \left( \frac{1 }{2 } ( \mathbf{1 }_I (X) \mathbf{1 }_J (P ) + \mathbf{1 }_J (P ) \mathbf{1 }_I (X) ) \psi , \psi \right) \nonumber \\   
&=& \mathbb{E }_{\psi } \left( \{ \mathbf{1 }_I (X) , \mathbf{1 }_J (P ) \} \right) , \label{eq:joint_proba} 
\end{eqnarray}   
where $\{ A , B \} := \frac{1 }{2 } ( AB + BA ) $, the Jordan product of two operators $A $ and $B . $ This will however not give a classical probability measure in the sense of the Kolmogorov axioms: first of all, although real-valued, $\mathbb{P }_{\psi } (X \in I , P \in J ) $ can    
be negative, in which case one usually speaks of a {\it quasiprobability}. If one then would want to interpret it as a signed measure, it would have to be extended from product sets $I \times J $ to a $\sigma $-additive map on the Borel $\sigma $-algebra generated by such product sets. This can be done under additional conditions on $\psi $ (see below), but does not seem obvious in general.      
\medskip   
   
We can write (\ref{eq:jp_1}) as an integral over $I \times J $: since   
$$   
\mathbf{1 }_J (P) (\psi ) (x) = (2 \pi )^{- n } \int _{\mathbb{R }^n } \mathbf{1 }_J (p) \widehat{\psi } (p) e^{i x \cdot p } dp ,   
$$   
where $\widehat{\psi } (p) := \int \psi (x) e^{- i x \cdot p } dx $ is the Fourier transform of $\psi $, with $x \cdot p  $ the Euclidean inner product of $x $ and $p $, it follows that $\mathbb{E }_{\psi } (\mathbf{1 }_J (P ) | X ) $ is multiplication by   
$$   
(2 \pi )^{-n } |\psi (x) |^{-2 } \cdot \int _J   {\rm Re } \left( \widehat{\psi } (p) \overline{\psi (x) } e^{i x \cdot p } \right) dp   
$$   
Multiplying by $\mathbf{1 }_I (x) $ and taking the expectation in the state $\psi $ then showes:   
   
\begin{proposition} \label{prop:quasi-density} $P_{\psi } (X \in I , P \in J ) = \int _{I \times J }   
F_{\psi } (x, p ) \, dx dy $ where   
\begin{equation} \label{eq:jp_quasi-density}   
F _{\psi } (x, p ) = (2 \pi )^{- n } {\rm Re } \left( \widehat{\psi } (p) \overline{\psi (x) } e^{i x \cdot p } \right)   
\end{equation}   
\end{proposition}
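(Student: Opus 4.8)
The plan is to assemble the identity from the two facts displayed immediately before the statement, and then to interchange the order of integration. Since $\mathbf{1}_J(P)$ is bounded and self-adjoint, $\psi$ lies in its domain and Theorem~\ref{thm:CEofA|X} applies with $A=\mathbf{1}_J(P)$; combined with the Fourier representation $\mathbf{1}_J(P)(\psi)(x)=(2\pi)^{-n}\int\mathbf{1}_J(p)\widehat{\psi}(p)e^{ix\cdot p}\,dp$, formula~(\ref{eq:CE1}) identifies $\mathbb{E}_{\psi}(\mathbf{1}_J(P)\mid X)$ with multiplication by $g(x):=(2\pi)^{-n}|\psi(x)|^{-2}\int_J{\rm Re}\bigl(\widehat{\psi}(p)\overline{\psi(x)}e^{ix\cdot p}\bigr)\,dp$ on $\{\psi\neq0\}$. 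Next, multiplying by $\mathbf{1}_I(X)$ and taking the expectation in the state $\psi$ turns the definition~(\ref{eq:jp_1}) into $\int_I g(x)|\psi(x)|^2\,dx$; the point is that the factor $|\psi(x)|^{2}$ cancels the $|\psi(x)|^{-2}$ in $g$, and the restriction to $\{\psi\neq0\}$ costs nothing because $|\psi|^{2}$ vanishes on the complement and so does $F_{\psi}(\cdot,p)$. One is thus left with the iterated integral $(2\pi)^{-n}\int_I\bigl(\int_J{\rm Re}(\widehat{\psi}(p)\overline{\psi(x)}e^{ix\cdot p})\,dp\bigr)\,dx$, and it only remains to recognise this as $\int_{I\times J}F_{\psi}\,dx\,dp$ by Fubini.

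The Fubini step is the only place where there is real work, and I would handle it as follows. For $\psi$ smooth and rapidly decreasing (so that $\psi$ and $\widehat{\psi}$ both lie in $L^1\cap L^2$), $F_{\psi}$ is a Schwartz function on $\mathbb{R}^n\times\mathbb{R}^n$, hence absolutely integrable over any $I\times J$, and the pointwise Fourier integrals converge absolutely, so the interchange is legitimate and the proposition holds for such $\psi$. A second, more structural derivation for this class bypasses the conditional expectation entirely: starting from~(\ref{eq:joint_proba}) one has $\mathbb{P}_{\psi}(X\in I,P\in J)={\rm Re}\,(\mathbf{1}_J(P)\psi,\mathbf{1}_I(X)\psi)={\rm Re}\int_I\overline{\psi(x)}\,\mathbf{1}_J(P)(\psi)(x)\,dx$; inserting the Fourier integral for $\mathbf{1}_J(P)(\psi)(x)$ and carrying out the $x$-integration first yields $(2\pi)^{-n}{\rm Re}\int_J\widehat{\psi}(p)\,\overline{\widehat{\mathbf{1}_I\psi}(p)}\,dp=(2\pi)^{-n}{\rm Re}\,(\mathbf{1}_J\widehat{\psi},\widehat{\mathbf{1}_I\psi})$, which by Plancherel equals ${\rm Re}\,(\mathbf{1}_J(P)\psi,\mathbf{1}_I(X)\psi)$ again, closing a consistency loop and exhibiting the precise role of the integrability hypotheses.

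To reach general $\psi\in L^2(\mathbb{R}^n)$ and general Borel sets $I,J$, the plan is a two-step limiting argument. First restrict to $I,J$ of finite Lebesgue measure: then $\mathbf{1}_I\psi\in L^1$ and $\mathbf{1}_J\widehat{\psi}\in L^1$ by Cauchy--Schwarz, the bound $|F_{\psi}(x,p)|\le(2\pi)^{-n}|\widehat{\psi}(p)|\,|\psi(x)|$ makes $F_{\psi}$ integrable over $I\times J$, and Fubini applies directly. Second, for arbitrary $I,J$ pick $I_k\uparrow I$ and $J_k\uparrow J$ with $|I_k|,|J_k|<\infty$ (Lebesgue measure being $\sigma$-finite): on the left, $\mathbf{1}_{I_k}\psi\to\mathbf{1}_I\psi$ and $\mathbf{1}_{J_k}\widehat{\psi}\to\mathbf{1}_J\widehat{\psi}$ in $L^2$ by dominated convergence, and since $\|\mathbf{1}_I(X)\|,\|\mathbf{1}_J(P)\|\le1$ the form $\psi\mapsto{\rm Re}\,(\mathbf{1}_J(P)\psi,\mathbf{1}_I(X)\psi)$ is continuous, so $\mathbb{P}_{\psi}(X\in I_k,P\in J_k)\to\mathbb{P}_{\psi}(X\in I,P\in J)$, while on the right the corresponding limit of $\int_{I_k\times J_k}F_{\psi}$ is the natural meaning of $\int_{I\times J}F_{\psi}$. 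I expect this integrability bookkeeping to be the main obstacle: when $\psi,\widehat{\psi}\notin L^1$ the double integral over $I\times J$ need not converge absolutely, so the honest options are either to add the hypothesis $\psi,\widehat{\psi}\in L^1$ (e.g. $\psi\in\mathcal{S}(\mathbb{R}^n)$), under which $F_{\psi}$ is genuinely integrable on every $I\times J$, or to read $\int_{I\times J}F_{\psi}$ as the above limit over finite-measure product sets, which is anyway the sense in which~(\ref{eq:jp_1}) was originally defined.
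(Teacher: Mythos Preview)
Your proposal is correct and follows essentially the same route as the paper: compute $\mathbb{E}_\psi(\mathbf{1}_J(P)\mid X)$ via the Fourier representation and Theorem~\ref{thm:CEofA|X}, multiply by $\mathbf{1}_I$, take the $\psi$-expectation, and read off $F_\psi$. The paper's argument is in fact just the few lines of text preceding the proposition and does not address the Fubini/integrability issues at all (these are deferred to the subsequent remark, which notes that $F_\psi$ is a finite measure when $\psi,\widehat{\psi}\in L^1$ and a tempered distribution in general), so your careful treatment of the interchange of integration and the limiting argument for general $\psi\in L^2$ goes beyond what the paper provides.
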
  
   
\begin{remarks} \rm{ (i) $F_{\psi } (x, p ) $ defines a finite measure on $\mathbb{R }^n \times \mathbb{R }^n $ if $\psi , \widehat{\psi } \in L^1 . $ It then has the right marginals, since by the Fourier inversion formula  
$$   
\int _{\mathbb{R } } F_{\psi } (x, p ) dp = | \psi (x) |^2 , \ \int _{\mathbb{R } } F_{\psi }  (x, p ) dx = (2 \pi )^{-n } | \widehat{\psi } (p) |^2 .   
$$   
This is also obvious by taking $I $ or $J $ equal to $\mathbb{R }^n $ in (\ref{eq:joint_proba}). For arbitrary $\psi \in L^2 (\mathbb{R }^n ) $, $F_{\psi } $ can be interpreted as a tempered distribution on $\mathbb{R }^n \times \mathbb{R }^n . $   
\medskip   
   
\noindent (ii)  
Conversely, if we take (\ref{eq:jp_quasi-density}) as the joint density of $X $ and $P $ and apply the classical formula for the conditional expectation of $P $ given $X = x $, we recover the Bohm momentum: by the Fourier inversion formula, and with all integrals are over $\mathbb{R }^n $,  
\begin{eqnarray*}   
\frac{ \int  p_j \, F_{\psi } (x, p ) (x, p ) dp }{\int  F_{\psi } (x, p ) \, dp }      
&=&   
= \frac{ (2 \pi )^{-n } {\rm Re } \int p_j  \, \widehat{\psi } (p) \overline{\psi } (x) e^{i x \cdot p } \, dp }{|\psi (x) |^2 } \\   
&=& \frac{ {\rm Re } \, i^{-1 } \overline{\psi } (x) \nabla _j \psi (x) }{|\psi (x) |^2 }   
\end{eqnarray*}   
}   
\end{remarks}   
   
\subsection{Comparison with Wigner-Moyal} Equation (\ref{eq:joint_proba}) may be considered as a natural, though still somewhat ad-hoc, definition of a joint (quasi-)probability of $X $ and $P $, but there are others, in particular the well-known Wigner distribution:   
\begin{equation}   
W_{\psi } (x, p ) = (2 \pi )^{-n } \int _{\mathbb{R }^n } \, \psi \left( x - \frac{y }{2 } \right) \overline{\psi } \left( x + \frac{y }{2 } \right) e^{i y \cdot p } \, dy .     
\end{equation}  
Its defining property is that if $a(x, p ) $ is a function on classical phase-space $\mathbb{R }^n \times \mathbb{R }^n $ and ${\rm Op }^w (a) $ its so-called Weyl quantification, which is the operator defined by      
$$   
{\rm Op }^w (a) (\psi ) (x) = (2 \pi )^{-n } \int _{\mathbb{R }^n } \int _{\mathbb{R }^n } a \left( \frac{x + y }{2 } , p \right) \psi (y) e^{i (x - y ) \cdot p } \, dp dy   
$$   
then the expectation of ${\rm Op }^w (a) $ in the state $\psi $ is the phase-space average of $a(x, p ) $ with respect to the Wigner distribution:   
\begin{equation}   
\mathbb{E }_{\psi } \left( {\rm Op } ^w (a) \, \right) = \left( \, {\rm Op }^w (a) \psi , \psi \right) =  \int _{\mathbb{R }^n } \int _{\mathbb{R }^n } a(x, p ) W_{\psi } (x, p ) \, dx dp .   
\end{equation}   
The Wigner distribution has marginals $|\psi (x) |^2 $ and $|\widehat{\psi } (p) |^2 $ and total mass 1 if $\psi $ is normalized, but it is again not positive and defines at best a quasiprobability.   
   
If we consider $X = (X_1 , \ldots , X_n ) $ and $P = (P_1 , \ldots , P_n ) $ as classical random variables with joint distribution $W = W_{\psi } $ then applying the classical formula for the conditional expectation of $P^{\alpha } $  given that $X = x $, where $\alpha = (\alpha _1 , \ldots , \alpha _n ) \in \mathbb{N }^n $, gives    
$$   
\mathbb{E }^W  \left( P^{\alpha } | X = x \right) = \frac{1 }{| \psi (x) | ^2 } \int _{\mathbb{R }^n } p^n W (x, p ) \, dp ,   
$$   
and we can take the corresponding multiplication operator as an alternative plausible definition of a "quantum conditional expectation" of $P^n $ given $X . $   
If $|\alpha | = 1 $ we indeed obtain the (components of the) Bohm moment again, but for   
higher powers we get a different result from $\mathbb{E }_{\psi } (P^{\alpha } | X ) $: one easily checks that    \begin{equation} \label{eq:W-cond_exp_p^a}   
\mathbb{E }^W (P^{\alpha } | X = x ) = \left( \frac{i }{2 } \right) ^{|\alpha | } \frac{1 }{|\psi (x) |^2 }  \sum _{\beta \leq \alpha } (-1 )^{|\beta | } \begin{pmatrix} \alpha \\ \beta \end{pmatrix} \nabla ^{\beta } \psi (x)   \nabla ^{\alpha - \beta } \overline{\psi } (x) ,   
\end{equation}   
where $\nabla ^{\alpha } = \nabla _1 ^{\alpha _1 } \cdots \nabla _n ^{\alpha _n } . $ (To verify this, simply note that $p^{\alpha } e^{i y \cdot p } = i^{- |\alpha | } \nabla _y ^{\alpha } e^{i y \cdot p } $ and therefore   
$$   
(2 \pi )^{-n } \int p^{\alpha } \psi (x - y/2 ) \overline{\psi } (x + y/2 ) e^{i y \cdot p } dp = i^{|\alpha | } \nabla _y ^{\alpha } \left( \psi (x - y/2 ) \overline{\psi } (x + y/2 ) \right) |_{y = 0 } ;   
$$   
an application of Leibnitz's product rule then completes the proof.)   
   
It is quite obvious that while $\mathbb{E }^W (P | X = x ) =  {\rm Im } (\overline{\psi } \nabla _j \psi ) / |\psi |^2 = p_B (x) $, the Wigner conditional expectation (\ref{eq:W-cond_exp_p^a}) differs from $\mathbb{E } _{\psi } (P^{\alpha } | X ) = {\rm Re } (i^{- |\alpha | } \overline{\psi } \nabla ^{\alpha } \psi ) / |\psi |^2 $ when $|\alpha | \geq 2 . $  We take a closer look when $|\alpha | = 2 . $ Then with $P^2 := \sum _{j = 1 } ^n  P_j ^2 = - \Delta $,   
\begin{eqnarray*}   
\mathbb{E }^{W_{\psi } } (P^2 | X = x ) &=& - \frac{1 }{4 } \frac{1 }{|\psi (x) |^2 } \left( \overline{\psi (x) } \Delta \psi (x) - 2 | \nabla \psi (x) |^2 + \psi (x) \Delta \overline{\psi } (x) \right) \\   
&=& - \frac{1 }{2 } \frac{ {\rm Re } \overline{ (\psi (x) } \Delta \psi (x) ) }{|\psi (x) |^2 } + \frac{1 }{2 } \frac{| \nabla \psi (x) |^2 }{|\psi (x) |^2 } \\   
&=& \frac{1 }{2 } p_B (x) ^2 +   
Q_{\psi } (x) + \frac{1 }{2 } \frac{| \nabla \psi (x) |^2 }{|\psi (x) |^2 }       
\end{eqnarray*}   
If $\psi = R e^{iS } $ then $|\nabla \psi / \psi |^2 = | \nabla R / R |^2 + |\nabla S |^2 $, and the right hand side equals $p_B ^2 + Q_{\psi } + \frac{1 }{2 } p_O ^2 = \mathbb{E } _{\psi } (P^2 | X ) + \frac{1 }{2 } p_O ^2 - Q_{\psi } = \mathbb{E } _{\psi } (P^2 | X ) + p_O ^2 + \frac{1 }{2 } \nabla \cdot p_O $, where we recall that $p_O = \nabla R / R $ is the osmotic momentum and where we used (\ref{eq:QP_p_O}).   
\medskip   
   
As a final remark, we note that the quasidistribution (\ref{eq:jp_quasi-density}) also arises from a quantization procedure, as was the case for the Wigner distribution: if $a(x, p ) $ is a real-valued function on phase-space, then   
\begin{eqnarray*}   
\int \int a(x, p ) F_{\psi } (x, p ) dx dp 
&=& (2 \pi )^{-n } {\rm Re } \int \int a(x, p ) e^{i x \cdot p } \widehat{\psi } (p) \, \overline{\psi (x) } \, dx dp \\   
&=& {\rm Re } \left(    
{\rm Op } (a) \psi , \psi \right) \\   
&=& \left( \frac{1 }{2 } ( {\rm Op } (a) + {\rm Op } (a)^* ) \psi , \psi \right) ,       
\end{eqnarray*}   
where ${\rm Op } (a) $ now is the so-called Kohn-Nirenberg quantization of the symbol-function $a(x, p ) $:   
$$   
{\rm Op } (a) \psi (x) = (2 \pi )^{- n } \int _{\mathbb{R }^n } a(x, p ) \widehat{\psi } (p) e^{i x \cdot p } \, dp   
$$   
Many other quantization procedures exist: Wick and anti-Wick quantization, Shubin calculus and Born-Jordan quantization (for which see de Gosson \cite{Go}), each of them giving rise to a quasi-probability density by expressing expectations as integrals over phase space, and therefore also to a conditional expectation by using the classical expression in terms of a joint density. The problem is that these conditional 
expectations can be different even for such a basic example as the kinetic energy operator. There is no unequivocal way of defining a function $a(X, P ) $ of the position and momentum operators, as there is no unequivocal way of defining a joint probability for these. The advantage of defining the conditional expectation as a best approximation as in   
definition \ref{def:CE} is that this bypasses the issue of joint probabilities of non-commuting variables. Another advantage is that it is applicable to self-adjoint operators on abstract Hilbert spaces, not just to operators defined on $L^2 (\mathbb{R }^n ) $ by a suitable quantization procedure: see section 5 below.   
   
\section{\bf Dynamics}   
   
Dynamics has not   
played a r\^ole yet, except for the observation made in remark \ref{remark:section3}(i). It did play an important r\^ole in Bohm's original papers, where it served to both motivate the Bohm momentum and the quantum potential (by interpreting the real and imaginary parts of the Schr\"odinger equation for a wave function in complex polar coordinates as a continuity respectively Hamilton-Jacobi equation), whereas our definition of these objects is motivated by the orthodox probabilistic interpretation of Quantum mechanics and is, in some sense, purely kinematical. In this section we will suppose that the wave function $\psi $ is time-dependent and satisfies the Schr\"odinger equation, and examine the time-evolution of the Bohm momentum directly, without passing by the quantum Hamilton-Jacobi equation.   
\medskip   
   
According to the Bohm interpretation, a wave function $\psi (x, t ) $ which satisfies the classical time-dependent  Schr\"odinger equation   
\begin{equation} \label{eq:SE}   
i \partial _t \psi = - \frac{1 }{2 } \Delta \psi + V \psi = (H_0 + V )\psi ,   
\end{equation}   
describes an ensemble of point particles with trajectories $x(t) $, whose momentum is the Bohm momentum $p_B (x(t) , t ) := p_{B, \psi (\cdot , t ) } (x(t) ) $, meaning that (remembering that we normalized particle mass to 1)   
\begin{equation} \label{eq:GE}   
\frac{d x (t) }{dt } = p_B (x(t) , t ) ,       
\end{equation}   
and whose initial positions at time $t = 0 $ have probability distribution given by $|\psi (x, 0 ) |^2 . $ 
The ODE (\ref{eq:GE})  is called the {\it guiding} or {\it guidance equation}.      
If we let   
\begin{equation}   
\pi (t) := p_B (x(t) , t ) ,   
\end{equation}   
then we will show that $( x(t), \pi (t) ) $   
satisfies Hamilton's equations with Hamiltonian $\frac{1 }{2 } \pi ^2 + V(x) + Q_{\psi } (x) $, where $Q_{\psi } $  is the quantum potential (\ref{eq:QP}).   
   
We start by computing the time-evolution of $p_B (x, t ) = {\rm Im } (\nabla  _x \psi (x, t ) / \psi (x, t ) ) $. This part of the argument     
applies to arbitrary Hamiltonians $H $, not necessarily of     
Schr\"odinger form.   
   
\begin{lemma} \label{lemma:evol_p_B} Let $H $ be a self-adjoint operator on $L^2 (\mathbb{R }^n ) $ and $\psi = \psi (x, t ) $ a solution of the time-dependent Schr\"odinger equation $i \partial _t \psi = H \psi $ such that $\psi (x, t ) \neq 0 $ for all $(x, t ) $, with $\psi $ and $H\psi $ sufficiently regular ($C^1 $ will suffice).   
Then     
\begin{equation} \label{eq:dyn_p_B_gen_H}   
\partial _t     
\mathbb{E }_{\psi (t )  } (P | X ) = - \nabla _x      
\mathbb{E }_{\psi (t ) }(H | X )  
\end{equation}   
on $\mathbb{R }^n . $
In particular, if $H = - \frac{1 }{2 } \Delta + V $, then $\partial _t p_{B, j }  = - \nabla _{x_j } ( \frac{1 }{2 } p_B ^2 + V + Q_{\psi } ) . $   
\end{lemma}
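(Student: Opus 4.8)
The plan is to compute $\partial_t \mathbb{E}_{\psi(t)}(P|X)$ directly from the formula provided by Theorem~\ref{thm:CEofA|X}. Recall that $\mathbb{E}_{\psi}(P_j|X)$ is multiplication by $\mathrm{Im}(\nabla_j\psi/\psi)$, which (since $\psi\neq 0$ everywhere) equals $\tfrac{1}{2i}(\nabla_j\psi/\psi - \overline{\nabla_j\psi}/\overline{\psi})$. First I would differentiate this expression in $t$, using $\partial_t\psi = -iH\psi$ (and its conjugate $\partial_t\overline\psi = i\overline{H\psi}$) and the fact that $\partial_t$ commutes with $\nabla_x$. The key point is that $\partial_t\log\psi = \partial_t\psi/\psi = -i H\psi/\psi$, so after using $\nabla_j(\partial_t\log\psi) = \partial_t\nabla_j\log\psi$ one gets $\partial_t(\nabla_j\psi/\psi) = \nabla_j(\partial_t\psi/\psi) = -i\,\nabla_j(H\psi/\psi)$. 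Taking imaginary parts, $\partial_t\,\mathrm{Im}(\nabla_j\psi/\psi) = -\,\mathrm{Re}\,\nabla_j(H\psi/\psi) = -\nabla_j\,\mathrm{Re}(H\psi/\psi)$, and $\mathrm{Re}(H\psi/\psi)$ is exactly $\mathbb{E}_{\psi}(H|X)$ by Theorem~\ref{thm:CEofA|X} (formula~(\ref{eq:CE1}), with $A=H$ Hermitian). This proves~(\ref{eq:dyn_p_B_gen_H}).

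For the second assertion, I would specialize to $H = H_0 + V = -\tfrac12\Delta + V$ and invoke the computation already carried out in Section~\ref{subsection:QP}: there it was shown that $\mathbb{E}_{\psi}(H_0|X) = -\mathrm{Re}(\Delta\psi/2\psi) = \tfrac12 p_B^2 + Q_{\psi}$, and clearly $\mathbb{E}_{\psi}(V|X) = V$ since $V$ is already a real multiplication operator. Hence $\mathbb{E}_{\psi}(H|X) = \tfrac12 p_B^2 + V + Q_{\psi}$ by $\mathbb{R}$-linearity of $A\mapsto\mathbb{E}_\psi(A|X)$, and substituting into~(\ref{eq:dyn_p_B_gen_H}) gives $\partial_t p_{B,j} = -\nabla_{x_j}(\tfrac12 p_B^2 + V + Q_\psi)$, as claimed.

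The main obstacle is not conceptual but a matter of justifying the formal manipulations: one needs $\psi$ and $H\psi$ to be $C^1$ (as hypothesized) so that $H\psi/\psi$ is differentiable in $x$ and the interchange of $\partial_t$ and $\nabla_x$ is legitimate, and one needs $\psi(x,t)\neq 0$ everywhere so that division by $\psi$ and the polar decomposition $\psi = Re^{iS}$ make sense without indicator functions. With these regularity assumptions in force the argument is a short direct calculation; the only subtlety worth spelling out is the identity $\partial_t(\nabla_j\psi/\psi) = \nabla_j(\partial_t\psi/\psi)$, which holds because $\nabla_j\psi/\psi = \nabla_j\log\psi$ and mixed partials commute, so that $\partial_t$ passes through the $\nabla_j$ cleanly even though $\nabla_j\psi/\psi$ is a nonlinear function of $\psi$.
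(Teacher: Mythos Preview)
Your proof is correct and follows essentially the same route as the paper: compute $\partial_t(\nabla\psi/\psi) = -i\,\nabla(H\psi/\psi)$ via the Schr\"odinger equation and take imaginary parts. The only cosmetic difference is that the paper applies the quotient rule directly to $\nabla\psi/\psi$ rather than passing through $\nabla\log\psi$, but the computation and the handling of the ``in particular'' clause (by quoting the earlier formula for $\mathbb{E}_\psi(H_0|X)$) are the same.
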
   
   
\begin{proof} Letting $\nabla = \nabla _x $, then      
by Schr\"odinger's equation     
\begin{eqnarray*}   
\partial _t \left( \frac{\nabla \psi }{\psi } \right) &=& \frac{\nabla (\partial _t \psi ) }{\psi } - \frac{ \partial _t \psi \nabla \psi }{\psi ^2 } \\   
&=& - i \left( \frac{\nabla (H \psi ) }{\psi } - \frac{H \psi \nabla \psi }{\psi ^2 } \right) \\   
&=& - i \nabla \left( \frac{H \psi }{\psi } \right)   
\end{eqnarray*}   
Taking imaginary parts gives (\ref{eq:dyn_p_B_gen_H}).   
\end{proof}   
   
Note that the proof   
actually gives an evolution equation for the weak value of $P $ in terms of the weak value of $H . $ It only uses that the $\nabla _k $ are derivations (satisfiy the Leibnitz rule), which we can therefore replace by any vector field.    
For example, if $n = 3 $ we can take the angular momentum operators 
$L_j = \epsilon_{jkl } X_k P_l = i^{-1 } \epsilon _{jkl } x_k \nabla _l $, where $\epsilon _{jkl } $ is the Levi-Civita tensor and conclude that   
$$   
\frac{d }{dt } 
\mathbb{E }_{\psi (t) } (L_j | X ) = - \epsilon _{jkl } \, x_k \nabla _l    
\mathbb{E }_{\psi (t) } (H | X ) =   
- i L_j  
\mathbb{E }_{\psi (t) } (H | X ) .   
$$   
We also note that (\ref{eq:dyn_p_B_gen_H}) can be re-written as        
$$   
\partial _t \mathbb{E }_{\psi } ( P | X ) = - i [P , \mathbb{E }_{\psi } (H | X ) ] = i [ \mathbb{E }_{\psi } (H | X ), P ] ,   
$$   
as operators on $L^2 (\mathbb{R }^n ) $, showing a certain resemblance with Heisenberg's equation of motion.      
   
\begin{theorem} \label{thm:HE}   
Let $\psi = \psi (x, t ) $ be a nowhere vanishing $C^3 $ solution of the time-dependent Schr\"odinger equation (\ref{eq:SE}), with $C^1 $ potential $V = V(x) . $   
\medskip   
   
\noindent (i) If $x(t) $ satisfies the guidance equation (\ref{eq:GE})   
then the pair $(x(t), \pi (t) ) $, with $\pi (t) := p_B (x(t) , t ) $, is a solution of Hamilton's equations for the  (time-dependent) Hamiltonian   
$$   
h(x, \pi ) = \frac{1 }{2 } \pi ^2 + V(x) + Q_{\psi (\cdot , t ) } (x) .   
$$   
\medskip   
   
\noindent (ii) Conversely, if $(x(t) , \pi (t) ) $ is an arbitrary solution of Hamilton's equations    
for this Hamiltonian, then     
$\delta \pi (t) := \pi (t) - p_B (x(t) , t ) $ satisfies the first order linear system   
$$   
\frac{d }{dt } \delta \pi (t) = - (\nabla p_B ) |_{(x(t) , t ) } \, \delta \pi (t) ,   
$$   
where $\nabla p_B = (\nabla _k p_{B, j } )_{j, k } $ is the Jacobian of $p_B (x, t ) $ with respect to the $x $-variables\footnote{explicitly $\frac{d }{dt } (\pi _j (t) - p_{B, j } (x(t) , t ) ) = - \sum _k \nabla _k p_j (x(t) , t ) (\pi _k (t) - p_{B, k } (x(t) , t ) ) . $}. 
In particular, if $\pi (0) = p_B (x(0) , 0 ) $ then $\pi (t) = p_B (x(t) , t ) $ for all $t $: if the guidance condition is satisfied at time 0, then it is satisfied for all $t . $   
\end{theorem}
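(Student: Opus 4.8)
The plan is to prove both parts of Theorem \ref{thm:HE} by combining Lemma \ref{lemma:evol_p_B} with the chain rule along trajectories. For part (i), I would write out Hamilton's equations for $h(x,\pi) = \frac{1}{2}\pi^2 + V(x) + Q_{\psi(\cdot,t)}(x)$, namely $\dot x_j = \partial h/\partial \pi_j = \pi_j$ and $\dot\pi_j = -\partial h/\partial x_j = -\nabla_j(V + Q_{\psi})$. The first of these is, by the definition $\pi(t) = p_B(x(t),t)$, precisely the guidance equation \eqref{eq:GE}, so it holds by hypothesis. For the second, I would compute $\frac{d}{dt}\pi_j(t) = \frac{d}{dt}\bigl(p_{B,j}(x(t),t)\bigr) = (\partial_t p_{B,j})(x(t),t) + \sum_k \dot x_k (\nabla_k p_{B,j})(x(t),t)$. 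Now I substitute the expression for $\partial_t p_{B,j}$ from Lemma \ref{lemma:evol_p_B} applied to $H = -\frac{1}{2}\Delta + V$, which gives $\partial_t p_{B,j} = -\nabla_j(\frac{1}{2}p_B^2 + V + Q_{\psi})$, and use $\dot x_k = p_{B,k}$; the term $\sum_k p_{B,k}\nabla_k p_{B,j}$ must then cancel against $\nabla_j(\frac{1}{2}p_B^2)= \sum_k p_{B,k}\nabla_j p_{B,k}$, and this cancellation is exactly the symmetry $\nabla_k p_{B,j} = \nabla_j p_{B,k}$, which holds because $p_B = \nabla S$ is a gradient. After this cancellation one is left with $\dot\pi_j = -\nabla_j(V + Q_{\psi})$, which is the second Hamilton equation.

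For part (ii), I would set $\delta\pi(t) := \pi(t) - p_B(x(t),t)$ for an arbitrary solution $(x(t),\pi(t))$ of Hamilton's equations and differentiate. Using $\dot\pi_j = -\nabla_j(V+Q_\psi)$ (still Hamilton's equation, valid for any solution) and the chain rule $\frac{d}{dt}p_{B,j}(x(t),t) = \partial_t p_{B,j} + \sum_k \dot x_k \nabla_k p_{B,j}$ together with $\dot x_k = \pi_k$ (the other Hamilton equation), I get $\frac{d}{dt}\delta\pi_j = -\nabla_j(V+Q_\psi) - \partial_t p_{B,j} - \sum_k \pi_k\nabla_k p_{B,j}$. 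Substituting Lemma \ref{lemma:evol_p_B} again for $\partial_t p_{B,j}$ and using the gradient symmetry once more to rewrite $\sum_k p_{B,k}\nabla_j p_{B,k} = \sum_k p_{B,k}\nabla_k p_{B,j}$, the $V$- and $Q_\psi$-terms cancel and one is left precisely with $\frac{d}{dt}\delta\pi_j = -\sum_k \nabla_k p_{B,j}\,(\pi_k - p_{B,k}) = -\sum_k (\nabla_k p_{B,j})\,\delta\pi_k$, which is the claimed linear system. Since this is a linear homogeneous first-order ODE for $\delta\pi(t)$ with continuous coefficients (guaranteed by the $C^3$ assumption on $\psi$, which makes $\nabla p_B$ a $C^1$ function), uniqueness of solutions to linear ODEs gives that $\delta\pi(0)=0$ implies $\delta\pi(t) \equiv 0$, proving the last assertion.

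The only genuinely delicate point, and the one I would flag as the main obstacle, is the smoothness bookkeeping: one needs $p_B$ and its first $x$-derivatives to be continuous along the trajectory so that the chain rule and the ODE uniqueness theorem apply, and one needs $\psi, H\psi \in C^1$ for Lemma \ref{lemma:evol_p_B} to be available — this is why the hypotheses of the theorem ask for a nowhere-vanishing $C^3$ solution with $C^1$ potential. I would state explicitly at the start of the proof that under these hypotheses $S = \operatorname{Im}\log\psi$ is $C^3$ (locally), so $p_B = \nabla S$ is $C^2$, in particular $C^1$ in $(x,t)$, which legitimizes every differentiation performed; and that the equality of mixed partials $\nabla_k\nabla_j S = \nabla_j\nabla_k S$ is what supplies the crucial symmetry $\nabla_k p_{B,j} = \nabla_j p_{B,k}$ used in both parts. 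Everything else is a routine application of the chain rule and of Lemma \ref{lemma:evol_p_B}; no new estimate or construction is required.
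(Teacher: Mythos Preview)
Your proof is correct and follows essentially the same route as the paper's: both invoke Lemma \ref{lemma:evol_p_B} for $\partial_t p_B$, apply the chain rule along the trajectory, and use the curl-free property $\nabla_k p_{B,j} = \nabla_j p_{B,k}$ (coming from $p_B = \nabla S$) to effect the key cancellation in each part. Your explicit discussion of the smoothness hypotheses needed for the chain rule and ODE uniqueness is a welcome addition that the paper leaves implicit.
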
   
   
\begin{proof} To simplify notations, we write $p_j = p_{B, j } $ for the $j $-th component of the Bohm momentum. We    
note that the Bohm momentum is curl-free: $\nabla _j p_k = \nabla _k p_j $,   
as follows for example, from $p_j = {\rm Im } \nabla _j \log \psi $, for any branch of the logarithm. Next, by lemma \ref{lemma:evol_p_B} with $H = - \frac{1 }{2 } \Delta + V $,   
\begin{equation} \label{eq:dyn_p_P2}   
\partial _t p = \nabla _x \left(  \frac{1 }{2 } p    
^2 + Q_{\psi } + V \right) ,   
\end{equation}  
where $p ^2 = p \cdot p = \sum _k p_k ^2 . $ We then compute   
\begin{eqnarray*}   
\frac{d }{dt } p_j (x(t) , t ) &=& \partial _t p_j (x(t) , t ) + \sum _k \nabla _k p_j (x(t) , t ) \frac{d x_k }{dt } (t) \\   
&=& - \nabla _j \left( \frac{1 }{2 } \sum _k p_k ^2 + Q_{\psi } + V \right) |_{(x(t) , t ) } + \sum _k p_k \nabla _k p_j |_{(x(t) , t ) } \\   
&=& - \nabla _j ( Q_{\psi } + V ) |_{x(t) , t ) } - \sum _k p_k (\nabla _j p_k - \nabla _k p_j ) |_{(x(t) , t ) } \\   
&=& - \nabla _j \left( Q_{\psi } + V \right) (x(t) , t ) ,   
\end{eqnarray*}   
which shows that $\pi (t) := p(x(t) , t ) $ satisfies Hamilton's equation $d\pi / dt = - (\nabla _x  h ) (x(t), \pi (t) ) . $   Since $(\nabla _{\pi } h ) (x(t) , \pi (t) ) = \pi (t) = p(x(t) , t ) = dx / dt $ by the guidance equation, this proves (i).   
\medskip     
    
To prove (ii), let $(x(t), \pi (t) ) $ be any solution of Hamilton's equations. Then by lemma \ref{lemma:evol_p_B} again,    
\begin{eqnarray*}   
\frac{d }{dt } \biggl( \pi _j (t) - p_j (x(t) , t ) \biggr) &=& - \nabla _j (V + Q ) + \nabla _j \left( \frac{1 }{2 } p^2 + V + Q \right) - \sum _k \nabla _k p_j \frac{dx_k }{dt } \\   
&=& \sum_k p_k \nabla _j p_k - \pi _k \nabla _k p_j \\   
&=& \sum _k \nabla _k p_j (p_k - \pi _k ) ,   
\end{eqnarray*}   
as was to be shown.   
\end{proof}   
   
Theorem \ref{thm:HE} illustrates the two faces of the guidance equation: one can either take it as a fundamental equation governing the possible positions of the quantum particle associated to the wave function $\psi $, or as a constraint on the Hamiltonian flow with potential $V + Q_{\psi } . $ The latter was Bohm's original view, which emphasized the quantum potential, while the former is the starting point of Bohmian mechanics \cite{Durr_Teufel}, which only works with the guidance equation and leaves aside the Hamiltonian aspect of the Bohm interpretation, and in particular the quantum potential.   
   
There is yet another take on this, which shows that the two view points are closely related: if we formulate the Hamilton flow in extended phase space by adding an energy variable $\epsilon $ which is canonically conjugate to time $t $ (which is natural since the Hamiltonian $h = \frac{1 }{2 } \pi ^2 + V + Q_{\psi } $ depends on time, through $\psi $ and $Q_{\psi } $) then the Poisson brackets of the components of $\pi - p_B (x, t ) $ with the extended Hamiltonian $h(x, \pi , t ) - \epsilon $ vanish on the manifold $\{ \pi = p_B \} $: see Appendix A. Since the components of $\pi - p_B $ also Poisson-commute amongst themselves, the constraints $\pi _k = p_{B , k } (x, t ) $ together with the energy constraint $h - \epsilon = 0 $ form a set of first class constraints and since there are $n + 1 $ of them, the Hamiltonian flow on this sub-manifold is completely integrable and found by integrating the guidance equations. That integrating the guidance equations suffices is of course also clear from theorem \ref{thm:HE} above.       
\medskip   
   
The theorem generalizes to Hamiltonians for (non-relativistic) spin 0 particles in an electromagnetic field,   
\begin{equation} \label{eq:SE_EMF}   
H_{A, V } = \frac{1 }{2 } (P - A )^2 + V ,   
\end{equation}   
where $A $ is the magnetic vector potential, and $(P - A )^2 = (P - A ) \cdot (P - A ) $, as before.   
In this case one finds that the Bohm energy $\mathbb{E }_{\psi } (H_{A, V } | X ) $ is given by (multiplication by) 
\begin{eqnarray} \nonumber   
E_{\psi } (H_{A, V } ) &=& {\rm Re } \left( \frac{H_{A, V } (\psi ) }{\psi } \right) = \frac{1 }{2 } {\rm Re } \left( \frac{(P^2 - 2 A P - i (\nabla \cdot A ) + A^2 ) ) \psi }{\psi } \right) + V \\   
&=& \frac{1 }{2 } ( p_B ^2 + 2 Q_{\psi } - 2 A p_B + A^2 ) + V  \nonumber \\   
&=& \frac{1 }{2 } (p_B - A )^2 + Q_{\psi } + V \label{eq:CE_H_A} ,   
\end{eqnarray}   
and lemma \ref{lemma:evol_p_B} now implies:   
   
\begin{theorem} \label{thm:HE_Magn} (i) If $x(t) $ satisfies the modified guidance equation   
\begin{equation} \label{eq:GE_EMF}   
\frac{d }{dt } x (t) = p_B (x(t) , t ) - A(x(t) ) ,   
\end{equation}   
and $\pi (t) := p_B (x(t) , t ) $, then   
$(x(t) , \pi (t) ) $ satisfies Hamilton's equations for the Hamiltonian $h_{A, V } (x , \pi ) := \frac{1 }{2 } ( \pi - A )^2 + Q_{\psi } + V . $    
\medskip   
   
\noindent (ii) Conversely, if $(x(t) , \pi (t) ) $ is an arbitrary  solution of Hamilton's equations then $\delta \pi (t) := \pi (t) - p_B ( x(t) , t ) $ satisfies the linear system    
\begin{equation} \label{eq:constraint_EMF}   
\frac{d }{dt }\delta \pi (t) = - \nabla (p_B - A ) ^T \, \delta \pi (t) ,   
\end{equation}   
where the Jacobian $\nabla (p_B - A ) $ is evaluated in $(x(t), t )$ and $^T $ stands for transpose.   
\end{theorem}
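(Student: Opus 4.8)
The plan is to follow the proof of Theorem~\ref{thm:HE} almost verbatim, the one new feature being the presence of the vector potential $A$, which, unlike $p_B$, is not curl-free. The starting point is Lemma~\ref{lemma:evol_p_B} applied with $H = H_{A,V}$: combined with formula (\ref{eq:CE_H_A}) for the Bohm energy in an electromagnetic field, it yields the evolution equation $\partial_t p_B = - \nabla_x\bigl( \frac{1}{2}(p_B - A)^2 + Q_\psi + V \bigr)$ for the time-dependent Bohm momentum, where $\nabla_x$ differentiates the entire right-hand side, in particular the $x$-dependence of $p_B$ itself. I would also record, exactly as in the proof of Theorem~\ref{thm:HE}, that $p_B$ is curl-free, $\nabla_j p_{B,k} = \nabla_k p_{B,j}$, since $p_{B,j} = \mathrm{Im}\,\nabla_j \log\psi$.

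For part~(i) I would compute the total derivative of $p_{B,j}(x(t),t)$ along the modified guidance flow (\ref{eq:GE_EMF}), namely $\frac{d}{dt} p_{B,j} = \partial_t p_{B,j} + \sum_k \nabla_k p_{B,j}\,(p_{B,k} - A_k)$. Substituting the evolution equation and expanding $\nabla_j\bigl(\frac{1}{2}(p_B - A)^2\bigr) = \sum_k (p_{B,k} - A_k)(\nabla_j p_{B,k} - \nabla_j A_k)$, the terms of the form $(p_{B,k} - A_k)(\nabla_k p_{B,j} - \nabla_j p_{B,k})$ cancel by curl-freeness, and what survives is $\sum_k (p_{B,k} - A_k)\nabla_j A_k - \nabla_j(Q_\psi + V)$, which is exactly $-\nabla_{x_j} h_{A,V}$ evaluated at $\pi = p_B$. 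Since moreover $\nabla_\pi h_{A,V} = \pi - A = p_B - A = dx/dt$ by the guidance equation, the pair $(x(t),\pi(t))$ solves Hamilton's equations.

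For part~(ii) I would repeat the same computation with $\pi(t)$ now an arbitrary solution of Hamilton's equations, so that $dx_k/dt = \pi_k - A_k$ and $d\pi_j/dt = -\nabla_j h_{A,V} = \sum_k(\pi_k - A_k)\nabla_j A_k - \nabla_j(Q_\psi + V)$. Subtracting $\frac{d}{dt} p_{B,j}(x(t),t)$ computed as above, the $\nabla_j(Q_\psi + V)$ terms cancel and, using curl-freeness of $p_B$, the remaining terms regroup as $-\sum_k \nabla_j(p_{B,k} - A_k)\,(\pi_k - p_{B,k})$, which is precisely $-\nabla(p_B - A)^T\,\delta\pi$ in the matrix notation of the statement. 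The transpose is forced by the fact that $A$ enters through $\nabla_j A_k$ rather than through $\nabla_k A_j$. The concluding assertion that $\pi(0) = p_B(x(0),0)$ implies $\pi(t) = p_B(x(t),t)$ for all $t$ is then just uniqueness for this homogeneous linear ODE satisfied by $\delta\pi$.

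The computations are entirely routine; the only step demanding care is the bookkeeping in part~(ii), where one must resist symmetrizing $\nabla A$ — it is $\nabla\times A$, the magnetic field, that measures the failure of this — so that the Jacobian appearing in the error equation is $\nabla(p_B - A)^T$ and not $\nabla(p_B - A)$. One should also carry along enough regularity to justify differentiating the various expressions (the $C^3$ hypothesis on $\psi$ and $C^1$ on $V$ of Theorem~\ref{thm:HE}, together with, say, a $C^1$ vector potential $A$), exactly as in the spin-$0$ case without a magnetic field.
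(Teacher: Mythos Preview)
Your proposal is correct and follows essentially the same route as the paper's proof: invoke Lemma~\ref{lemma:evol_p_B} together with (\ref{eq:CE_H_A}) to obtain $\partial_t p_B = -\nabla_x h_{A,V}(x,p_B)$, use curl-freeness of $p_B$ to cancel the cross-terms in the total derivative along the guidance flow for part~(i), and in part~(ii) subtract the two total-derivative computations to isolate $-\sum_k \nabla_j(p_{B,k}-A_k)(\pi_k - p_{B,k})$. Your remark that the transpose is forced by the non-symmetry of $\nabla A$ is exactly the point distinguishing this from Theorem~\ref{thm:HE}.
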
   
\medskip   
   
\noindent Again, the constraint $\pi (t) = p_B (x(t) , t ) $, which for a solution $(x(t) , \pi (t) ) $ of Hamilton's equations is equivalent to the guidance equation (\ref{eq:GE_EMF}) since $dx(t) /dt = \pi - A(x(t) ) $, holds for all $t $ iff it holds for one.   
   
\begin{proof} Letting $p =p_B $ again, with and $\cdot $ designating the euclidean inner product, then with all functions evaluated in $(x(t) , t ) $,       
\begin{eqnarray*}   
\frac{d }{dt } p_j (x(t) , t ) &=& \partial _t p_j + \nabla p_j \cdot \frac{d x }{dt } \\   
&=& - \nabla _j \left( \frac{1 }{2 } (p - A )^2 + Q_{\psi } + V \right) + \nabla p_j \cdot (p - A ) \\   
&=& - \sum _k (p_k - A _k ) (\nabla _j p_k - \nabla _j A _k ) - \nabla _j ( Q_{\psi } + V ) + \sum _k (p_k - A _k ) \nabla _k p_j  \\   
&=& \sum _k (p_k - A_k ) \nabla _j A_k - \nabla _j ( Q_{\psi } + V ) \\   
&=& - \nabla _j \left( \sum _k \frac{1 }{2 } (\pi _k - A_k )^2 + Q_{\psi } + V \right) (x(t), p(t) , t ) , \\   
\end{eqnarray*}   
showing that $\pi _j (t) = p_j (x(t), t ) $ satisfies $d\pi / dt = - \nabla _x h_{A, V } $, while $dx / dt = \nabla _{\pi } h_{A, V } $ is the guidance equation (\ref{eq:GE_EMF}).      
   
Conversely, if $(x(t), \pi (t) ) $ satisfies Hamilton's equations for the Hamiltonian $h_{A, V } $, then one now finds that (all functions evaluated in $(x(t) , t ) $, as before):       
\begin{eqnarray*}   
\frac{d }{dt } p _j (x(t) , t ) &=& - \sum _k (p_k - A_k ) \nabla _j (p_k - A_k ) - \nabla _j (V + Q_{\psi } ) + \sum _k    
(\pi _k - A_k ) 
\nabla _k p_j  \\   
&=& \sum _k (\pi _k - p_k ) \nabla _j p_k + \sum _k (p_k - A_k ) \nabla _j A_k - \nabla _j (V + Q_{\psi } )     
\end{eqnarray*}   
while   
\begin{eqnarray*}   
\frac{d \pi _j (t) }{dt } &=& - \nabla _j \left( \frac{1 }{2 } (\pi - A )^2 + V + Q_{\psi } \right) \\   
&=&   
\sum _k (\pi _k - A_k ) \nabla _j A_k - \nabla _j (V + Q_{\psi } ) .   
\end{eqnarray*}   
Substracting the two equations we find   
$$   
\frac{d }{dt } (\pi _j (t) - p_j (x(t) , t ) ) = - \sum _k \nabla _j (p_k - A_k ) (\pi _k - p_k ) ,   
$$   
which is (\ref{eq:constraint_EMF}).   
   
\end{proof}   
   
In a sense our approach to the de Broglie-Bohm theory is opposite of Bohm's original approach, which 
consisted of writing the wave function in complex polar form $\psi = R e^{i S } $ and interpreting the real and imaginary part of the Schr\"odinger equation as a transport equation for $|\psi |^2 = R^2 $ and a Hamilton-Jacobi equation for the phase $S $,     
which then led to the interpretation of $\nabla S $ as a momentum and motivated the guidance equation. Here we started with what one might consider to be a purely kinematical      
definition of the Bohm momentum, independent of any dynamics, as the best     
approximation, or optimal predictor, of the momentum operator $P $ by a multiplication operator, and then   
verified that a particle whose momentum at time $t $ is the Bohm momentum at $(x(t), t ) $ satisfies Hamilton's equations with a Hamiltonian which is the best predictor of $H_0 + V . $ The quantum potential enters as the difference of this Hamiltonian with     
the classical Hamiltonian $\frac{1 }{2 } p_B ^2 + V . $   
   
This looks like an attractive scheme which can potentially be applied to other quantum mechanical models, once we have extended the best predictor definition of conditional expectation to arbitrary pairs of self-adjoint operators, as we do in the next section. One can then examine the time-evolution of the conditional expectations of the dynamical variables which define the model with respect to some well-chosen commuting subset of these variables. To arrive at a fully fledged Bohm-type  interpretation in terms of some underlying classical particle dynamics there however still remains the problem of finding an analogue of the guidance equation, which we have taken    
as given here. This is not always obvious if we condition with respect to other   
operators then the position variables, for example if we condition with respect to the momentum operator $P $, which leads to Epstein's model (
see example \ref{examples_E(A|B)} below), or with respect to one of the spin operators, as we will do in sections 7 and 8.   
   
In Bohmian mechanics \cite{Durr_Teufel}, which only keeps the guidance equation    
and leaves aside the Hamiltonian mechanical features     
of the Bohm interpretation, the main motivation for the guidance equation comes from re-writing the evolution equation for the quantum mechanical probability density $\rho _{\psi } = |\psi |^2 $ as a conservation law for the density of a particle flow,     
which Bohm originally imposed as a constraint on initial momenta. In both instances it serves to make the predictions of the causal model agree with those of    
orthodox quantum mechanics, by ensuring that the particle density agrees at all times with    
$\rho _{\psi } = |\psi |^2 $, a property called {\it equivariance } by D\" urr {\it et al.}  \cite{DGZ}.   
Specifically, if $\psi $ is a solution of the Schr\"odinger equation (\ref{eq:SE}) (with real-valued potential $V $) then   
\begin{equation} \label{eq:guidance1}    
0 = \partial _t \rho _{\psi } + \nabla  \cdot j_{\psi } = \partial _t \rho _{\psi } + \nabla \cdot \left( \rho _{\psi } \frac{j_{\psi } }{\rho _{\psi } } \right) ,   
\end{equation}   
where $j_{\psi } = {\rm Im } (\overline{\psi } \nabla \psi ) $ is the probability flux. One interprets this as the continuity equation for the  density of a particle flow   
with velocity field $\rho _{\psi } ^{-1 } j_{\psi } = |\psi |^{-2 }  {\rm Im } ( \overline{\psi } \nabla \psi ) =    
p_{B, \psi } $ (recall that we set the mass $m $ equal to 1, otherwise there would be a factor of $1 / m $). The particles making up the fluid therefore follow trajectories given by the guidance equation (\ref{eq:GE}).   
As was pointed out in \cite{DG}, one can add an arbitrary time-dependent (and possibly also $\psi $-dependent) divergence-free vector-field $\mathcal{V } = \mathcal{V } (x, t ) $ to $j_{\psi } $ without changing the equation for the probability density, which would lead to an    
a priori equally plausible alternative guidance equation   
\begin{equation} \label{eq:guidance3}   
\frac{dx }{dt } =     
p_{B, \psi } (x, t ) + \frac{\mathcal{V } (x, t ) }{|\psi (x, t ) |^2 } ,    
\end{equation}   
with particle trajectories different from the Bohmian ones. This is sometimes called the underdetermination problem of the de Broglie-Bohm theory \cite{FD}. It was argued in \cite{DGZ} that the Bohm momentum ${\rm Im } (\overline{\psi } \nabla \psi ) / |\psi |^2 $ is the simplest expression in the wave function and its gradient which is homogeneous of order 0 and both Galilean- and time-reversal invariant but, as shown in \cite{DG}, it is not the only one.   
   
Alternatively, the modified guidance equation (\ref{eq:guidance3}) can be seen as the guidance equation for a Bohmian particle in a magnetic field: if $\psi $ is a solution of the time-dependent Schr\"odinger equation $i \partial _t \psi = H_A \psi $     
with $H_A $ given by (\ref{eq:SE_EMF}) then, as is well-known,       
the evolution equation for the probability density reads    
\begin{equation} \label{eq:guidance4}   
\partial _t \rho _{\psi } + \nabla \cdot \left( j_{\psi } - |\psi |^2 A \right) = 0 ,   
\end{equation}   
which of course motivates the guidance equation (\ref{eq:GE_EMF}).  We can therefore interpret (\ref{eq:guidance3}) as the guidance equation associated to a solution of the Schr\"odinger equation with Hamiltonian (\ref{eq:SE_EMF}) and vector potential $A = - |\psi |^{-2 } \mathcal{V } $ (assuming as usual that $\psi $ is never zero), rather than for a solution of (\ref{eq:SE}).      
      
We note that, conversely, we can always suppose, modulo a gauge transformation, that the vector potential $A $ satisfies $\nabla ( |\psi |^2 A ) = 0 $, by replacing $A $ by $A + \nabla g $ where the real-valued function $g $ satisfies $\nabla ( |\psi |^2 \nabla g ) = - \nabla ( |\psi |^2 A ) $ or $\Delta g + \nabla |\psi |^2 \cdot \nabla g = - \nabla (|\psi |^2 A ) . $ For such $A $ (\ref{eq:guidance4}) coincides with (\ref{eq:guidance1}).   
   
\section{\bf Conditional expectation with respect to an arbitrary self-adjoint operator}   
   
If $B $ is an arbitrary self-adjoint operator on a Hilbert space $\mathcal{H } $ and $A $ a not necessarily bounded symmetric operator on $\mathcal{H } $ with $\psi \in {\rm Dom } (A) $ (in applications, $A $ will typically be self-adjoint), we can again define $\mathbb{E }_{\psi } (A | B ) $ as the real-valued (Borel-measurable) function $f(B) $ of $B $ which minimizes   
\begin{equation} \label{eq:CE_A|B_min}   
\mathbb{E }_{\psi } ( (A - f(B) )^2 ) = || (A - f(B) ) \psi ||^2 , \ \ \psi \in {\rm Dom } (f(B) ) .   
\end{equation}   
We will use the spectral theorem for $B $ to 
establish the existence of $f . $   
   
Let      
\begin{equation}   
B = \int _{\mathbb{R } } \lambda dE(\lambda ) ,   
\end{equation}   
be the spectral resolution of $B $, for a unique projection-valued measure $dE(\lambda ) $ on $\mathbb{R } . $ Then $f(B) = \int f(\lambda ) dE(\lambda ) $, and for real-valued $f $,   
\begin{equation} \label{eq:CE_A|B1}   
|| (A - f(B) ) \psi ||^2 - || A \psi ||^2 = - 2 \int f(\lambda ) {\rm Re } \, dE_{A\psi , \psi } (\lambda ) + \int f(\lambda ) ^2 dE_{\psi , \psi } (\lambda ) ,   
\end{equation}   
where $dE_{\varphi , \psi } (\lambda ) = (dE(\lambda ) \varphi , \psi ) $ and where $\psi \in {\rm Dom } (f(B)) $ iff the final integral is finite. The (complex) measure $dE_{A\psi , \psi } $ is absolutely continuous with respect to the (positive) measure $dE_{\psi , \psi } $ (by Cauchy-Schwarz), and if we let $a = a(\lambda ) $ be its Radon-Nikodym derivative:   
\begin{equation} \label{eq:RN1}   
a(\lambda ) = \frac{dE_{A\psi , \psi } (\lambda ) }{dE_{\psi , \psi } (\lambda ) } ,   
\end{equation}   
the right hand side of (\ref{eq:CE_A|B1}) becomes   
$$   
\int \left( - 2 f(\lambda ) {\rm Re } \, a(\lambda ) + f(\lambda ) ^2 \right) dE_{\psi, \psi } (\lambda ) = \int  \left( f(\lambda ) - {\rm Re } \, a(\lambda ) \right)^2 dE_{\psi , \psi } (\lambda ) - \int ({\rm Re } \, a(\lambda ) )^2 dE_{\psi , \psi } (\lambda ) .   
$$   
This shows that (\ref{eq:CE_A|B1}) is minimized by $f(\lambda ) = {\rm Re } \, a(\lambda) $ and that the minimizer is a.e. unique a.e. with respect to $dE_{\psi , \psi } (\lambda ) . $ If $g $ is a real-valued function which is a.e. 0 with respect to $dE_{\psi , \psi } (\lambda ) $, then $|| g(B) \psi ||^2 = \int g(\lambda )^2 dE_{\psi , \psi } (\lambda ) = 0 $ so $\psi \in {\rm Ker } (g(B ) ) . $   
\medskip   
   
More generally, we can take $B $ to be an $n $-tuple   
$(B_1 , \cdots , B_n ) $ of strongly commuting self-adjoint operators, where strongly commuting means  that the resolvents of the $B_j $ all commute: see Schm\"udgen \cite[Proposition 5.27]{Sch} for equivalent characterisations of strong commutativity.   
We then have a joint spectral   
decomposition of $(B_1 , \ldots , B_n ) $: there exist a unique projection valued spectral  measure $E $ on $\mathbb{R }^n $ such that 
\begin{equation}   
B_k = \int _{\mathbb{R }^n } \lambda _k \, dE(\lambda _1 , \cdots , \lambda _n ) ,   
\end{equation}   
and we can define   
\begin{equation}   
f(B_1 , \ldots , B_n ) = \int f(\lambda _1 , \ldots , \lambda _n ) \, dE (\lambda _1 , \ldots , \lambda _n ) ,   
\end{equation}   
$f $ a Borel function on $\mathbb{R }^n $: see for example \cite[theorem 5.23]{Sch}.   
The preceding argument still applies,   
and we can state the following theorem:   
   
\begin{theorem} \label{thm:CE_general_A_B} Let $B = (B_1 , \ldots , B_n ) $ be an $n $-tuple of strongly commuting self-adjoint operators with joint spectral measure $dE (\lambda ) $ and $A $ a symmetric operator with $\psi $ in the domain of $A $ . If $a = a(\lambda ) $ is the Radon-Nikodym derivative   
\begin{equation} \label{eq:E(A | B )}   
a (\lambda ) = \frac{dE _{A \psi , \psi } (\lambda ) }{dE _{\psi , \psi } (\lambda ) } , \ \ \lambda \in \mathbb{R }^n ,   
\end{equation}   
then the minimization problem $\min _f || (A - f(B) )\psi ||^2 $, $f $ ranging over the real-valued Borel functions on $\mathbb{R }^n $, has $f = {\rm Re } \, a $ as its $dE_{\psi , \psi } $-a.e. unique solution.   
\end{theorem}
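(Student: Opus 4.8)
The plan is to carry over, essentially verbatim, the completing-the-square computation done for a single operator $B$ in the paragraph preceding the statement; the only extra ingredient is the joint spectral theorem for the strongly commuting family $B=(B_1,\dots,B_n)$, which supplies the joint spectral resolution $B_k=\int_{\mathbb{R}^n}\lambda_k\,dE(\lambda)$ and the Borel functional calculus $f(B)=\int_{\mathbb{R}^n}f(\lambda)\,dE(\lambda)$ (see \cite[Theorem 5.23]{Sch}), with $\psi\in{\rm Dom}(f(B))$ exactly when $f\in L^2(\mathbb{R}^n,dE_{\psi,\psi})$.

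First I would establish $dE_{A\psi,\psi}\ll dE_{\psi,\psi}$: for any Borel $S\subseteq\mathbb{R}^n$, since $E(S)$ is an orthogonal projection one has $E_{A\psi,\psi}(S)=(E(S)A\psi,\psi)=(E(S)A\psi,E(S)\psi)$, hence $|E_{A\psi,\psi}(S)|\le\|E(S)A\psi\|\,\|E(S)\psi\|$, while $\|E(S)\psi\|^2=E_{\psi,\psi}(S)$; thus $E_{\psi,\psi}(S)=0$ forces $E(S)\psi=0$ and so $E_{A\psi,\psi}(S)=0$. The Radon–Nikodym derivative $a=dE_{A\psi,\psi}/dE_{\psi,\psi}$ therefore exists. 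Summing the same Cauchy–Schwarz inequality over a measurable partition of $S$ and using orthogonality of the vectors $E(S_i)A\psi$ gives $\int_S|a|^2\,dE_{\psi,\psi}\le E_{A\psi,A\psi}(S)\le\|A\psi\|^2$, so $a\in L^2(dE_{\psi,\psi})$ and in particular ${\rm Re}\,a\in L^2(dE_{\psi,\psi})$; this guarantees $\psi\in{\rm Dom}(({\rm Re}\,a)(B))$, i.e.\ the candidate minimizer is admissible, so the infimum is actually attained and not merely approached.

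Next, for admissible real-valued $f$ I would expand, using the functional calculus and the identity $dE_{\psi,A\psi}=\overline{dE_{A\psi,\psi}}$ (from $E(\cdot)^*=E(\cdot)$) together with the reality of $f$,
$$\|(A-f(B))\psi\|^2-\|A\psi\|^2=-2\,{\rm Re}(f(B)\psi,A\psi)+\|f(B)\psi\|^2=-2\int f\,{\rm Re}\,dE_{A\psi,\psi}+\int f^2\,dE_{\psi,\psi}.$$
Substituting $dE_{A\psi,\psi}=a\,dE_{\psi,\psi}$ and completing the square yields
$$\|(A-f(B))\psi\|^2-\|A\psi\|^2=\int(f-{\rm Re}\,a)^2\,dE_{\psi,\psi}-\int({\rm Re}\,a)^2\,dE_{\psi,\psi},$$
whose right-hand side is smallest precisely when the first integral vanishes, i.e.\ when $f={\rm Re}\,a$ holds $dE_{\psi,\psi}$-a.e., the minimal value being $\|A\psi\|^2-\int({\rm Re}\,a)^2\,dE_{\psi,\psi}$. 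Finally, if $g$ is a real-valued Borel function vanishing $dE_{\psi,\psi}$-a.e.\ then $\|g(B)\psi\|^2=\int g^2\,dE_{\psi,\psi}=0$, so $\psi\in{\rm Ker}(g(B))$; this identifies the remaining ambiguity in the minimizer as exactly such $g$.

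The only real obstacle, since the algebra is otherwise the scalar computation word for word, is bookkeeping with the unbounded operators: one must make sure $A\psi$ and $f(B)\psi$ are genuine elements of $\mathcal{H}$ (from $\psi\in{\rm Dom}(A)$ and admissibility of $f$, respectively) so that the inner-product manipulations and the completion of the square are legitimate, and — the point most easily overlooked — one must verify through the bound $\int|a|^2\,dE_{\psi,\psi}\le\|A\psi\|^2$ that the optimal $f={\rm Re}\,a$ is itself square-integrable against $dE_{\psi,\psi}$, so that $\psi$ lies in the domain of $({\rm Re}\,a)(B)$ and the minimum is genuinely attained.
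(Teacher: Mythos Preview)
Your proposal is correct and follows essentially the same route as the paper: expand $\|(A-f(B))\psi\|^2-\|A\psi\|^2$ via the spectral calculus, invoke Cauchy--Schwarz for the absolute continuity $dE_{A\psi,\psi}\ll dE_{\psi,\psi}$, and complete the square. You add the verification that $a\in L^2(dE_{\psi,\psi})$ so that $\psi\in{\rm Dom}(({\rm Re}\,a)(B))$ and the minimum is genuinely attained, a point the paper leaves implicit.
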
   
   
We again define the conditional expectation of $A $ given $B $ in the state $\psi $ by   
\begin{equation}   
\mathbb{E }_{\psi } (A | B ) = ({\rm Re } \, a ) (B)   
\end{equation}   
Clearly, the {\it prediction error} $\min _f || (A - f(B) )\psi ||^2 = || A \psi ||^2 - \bigl ( ({\rm Re } \, a ) (B)^2 \psi , \psi \bigr) $ which if $\psi \in {\rm Dom } (A^2 ) $ is equal to       
\begin{equation} \label{eq:E(A | B )_min}   
(A^2 \psi , \psi ) - \int ({\rm Re } \, a(\lambda ) )^2 dE_{\psi , \psi } (\lambda ) = \int \left( \frac{dE_{A^2 \psi , \psi } }{dE_{\psi , \psi } } - \left( {\rm Re } \, \frac{dE_{A \psi , \psi } }{dE_{\psi , \psi } } \right)^2 \, \right) dE_{\psi , \psi }   
\end{equation}   
If $\psi \in {\rm Dom } (A) $ but $\psi \notin {\rm Dom } (A^2 ) $, the   
predicition error can be written as   
\begin{equation} \label{eq:E(A | B )_min_bis} 
\int dE_{A \psi , A \psi } - \int \left( {\rm Re } \, \frac{dE _{A \psi , \psi } }{dE_{\psi , \psi } } \right) ^2 \, dE_{\psi , \psi }   
\end{equation}   
At the present level of generality, $dE _{A \psi , A \psi } $ is not necessarily absolutely continuous with respect to $dE_{\psi , \psi } $: $dE_{A \psi , A \psi } $, $dE_{\psi , \psi } $ can be any pair of finite measures $\mu _1 , \mu _2 $ on $\mathbb{R } ^n $ (take for example $H = L^2 (\mathbb{R }^n , \mu _1 ) \oplus L^2 (\mathbb{R }^n , \mu _2 ) $ with $B $ the operator of multiplication with $x \in \mathbb{R }^n $ and $\psi = (1 , 0 ) $, $A \psi = (0, 1 ) $ with $1 $ the function constantly equal to one). The best we can do is to use the Lebesgue decomposition $dE_{A \psi , A \psi } = \nu + f dE_{\psi , \psi } $ with respect to $dE_{\psi , \psi } $, with $\nu \perp dE_{\psi , \psi } $ and $f $ integrable, and re-write (\ref{eq:E(A | B )_min_bis}) as    
$$   
\int d\nu + \int (f - a ^2 ) \, dE_{\psi , \psi } .      
$$   
This has (\ref{CE_min_2}) as a special case, with $\nu = \mathbf{1 }_{ \{ \psi = 0 \} } A(\psi ) ^2 dx $ and $f  = \mathbf{1 }_{\{ \psi \neq 0 \} } | A \psi |^2 / |\psi |^2 . $         
   
\begin{examples} \label{examples_E(A|B)} \rm{1. If $B = (X_1 , \ldots , X_n ) $ on $L^2 (\mathbb{R }^n ) $, then 
the projection-valued measure is given by $E(F) = $ operator of multiplication by $\mathbf{1 }_F $ for any Borel-subset $F \subset \mathbb{R }^n $ and $E_{\varphi , \psi } (F) = \int _F \varphi \overline{\psi } dx $, so 
$dE_{\varphi , \psi } (x) = \varphi (x) \overline{\psi (x) } dx . $ In particular,   
$$   
\frac{dE_{A \psi , \psi } (x 
) }{dE_{\psi , \psi } (x 
) } = \frac{A\psi (x 
) \overline{\psi (x 
) } }{|\psi (x 
) |^2 } = \frac{A \psi (x 
) }{\psi (x 
) } ,   
$$   
and we recover theorem \ref{thm:CEofA|X}.   
\medskip   
   
\noindent 2. If $B $ has discrete spectrum with eigenvalues $\lambda _{\nu } $, $\nu \in \mathbb{N } $, and if $\Pi _{\nu } $ is the orthogonal projection onto the corresponding eigenspace, then   
$$   
dE(\lambda ) = \sum _{\nu } \delta (\lambda - \lambda _{\nu } ) \, \Pi _{\nu } ,   
$$   
and $dE_{\varphi , \psi } (\lambda ) = \sum _{\nu } (\Pi _{\nu } \varphi , \psi ) \delta (\lambda - \lambda _{\nu } ) = \sum _{\nu } (\varphi , \Pi _{\nu } \psi ) \delta (\lambda - \lambda _{\nu } ) . $ It follows that the Radon-Nikodym derivative   
$$   
\frac{dE _{A \psi , \psi } }{dE_{\psi , \psi } } (\lambda ) = \frac{ (A \psi , \Pi _{\nu } \psi ) }{|| \Pi _{\nu } \psi ||^2 } \ \ \mbox{if } \lambda = \lambda _{\nu } , \ 
\Pi _{\nu } \psi \neq 0 ,   
$$   
and 0 for other values of $\lambda $, and    
\begin{equation} \label{eq:CE_A|B_discrete}   
\mathbb{E }_{\psi } (A | B ) = \sum _{ \Pi _{\nu } \psi \neq 0 } \frac{ {\rm Re } \, (A \psi , \Pi _{\nu } \psi ) }{|| \Pi _{\nu } \psi ||^2 } \, \Pi _{\nu } .   
\end{equation}   

\noindent As a concrete example we can take the Pauli    
matrices acting on $\mathcal{H } = \mathbb{C }^2 $:   
\begin{equation} \label{eq:Pauli_matrices}   
\sigma _1 = \begin{pmatrix} 0 &1 \\ 1 &0 \end{pmatrix} , \, \sigma _2 = \begin{pmatrix} 0 &- i \\ i &0 \end{pmatrix} , \, \sigma _3 = \begin{pmatrix} 1 &0 \\ 0 &-1 \end{pmatrix}   
\end{equation}   
 If $\psi = (\psi _+ , \psi _- ) \in \mathbb{C }^2 $,and     
 $\Pi _+ $ and $\Pi _- $ are the orthogonal projections onto   
onto the $\pm 1 $-eigenspaces of $\sigma _3 $, then assuming   
that neither $\psi _+ $ nor $\psi_- $ is 0,   
\begin{eqnarray*} 
\mathbb{E } _{\psi } (\sigma _1 | \sigma _3 ) &=& \frac{ {\rm Re } \, \psi _- \overline{\psi  }_+ }{|\psi _+ |^2 } \,   
\Pi _+ + \frac{ {\rm Re } \, \psi _+ \overline{\psi }_- }{|\psi _- |^2 } ,   
\Pi _- \\   
&=& {\rm Re } \, \begin{pmatrix} \psi _- / \psi _+ &0 \\ 0 &\psi _+ / \psi _- \end{pmatrix} \\   
&=& \cos \varphi \cdot \begin{pmatrix} R_- / R_+ &0 \\ 0 &R_+ / R_- \end{pmatrix} ,   
\end{eqnarray*}   
where $R_{\pm } := | \psi _{\pm } | $ and $\varphi := {\rm arg } \psi _+ - {\rm arg } \, \psi _- $, the difference of the arguments.      
Similarly,   
$$   
\mathbb{E } _{\psi } (\sigma _2 | \sigma _3 ) =     
{\rm Im } \, \begin{pmatrix} \psi _- / \psi _+ &0 \\ 0 &- \psi _+ / \psi _- \end{pmatrix} = - \sin \varphi \cdot \begin{pmatrix} R_- / R_+ &0 \\ 0 &R_+ / R_- \end{pmatrix} ,   
$$   
while of course $\mathbb{E }_{\psi } (\sigma _3 | \sigma _3 ) = \sigma _3 . $ As regards the prediction error, assuming that $\psi $ is normalized, we find that     
$$   
\min _{f \, {\rm real } } \, \mathbb{E }_{\psi } ( ( \sigma_1 - f (\sigma _3 ) )^2 ) = || \sigma _1 \psi ||^2 - \left( \mathbb{E } _{\psi } (\sigma _1 | \sigma _3 )^2 \psi , \psi \right) = 1 - \cos ^2 \varphi ,   
$$   
which is equal to 0 if $\psi _+ = \pm \psi _- $, while   
$$   
\min _{f \, {\rm real } } \, \mathbb{E }_{\psi } ( ( \sigma_2 - f (\sigma _3 ) )^2 ) = 1 - \sin ^2 \varphi ,     
$$   
which is equal to 0 iff $\psi _+ = \pm i \psi _- . $   
\medskip   
   
\noindent 3. Another physically relevant example is that of conditional expectation with respect to the momentum operators $P = (P_1 , \ldots , P_n ) $, which are simultaneously diagonalized by the Fourier transform $\mathcal{F } (\psi ) (p) := \widehat{\psi } (p) := \int \psi (x) e^{- i x \cdot p } dx $: $\widehat{P _j \psi } = p \widehat{\psi } (p) . $ Since $\widehat{X_j (\psi ) } = i \nabla _j \widehat{\psi } $ we see that in the momentum representation, $\mathbb{E }_{\psi } (X_j | P ) $ is given by multiplication by   
\begin{equation} \label{eq:Epstein_pos}   
- {\rm Im } \, \frac{\overline{\widehat{\psi } (p) } (\nabla _j \widehat{\psi } ) (p) }{| \widehat{\psi } (p) |^2 } ,      
\end{equation}   
which are the particle positions in Epstein's model \cite{Ep1}, and which one might call the {\it Epstein position} (in analogy to the Bohm momentum).     
   
More generally, since multiplication by a potential $V(x) $ becomes convolution with $(2 \pi )^{-n } \widehat{V } $ in the momentum representation, it follows that (identifying operators of multiplication with the functions by which they multiply) 
\begin{equation}   
\mathbb{E }_{\psi } \left(- \frac{1 }{2 } \Delta + V(X) \, | \, P \right) = \frac{1 }{2 } p^2 + (2 \pi )^{-n } \frac{{\rm Re } (\widehat{V } * \psi )(p) \overline{\widehat{\psi } (p) } }{| \widehat{\psi } (p) |^2 } =: h_{E, \widehat{\psi } } (p) . 
\end{equation}   
To arrive at a fully fledged Bohm interpretation in momentum space we would need to supplement these equations by a guidance equation for trajectories $p(t) $ in momentum space which has to be compatible with the continuity equation for the probability density $|\widehat{\psi } (p , t ) |^2 $ when $\psi $ is a solution of the time-dependent Schr\"odinger equation. Now this continuity equation in momentum space reads   
\begin{equation}   
\partial _t | \widehat{\psi } |^2 = \frac{1 }{i } (2 \pi )^{-n } \left( \overline{\widehat{\psi } } \cdot (\widehat{V } * \widehat{\psi } ) - (\overline{\widehat{V } * \widehat{\psi } } ) \cdot \widehat{\psi } \right) = 2 (2 \pi )^{-n } {\rm Im } \, \overline{\widehat{\psi } } (\widehat{V } * \widehat{\psi } ) ,   
\end{equation}   
and the problem is that the right hand side is not, in an  obvious, or 'natural', way, the divergence of a vector field (except when $V $ is a polynomial, in which case convolution by $\widehat{V } $ is a partial differential operator and we can apply the results of \cite{StrVa}), though this can always be done more or less artificially: any function $g $ can be written as the divergence of a vector field, for example, $g = \nabla \cdot (\nabla u ) $ where $\Delta u = g $ (Epstein \cite{Ep2}). This is however far from being the only possibility (one can replace $\Delta $ by another elliptic operator in divergence form, one can add arbitrary divergence-free vector fields, etc.) and, as we have seen at the end of section 4, different choices of the guidance equation describe different physical situations which manifest themselves there through different Hamiltonians. For Epstein's model the situation still needs further clarification: is it for example possible to define a guidance equation for $p(t) $, such that the couple $(x_E (p(t) ), p(t) ) $ satisfies a Hamiltonian system?   
   
We finally note that the prediction error $\min _f \mathbb{E }_{\psi } ( \bigl ( X_j - f(P_j ) )^2 \bigr)  $  ($f $ real-valued) can, by the same computation we did for of the Bohm momentum, be expressed in terms of the Fisher information as $\frac{1 }{4 } I_F ( \rho _{\widehat{\psi } } )_{jj } $, where $\rho _{\widehat{\psi } } = |\widehat{\psi } |^2 . $ Similarly, the conditional variance $\mathbb{E }_{\psi } (X^2 | P ) - \mathbb{E }_{\psi } (X_j | P ) ^2 $ is given by a quantum-potential type formula $- \Delta |\widehat{\psi } | / |\widehat{\psi } | $ (whose expectation is $4 {\rm Tr } (I_F (\rho _{\widehat{\psi } } ) ) $), which however no longer has a natural interpretation as a potential in   
some underlying classical dynamical picture.   
\medskip   
      
\noindent 4.      
We take $B = - \Delta $, the Laplacian on $\mathbb{R }^n $, and     
compute $\mathbb{E }_{\psi } (P_j | \Delta ) $ and $\mathbb{E }_{\psi } (X_j | \Delta ) $, where $P_j $ and $X_j $ are the j-th component of the momentum and position operators. As is well-known, the spectral decomposition   
of the Laplacian is given by\footnote{as follows from $- \Delta \varphi (x) = (2 \pi )^{-n } \int |\xi |^2 \, \widehat{\varphi } (\xi ) e^{i (x, \xi ) } \, d\xi $   
by writing the integral in polar coordinates and making the change of variables $\lambda = |\xi |^2 . $ }   
\begin{equation}   
- \Delta = \int _0 ^{\infty } \lambda dE_{\lambda } , \ \ E_{\lambda } (\varphi ) = \frac{1 }{2 } (2 \pi )^{-n } \lambda ^{\frac{n }{2 } - 1 } \int _{S_{n - 1 } } \widehat{\varphi } (\sqrt{\lambda } \zeta ) e^{i \sqrt{\lambda } (x, \zeta ) } d\sigma (\zeta ) ,   
\end{equation}   
where $S_{n - 1 } $ is the unit sphere in $\mathbb{R }^n $   
with respect to the euclidian inner product $x \cdot \xi = x_1 \xi _1 + \cdots + x_n \xi _n $, $\widehat{\varphi } (\xi ) := \int \varphi (x) e^{- i x \cdot \xi } dx $ is the Fourier transform of $\varphi $ and $d\sigma (\zeta ) $ is the surface measure on $S_{n - 1 } $ (we use the letter $\zeta $ to designate the unit vectors in $\mathbb{R }^n $). It follows that   
$$   
d E_{\varphi , \psi } (\lambda ) = \frac{1 }{2 } (2 \pi )^{-n } \lambda ^{\frac{n }{2 } - 1 } \int _{S_{n - 1 } } \widehat{\varphi } (\sqrt{\lambda } \zeta ) \overline{\widehat{\psi } (\sqrt{\lambda } \zeta ) }  \, d\sigma (\zeta ) \cdot d\lambda ,      
$$   
and therefore     
$$   
\frac{d E_{P_j \psi , \psi } }{d E_{\psi, \psi } } (\lambda ) = \sqrt{\lambda } \cdot \frac{\int _{S_{n - 1 } } \zeta _j |\widehat{\psi } (\sqrt{\lambda } \zeta ) |^2  \, d\sigma (\zeta ) }{\int _{S_{n - 1 } }  (| \widehat{\psi } (\sqrt{\lambda } \zeta ) |^2 \, d\sigma (\zeta ) } .   
$$   
Note that this function is already real-valued. Next,   
$$   
\frac{d E_{X_j \psi , \psi } }{d E_{\psi, \psi } } (\lambda ) = i \frac{\int _{S_{n - 1 } }  (\partial _{\xi _j }\widehat{\psi } ) (\sqrt{\lambda } \zeta ) \, \overline{\widehat{\psi } (\sqrt{\lambda } \zeta ) } \, d\sigma (\zeta ) }{\int _{S_{n - 1 } }  (| \widehat{\psi } (\sqrt{\lambda } \zeta ) |^2 \, d\sigma (\zeta ) } ,   
$$   
since $\widehat{x_j \psi } (\xi ) = i \partial _{\xi _j } \widehat{\psi } (\xi ) . $ Taking real parts and replacing $\lambda $ by $- \Delta $ gives the two conditional expectations. In Fourier space representation these are then just 
multiplier operators,   
for example     
$$   
\mathcal{F } \, \mathbb{E }_{\psi } (P_j | - \Delta ) \, \mathcal{F }^{-1 } \, : \, \widehat{u } (\xi ) \to |\xi | \frac{\int _{S_{n - 1 } } \zeta _j |\widehat{\psi } (|\xi | \zeta ) |^2  \, d\sigma (\zeta ) }{\int _{S_{n - 1 } }  (| \widehat{\psi } (|\xi |  \zeta ) |^2 \, d\sigma (\zeta ) } \cdot \widehat{u } (\xi ) ,   
$$   
which is the quotient of the spherical averages of $\xi _j |\xi | \, |\widehat{\psi } |^2 $ and of $| \widehat{\psi } |^2 $ over the sphere of radius $|\xi | . $   
}   
\end{examples}   
   
\begin{remark} (Weak value operators.) \rm{We can again ask for a {\it complex-valued} $f $ which minimizes  $|| (A - f(B) ) \psi ||^2 $ (which  if $\psi $ is in the domain of $(A - f(B) )^* (A - f(B) ) $ is the same as $E_{\psi } ( |A - f(B) |^2 ) $). This minimum no longer needs to be 0, unless the (joint) spectrum of $B $ is multiplicity-free:   
\begin{eqnarray*}   
|| ( f(B) - A ) \psi ||^2 &=& \int | f(\lambda ) |^2 dE_{\psi , \psi } (\lambda )     
-  \int \overline{f(\lambda ) } \, dE_{A\psi , \psi } (\lambda ) - \int f(\lambda )  \, dE_{\psi , A\psi } (\lambda )+ || A \psi ||^2 \\   
&=& 
\int | f(\lambda ) - a(\lambda ) |^2 dE_{\psi , \psi } (\lambda ) + || A\psi ||^2 - \int |a(\lambda ) |^2 dE_{\psi , \psi } (\lambda ) ,   
\end{eqnarray*}   
with $a(\lambda ) $ the Radon-Nikodym derivative (\ref{eq:RN1}). This is obviously minimized by $f(\lambda ) = a(\lambda ) $ but the two final terms need not cancel out anymore. For example, if $B $  has discrete spectrum, then $a(\lambda ) = ( A \psi , \Pi _{\nu } \psi ) / || \Pi _{\nu } \psi ||^2 $ if $\lambda = \lambda _{\nu } $, and assuming that $\Pi _{\nu } \psi \neq 0 $ for all $\nu $, we find that   
$$   
\min _f || (f(B) - A ) \psi ||2 =   
\sum _{\nu } \, || \Pi _{\nu } A \psi ||^2 - \frac{| ( \Pi _{\nu } A \psi , \Pi _{\nu } \psi ) |^2 }{|| \Pi _{\nu } \psi ||^2 }   
$$   
By Cauchy-Schwarz, the individual terms in the sum on the right are non-negative, and   
equal to 0 iff $\Pi _{\nu } A \psi $ is a multiple of $\Pi _{\nu } \psi . $ This is always the case if the rank of $\Pi _{\nu } $ is 1, but not necessarily so otherwise.   
   
If $\Pi _{\nu } = | e_{\nu } \rangle \langle e_{\nu } | $ has rank 1, with $e_{\nu } $ an eigen-vector of $B $ of multiplicity 1, then $(A \psi , \Pi _{\nu } \psi ) = \overline{(\psi , e_{\nu } ) }(A \psi , e_{\nu } ) $, and   
$$   
\frac{ (A \psi , \Pi _{\nu } \psi ) }{|| \Pi _{\nu } \psi ||^2 } = \frac{(A \psi , e_{\nu } ) }{(\psi , e_{\nu } ) } ,   
$$   
the weak value of $A $ with pre-selected state $\psi $ and post-selected state $e_{\nu } . $ More generally, $(A \psi , \Pi _{\nu } \psi ) / || \Pi _{\nu } \psi ||^2 = (A \psi , \Pi _{\nu } \psi ) / (\psi , \Pi _{\nu } \psi ) $ is the weak-value of $A $ in the pre-selected state $\psi $ and post-selected state $\Pi _{\nu } \psi . $    
}   
\end{remark}   
   
We end this section with   
the behavior of conditional expectations under unitary transformations.   
   
\begin{proposition} Let $U : \mathcal{H } \to \mathcal{H } $ be unitary, $A $ symmetric, $B = (B_1 , \ldots , B_n ) $ strongly commuting and self-adjoint, $\psi \in {\rm Dom } (A) . $ Then   
\begin{equation} \label{eq:unitary}   
\mathbb{E }_{U \psi } ( U A U^* \, | \, U B U^* ) = U \mathbb{E } _{\psi } (A | B ) U^* .   
\end{equation}   
\end{proposition}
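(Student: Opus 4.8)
The plan is to exploit the variational characterisation of the conditional expectation from Theorem \ref{thm:CE_general_A_B} together with the elementary fact that unitary conjugation intertwines the (joint) Borel functional calculus. First I would record the structural preliminaries: if $B = (B_1,\ldots,B_n)$ is strongly commuting and self-adjoint with joint spectral measure $dE(\lambda)$, then $UBU^* := (UB_1U^*,\ldots,UB_nU^*)$ is again strongly commuting and self-adjoint, and by uniqueness of the joint spectral measure its spectral measure is $F \mapsto UE(F)U^*$ (indeed $UE(\cdot)U^*$ is a projection-valued measure and $\int \lambda_k\, d(UEU^*)(\lambda) = UB_kU^*$). Hence $f(UBU^*) = Uf(B)U^*$ for every Borel function $f$ on $\mathbb{R}^n$, with $U\psi \in {\rm Dom}(f(UBU^*))$ iff $\psi \in {\rm Dom}(f(B))$; and $UAU^*$ is symmetric with $U\psi \in {\rm Dom}(UAU^*)$ and $(UAU^*)(U\psi) = UA\psi$.

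The key step is then a one-line computation: for every real-valued Borel $f$ with $\psi \in {\rm Dom}(f(B))$,
$$
\big\| \big(UAU^* - f(UBU^*)\big) U\psi \big\|^2 = \big\| U\big(A - f(B)\big)U^* U\psi \big\|^2 = \big\| U\big(A - f(B)\big)\psi \big\|^2 = \big\| \big(A - f(B)\big)\psi \big\|^2 ,
$$
using $f(UBU^*) = Uf(B)U^*$ and the unitarity of $U$. Thus the two minimisation problems — minimising $\|(UAU^* - f(UBU^*))U\psi\|^2$ over real-valued Borel $f$, and minimising $\|(A - f(B))\psi\|^2$ over the same class — have literally the same objective function of $f$, hence the same minimisers. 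Applying Theorem \ref{thm:CE_general_A_B} to $(A,B)$ and $\psi$ produces a minimiser $f$ (namely $f = {\rm Re}\, a$ with $a$ the Radon–Nikodym derivative (\ref{eq:E(A | B )})); the same $f$ therefore minimises the left-hand problem, and applying the theorem to $(UAU^*, UBU^*)$ and $U\psi$ identifies $\mathbb{E}_{U\psi}(UAU^* \mid UBU^*) = f(UBU^*) = Uf(B)U^* = U\,\mathbb{E}_\psi(A\mid B)\,U^*$, which is (\ref{eq:unitary}).

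I do not expect a genuine obstacle here; the only point needing a little care is the "essentially unique" clause, i.e. checking that (\ref{eq:unitary}) holds for the canonical representatives and not merely modulo a $dE_{\psi,\psi}$-null function. This follows because the two relevant scalar control measures coincide:
$$
dE^{UBU^*}_{U\psi,U\psi}(F) = \big(UE(F)U^* U\psi,\, U\psi\big) = \big(E(F)\psi,\,\psi\big) = dE^{B}_{\psi,\psi}(F),
$$
so the null sets on both sides agree and the essentially-unique minimisers correspond under $f \mapsto f(UBU^*) = Uf(B)U^*$. I would also remark that the analogous identity for the complex-valued minimiser (the weak-value operator) holds by exactly the same argument, comparing instead the Radon–Nikodym derivatives $a(\lambda)$ directly, since $dE^{UBU^*}_{UAU^*\,U\psi,\,U\psi} = dE^{B}_{A\psi,\psi}$ as well.
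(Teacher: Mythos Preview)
Your proof is correct and follows essentially the same strategy as the paper: both establish that $UBU^*$ has joint spectral measure $UE(\cdot)U^*$, so that $f(UBU^*)=Uf(B)U^*$ and the relevant scalar measures $dE^{UBU^*}_{U\psi,U\psi}$, $dE^{UBU^*}_{UAU^*\,U\psi,U\psi}$ coincide with $dE^B_{\psi,\psi}$, $dE^B_{A\psi,\psi}$. The only difference is one of emphasis: the paper finishes by identifying the Radon--Nikodym derivatives directly and then integrating against $dE^U$, whereas you argue via the variational characterisation that the two minimisation problems share the same objective (hence the same minimiser); both are straightforward verifications and neither buys anything the other lacks.
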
   
   
\begin{proof} This is a straightforward verification. We have the commutative diagram      
$$   
\begin{matrix}   
\ & \mathcal{H } &\xrightarrow[B]{A} &\mathcal{H }&\  \\   
U &\downarrow &\ &\downarrow &U \\   
\ &\mathcal{H } &\xrightarrow[U B U^* ]{U A U^*} &\mathcal{H } &\ 
\end{matrix}   
$$   
If $dE (\lambda ) $ is the spectral resolution (projection-valued measure) of $B $, then\footnote{To be interpreted as $E^U (I ) = U E(I) U^* $ for any Borel subset $I \subseteq \mathbb{R }^n $ }    $d E^U (\lambda ) := U dE(\lambda ) U^* $ is that of $U B U^* $, and   
$$   
dE^U _{UAU^* (U \psi ) , U \psi } (\lambda ) = (U dE (\lambda ) U^* (U A \psi ) , U \psi ) = (dE (\lambda ) A \psi , \psi ) = dE_{A \psi , \psi } (\lambda ) .   
$$   
Similarly, $dE^U _{U \psi , U \psi } (\lambda ) = dE_{\psi , \psi } (\lambda ) $, so the Radon-Nikodym derivative   
$$   
\frac{dE^U _{UAU^* (U \psi ) , U \psi } (\lambda ) }{dE^U _{U \psi , U \psi } (\lambda ) } = \frac{d E_{A \psi , \psi} (\lambda ) }{dE_{\psi , \psi } (\lambda ) }      
$$   
and   
$$   
\mathbb{E }_{U \psi } (U A U^* | U B U^* ) = \int \frac{d E_{A \psi , \psi} (\lambda ) }{dE_{\psi , \psi } (\lambda ) } dE^U (\lambda ) = U \left( \int \frac{d E_{A \psi , \psi} (\lambda ) }{dE_{\psi , \psi } (\lambda ) } dE(\lambda ) \right) U^* = U \mathbb{E }_{\psi } (A | B ) U^* .   
$$   
Taking real parts finishes the proof.   
\end{proof}   
   
If for example $U = (2 \pi )^{-n / 2 } \mathcal{F } $ with $\mathcal{F } $ the Fourier transform on $H = L^2 (\mathbb{R }^n ) $ and $B = X , A = P . $ Recalling that $U P = X U \Rightarrow U^* X U = P $ and $U X = - P U \Rightarrow U^* P U = - X $ (with some abuse of notation: the $P $ and $X $ on the left hand sides are, respectively, multiplication and differentiation with respect to the variable $p $ of Fourier space), the proposition implies that   
\begin{eqnarray*}   
\mathbb{E }_{U^* \psi } ( X | P ) &=& - \mathbb{E }_{U^* \psi } (U^* P U | U^* X U ) \\   
&=& - U^* \mathbb{E }_{\psi } (P | X ) U^* \\   
&=& - U^* {\rm Im } \left( \frac{\nabla \psi }{\psi } \right) (X) U \\   
&=& - {\rm Im } \left( \frac{\nabla \psi }{\psi } \right) (U^* X U ) \\   
&=& - {\rm Im } \left( \frac{\nabla \psi }{\psi } \right) (P ) ,   
\end{eqnarray*}   
and on replacing $\psi $ by $U \psi = (2 \pi )^{-n / 2 } \mathcal{F }(\psi ) = (2 \pi )^{-n / 2 } \widehat{\psi } $, we find as before that   
$$   
\mathbb{E }_{\psi } (X | P ) = - {\rm Im } \left( \frac{ \nabla \widehat{\psi } }{\widehat{\psi } } \right) (P) ,   
$$   
cf. (\ref{eq:Epstein_pos}).   
\medskip

As another example, if $H $ is a  self-adjoint operator $\mathcal{H } $ and $U(t) = e^{- i t H } $ the associated unitary evolution,  then $X(t) = U(t)^* X U(t) $, $P(t) = U(t)^* P U(t) $ are the position and momentum operators in the Heisenberg picture, and one might define the Bohm momentum in the Heisenberg picture for a given (time-independent) state $\psi _0 $ by     
$$   
\mathbb{E }_{\psi _0 } \bigl ( P(t) \, | \, X(t ) \bigr ) .   
$$   
The previous proposition shows that this is equal to $U(t)^* \mathbb{E } _{\psi (t) } (P | X ) U(t) $, where $\psi (t) = U(t) \psi _0 $ is the solution of the Schr\"odinger equation $i \partial _t \psi = H \psi $ with initial value $\psi _0 . $   
   
\section{\bf Spin-$1/2 $ particles: pure spin}   
   
\subsection{Static spin-$1/2 $ particle in magnetic field} In this section and the next, we will apply theorem \ref{thm:CE_general_A_B} to     
the simplest quantum system, that of a static (immobile, infinitely massive) spin-$1/2 $ particle in a      
magnetic field $    
\mathbb{B } = (B_1, B_2 , B_3 ) $ which may be time-dependent: $\mathbb{B } = \mathbb{B } (t) . $ The Hamiltonian is       
\begin{equation} \label{eq:H_B}   
H := H_{\mathbb{B } } := B_1 S_1 + B_2 S_2 + B_3 S_3 
\end{equation}   
acting on $\mathbb{C }^2 $, where $S_j := \frac{1 }{2 } \sigma _j $, with   
the Pauli matrices $\sigma _j $ given by (\ref{eq:Pauli_matrices}).    
We write elements of the Hilbert space as $\psi = (\psi _+ , \psi _- ) \in \mathbb{C }^2 $, so that $\psi _{\pm}  $ are the components of $\psi $ with respect to the eigenbasis $e_+ = (1, 0 ) , e_- = (0, 1 ) $ of $\sigma _3 . $ Recall the commutation relations    
$$   
[ S_1, S_2 ] = i S_3 , \ [S_2 , S_3 ] = i S_1 , \ [S_3 , S_1 ] = i S_2 ;   
$$   
equivalently, $[ S_j , S_k ] = i \epsilon _{jkl } S_l $ with $\epsilon _{jkl } = 1 $ if $(jkl) $ is an even permutation of $(123) $ and $-1 $ otherwise.     
\medskip   
   
Let the wave-function $\psi = \psi (t) $ satisfies the time-dependend Schr\"odinger equation   
\begin{equation} \label{eq:Schro_spin_1/2}   
i \frac{d }{dt } \psi =   
H \psi (t) ,   
\end{equation}   
where we will assume that   
$\psi $ is normalized: $|| \psi (t) ||^2 = |\psi _+ (t) |^2 + |\psi _- (t) |^2 = 1 $ for all $t .$     
If we     
write $\langle A \rangle _{\psi (t) } $ for the expectation $\mathbb{E }_{\psi (t) } (A) = ( A \psi (t) , \psi (t) ) $ of an operator $A $ then it follows from      
$$   
\frac{d }{dt } \langle A \rangle _{\psi (t) } = \langle i [H, A ] \rangle _{\psi (t) } ,   
$$   
(for $A $'s not depending on time)  that the three-vector $\langle \mathbb{S } \rangle _{\psi (t) } := \left( \langle S_1 \rangle _{\psi (t) } , \langle S_2  \rangle _{\psi (t) } , \langle S_3  \rangle _{\psi (t) } \right) \in \mathbb{R }^3 $ satisfies the classical equation of movement      
\begin{equation} \label{eq:spin_classical}   
\frac{d }{dt } \langle 
\mathbb{S } \rangle _{\psi (t) } = \mathbb{B } \times \langle    
\mathbb{S } \rangle _{\psi (t) } ,   
\end{equation}   
where $\times $ is the exterior product on $\mathbb{R }^3 . $   
\medskip   
   
We  examine the dynamics of the conditional expectations of the $S_j $ with respect to $S_3 . $ Let   
\begin{equation} \label{eq:s_j}      
s_j := s_j (\psi ) :=   
\mathbb{E }_{\psi } ( S _j | S_3 ) = \begin{pmatrix} s_{j, + } &0 \\ 0 &s_{j, - } \end{pmatrix} \ \ j =  1, 2, 3  .   
\end{equation}   
The commutative algebra of diagonal matrices can be identified with the algebra of functions on the spectrum    
${\rm spec }(S_3 ) = \{ \pm \frac{1 }{2 } \} $ of $ S_3 $, with $s_3 = S_3 $ corresponding to the identity function.     
Let $\bbs $ and $\bbs ' $ be the respective vectors of conditional expectations   
\begin{equation} \label{eq:s}   
\bbs :=   
\bbs (\psi )  :=   
(s_1 , s_2 , s_3 ) , \ \ \bbs ' = (s_1, s_2 ) .         
\end{equation}   
These are vectors of diagonal matrices or, equivalently,   
vector-valued functions on     
${\rm spec } (S_3 ) $, sending $\pm \frac{1 }{2 } $ to $\bbs _{\pm } := (s_{1, \pm } , s_{2, \pm } , \pm \frac{1 }{2 } ) $ respectively $\bbs _{\pm } ' := (s_{1, \pm } , s_{2, \pm } ) . $   
       
We will see in the proof of theorem \ref{thm:CEdynamics_spin} below that $\bbs_+ ' $ determines $\bbs _- ' $ and vice-versa,     
via the relation   
\begin{equation} \label{eq:relation_s_+s_-}   
s_{j, - } = \frac{1 }{4 } \frac{s_{j, + } }{s_{1, + }^2 + s_{2, + }^2 } , \ j = 1, 2 ,   
\end{equation}   
plus the same equation with $+ $ and $- $ interchanged; $s_{j, + } $ can be $\pm \infty $ for $j = 1, 2 $ if $\psi _+ = 0 $, but $s_{j, - } $ will then be well-defined, and vice-versa. We also have that   
the Euclidean norm $||\bbs _{\pm } ||^2 = s_{1, \pm }^2 + s_{2 , \pm } ^2 + 1/4 $   
equals   
\begin{equation} \label{eq:norm_s}   
4 \, || \bbs _{\pm } ||^2 = |\psi _{\pm } |^{-2 } = \rho _{\pm } ^{-1 } ,   
\end{equation}   
with $\rho _{\pm } := |\psi _{\pm } |^2 $, the probability of being in the $\pm $-eigenstate of $S_3 $, as follows from $|| \bbs _+ ' ||^2 = \rho _- / 4 \rho _+ $, $|| \bbs _- ' ||^2 = \rho _+ / 4 \rho _- $, cf. (\ref{eq:s_1+is_2}) below,    
together with $\rho _+ + \rho _- = 1 . $   
   
Equation (\ref{eq:relation_s_+s_-}) implies that   
\begin{equation} \label{eq:orth_s_pm}   
\bbs _+ \cdot \bbs _- = s_{1, + } s_{1, - } + s_{2, + } s_{2, - } + s_{3, + } s_{3, - } = \frac{\rho _+ }{\rho _- } (s_{1, + }^2 + s_{2, + }^2 ) - \frac{1 }{4 } = 0   
\end{equation}   
This    
relation continues to hold in the context of the Pauli equation in the next section below, where $\rho _{\pm } $ are functions of $(x, t ) $, though (\ref{eq:norm_s})    
is no longer valid then, since $\rho _+ + \rho _- $ won't be necessarily equal to 1, pointwise, only its integral with respect to $x $ will be. We also note that   
\begin{equation} \label{eq:ext_pro_s_pm}   
(\bbs _+ \times \bbs _- )_3 = s_{1, + } s_{2, - } - s_{2, + } s_{1, - } = 0 .   
\end{equation}   
Conversely, (\ref{eq:orth_s_pm}) and (\ref{eq:ext_pro_s_pm}) together with $s_{3, \pm } = \pm \frac{1 }{2 } $ imply (\ref{eq:relation_s_+s_-}).   
   
\begin{theorem} \label{thm:CEdynamics_spin} The   
components $(s_1, s_2 ) $ of $ \bbs = \bbs (\psi ) $ determine the (normalized) wave-function $\psi $ up to a phase factor. If $\psi = \psi (t) $ is a solution of the Schr\"odinger equation (\ref{eq:Schro_spin_1/2}) with Hamiltonian (\ref{eq:H_B}), then $\bbs = \bbs (t) $ satisfies   
\begin{equation} \label{eq:cond_dyn}   
\frac{d }{dt } \bbs  = -2 (\mathbb{B } \times \bbs )_3    
\sigma _3 \bbs + \mathbb{B } \times \bbs + \sigma _3 \, || \bbs ||^2 (e_3 \times \mathbb{B } )   
\end{equation}   
on $\{ t : \psi _+ (t) \psi _- (t) \neq 0 \} $,    
together with the constraints (\ref{eq:relation_s_+s_-}).   
The latter are compatible with (\ref{eq:cond_dyn}) in the sense that if they are satisfied at the initial time $t = 0 $, then they are satisfied for all $t . $   
\end{theorem}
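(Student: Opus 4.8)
The plan is to prove the three assertions in order --- that $\psi\mapsto(s_1,s_2)$ is injective modulo an overall phase, the evolution equation (\ref{eq:cond_dyn}), and the propagation of the constraints (\ref{eq:relation_s_+s_-}) --- channelling almost all of the work through one geometric identity. I identify a diagonal matrix $\mathrm{diag}(a_+,a_-)$ with the pair $(a_+,a_-)$, so that $\bbs$ becomes a function on the beable set $\{\pm\}$ with values $\bbs_\pm:=(s_{1,\pm},s_{2,\pm},\pm\tfrac12)\in\mathbb{R}^3$; in this language $\sigma_3\bbs$ is $(\bbs_+,-\bbs_-)$, while $\mathbb{B}\times\bbs$ and $\|\bbs\|^2$ act blockwise. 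Since conditioning on $S_3$ is the same as conditioning on $\sigma_3$, we have $s_j=\tfrac12\,\mathbb{E}_\psi(\sigma_j|\sigma_3)$, so the Pauli-matrix computation in the examples of Section 5 gives, when $\psi_+\psi_-\neq0$,
\[
s_{1,+}+i\,s_{2,+}=\frac{\psi_-}{2\psi_+},\qquad s_{1,-}+i\,s_{2,-}=\frac{\overline{\psi_+}}{2\,\overline{\psi_-}} .
\]
Assertion (1) is then immediate: the first identity recovers $\psi_-/\psi_+$, and with $|\psi_+|^2+|\psi_-|^2=1$ this fixes $|\psi_+|$, hence $\psi$ up to the phase of $\psi_+$ (use the second identity when $\psi_+=0$). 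Dividing the two identities yields $s_{1,-}+i\,s_{2,-}=1/\!\left(4\,\overline{s_{1,+}+i\,s_{2,+}}\right)$, whose real and imaginary parts are exactly (\ref{eq:relation_s_+s_-}); taking moduli gives $\|\bbs_+\|^2=\rho_-/(4\rho_+)+\tfrac14=1/(4\rho_+)$ by $\rho_++\rho_-=1$ (this is (\ref{eq:norm_s})), and symmetrically for $\bbs_-$.

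Next I would set up the geometric identity. With $\hat n:=2\langle\mathbb{S}\rangle_\psi$ the Bloch vector (so $\hat n\in S^2$ for a normalized pure state, $\rho_\pm=\tfrac12(1\pm\hat n_3)$, and $\langle S_1\rangle_\psi+i\langle S_2\rangle_\psi=\overline{\psi_+}\psi_-$), the formulas above rearrange to
\[
\bbs_\pm=\frac{\hat n\pm e_3}{4\rho_\pm}=\frac{\hat n\pm e_3}{2(1\pm\hat n_3)}=:\sigma_\pm(\hat n),
\]
i.e.\ $\bbs_+$ and $\bbs_-$ are rescaled stereographic projections of the same $\hat n\in S^2$ from the south and north pole, and (\ref{eq:relation_s_+s_-}) is precisely the statement that such a common $\hat n$ exists --- recovered as $\hat n=\bbs_+/\|\bbs_+\|^2-e_3$, using $\|\bbs_+\|^2=1/(4\rho_+)$ and $\|\hat n\|=1$. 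For assertion (2): if $\psi$ solves (\ref{eq:Schro_spin_1/2}) then (\ref{eq:spin_classical}) gives $\dot{\hat n}=\mathbb{B}\times\hat n$, so $\|\hat n\|\equiv1$ and $\dot\rho_\pm=\pm\tfrac12(\mathbb{B}\times\hat n)_3$; differentiating $\bbs_\pm=(\hat n\pm e_3)/(4\rho_\pm)$ and substituting $\hat n=4\rho_\pm\bbs_\pm\mp e_3$, $\mathbb{B}\times\hat n=4\rho_\pm(\mathbb{B}\times\bbs_\pm)\pm(e_3\times\mathbb{B})$ (whence $(\mathbb{B}\times\hat n)_3=4\rho_\pm(\mathbb{B}\times\bbs_\pm)_3$ and $\dot\rho_\pm/\rho_\pm=\pm2(\mathbb{B}\times\bbs_\pm)_3$) together with $1/(4\rho_\pm)=\|\bbs_\pm\|^2$, every term collapses into
\[
\frac{d}{dt}\bbs_\pm=\mathbb{B}\times\bbs_\pm\pm\|\bbs_\pm\|^2(e_3\times\mathbb{B})\mp2(\mathbb{B}\times\bbs_\pm)_3\,\bbs_\pm ,
\]
which is the $\pm$-block of (\ref{eq:cond_dyn}).

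For assertion (3) I would argue by flow conjugacy. Reading (\ref{eq:cond_dyn}) as an abstract ODE, its equations for $\bbs_+$ and $\bbs_-$ decouple (the right-hand side for $\bbs_\pm$ involves only $\bbs_\pm$ and $\mathbb{B}(t)$), and the third entry of (\ref{eq:cond_dyn}) shows that $s_{3,\pm}=\pm\tfrac12$ is preserved, so we keep the $s_{3,\pm}$ constant. The computation behind (2) used only the two conjugacies $\sigma_\pm$, not that $\hat n$ came from a state: if $t\mapsto\hat n(t)$ solves $\dot{\hat n}=\mathbb{B}\times\hat n$ on $S^2$ (so that, setting $\rho_\pm(t):=\tfrac12(1\pm\hat n_3(t))$, one has $\|\sigma_\pm(\hat n)\|^2=1/(4\rho_\pm)$ automatically), then $t\mapsto\sigma_+(\hat n(t))$ and $t\mapsto\sigma_-(\hat n(t))$ solve respectively the $\bbs_+$- and $\bbs_-$-equations of (\ref{eq:cond_dyn}). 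Now suppose $\bbs(0)$ satisfies (\ref{eq:relation_s_+s_-}) with $\psi_+(0)\psi_-(0)\neq0$: put $\hat n_0:=\bbs_+(0)/\|\bbs_+(0)\|^2-e_3\in S^2$, so $\bbs_\pm(0)=\sigma_\pm(\hat n_0)$, and let $\hat n(t)$ solve $\dot{\hat n}=\mathbb{B}\times\hat n$ from $\hat n_0$. Then $\sigma_\pm(\hat n(t))$ and $\bbs_\pm(t)$ solve the same decoupled equation with the same initial value, so by uniqueness --- the right-hand sides being smooth in $\bbs_\pm$ away from the locus $1\pm\hat n_3=0$, where a component of $\psi$ vanishes --- we get $\bbs_\pm(t)=\sigma_\pm(\hat n(t))$ on the maximal interval avoiding that locus; being $\sigma_\pm$ of the same $\hat n(t)\in S^2$, the pair $(\bbs_+(t),\bbs_-(t))$ satisfies (\ref{eq:relation_s_+s_-}) for all such $t$, as claimed.

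The one real difficulty here is bookkeeping: fixing the signs in the stereographic identity $\bbs_\pm=(\hat n\pm e_3)/(4\rho_\pm)$, keeping straight how $\sigma_3$ acts on the pair $\bbs=(\bbs_+,\bbs_-)$ --- which is exactly what dresses the uniform $\frac{d}{dt}\bbs_\pm$-formula above into the $\sigma_3$-form (\ref{eq:cond_dyn}) --- and handling the degenerate instants $\psi_\pm=0$, where one block of $\bbs$ blows up and the roles of $+$ and $-$ must be swapped. A purely computational route to (3) is also available --- verify that $\bbs_+\cdot\bbs_-$ and $(\bbs_+\times\bbs_-)_3$ obey a closed homogeneous linear system along solutions of (\ref{eq:cond_dyn}), using the cancellation of the scalar triple products in $(\mathbb{B}\times\bbs_+)\cdot\bbs_-+\bbs_+\cdot(\mathbb{B}\times\bbs_-)$ --- but the conjugacy argument is cleaner and makes visible why both $\sigma_+$ and $\sigma_-$ transport the single precession flow on $S^2$.
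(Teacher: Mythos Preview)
Your proof is correct. The route differs in presentation from the paper's main proof but is essentially equivalent to the viewpoint the paper develops in Remark \ref{remark:spin_Riemann_sphere}.

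The paper's proof works component by component: it records $\rho_\pm s_{j,\pm}=\tfrac12\langle S_j\rangle_\psi$ and $\tfrac12\langle S_3\rangle_\psi\rho^{-1}=s_3-\tfrac14\sigma_3\rho^{-1}$, computes $\dot\rho$ from the Schr\"odinger equation, then differentiates the product $s_j\rho$ using (\ref{eq:spin_classical}) and multiplies by $\rho^{-1}$, treating $j=1,2$ separately. For the constraint compatibility it refers forward to the Riccati-equation argument in Remark \ref{remark:spin_Riemann_sphere}, where $z=\psi_-/\psi_+$ and $w=1/z$ are shown to satisfy quadratically nonlinear ODEs that transform into one another under $z\mapsto 1/z$.

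You instead package the same relation $\rho_\pm s_{j,\pm}=\tfrac12\langle S_j\rangle_\psi$ as the single vector identity $\bbs_\pm=(\hat n\pm e_3)/(4\rho_\pm)$ and differentiate it once, using only $\dot{\hat n}=\mathbb{B}\times\hat n$ and $\|\hat n\|=1$; this yields all three components of (\ref{eq:cond_dyn}) simultaneously and makes the structure of the quadratic terms transparent. Your flow-conjugacy argument for the constraints is the Bloch-sphere version of the paper's Riemann-sphere remark: the two stereographic charts $\sigma_\pm$ correspond to the charts $z,w$ on $\mathbb{P}^1(\mathbb{C})$, and your observation that both $\sigma_+$ and $\sigma_-$ push forward the precession flow to the respective $\bbs_\pm$-flows is exactly the statement that $z\mapsto 1/z$ intertwines the two Riccati equations. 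What your organisation buys is that the constraint (\ref{eq:relation_s_+s_-}) becomes the tautology ``$\bbs_+$ and $\bbs_-$ come from the same point of $S^2$,'' so its propagation is immediate once the conjugacy is established; the paper's presentation requires an explicit check (deferred to the remark) that (\ref{eq:relation_s_+s_-}) maps one ODE to the other.
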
   
   
Here, $e_3 = (0, 0, 1) $, the unit vector in the direction of the spin component with respect to which we are conditioning, and $e_3 \times \mathbb{B } = (- B_2 , B_1 , 0 ) . $ Equation (\ref{eq:cond_dyn}) is shorthand for    
\begin{equation} \label{eq:cond_dyn_s_bis}   
\frac{ds_{j, \pm } }{dt } = - (\pm )     
2 (\mathbb{B } \times \bbs _{\pm } )_3 s_{j, \pm } + (\mathbb{B } \times \bbs _{\pm } )_j \pm (e_3 \times \mathbb{B } )_j || \bbs _{\pm } ||^2 ,   
\end{equation}   
for $j = 1, 2, 3 . $   
For any $t $, at least one of the equations for $\bbs _+ $ or $\bbs _- $ will hold.           
Also note that the equation for $j = 3 $ is trivially satisfied, since $s_3 $ is constant equal to $\frac{1 }{2 } \sigma _3 $ and $\sigma _3 s_3 = \frac{1 }{2 } Id . $   
The differential equations for $\bbs _+ $ and $\bbs _- $ decouple, but the two     
are coupled through   
(\ref{eq:relation_s_+s_-}),   
and it suffices to solve (\ref{eq:cond_dyn_s_bis}) for $\bbs _+ $ or $\bbs _- . $   
Because of the quadratic terms in   
(\ref{eq:cond_dyn}), blow-up in finite time can be expected, and indeed does occur, as can be seen from the (known) explicit solution of (\ref{eq:Schro_spin_1/2}) when $\mathbb{B } $ is constant, but will not occur simultaneously,  as also follows from (\ref{eq:norm_s}) and $\rho _+ + \rho _- = 1 . $ Alternatively, we can interpret (\ref{eq:cond_dyn_s_bis}) as the Hamiltonian equations of motion for a particle on the Riemann sphere $\mathbb{P }_1 (\mathbb{C } ) $: cf. remark \ref{remark:spin_Riemann_sphere} below.   
   
\begin{proof} We compute   
\begin{equation} \label{eq:CEsigma_1}   
s_1 = \mathbb{E }_{\psi } (S_1 | S_3 ) = \begin{pmatrix} \frac{1 }{2 } {\rm Re } ( \psi _- \overline{\psi _+ } ) / |\psi _+ |^2 &0 \\ 0 &\frac{1 }{2 } {\rm Re } (\psi _+ \overline{\psi }_- ) / |\psi _- |^2 ) \end{pmatrix}     
= \frac{1 }{2 } \langle S_1 \rangle _{\psi } \, \rho ^{-1 } ,   
\end{equation}   
where   
\begin{equation} \label{eq:rho_spin}   
\rho = \rho _{\psi }  = \begin{pmatrix} |\psi _+ |^2 &0 \\ 0 &|\psi _- |^2 \end{pmatrix}   
\end{equation}   
Likewise   
\begin{equation} \label{eq:CEsigma _2}      
s_2 = \mathbb{E }_{\psi } ( S_2 | S_3 ) = \begin{pmatrix} \frac{1 }{2 } {\rm Im } (\psi _- \overline{\psi }_+ ) / |\psi _+ |^2 &0 \\   
0 & - \frac{1 }{2 } {\rm Im } (\psi _+ \overline{\psi }_- ) / |\psi _- |^2 \end{pmatrix} = \frac{1 }{2 } \langle   
S_2 \rangle _{\psi } \rho ^{-1 } .   
\end{equation}   
   
\noindent It is clear from these formulas that   
\begin{equation} \label{eq:s_1+is_2}   
s_1 + i \sigma _3 s_2 = \frac{1 }{2 } \begin{pmatrix} \psi _- / \psi _+ &0 \\ 0 &\psi _+ / \psi _- \end{pmatrix} ,   
\end{equation}   
from which (\ref{eq:relation_s_+s_-}) and (\ref {eq:norm_s}) follow easily. Also, since $|\psi _+ |^2 + |\psi _- |^2 = 1 $, knowing $\psi _+ / \psi _- $ first determines $|\psi _+ | $, $|\psi _- | $ and then the difference of the phases ${\rm arg } (\psi _+ ) - {\rm arg } (\psi _- ) $, so $s_1 $ and $s_2 $ determine $\psi = (\psi _+ , \psi _- ) $ up to an overall phase factor.   
We record for later reference that   
\begin{equation} \label{eq:rel_s_rho}   
\rho _+ s_{j, + } = \rho _- s_{j, - } = \frac{1 }{2 } \langle S _j \rangle _{\psi } , \ \ j = 1, 2 ,   
\end{equation}   
or   
$$   
\frac{1 }{2 } \langle S_j \rangle _{\psi } \rho ^{-1 } = s_j , \ \ j = 1, 2 ,   
$$   
while $\frac{1 }{2 } \langle S_3 \rangle _{\psi } = (\rho _+ - \rho _- ) / 4 = (2 \rho _+ - 1 ) / 4 = (1 - 2 \rho _- ) / 4 $ shows that $\frac{1 }{2 } \langle S_3 \rangle _{\psi } \rho _+ ^{-1 } = \frac{1 }{2 } -  (4 \rho _+ )^{-1 } $ and $\langle S_3 \rangle _{\psi } \rho _- ^{-1 } = - \frac{1 }{2 } + (4 \rho _- )^{-1 } $, or   
\begin{equation} \label{eq:rel_s_rho_3}   
\frac{1 }{2 } \langle S_3 \rangle _{\psi } \rho ^{-1 } = s_3 - \frac{1 }{4 } \sigma _3 \rho ^{-1 } .   
\end{equation}   
To prove (\ref{eq:cond_dyn_s_bis}) we will differentiate these relations.  We first determine the time derivative of $\rho . $ If   
\begin{equation} \label{eq:matrix_H}   
\begin{pmatrix} H_{++ } &H_{+ - } \\ H_{- + } &H_{- - } \end{pmatrix} = \frac{1 }{2 } \begin{pmatrix} B_3 &B_1 - i B_2 \\ B_1 + i B_2 &- B_3 \end{pmatrix}   
\end{equation}   
is the matrix of $H = H_{\mathbb{B } } $ then (\ref{eq:Schro_spin_1/2}) implies that   
\begin{equation} \label{eq:conserv_proba_spin}   
\frac{d }{dt } |\psi _+ |^2 = 2 \, {\rm Im } \, \overline{\psi }_+ H_{+ - } \psi _- = - \frac{d }{dt } |\psi _- |^2 ,   
\end{equation}   
which shows that $\sigma _3 d \rho (t) / dt = 2 \, {\rm Im } \, \overline{\psi }_+ H_{+ - } \psi _- =      
(B_1 \langle S_2 \rangle _{\psi } - B_2 \langle S_1 \rangle _{\psi } ) $, or   
$$  
 \frac{d }{dt } \rho (t ) = 2 (\mathbb{B } \times \bbs )_3 \sigma _3 \rho (t) ,   
$$   
where we used     
(\ref{eq:rel_s_rho}).   
Finally, differentiating   
(\ref{eq:rel_s_rho}) and recalling (\ref{eq:spin_classical}), we find for $j = 1, 2 $ that 
$$   
\left(\frac{d }{dt } s_j \right) \rho +   
2 (\mathbb{B } \times \bbs )_3 \sigma _3 \rho s_j      
= \frac{1 }{2 } (\mathbb{B } \times \langle    
\mathbb{S } \rangle _{\psi } )_j .   
$$   
If we multiply this by $\rho ^{-1 } $ we find for example     
that when $j = 1 $,   
\begin{eqnarray*}   
\frac{d s_1 }{dt } + 2 (\mathbb{B } \times \bbs )_3 \sigma _3 s_1 &=& \frac{1 }{2 } ( B_2 \langle S_3 \rangle _{\psi } - B_3 \langle S_2 \rangle _{\psi } ) \rho ^{-1 } \\   
&=& B_2 s_3 - B_3 s_2 - \frac{1 }{4 } B_2 \sigma _3 \rho ^{-1 } \\   
&=& (\mathbb{B } \times \bbs )_1 - B_2 \sigma _3 || \bbs ||^{2 } ,   
\end{eqnarray*}   
where we used (\ref{eq:rel_s_rho_3}) and (\ref{eq:norm_s}). Simillarly,   
$$   
\frac{d s_2 }{dt } + 2 (\mathbb{B } \times \bbs )_3 \sigma _3 s_2 = (\mathbb{B } \times \bbs )_2 + \frac{1 }{4 } B_1 \sigma _3 \rho ^{-1 } =  (\mathbb{B } \times \bbs )_2 + B_1 \sigma _3 || \bbs ||^2 .   
$$   
which proves (\ref{eq:cond_dyn}) for $j = 1, 2 $, since $e_3 \times \mathbb{B } = (- B_2, B_1 , 0 ) . $ As already observed, the equation for $s_3 $ is trivially satisfied.   
One    
finally checks that if $\bbs _+ ' $ satisfies   
(\ref{eq:cond_dyn_s_bis}) then so does $\bbs _- ' $ defined by (\ref{eq:relation_s_+s_-}); see also remark \ref{remark:spin_Riemann_sphere} below.   
It follows that if (\ref{eq:relation_s_+s_-}) holds for $t = 0 $ then it holds for all $t $, by uniqueness of solutions of ODEs.   
\end{proof}   
   
\begin{remark} \label{remark:spin_Riemann_sphere} \rm{As another take on theorem \ref{thm:CEdynamics_spin}, we note that if $\psi = (\psi _+ , \psi _- ) $ satisfies the Schr\"odinger equation with Hamiltonian (\ref{eq:matrix_H}), then   
an easy computation shows that $z = \psi _- / \psi _+ $ satisfies   
$$   
\frac{d }{dt } \left( \frac{\psi _- }{\psi _+ } \right) = \frac{1 }{2 } (B_2 - i B_1 ) + i B_3 \left( \frac{\psi _- }{\psi _+ } \right) + \frac{1 }{2 } (B_2 + i B_1 ) \left( \frac{\psi _- }{\psi _+ } \right) ^2 ,   
$$   
or   
\begin{equation} \label{eq:ODE_z}   
\frac{dz }{dt } = \frac{1 }{2 } (B_2 - i B_1 ) + i B_3 z + \frac{1 }{2 } (B_2 + i B_1 ) z^2 .   
\end{equation}   
Since $2(s_{1, + } + i s_{2, + } ) = \psi _- / \psi _+ $, one obtains a system for $\bbs _+ ' $ which one     
straightforwardly verifies to be identical to the one given by (\ref{eq:cond_dyn_s_bis}).   
If $w = 1 / z $ then $w $ satisfies   
$$   
\frac{dw }{dt } = - \frac{1 }{2 } (B_2 + i B_1 ) - i B_3 w -  \frac{1 }{2 } (B_2 - i B_1 ) w^2   
$$   
which, given that $w = \psi _+ / \psi _- = 2(s_{1, - } - i s_{2, - } ) $ turns out to be equivalent to (\ref{eq:cond_dyn_s_bis}) for $\bbs _- ' = (s_{1, - } , s_{2, - } ) . $    
Note that since $w = 1 / s $ is equivalent to (\ref{eq:relation_s_+s_-}), this also shows that the latter transforms the ODE for $\bbs _+ ' $ into the one for $\bbs _- '  . $   
   
\begin{versionA}   
\smallskip   
   
\noindent \textcolor{blue}{Details (for the record):   
\begin{eqnarray*}   
\frac{d }{dt } \left( \frac{\psi _- }{\psi _+ } \right) &=& \frac{- i (H \psi )_- }{\psi _+ } + \frac{i (H \psi )_+ }{\psi _+ } \frac{\psi _- }{\psi _+ } \\   
&=& \frac{1 }{2 } \frac{- i ( (B_1 + i B_2 ) \psi _+ - B_3 \psi _- ) }{\psi _+ } + \frac{1 }{2 } \frac{i (B_3 \psi _+ + (B_1 - i B_2 ) \psi _- ) }{\psi _+ } \frac{\psi _- }{\psi _+ } \\   
&=& \frac{1 }{2 } (B_2 - i B_1 ) + i B_3 \frac{\psi _- }{\psi _+ } + \frac{1 }{2 } (B_2 + i B_1 ) \left( \frac{\psi _- }{\psi _+ } \right)^2   
\end{eqnarray*}   
Quite similarly, from the Schr\"odinger equation (though this computation is strictly speaking not necessary, since $\psi _+ / \psi _- = (\psi _- / \psi _+ )^{-1 } $ and the ODE follows from the ODE for the latter ... ),   
\begin{eqnarray*}   
\frac{d }{dt } \left( \frac{\psi _+ }{\psi _- } \right) &=& \frac{- i (H \psi )_+ }{\psi _- } + \frac{i (H \psi )_- }{\psi _- } \frac{\psi _+ }{\psi _- } \\   
&=& \frac{1 }{2 } \frac{- i ( B_3 \psi _+ + (B_1 - i B_2 ) \psi _- ) }{\psi _- } + \frac{1 }{2 } \frac{i ((B_1 + i B_2 ) \psi _+ - B_3 \psi _-  ) }{\psi _- } \frac{\psi _+ }{\psi _- } \\   
&=& - \frac{1 }{2 } (B_2 + i B_1 ) - i B_3 \frac{\psi _+ }{\psi _- } - \frac{1 }{2 } (B_2 - i B_1 ) \left( \frac{\psi _+ }{\psi _- } \right)^2 ,   
\end{eqnarray*}   
cf. the equation for $w $ below. Next, rewrite these as equations for $s_{1, + } + i s_{2, + } = \psi _- / 2 \psi _+ $ and $s_{1, - } - i s_{2, - } = \psi _- / 2 \psi _+ . $ For example, for the first (the calculations are quite pedestrian but we take this as an opportunity to have an independent verification of theorem ... above):   
$$     
2 \frac{d }{dt } (s_{1, + } + i s_{2, + } ) = \frac{1 }{2 } (B_2 - i B_1 ) + 2 i B_3 (s_{1, + } + i s_{2, + } ) + 2 (B_2 + i B_1 ) (s_{1, + } ^2 - s_{2, + }^2 + 2 i s_{1, + } s_{2, + } )     
$$   
On taking real parts,   
$$   
\frac{d s_{1, + } }{dt } = \frac{1 }{4 } B_2 - B_3 s_{2, + } + B_2 (s_{1, + }^2 - s_{2, + } ^2 ) - 2 B_1 s_{1, + } s_{2, + } ,   
$$   
which is equal to   
\begin{eqnarray*}    
&& - 2 (\mathbb{B } \times \bbs _+ )_3 s_{1, + } + (\mathbb{B } \times \bbs )_1 - B_2 || \bbs _+ ||^2 \\   
&=& - 2 (B_1 s_{2, + } - B_2 s_{1, + } ) s_{1, + } + (\frac{1 }{2 } B_2 - B_3 s_{2, + } ) - B_2 \left( (s_{1, + }^2 + s_{2, + } ^2 + \frac{1 }{4 } \right) 
\end{eqnarray*}   
Similarly, on taking imaginary parts,   
$$   
\frac{d s_{2, + } }{dt } = - \frac{1 }{4 } B_1 + B_3 s_{1, + } + B_1 (s_{1, + } ^2 - s_{2, + }^2 ) + 2 B_2 s_{1, + } s_{2, + } ,    
$$   
which is equal to   
\begin{eqnarray*}   
&&- 2 (\mathbb{B } \times \bbs _+ )_3 s_{2, + } + (\mathbb{B } \times \bbs )_2 + B_1 || \bbs _+ ||^2 \\   
&=& - 2 (B_1 s_{2, + } - B_2 s_{1, + } ) s_{2, + } + B_3 s_{1, + } - \frac{1 }{2 } B_1 + B_1 \left( s_{1, + }^2 + s_{2, + }^2 + \frac{1 }{4 } \right) \\   
&=& - \frac{1 }{4 } B_1 + B_3 s_{1, + } + B_1 (s_{1, + }^2 - s_{2, + } ^2 ) + 2 B_1 s_{1, + } s_{2, + } ,   
\end{eqnarray*}   
as claimed. \\   
\\   
We now redo these calculations for $\bbs _- ' $ (N.B. Isn't there a more conceptual way of seeing this? Perhaps not: we are in fact verifying "by hand" that if $\bbs _+ ' $ satisfies the ODE for $\bbs _+ '$ then $\bbs _- := || \bbs _+ ' ||^{-2 } \bbs _+ ' $ satisfies the one for $\bbs _- $, which does not seem entirely trivial, in that it can't be seen immediately, by inspection and without calculation(?))   
Taking real and imaginary parts of   
$$   
2 \frac{d }{dt } (s_{1, - } - i s_{2, - } ) = - \frac{1 }{2 } (B_2 + i B_1 ) - 2 i B_3 (s_{1, - } - i s_{2, - } )  - 2 (B_2 - i B_1 ) (s_{1, - }^2 - s_{2, - }^2 - 2 i s_{1, - } s_{2, - } ) ,   
$$   
we find that   
$$   
\frac{d s_{1, - } }{dt } = - \frac{1 }{4 } B_2 - B_3 s_{2, - } - B_2 (s_{1, - }^2 - s_{2, - } ^2 ) + 2 B_1 s_{1, - } s_{2, - }   
$$   
$$   
\frac{ds_{2, - } }{dt } = \frac{1 }{4 } B_1 + B_3 s_{1, - } - B_1 (s_{1, - } ^2 - s_{2, - } ^2 ) - 2 B_2 s_{1, - } s_{2, - } ,   
$$   
while   
\begin{eqnarray*}   
&&2 (\mathbb{B } \times \bbs _- )_3 s_{1, - } + (\mathbb{B } \times \bbs _- )_1 - (e_3 \times \mathbb{B } )_1 || \bbs _- ||^2 \\   
&=& 2 (B_1 s_{2, - } - B_2 s_{1, - } ) s_{1, - } - \frac{1 }{2 } B_2 - B_3 s_{2, - } + B_2 \left( s_{1, - }^2 + s_{2, - } ^2 + \frac{1 }{4 } \right) \\   
&=& - \frac{1 }{4 } B_2 - B_3 s_{2, - } - B_2 (s_{1, - } ^2 - s_{2, - } ^2 ) + 2 B_1 s_{1, - } s_{2, - }   
\end{eqnarray*}   
and   
\begin{eqnarray*}   
&&2 (\mathbb{B } \times \bbs _- )_3 s_{2, - } + (\mathbb{B } \times \bbs _- )_2 - (e_3 \times \mathbb{B } )_2 || \bbs _- ||^2 \\   
&=& 2 (B_1 s_{2, - } - B_2 s_{1, - } ) s_{2, - } + B_3 s_{1, - } + \frac{1 }{2 }  B_1 - B_1 \left( s_{1, - }^2 + s_{2, - } ^2 + \frac{1 }{4 } \right) \\   
&=& \frac{1 }{4 } B_1 + B_3 s_{1, - } - B_1 (s_{1, - }^2 - s_{2, - } ^2 ) - 2 B_2 s_{1, - } s_{2, - } ,   
\end{eqnarray*}   
and we recover the equations for $\bbs _- ' $ \\   
\\   
\noindent {\bf Alternative derivation} (pas encore abouti). As mentioned, this shows that the constraints are compatible with the system of ODEs for $\bbs . $ Another approach would be to show that the constraints are conserved by the flow. For example, the third constraint is $s_3 - \frac{1 }{2 } \sigma _3 = 0 . $ Differentiating,   
\begin{eqnarray*}   
\frac{d }{dt } \left( s_3 - \frac{1 }{2 } \sigma _3 \right) &=& - 2 (\mathbb{B } \times \bbs )_3 \sigma _3 s_3 + (\mathbb{B } \times \bbs )_3 \\   
&=& - 2 (\mathbb{B } \times \bbs )_3 \sigma _3 \left( s_3 - \frac{1 }{2 } \sigma _3 \right) ,    
\end{eqnarray*}   
which is a linear equation for $s_3 - \frac{1 }{2 } \sigma _3 $ whose solution is 0 if its initial value is 0. Similarly,   
\begin{eqnarray*}   
\frac{d }{dt } (\bbs _+ \cdot \bbs _- ) &=& \frac{d\bbs _+ }{dt } \cdot \bbs _- + \bbs _+ \cdot \frac{d\bbs _- }{dt } \\   
&=& (\mathbb{B } \times \bbs _+ ) \cdot \bbs _- + (\mathbb{B } \times \bbs _- ) \cdot \bbs _- + || \bbs _+ ||^2 (e_3 \times \mathbb{B } ) \cdot \bbs _- - || \bbs _- ||^2 (e_3 \times \mathbb{B } ) \cdot \bbs _+   
\end{eqnarray*}   
The sum of the first two terms is 0 since $(a \times b ) \cdot c = {\rm det } ( a | b | c ) $ is anti-symmetric under exchange of two vectors. If we already are on the constraint surface, then the sum of the last two terms   
\begin{eqnarray*}   
|| \bbs _+ ||^2 (e_3 \times \mathbb{B } ) \cdot \bbs _- - || \bbs _- ||^2 (e_3 \times \mathbb{B } ) \cdot \bbs _+   &=& \frac{1 }{4 \rho _+ } (- B_2 s_{1, - } + B_1 s_{2, - } ) - \frac{1 }{4 \rho _- } (- B_2 s_{1, + } + B_1 s_{2, + } ) \\   
&=& 0 ,   
\end{eqnarray*}   
by (\ref{eq:rel_s_rho}).   
\begin{itemize}   
\item Can one show that $\frac{d }{dt } \bbs _+ \cdot \bbs _- $ is a linear combination (with functions as coefficients) of the three constraint functions, $s_3 - \frac{1 }{2 } \sigma _3 $, $\bbs _+ \cdot \bbs _- $ and $(\bbs _+ \times \bbs _- )_3 ? $    
\item Haven't been able (yet?) to verify that the same argument works for $(\bbs _+ \times \bbs _3 )_3 \, ... $   
\end{itemize}   
}   
\medskip   
   
\end{versionA}   
   
We can consider $z = \psi _- / \psi _+ $ and $w = 1 / z $ as local carts on the one-dimensional projective space,  or Riemann sphere, $\mathbb{P }_1 (\mathbb{C } ) = \{ (\psi _+ : \psi _- ) : (\psi _+ , \psi _- ) \in \mathbb{C }^2 \setminus 0 \} = \mathbb{C } \cup \{ \infty \} . $ The ODEs for $z $ and $w $ then define a flow on the Riemann sphere which simply is the projection of the Schr\"odinger flow under the canonical projection $\mathbb{C }^2 \setminus 0 \to \mathbb{P }_1 (\mathbb{C } ) . $ This flow is known to be Hamiltonian for the Fubini-Study symplectic form. This is a special case of Marsden-Weinstein reduction of the Schr\"odinger flow of $\mathbb{C }^2 $, which is well-known to be symplectic for the symplectic form $\frac{1 }{2 } \sum _j  {\rm Im } dz_j  \wedge d\overline{z }_j $ with Hamiltonian $(H z , z ) $, where $z = (z_1, z_2 ) . $ Since both the symplectic form and the flow are are invariant under the, symplectic, action of the circle group, $z \to e^{i \varphi } z = (e^{ - i \varphi } z_1 , e^{i \varphi } z_2 ) $, generated by the Hamiltonian $\frac{1 }{2 } ( |z_1 |^2 + |z_2 |^2 ) $, this gives rise to a Hamilton flow on   
$\{  z \in \mathbb{C }^2 : |z |^2 = 1 \} / S^1 = S^3 / S^1 = \mathbb{P }^1 (\mathbb{C } ) $ for the Hamilton function $(H z , z ) . $ The equations of theorem \ref{thm:CEdynamics_spin} are precisely these Hamilton equations expressed in the two standard charts of $\mathbb{P }^1 (\mathbb{C } ) . $ What is new is the physical interpretation of the coordinates of these standard charts in terms of conditional expectations of the spin-components $S_1 $, $S_2 $ with respect to $S_3 . $      
 }   
\end{remark}   
   
\subsection{Discussion/interpretations} The second term on the right of (\ref{eq:cond_dyn}) is just the classical force the magnetic field exerts on a spin vector $\bbs . $ The physical interpretation of the first and of the final term is    
less immediate. If $\mathbb{B } = (0, 0, B_3 ) $, so that $\mathbb{B } $ is (permanently) aligned with $S _3 $, these terms are 0 and we have two uncoupled classical equations of motion $d \bbs _{\pm } / dt = \mathbb{B } \times   
\bbs _{\pm } . $ Note that it would appear natural, in an experimental set-up with a magnetic field which has a fixed direction, to condition with respect to the spin-component    
in the direction of the field.   
   
If $\mathbb{B } $ and $S_3 $ are not aligned, the first term of (\ref{eq:cond_dyn}) still derives from classical     
angular momentum dynamics if we (somewhat artificially, in a classical context) decide to normalize   
the third component to $\pm \frac{1 }{2 } $: suppose that the $\mathbb{R }^3 $-valued function $\mathbb{S }^{cl } = (S_1 ^{cl } , S_2 ^{cl} , S_3 ^{cl } ) $ satisfies   
$$   
\frac{d }{dt } \mathbb{S }^{cl } = \mathbb{B } \times \mathbb{S }^{cl } .   
$$   
Then one easily checks that the $s_{j , \pm }^{cl } := \pm (2 S_3 ^{cl } )^{-1 } S^{cl } _j $   
satisfy the analogue of equation (\ref{eq:cond_dyn}) but without the final term   
involving $(e_3 \times \mathbb{B } ) . $ The latter can then be thought of as having a quantum mechanical origin. 
\medskip   
         
In the original Bohm interpretation for solutions of the non-relativistic Schr\"odinger equation on Euclidean space, the "hidden variables" or {\it beables}, in Bell's terminomogy, are the position variables $x $, which in the context of the present paper correspond to the elements of the spectrum of the position operator $X $ with respect to which we are conditioning. By analogy, for the spin $1/2 $-particle, the corresponding space of beables   
would be the spectrum of $S_3 $, $\{ \pm \frac{1 }{2 } \} $, and we can then   
interpret $\bbs _{\pm } $ as the value of the (quantum mechanical) spin    
$\mathbb{S } $    
when the beable is $\pm \frac{1 }{2 } . $ (Note that the beable itself is in fact the third component of $\bbs _{\pm } . $) This occurs with a probability of $\rho _{\pm } = |\psi _{\pm } |^2 $, probabilities which are themselves determined by $\bbs _{\pm } $,    
by equation (\ref{eq:norm_s}). This is    
similar to $p_B (x, t ) $ being interpreted as the particle's momentum when the particle's position is in $[x, x + dx ] $, which occurs with a probabilty of $|\psi (x, t ) |^2 dx . $ Note that Bohm momentum $p_B $ also determines the probability density $\rho = |\psi |^2 $ (modulo its initial value at time 0), through the continuity equation $\partial _t \rho + \nabla \cdot \rho p_B = 0 . $   
The expected value of $\bbs $,   
$\rho _+ \bbs _+ + \rho _- \bbs _- $, is equal to the quantum mechanical expectation $\langle \mathbb{S } \rangle _{\psi } $ (as follows from (\ref{eq:rel_s_rho}) and from $\frac{1 }{2 } (\rho _+ - \rho _-  ) = \langle S_3 \rangle _{\psi } $), again in    
analogy with the Bohm interpretation, where the expected value of $p_B $ is $\langle P \rangle _{\psi } = (P \psi , \psi ) . $      
   
This interpretation of theorem \ref{thm:CEdynamics_spin} is however still close orthodox quantum mechanics, in the sense that at each instant $t $ there is a probability of the beable being either $+ $ or $- $, but there is as yet no causal mechanism, deterministic or classically stochastic, for its change in value, as there is for the change of position in the Bohm theory. At this point the main difference between theorem \ref{thm:CEdynamics_spin} and standard quantum mechanics is that the dynamics    
of the non-commuting operators $S_1, S_2, S_3 $ (in the Heisenberg picture, say) is replaced by an equivalent dynamics   
of the commuting operators $s_1, s_2, s_3 . $   
   
In the Bohm interpretation, the particle is moving on the beable-space ${\rm spec }(X) = \mathbb{R }^n $, its mouvement being governed by the guidance equation. For the present "pure spin" case, there is as yet no talk of movement on $\{ \pm \frac{1 }{2 } \} $: as already noted, the equations for $\bbs _+ $ and $\bbs _- $ decouple, and the picture is that of two spin vectors $\bbs _{\pm } \in \mathbb{R }^3 $ evolving autonomously, with    
at each moment in time, a time-dependent probability of   
of $\frac{1 }{4 } || \bbs _{\pm } ||^{-2 } $ of being one or the other. In fact (though this may be specific to the spin $1/2$ case), since $\bbs _+ $ determines $\bbs _- $ by. (\ref{eq:relation_s_+s_-}), it suffices to consider just the evolution of $\bbs _+ $, except in neighborhoods of times $t $ where $|| \bbs _+ (t) || = \infty $, where one should pass to the equations for $\bbs _- . $   
\medskip   
   
Bohm's guidance equation can be motivated by 
the fact that $\partial _t \rho $ is equal to a  divergence. There is no known analogue of this for equation (\ref{eq:conserv_proba_spin}) or, more generally, when we condition with respect to an operator with discrete spectrum but, following Bell \cite{Be}, one can rewrite the continuity equation for $\rho $ as the Kolmogorov equation for a two-state Markov chain, thereby introducing a jump process on the space of beables, and re-interpret theorem \ref{thm:CEdynamics_spin} in terms of such a process. (See also, for example, \cite{Vi} and \cite{DGTZ} and its references
for further developments of Bell's idea).  This can in fact be done in an infinite number of ways. If we let   
\begin{equation} \label{eq:J+-}   
J_{+ - } := 2 \, {\rm Im } \, \overline{\psi }_+ H_{+ - } \psi _- =   
\bigl ( \mathbb{B } \times \langle \mathbb{S } \rangle _{\psi } \bigr )_3 , \ \  J_{- + } = 2 \, {\rm Im } \overline{\psi }_- H_{- + } \psi _+ = - J_{+- }    
\end{equation}   
(since $H $ is Hermitian) then (\ref{eq:conserv_proba_spin}) reads   
$$   
\frac{d }{dt } \begin{pmatrix} \rho _+ \\ \rho _- \end{pmatrix} = \begin{pmatrix} J_{+ - } \\ J_{- + } \end{pmatrix} .   
$$   
Here $J_{+ - } = J_{+ - } (t) $ will be time-dependent, in general.   
Now if $a = a(t) $ is real-valued, then   
 \begin{equation}\label{eq:def_trans.proba}   
 J_{+ - } = - \frac{1 }{2 } (J_{- + } + a ) + \frac{1 }{2 } (J_{+ - } + a ) =  - q_{- + } \rho _+ + q_{+ - } \rho _- ,   
\end{equation}   
 where   
 $$   
 q_{- + } := q_{-+ } (a) := \frac{J_{- + } + a }{2 \rho _+ }  = \frac{ - J_{+ - } + a }{2 \rho _+ } , \ \ q_{+ - } := q_{+ - } (a) := \frac{J_{+ - } + a }{2 \rho _- } .   
 $$   
Since $J_{- + } = - J_{+ - } = q_{- + } \rho _+ - q_{+ - } \rho _- $, it follows that $\rho = (\rho _+ , \rho _- ) $ satisfies   
\begin{equation} \label{eq:rho_Markov}   
\frac{d \rho }{dt } = \mathcal{Q } \, \rho , \ \ \mathcal{Q } := \mathcal{Q } (a) :=    
\begin{pmatrix} - q_{-+ } & q_{+ - } \\ q_{- + } &-q_{+ -  } \end{pmatrix}   
\end{equation}   
Since its column sums are 0, $\mathcal{Q } $ can be interpreted as the generating matrix of a two-state continuous-time Markov chain on $\{ \pm \} $,  with transition probability densitiy $q_{+ - } $ for the transition of the $- $state to the $+ $state\footnote{More precisely, $q_{+ - } dt $ is the conditional probability of a particle being in the $+ $state at time $t + dt $, given that it was in the $- $state at time$t $, and similarly for $q_{- + } dt . $}, and $q_{- + } $ the one for $+ $ to $- $, once these are non-negative,  which will be  the case when $a \geq \max (J_{+ - } , - J_{+ - } ) = |J_{ + - } | . $   
The choice of $a = |J_{+ - } | $ corresponds to Bell's model \cite{Be} (which concerned fermion    
number configurations in a lattice QFT   
instead of a spin $1/2 $ particle): it is the choice of $a $ which minimizes $q_{+ - } + q_{- + } $, under the constraint that both be positive.  We then
have $q_{+ - } = \max (J_{+ - } , 0 ) / \rho _- $, and $q_{- + } = \max ( J_{- + } , 0 ) / \rho _- $; concretely,    
$$   
\mathcal{Q } = \begin{pmatrix} 0 &J_{+ - } / \rho _- \\ 0 &- J_{+ - } / \rho _- \end{pmatrix} = \begin{pmatrix} 0 &(\mathbb{B } \times \bbs _- )_3 \\   
0 & - (\mathbb{B } \times \bbs _- )_3   
\end{pmatrix}   
$$   
if $J_{+ - } > 0 $ or, equivalently, $(\mathbb{B } \times \bbs _- )_3 > 0 $, while   
$$   
\mathcal{Q } =   
\begin{pmatrix} -J_{- + } / \rho _+ &0 \\ J_{-+ } / \rho_+ &0 \end{pmatrix} = \begin{pmatrix} (\mathbb{B } \times \bbs _+ )_3 &0 \\   
- (\mathbb{B } \times \bbs _+ )_3 &0   
\end{pmatrix}   
$$   
if $J_{+ - } = - J_{- + } < 0 $ or $(\mathbb{B } \times      
\bbs _+ )_3 < 0 $ (note that ${\rm sgn } \, (\mathbb{B } \times \bbs _+ )_3 = {\rm sgn } \, (\mathbb{B } \times \bbs _- )_3 = {\rm sgn } \, J_{+- } $).    In the first case, only transitions from the spin-down 
to the spin-up state occur in the infinitesimal time-interval $[t, t + dt ] $ , and in the second case, only transitions from spin-up to spin-down. This may seem odd, but remember that $\psi $, and therefore $J_{+ - } $, depend on time, and can be expected to change signs (and will do so, periodically, if for example the magnetic field $\mathbb{B } $ is constant).   
\medskip   
   
For any choice of $a \geq | J_{+ - } | $, the Bohm-Bell model for a spin $1/2 $ particle will be that of a particle jumping between two states $\pm $ and having an angular momentum $\bbs _{\pm } = (s_{1, \pm } , s_{2, \pm } , \pm \frac{1 }{2 } ) $ in these states.   
We next re-write the spin-dynamics (\ref{eq:cond_dyn}) in terms of $\mathcal{Q } . $ This will introduce a   
coupling between $\bbs _+ $ and $\bbs _- $ which was absent before. Introduce the matrices     
\begin{equation}   
\mathcal{S } := \begin{pmatrix} s_{1, + } &s_{2, + } &s_{3, + }   
\\ s_{1, - } &s_{2, - } &s_{3, - }   
\end{pmatrix} ,  
\end{equation}   
\begin{eqnarray} \label{eq:def_cal_F}   
\mathcal{F } := \mathcal{F }(a) &:=& - \frac{1 }{2 } \left( (\mathbb{B } \times \langle \mathbb{S } \rangle _{\psi } )_3  \begin{pmatrix} 1 & -1 \\   
1 & -1 \end{pmatrix} + a \begin{pmatrix} 1 & -1 \\ -1 &1 \end{pmatrix} \right) \begin{pmatrix} |\psi _+ |^{-2 } &0 \\ 0 &|\psi _- |^{-2 } \end{pmatrix}   
\end{eqnarray}   
and   
\begin{equation}   
\mathcal{B } := \begin{pmatrix} 0 &B_3 &-B_2 \\   
-B_3 &0 &B_1 \\   
B_2 &-B_1 &0   
\end{pmatrix} ,   
\end{equation}   
whose columns are  $e_1 \times \mathbb{B } $, $e_2 \times \mathbb{B } $ and $e_3 \times \mathbb{B } $, where $e_1, e_2 , e_3 $ is the standard basis of $\mathbb{R }^3 $ (that is, $e_k = ( \delta _{jk } )_j $). Finally, let   
\begin{equation}   
\mathcal{B }' := \begin{pmatrix} - B_2 &B_1 &0 \\   
B_2 & - B_1 &0 \end{pmatrix} ,   
\end{equation}   
the matrix whose lines are $\pm e_3 \times \mathbb{B } . $ 
   
\begin{theorem} \label{thm:Bohm_Bell_spin} With these notations, the spin dynamics in the Bohm-Bell picture for a spin $1/2 $ particle is given by     
\begin{equation} \label{eq:ODE_Bohm_Bell_spin}   
\frac{d \mathcal{S } }{dt } + \mathcal{Q }(a) \mathcal{S } = \mathcal{S } \mathcal{B } + \mathcal{F }(a) \mathcal{S } + \mathcal{S } \mathcal{S }^T \mathcal{B } ' ,   
\end{equation}   
with $\mathcal{S }^T $ being the transpose, for times $t $ satisfying $\psi _+ (t)  \psi _- (t) \neq 0 $, together with the constraints $s_{3, \pm } = \pm \frac{1 }{2 } $, $\bbs _+ \cdot \bbs _- = 0 $ and $||\bbs _+ ||^{-2 } + || \bbs _- ||^{-2 } = 4 . $   
\end{theorem}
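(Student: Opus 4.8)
The plan is to start from the component form (\ref{eq:cond_dyn_s_bis}) of Theorem~\ref{thm:CEdynamics_spin} and merely re-organize its three right-hand terms into the matrix shape of (\ref{eq:ODE_Bohm_Bell_spin}), reading the equation row by row ($\pm$) and column by column ($j=1,2,3$). First I would identify the two ``easy'' terms. The $(\pm,j)$ entry of $\mathcal{S}\mathcal{B}$ is $\bbs_{\pm}\cdot(e_j\times\mathbb{B})=(\mathbb{B}\times\bbs_{\pm})_j$, since the scalar triple product satisfies $(a\times b)\cdot c=c\cdot(a\times b)$; this is exactly the classical-force term of (\ref{eq:cond_dyn_s_bis}). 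The $(\pm,j)$ entry of $\mathcal{S}\mathcal{S}^T\mathcal{B}'$ is $(\bbs_{\pm}\cdot\bbs_+)\,\mathcal{B}'_{+,j}+(\bbs_{\pm}\cdot\bbs_-)\,\mathcal{B}'_{-,j}$, which collapses, by the orthogonality constraint $\bbs_+\cdot\bbs_-=0$ of (\ref{eq:orth_s_pm}), to $\|\bbs_{\pm}\|^2\,\mathcal{B}'_{\pm,j}=\pm\|\bbs_{\pm}\|^2(e_3\times\mathbb{B})_j$, i.e.\ the ``quantum'' term of (\ref{eq:cond_dyn_s_bis}). It then remains only to check that the quadratic drift term $-(\pm)\,2(\mathbb{B}\times\bbs_{\pm})_3\,s_{j,\pm}$ equals the $(\pm,j)$ entry of $-\mathcal{Q}(a)\mathcal{S}+\mathcal{F}(a)\mathcal{S}$.

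For this last and only substantial step I would first re-express $J_{+-}=(\mathbb{B}\times\langle\mathbb{S}\rangle_{\psi})_3$ in terms of the conditional expectations: by $\langle S_j\rangle_{\psi}=2\rho_{\pm}s_{j,\pm}$ for $j=1,2$ from (\ref{eq:rel_s_rho}) one gets $J_{+-}=2\rho_+(\mathbb{B}\times\bbs_+)_3=2\rho_-(\mathbb{B}\times\bbs_-)_3$, so the drift term reads $-(\pm)(J_{+-}/\rho_{\pm})\,s_{j,\pm}$. Then I would substitute the explicit $q_{-+}=(a-J_{+-})/2\rho_+$, $q_{+-}=(a+J_{+-})/2\rho_-$ into $\mathcal{Q}(a)$ and carry out the product (\ref{eq:def_cal_F}) defining $\mathcal{F}(a)$, and compute the $(+,j)$ entry of $-\mathcal{Q}\mathcal{S}+\mathcal{F}\mathcal{S}$. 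One finds that the coefficients of $s_{j,-}$ cancel identically (the $q_{+-}$ contribution against the $\rho_-^{-1}$ contribution of $\mathcal{F}$), that the free parameter $a$ drops out of the coefficient of $s_{j,+}$, and that what survives is $-(J_{+-}/\rho_+)\,s_{j,+}=-2(\mathbb{B}\times\bbs_+)_3\,s_{j,+}$, matching the drift term. The $(-,j)$ row is handled the same way with the roles of $\pm$ interchanged (or, alternatively, deduced from the $+$ row using (\ref{eq:relation_s_+s_-})), giving $+2(\mathbb{B}\times\bbs_-)_3\,s_{j,-}$.

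The $j=3$ column is then automatically consistent: $s_{3,\pm}=\pm\tfrac12$ is constant, $(\mathcal{S}\mathcal{S}^T\mathcal{B}')_{\pm,3}=\pm\|\bbs_{\pm}\|^2(e_3\times\mathbb{B})_3=0$, and the surviving $(\mathbb{B}\times\bbs_{\pm})_3$ from $\mathcal{S}\mathcal{B}$ is cancelled by the drift term, reproducing $ds_{3,\pm}/dt=0$. As for the assertion that the constraints $s_{3,\pm}=\pm\tfrac12$, $\bbs_+\cdot\bbs_-=0$ and $\|\bbs_+\|^{-2}+\|\bbs_-\|^{-2}=4$ persist in time once they hold at $t=0$, this was already established in Theorem~\ref{thm:CEdynamics_spin} (they are equivalent to (\ref{eq:relation_s_+s_-}) together with $\rho_++\rho_-=1$, via (\ref{eq:norm_s}), (\ref{eq:orth_s_pm}), (\ref{eq:ext_pro_s_pm})), so nothing new is required; on the constraint surface the matrix system (\ref{eq:ODE_Bohm_Bell_spin}) is literally a rewriting of (\ref{eq:cond_dyn}).

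The main obstacle is not conceptual but organizational: one must keep straight, in the splitting of the single quadratic drift term $-(\pm)2(\mathbb{B}\times\bbs_{\pm})_3 s_{j,\pm}$, which part is carried by the Markov generator $\mathcal{Q}(a)$ (hence appears as a genuine jump contribution on the left) and which part by the auxiliary force $\mathcal{F}(a)$, and verify that the undetermined jump parameter $a$ cancels out of the resulting spin dynamics, as it must since (\ref{eq:cond_dyn}) is $a$-independent. One should also note that the cancellations make essential use of $\bbs_+\cdot\bbs_-=0$ (without it, $\mathcal{S}\mathcal{S}^T\mathcal{B}'$ contributes spurious off-diagonal terms), which is why the theorem is stated together with the constraints.
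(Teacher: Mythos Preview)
Your proposal is correct and follows essentially the same route as the paper: both start from the component form (\ref{eq:cond_dyn_s_bis}) and identify the three right-hand terms with $\mathcal{S}\mathcal{B}$, $\mathcal{S}\mathcal{S}^T\mathcal{B}'$ (via the orthogonality $\bbs_+\cdot\bbs_-=0$), and $(\mathcal{F}-\mathcal{Q})\mathcal{S}$ respectively. The only organizational difference is that the paper handles the last step more compactly by computing $\mathcal{Q}(a)-\mathcal{F}(a)$ directly as the diagonal matrix ${\rm diag}\bigl(2(\mathbb{B}\times\bbs_+)_3,\,-2(\mathbb{B}\times\bbs_-)_3\bigr)$ (equation (\ref{eq:Q-F})), which makes the disappearance of $a$ and of the off-diagonal entries immediate, whereas you verify the same cancellations entry by entry.
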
   
    
\begin{proof} In (\ref{eq:cond_dyn}), the term $\mathbb{B } \times \bbs $ corresponds to $\mathcal{S } \mathcal{B } $ and $ (e_3 \times \mathbb{B } ) \sigma _3 \, || \bbs ||^2 $ to $\mathcal{S } \mathcal{S }^t \mathcal{B }' $, since   
\begin{equation}   
\mathcal{S } \mathcal{S }^t = \begin{pmatrix} || \bbs _+ ||^2 &0 \\   
0 &||\bbs _- ||^2   
\end{pmatrix}   
\end{equation}   
by (\ref{eq:orth_s_pm}). Next,   
$$   
2 (\mathbb{B } \times \bbs )_3 \sigma _3 \bbs = 2 \begin{pmatrix} (\mathbb{B } \times \bbs _+ )_3 &0 \\   
0 & - (\mathbb{B } \times \bbs _- )_3 \end{pmatrix} \mathcal{S } .         
$$   
One easily verifies that   
$$   
\mathcal{Q } = \frac{1 }{2 } \left( J_{+ - } \begin{pmatrix} 1 &1 \\   
-1 &-1 \end{pmatrix} +   
a \begin{pmatrix} -1 &1 \\   
1 &-1 \end{pmatrix} \right)   
\begin{pmatrix} \rho _+ ^{-1 } &0 \\   
0 &\rho _- ^{-1 } \end{pmatrix}   
$$   
\begin{versionA}   
\textcolor{blue}{Verification}: ({\it piano, piano} so as to avoid silly algebraic mistakes ... ),   
\begin{eqnarray*}   
\mathcal{Q } &=& \begin{pmatrix} - q_{- + } &q_{+ - } \\   
q_{- + } &-q_{+ - } \end{pmatrix} \\   
&=&\frac{1 }{2 } \begin{pmatrix} - (J_{- + } + a ) \rho _+ ^{-1 } &(J_{+ - } + a ) \rho _- ^{-1 } \\   
(J_{- + } + a ) \rho _+ ^{-1 } & - (J_{+ - } + a ) \rho _- ^{-1 } \end{pmatrix} \\   
&=& \frac{1 }{2 } \begin{pmatrix} - (J_{- + } + a ) &J_{+ - } + a \\   
J_{- + } + a & - (J_{+ - } + a ) \end{pmatrix}   
\begin{pmatrix} \rho _+ ^{-1 } &0 \\   
0 &\rho _- ^{-1 } \end{pmatrix} \\   
&=& \frac{1 }{2 } \begin{pmatrix} J_{+ - } - a &J_{+ - } + a \\   
- J_{+ - } + a & - J_{+ - } - a \end{pmatrix}   
\begin{pmatrix} \rho _+ ^{-1 } &0 \\   
0 &\rho _- ^{-1 } \end{pmatrix} \\   
&=& \frac{1 }{2 } \left( J_{+ - } \begin{pmatrix} 1 &1 \\   
-1 &-1 \end{pmatrix} +   
a \begin{pmatrix} -1 &1 \\   
1 &-1 \end{pmatrix} \right)   
\begin{pmatrix} \rho _+ ^{-1 } &0 \\   
0 &\rho _- ^{-1 } \end{pmatrix} ,   
\end{eqnarray*}   
\end{versionA}   
while   
$$   
\mathcal{F } =   
- \frac{1 }{2 } \left( J_{+ - } \begin{pmatrix} 1 & -1 \\   
1 & -1 \end{pmatrix} + a \begin{pmatrix} 1 & -1 \\ -1 &1 \end{pmatrix} \right) \begin{pmatrix} \rho _+ ^{-1 } &0 \\ 0 &\rho _- ^{-1 } \end{pmatrix}   
$$   
It follows that   
\begin{equation} \label{eq:Q-F}   
\mathcal{Q } - \mathcal{F } = \begin{pmatrix} J_{+ - } &0 \\ 0 &- J_{+ - } \end{pmatrix} \begin{pmatrix} \rho _+ ^{-1 } &0 \\   
0 &\rho _- ^{-1 } \end{pmatrix} = 2 \begin{pmatrix} (\mathbb{B } \times \bbs _+ )_3 &0 \\   
0 & - (\mathbb{B } \times \bbs _- )_3 \end{pmatrix} .   
\end{equation}   
Inserting this into equation (\ref{eq:cond_dyn}) finishes the proof.   
\end{proof}   
\medskip   
   
The matrix $\mathcal{S } $ satisfies (\ref{eq:ODE_Bohm_Bell_spin}) for any $a $ but $\mathcal{Q }(a) $ is a transition probability density of a classical Markov chain only if $a \geq | J_{+ - } | $   
   
If we suppose that the beable, when it jumps from the $+ $ to the $- $ state, carries away an amount $\bbs _+ $ of spin from the $+ $ state to the $- $ state, and similarly for jumps from $- $ to $+ $, then $ d s_{j, + } / dt - q_{- + } s_{j, + } + q_{+ - } s_{j, - } = d s_{j, + } / dt + (Q s_j )_+ $ is the expected value of the total change of $s_{j, + } $ in t the infinitesimal time-interval $] t, t + dt ] $ and, similarly, $d s_{j, - } / dt + (Q s_j )_- $ is the expected change of $s_{j, -  } . $ The columns of the left hand side of (\ref{eq:ODE_Bohm_Bell_spin}) then represent the total expected change of the spin components $s_j = (s_{j, + } , s_{j, - } ) . $   
\medskip   
   
For a related, but slightly different point of view, let $(X_t )_t $ be the continuous time Markov chain on $\{ \pm \} $ with transition probabilities-matrix $\mathcal{Q } . $ If $f(t) = (f_+ (t) , f_- (t) ) $ is a differentiable function of $t $, and if $(
F_t )_t $ is the stochastic process defined by $F_t := f_{X_t } (t) $,    
so that $F_t = f_{\pm } (t) $ with probability $\rho _{\pm } (t) $, then      
$$   
\mathbb{E } (d   
F_t | X_t ) = \left( \partial _t f _{X_t }  (t ) + \left( \mathcal{Q }^T f(t) \right)_{X_t } \right) dt ,    
$$   
with $\mathcal{Q }^T $ the transpose of $\mathcal{Q } . $ Now equations (\ref{eq:ODE_Bohm_Bell_spin}) also hold with $\mathcal{Q } $ and $\mathcal{F } $ replaced by their transposes, as follows by replacing (\ref{eq:Q-F}) by its transpose in the proof of theorem \ref{thm:Bohm_Bell_spin}. It follows that if we introduce the vector-valued stochastic process   
$$   
\Sigma _t = \bbs _{X_t } (t) ,   
$$   
then   
\begin{equation} \label{eq:SDE_spin}   
\mathbb{E } (d \Sigma _t | X_t ) = \left( \mathbb{B } \times \Sigma _t  + X_t || \Sigma _t ||^2 (e_3 \times \mathbb{B } ) +   
\Phi _t \right) dt ,   
\end{equation}   
with a vector-valued stochastic process $\Phi _t = (\Phi _{1, t } , \Phi _{2, t } , \Phi _{3, t } )$ which is implicitly defined by $\mathcal{F }^T S . $ Concretely, if   
$$   
\mathcal{F } = \begin{pmatrix} F_{+ + } &F_{+ - } \\   
F_{- + } &F_{- - }   
\end{pmatrix} ,   
$$   
then $\Phi _{j, t } = \sum _{\nu = \pm } s_{j, \nu } F_{\nu , X_t } $, for $j = 1, 2, 3 . $ Using formula (\ref{eq:matcal{F}_bis}) below together with the constraints (\ref{eq:relation_s_+s_-}), $\Phi _t $ can be written as a function $\varphi _a (\Sigma _t ) $ of $\Sigma _t $ (which depends on $a $), but since the resulting expression does not seem to be particularly enlightening, or easily interpretable (much as, some might argue, the quantum potential, at least in its original form), we skip the details. Equation (\ref{eq:SDE_spin}) can be interpreted as a SDE driven by the Markov chain $X_t $:   
$$   
\Sigma _t = \Sigma _0 + \int _0 ^t \left( \mathbb{B } \times \Sigma _u  + X_t || \Sigma _u ||^2 (e_3 \times \mathbb{B } ) +   
\Phi _u \right) du + Z_t ,   
$$   
where $(Z_t )_t $ is a martingale with respect to the filtration generated by the process $(X_t )_{t \geq 0 } . $   
\medskip   
   
It is tempting to think of the $\mathcal{F } \mathcal{S } $-term on the right hand side of (\ref{eq:ODE_Bohm_Bell_spin}) (or the $F _t = \varphi (\Sigma _t ) $-term on the right hand side of (\ref{eq:SDE_spin})) as an analogue of the (gradient of the) quantum potential in de Broglie-Bohm theory. Both originate in, and to some extend are forced upon us by, the wish to re-interpret the quantum evolution in classical terms, those of a classical particle flow in the de Brolie-Bohm theory, and in terms of a classical Markov chain for the case at hand.  Like the quantum potential, its physical interpretation   
is not immediately clear (while we are in the end able to interpret Bohm's quantum potential as a conditional variance  no such interpretation is available yet for $\mathcal{F } \mathcal{S } $) and, again like the quantum potential, it can be written in various ways. Our definition (\ref{eq:def_cal_F}) of $\mathcal{F } (a) $ above privileged an expression   
in terms of the wave function $\psi $, but we can also write it in terms of $\bbs _{\pm } $, using (\ref{eq:norm_s}) and (\ref{eq:rel_s_rho}):    
\begin{eqnarray} \nonumber   
\mathcal{F }(a) &=& - \begin{pmatrix} 1 & -1 \\   
1 & -1 \end{pmatrix} \begin{pmatrix} (\mathbb{B } \times \bbs _+ )_3 &0 \\ 0 &(\mathbb{B } \times \bbs _- )_3 \end{pmatrix}   - 2 a \begin{pmatrix} 1 & -1 \\ -1 &1 \end{pmatrix} \begin{pmatrix} ||\bbs _+ ||^2  &0 \\ 0 &|| \bbs _- ||^2 \end{pmatrix} \\   
&=& - \begin{pmatrix} (\mathbb{B } \times \bbs _+ )_3 + 2 a ||\bbs _+ ||^2 & - (\mathbb{B } \times \bbs _- )_3 - 2 a ||\bbs _- ||^2 \\   
(\mathbb{B } \times \bbs _+ )_3 - 2 a || \bbs _+ ||^2 & - (\mathbb{B } \times \bbs _- )_3 + 2a ||\bbs _- ||^2 \end{pmatrix} , \label{eq:matcal{F}_bis} .   
\end{eqnarray}   
We can   
finally express $\mathcal{F } = \mathcal{F }(a) $ in terms of the generating matrix $\mathcal{Q } = \mathcal{Q }(a) $ of the underlying Markov chain,   
and of the solution $\rho = (\rho _+ , \rho _- ) $ of the forward Kolmogorov equation (\ref{eq:rho_Markov}):   
$$   
\mathcal{F } = \begin{pmatrix} - q_{+ - } (\rho _- / \rho _+ ) &q_{+ - } \\   
q_{- + } &- q_{- + } (\rho _+ / \rho _- ) \end{pmatrix}   
$$   
(N.B.   
It should not be forgotten in all of this   
that $\mathcal{Q } $ depends on time, making the integration of the system     
(\ref{eq:rho_Markov}) non-trivial).   
   
The ambiguity in the definition of the transition probabilities $q_{+ - } $ and $q_{- + } $ is    
not unlike the ambiguity in writing $\partial _t |\psi (x, t ) |^2 $ as a divergence in Bohmian mechanics, and the    
resulting ambiguity   
in defining the Bohm momentum  (or velocity) solely on the basis of the continuity equation, cf. the  discussion at  the end of section 3. There, as we have seen, changing the guidance equation by adding a   
divergence-free vector field to $p_B $ corresponds to modeling a particle in a magnetic field, when   
consider the joint dynamics of $x $ and $p_B (x, t ) . $    
No such immediate physical interpretation of the parameter $a $ seems to be available here.

\section{\bf Spin-$1/2 $ particles: Pauli-operator}   
     
This section generalizes theorem \ref{thm:CEdynamics_spin} to a non-relativistic spin $1/2 $ particle moving in an electromagnetic field, as modeled by the Pauli equation. We examine the dynamics of the conditional expectations of the momentum and spin operators with respect to the commuting pair 
$(X, S_3 ) $, where $S_3 = \frac{1 }{2 } \sigma _3 $, as before. We write the Pauli operator $H^P $ on $L^2 (\mathbb{R }^3 ; \mathbb{C }^2 ) $ as   
\begin{equation} \label{eq:Pauli1}   
H^P =     
H_{A, V } \cdot I_{\mathbb{C }^2 } + H^S ,     
\end{equation}   
where $H_{A , V } $ is the scalar operator   
\begin{equation} \label{eq:Pauli_H_A}   
H_{A , V } = \frac{1 }{2 } (P - A )^2 + V =: H_A + V ,   
\end{equation}   
with $A = (A_1 , A_2 , A_3 ) \in C^1 (\mathbb{R }^3 )^3 $ the vector potential,    
and $H^S = H^S _{\mathbb{B } } = \mathbb{B } \cdot \mathbb{S } $ the spin part of the Hamiltonian (cf. (\ref{eq:H_B})) with $\mathbb{B } = \nabla \times A . $ We will suppose that $V $ and the $A_j $ are real-valued, making the Pauli-operator at least formally self-adjoint.      
Let $\psi = (\psi _+ , \psi _- ) $ be a $C^2 $-solution of the time-dependent Schr\"odinger equation $i \partial _t \psi = H^P \psi $ and put   
\begin{equation}   
p_B = p_B (\psi ) = \mathbb{E }_{\psi } (P | X, S_3 ) ,      
\end{equation}   
and   
$$   
s_j = s_j (\psi ) = \mathbb{E }_{\psi } (S_j | X , S_3 ) , \ \ j = 1, 2, 3 ,   
$$   
where of course $s_3 = S_3 . $ Note that the Bohm momentum $p_B $ is now a diagonal $2 \times 2 $-matrix with the scalar Bohm-momenta of $\psi _{\pm } $ on the diagonal:        
\begin{equation}   
p_B (\psi ) = \begin{pmatrix} p_{B, + }&0 \\ 0 &p_{B, - } \end{pmatrix} , \ \ p_{B, \pm } := p_B (\psi _{\pm } ) =   
{\rm Im } (\nabla \psi _{\pm } / \psi _{\pm } ) .   
\end{equation}   
Similarly, the $s_j $ are diagonal with diagonal elements $s_{j, \pm } $ given by (\ref{eq:CEsigma_1}) and (\ref{eq:CEsigma _2}), which are now functions of $(x, t ) . $ As before, we let $\bbs = \bbs(\psi ) := (s_1, s_2 , s_3 ) $, which is now a vector of diagonal matrix-valued functions or, equivalently, of a vector of functions on $\mathbb{R }^n \times \mathbb{R } \times \{ \pm \frac{1 }{2 } \} . $      
The conditional expectation $h^P := \mathbb{E }_{\psi } (H^P | X, S_3 ) $ of the Pauli Hamiltonian can be written as   
\begin{equation}   
h^P = h_B (\psi ) + \mathbb{B } \cdot 
\bbs (\psi ) ,   
\end{equation}   
where the Bohm-energy     
\begin{equation}   
h_B (\psi ) := \mathbb{E }_{\psi } ( H_{A , V } \cdot I_{\mathbb{C }^2 } | X, S_3 ) = \begin{pmatrix} h_{B, + } &0 \\ 0 &h_{B, - } \end{pmatrix} ,   
\end{equation}   
with   
$$   
h_{B, \pm } := \frac{1 }{2 } (p_{B, \pm } - \mathbb{A } )^2 + V + Q_{\pm } ,   
$$   
 $Q_{\pm } $ being the quantum potential associated to $\psi _{\pm } $; see (\ref{eq:CE_H_A}). Finally, we let, as before  
 \begin{equation} \label{eq:Pauli_rho}   
\rho = \rho (\psi ) := \begin{pmatrix} |\psi _+ |^2 &0 \\ 0 &|\psi _- |^2 \end{pmatrix} =: \begin{pmatrix}     
\rho _+ &0 \\ 0 &   
\rho _- \end{pmatrix} 
 \end{equation}   
 \medskip   
    
\subsection{Conditional expectations-dynamics for the Pauli equation} We   
can interpret all these diagonal matrices, which are now functions of $(x, t ) $, as functions on $\{ \pm \} \times \mathbb{R }^3 \times \mathbb{R } . $ We will derive two systems of equations for $\bbs ' := (s_1, s_2 ), p_B $ and  $\rho $ (recalling that $s_3 = (\frac{1 }{2 } , - \frac{1 }{2 } ) $ is known), first, in theorem \ref{eq:system_CE_Pauli}, a system where the dynamics of the spin variables is governed by the differences of the probability fluxes and of the Bohm energies of the $\pm $-channels. This system is   
first order in the spin variables $\bbs ' $ but overdetermined: since the Bohm momenta can be written as gradients, the data $(\bbs ' , p_B , \rho ) $ consists of 8 real functions modulo 2 real constants, all of which are determined by the quantum mechanical  state $\psi = (\psi _+ , \psi _- ) $, giving 4 real functions modulo one real constant, supposing that $\psi $ is normalized. Using an explicit relationship between $\bbs _{\pm } $, $p_{\pm } $ and $\rho _{\pm } $, we will next, in theorem \ref{thm:Paulii_bis} derive a second system of equations for $\bbs ' _+ , p_{B, + } $ and $\rho _+ $ which now is second order in the spatial derivatives of $\bbs ' _+ $, with a similar system for $(\bbs ' _- , p_{B, - } , \rho _- ) $ in theorem \ref{thm:Paulii_bis_b}.   
   
To simplify notations, we will write $p_{\pm } := p_{B, \pm } $ and  $h_{\pm } := h_{B, \pm } . $   
We also assume throughout that $\rho _+ \rho _- \neq 0 $ everywhere.      
          
\begin{theorem} \label{eq:system_CE_Pauli} Let   
$$   
F_{\pm } := \, \rho _{\pm } ^{-1 } \nabla \cdot \rho _{\pm } (p_{\pm } - A ) .   
$$
Then, as functions on $\{ \pm \} \times \mathbb{R }^3 \times \mathbb{R } $,      
\begin{equation} \label{eq:CE_Pauli_1}   
\left \{ \begin{array}{lll}      
\partial _t s_1 = \frac{1 }{2 } (F_+ - F_- ) \sigma _3 s_1 - (h_+ - h_- ) s_2 - 2 (\mathbb{B } \times   
\bbs )_3 \sigma _3 \, s_1 + (\mathbb{B } \times   
\bbs )_1   
- 
B_2 || \bbs ||^2 \sigma _3   
\\   
\ \\   
\partial _t s_2 = (h_+ - h_- ) s_1 + \frac{1 }{2 } (F_+ - F_- ) \sigma _3 s_2 - 2 (\mathbb{B } \times   
\bbs )_3 \sigma _3 \, s_2 + (\mathbb{B } \times   
\bbs )_2   
+   
B_1 || \bbs ||^2 \sigma _3 ,   
\end{array}   
\right.   
\end{equation}   
while   
\begin{equation} \label{eq:CE_Pauli_2}   
\partial _t  p_B = - \nabla h^P = - \nabla h_B - \nabla (\mathbb{B } \cdot   
\bbs ) ,      
\end{equation}   
and   
\begin{equation} \label{eq:CE_Pauli_3}   
\rho ^{-1 } \partial _t \rho = - \begin{pmatrix} F_+ &0 \\ 0 &F_- \end{pmatrix} + 2 (\mathbb{B } \times   
\bbs )_3 \sigma _3 ,   
\end{equation}   
where $(\mathbb{B } \times \bbs ) (\pm , x, t ) = \mathbb{B } (x, t ) \times \bbs _{\pm } (x, t ) $ as vector-valued functions on $\{ \pm \} \times \mathbb{R }^3 \times \mathbb{R } . $   
\end{theorem}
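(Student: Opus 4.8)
The plan is to prove the three assertions by differentiating along the Pauli flow, respectively, the logarithmic derivatives $\nabla\psi_{\pm}/\psi_{\pm}$, the densities $\rho_{\pm}=|\psi_{\pm}|^{2}$, and the ratios $\psi_{-}/\psi_{+}$ and $\psi_{+}/\psi_{-}$, and then reading off real and imaginary parts. The organising principle is the additive splitting $H^{P}=H_{A,V}\cdot I_{\mathbb{C}^{2}}+H^{S}$ of (\ref{eq:Pauli1}): since $\partial_{t}\psi=-iH^{P}\psi$ depends linearly on $H^{P}$, each of the differentiations below decomposes into an $H^{S}$-contribution, which reproduces verbatim the ``pure spin'' computations of theorem \ref{thm:CEdynamics_spin} and remark \ref{remark:spin_Riemann_sphere}, and an $H_{A,V}\cdot I$-contribution, which supplies the new terms involving $h_{\pm}$ and $F_{\pm}$. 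The single identity driving the $H_{A,V}$-part is
\[
\frac{H_{A,V}\psi_{\pm}}{\psi_{\pm}}=h_{\pm}-\tfrac{i}{2}\,F_{\pm},
\]
whose real part is the Bohm energy $\tfrac12(p_{\pm}-A)^{2}+V+Q_{\pm}$ of each channel (as in (\ref{eq:CE_H_A})), and whose imaginary part is $-\tfrac12\rho_{\pm}^{-1}\nabla\!\cdot\!\bigl(\rho_{\pm}(p_{\pm}-A)\bigr)=-\tfrac12 F_{\pm}$; the latter is a pointwise identity valid for any smooth nonvanishing function, the same computation that produces (\ref{eq:WV_H_0}), adapted to the presence of $A$.

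I would dispatch (\ref{eq:CE_Pauli_2}) first. Recall that $\mathbb{E}_{\psi}(C\mid X,S_{3})$ is the multiplication operator by the diagonal matrix with entries $\operatorname{Re}\!\bigl(C(\psi)_{\pm}/\psi_{\pm}\bigr)$. The computation in the proof of lemma \ref{lemma:evol_p_B} applies verbatim to each component, since it uses only that $\nabla$ is a derivation and that $\partial_{t}\psi_{\pm}=-i(H^{P}\psi)_{\pm}$, and gives $\partial_{t}\bigl(\nabla\psi_{\pm}/\psi_{\pm}\bigr)=-i\nabla\bigl((H^{P}\psi)_{\pm}/\psi_{\pm}\bigr)$; taking imaginary parts yields $\partial_{t}p_{\pm}=-\nabla\operatorname{Re}\!\bigl((H^{P}\psi)_{\pm}/\psi_{\pm}\bigr)=-\nabla h^{P}_{\pm}$, and recombining the two channels with $h^{P}=h_{B}+\mathbb{B}\cdot\bbs$ gives (\ref{eq:CE_Pauli_2}). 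For (\ref{eq:CE_Pauli_3}) I would compute $\partial_{t}\rho_{\pm}=2\operatorname{Re}\bigl(\overline{\psi_{\pm}}\,\partial_{t}\psi_{\pm}\bigr)=2\operatorname{Im}\bigl(\overline{\psi_{\pm}}(H^{P}\psi)_{\pm}\bigr)$: the $H_{A,V}$-part contributes $-\rho_{\pm}F_{\pm}$ by the identity above, the diagonal part $\pm\tfrac12 B_{3}\psi_{\pm}$ of $(H^{S}\psi)_{\pm}$ contributes a real number which drops out, and the off-diagonal part of $H^{S}$ contributes $2\rho_{+}(\mathbb{B}\times\bbs_{+})_{3}$ and $-2\rho_{-}(\mathbb{B}\times\bbs_{-})_{3}$ in the $+$ and $-$ channels respectively, using (\ref{eq:CEsigma_1}) and its $s_{2}$-analogue to write $\operatorname{Re}(\overline{\psi_{+}}\psi_{-})=2\rho_{+}s_{1,+}$, $\operatorname{Im}(\overline{\psi_{+}}\psi_{-})=2\rho_{+}s_{2,+}$ (and similarly for the $-$ channel); assembling the two channels is precisely $-\operatorname{diag}(F_{+},F_{-})+2(\mathbb{B}\times\bbs)_{3}\sigma_{3}$.

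The system (\ref{eq:CE_Pauli_1}) is the substantive one. By (\ref{eq:s_1+is_2}), $\psi_{-}/\psi_{+}=2(s_{1,+}+is_{2,+})$ and $\psi_{+}/\psi_{-}=2(s_{1,-}-is_{2,-})$, so it suffices to compute $\partial_{t}(\psi_{-}/\psi_{+})$ and $\partial_{t}(\psi_{+}/\psi_{-})$. Differentiating with the Pauli equation, the $H^{S}$-contribution to $\partial_{t}(\psi_{-}/\psi_{+})$ is exactly the right-hand side of (\ref{eq:ODE_z}) (and for $\partial_{t}(\psi_{+}/\psi_{-})$ its reciprocal version in remark \ref{remark:spin_Riemann_sphere}), while the $H_{A,V}\cdot I$-contribution telescopes to $i\,(\psi_{-}/\psi_{+})\bigl(\tfrac{H_{A,V}\psi_{+}}{\psi_{+}}-\tfrac{H_{A,V}\psi_{-}}{\psi_{-}}\bigr)=i\,(\psi_{-}/\psi_{+})\bigl((h_{+}-h_{-})-\tfrac{i}{2}(F_{+}-F_{-})\bigr)$, with the $\pm$'s swapped for the other ratio. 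Writing $w=2(s_{1,+}+is_{2,+})$, separating real and imaginary parts, and re-expressing the terms quadratic in $w$ through $(\mathbb{B}\times\bbs_{\pm})_{j}$ and $\|\bbs_{\pm}\|^{2}=s_{1,\pm}^{2}+s_{2,\pm}^{2}+\tfrac14$ reproduces, from the $H^{S}$-part, equations (\ref{eq:cond_dyn_s_bis}), and from the $H_{A,V}$-part, exactly the extra terms $\tfrac12(F_{+}-F_{-})\sigma_{3}s_{j}$ together with $-(h_{+}-h_{-})s_{2}$ in the $s_{1}$-equation and $+(h_{+}-h_{-})s_{1}$ in the $s_{2}$-equation; the $-$ channel follows identically from $w'=2(s_{1,-}-is_{2,-})=1/w$, the overall $\sigma_{3}$ signs arising from the interchange $\psi_{-}/\psi_{+}\leftrightarrow\psi_{+}/\psi_{-}$.

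The main obstacle is the bookkeeping in this last step: matching the real and imaginary parts of the quadratic term $\tfrac12(B_{2}\pm iB_{1})w^{2}$ against the cross-product and norm expressions, and propagating the $\sigma_{3}$ signs consistently between the two channels. The closure of the quadratic terms hinges on the identities $s_{3,\pm}=\pm\tfrac12$ and $\|\bbs_{\pm}\|^{2}=s_{1,\pm}^{2}+s_{2,\pm}^{2}+\tfrac14$, which is why those must be carried along as constraints. One regularity point to flag: the derivation of (\ref{eq:CE_Pauli_2}) uses one spatial derivative of $H^{P}\psi$, so that equation strictly wants $\psi\in C^{3}$, whereas $C^{2}$ suffices for (\ref{eq:CE_Pauli_1}) and (\ref{eq:CE_Pauli_3}); the standing hypothesis $\rho_{+}\rho_{-}\neq0$ is used throughout to divide by $\psi_{\pm}$.
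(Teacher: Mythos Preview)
Your argument is correct. For (\ref{eq:CE_Pauli_2}) and (\ref{eq:CE_Pauli_3}) it coincides with the paper's: the same weak-value identity $H_{A,V}\psi_{\pm}/\psi_{\pm}=h_{\pm}-\tfrac{i}{2}F_{\pm}$ is the engine, and the derivations of $\partial_t p_{\pm}$ and $\partial_t\rho_{\pm}$ are the same computations the paper records as lemma \ref{lemma:evol_p_B} and lemma \ref{lemma:proof_Pauli_3}.

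For (\ref{eq:CE_Pauli_1}) you take a genuinely different route. The paper works with the bilinear quantities $(S_j\psi,\psi)_{\mathbb{C}^2}$: it first computes $\partial_t(S_j\psi,\psi)$ (lemma \ref{lemma:proof_Pauli_1}), then divides by $\rho$ and rewrites in terms of $\bbs_{\pm}$ (lemma \ref{lemma:proof_Pauli_2}), and finally obtains $\partial_t s_j$ by differentiating the product $s_j=\tfrac12(S_j\psi,\psi)\rho^{-1}$ with the already-established $\partial_t\rho$. You instead differentiate the ratios $\psi_{\mp}/\psi_{\pm}=2(s_{1,\pm}\pm is_{2,\pm})$ directly, which is exactly the alternative derivation the paper sketches in remark \ref{remark:spin_Riemann_sphere} for the pure-spin case but does not carry out for the Pauli operator. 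Your route packages $s_1$ and $s_2$ into a single complex equation per channel and makes the $H_{A,V}$-contribution transparent (the multiplicative factor $i\bigl((h_+-h_-)-\tfrac{i}{2}(F_+-F_-)\bigr)$), at the cost of the sign-tracking you flag; the paper's route keeps $s_1$ and $s_2$ separate throughout and is closer in spirit to the proof of theorem \ref{thm:CEdynamics_spin}, making the parallel with (\ref{eq:rel_s_rho})--(\ref{eq:rel_s_rho_3}) more visible. Both are straightforward once the weak-value identity is in hand. Your remark that (\ref{eq:CE_Pauli_2}) strictly needs an extra derivative of $H^P\psi$ is a useful observation not made explicit in the paper.
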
   
      
\noindent See subsection 8.1 below for the proof, which is a (somewhat longish) computation along the lines of the proof of theorem \ref{thm:CEdynamics_spin}. As already noted, the system is overdetermined. Indeed, $(s_{1, + } , s_{2, + } , p_+ , \rho _+ ) $ already determines      
the wave-function $(\psi _+ , \psi _- ) $ up to a constant phase-factor, and thereby $(s_{1, - } , s_{2, - } , p_- , \rho _- ) $, since from $s_{1, + } + i s_{2, + } = \psi _- / \psi  _+ $ we know  $\rho _- / \rho _+ = |\psi _- |^2 / |\psi _+ |^2 $ as well as the difference of the arguments ${\rm Arg } (\psi _- / \psi _+ ) . $ Knowing $\rho _+ $ then determines $\rho _- $, while $p_+ = \nabla {\rm Arg } \psi _+ $ determines $\psi _+ $ up to a real constant and therefore also $\psi _- . $ The following lemma gives an explicit (differential) relation between $p_{\pm } $, $\rho _{\pm } $ and $\bbs _+ ' = (s_{1, + } , s_{2, + } ) . $   

\begin{lemma} \label{lemma:relation_p_pm}   
$$   
p_- - p_+ = \frac{s_{1, + } \nabla s_{2, + } - s_{2, + } \nabla s_{1, + } }{|| \bbs ' _+ || ^2 } = \frac{1 }{|| \bbs ' _+ ||^2 } (\bbs _+ \times \nabla \bbs _+ )_3   
$$   
and   
\begin{eqnarray*}   
\frac{1 }{2 }\nabla \log (\rho _- / \rho _+ ) 
&=& \frac{s_{1, + } \nabla s_{1, + } + s_{2, + } \nabla s_{2, + } }{|| \bbs' _+ ||^2 }   
\end{eqnarray*}   
where the denominator is the euclidian norm of $\bbs ' _+  (x, t ) = (s_{1 , + }  (x, t ) , s_{2, + } (s, t ) ) \in \mathbb{C }^2 . $   
\end{lemma}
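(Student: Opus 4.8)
The plan is to read off both identities from formula (\ref{eq:s_1+is_2}), which already gives $s_{1,+}+is_{2,+}$ in closed form in terms of $\psi_-/\psi_+$; separating modulus and argument there and then differentiating produces precisely the two asserted expressions.

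First I would introduce complex polar decompositions $\psi_\pm=R_\pm e^{iS_\pm}$ with $R_\pm=|\psi_\pm|>0$, which is legitimate since we assume $\rho_+\rho_-\neq 0$ everywhere ($S_\pm$ may be taken locally smooth). Then $\rho_\pm=R_\pm^2$ and, by the definition of the Bohm momenta used in this section, $p_\pm=p_{B,\pm}=\mathrm{Im}(\nabla\psi_\pm/\psi_\pm)=\nabla S_\pm$. By (\ref{eq:s_1+is_2}) the complex function $g:=s_{1,+}+is_{2,+}$ equals $\tfrac12\psi_-/\psi_+=\tfrac12(R_-/R_+)e^{i(S_--S_+)}$, so that $|g|^2=s_{1,+}^2+s_{2,+}^2=||\bbs ' _+||^2=\tfrac14\rho_-/\rho_+$ and a local branch of $\arg g$ is $S_--S_+$. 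Since $\psi_-\neq 0$, $g$ is nowhere zero, so dividing by $||\bbs ' _+||^2$ is allowed.

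Next I would invoke the elementary identities valid for any nowhere-vanishing $C^1$ complex function $g=g_1+ig_2$ of $x$, namely $\nabla\log|g|=(g_1\nabla g_1+g_2\nabla g_2)/(g_1^2+g_2^2)$ and, component-wise and independently of the chosen branch, $\nabla(\arg g)=(g_1\nabla g_2-g_2\nabla g_1)/(g_1^2+g_2^2)$. Taking $g_1=s_{1,+}$, $g_2=s_{2,+}$: since $\log|g|=\mathrm{const}+\tfrac12\log(\rho_-/\rho_+)$, the first identity yields $\tfrac12\nabla\log(\rho_-/\rho_+)=(s_{1,+}\nabla s_{1,+}+s_{2,+}\nabla s_{2,+})/||\bbs ' _+||^2$, which is the second formula of the lemma; and since $\arg g=S_--S_+$ with $\nabla S_\pm=p_\pm$, the second identity yields $p_--p_+=\nabla(S_--S_+)=(s_{1,+}\nabla s_{2,+}-s_{2,+}\nabla s_{1,+})/||\bbs ' _+||^2$. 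Finally, since the third component of a cross product, $(\mathbf{a}\times\mathbf{b})_3=a_1b_2-a_2b_1$, involves only the first two entries of $\mathbf a$ and $\mathbf b$, we have $(\bbs _+\times\nabla\bbs _+)_3=s_{1,+}\nabla s_{2,+}-s_{2,+}\nabla s_{1,+}$, which is the stated alternative form of the numerator.

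I do not expect a genuine obstacle here: once (\ref{eq:s_1+is_2}) is invoked the argument is routine. The only two points worth flagging explicitly are (i) the standing hypothesis $\rho_+\rho_-\neq 0$, which is exactly what makes both the polar decomposition and the division by $||\bbs ' _+||^2$ legitimate, and (ii) the multivaluedness of $S_\pm$ and of $\arg g$, which is harmless because only their gradients $p_\pm=\nabla S_\pm$ and $\nabla(\arg g)$ enter the statement, and these are single-valued.
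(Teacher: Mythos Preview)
Your proof is correct and follows essentially the same route as the paper: both differentiate the identity $s_{1,+}+is_{2,+}=\tfrac12\psi_-/\psi_+$ from (\ref{eq:s_1+is_2}) and separate real and imaginary parts. The only cosmetic difference is that the paper writes $\nabla g = g\,(\nabla\psi_-/\psi_- - \nabla\psi_+/\psi_+)$, extracts the $2\times 2$ linear system (\ref{eq:system_nabla_s_+}) for $\nabla s_{1,+},\nabla s_{2,+}$ in terms of $p_--p_+$ and $\tfrac12\nabla\log(\rho_-/\rho_+)$, and then inverts it, whereas you go directly through $\nabla\log|g|$ and $\nabla\arg g$; the content is the same.
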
   
   
\begin{proof}   
Taking real and imaginary parts of 
$$   
\nabla (s_{1, + } + i s_{2, + } ) = \frac{1 }{2 } \nabla \left( \frac{\psi _- }{\psi _+ } \right) = \frac{1 }{2 } \frac{\psi _-}{\psi _+ } \left( \frac{\nabla \psi _- }{\psi _- } - \frac{\nabla \psi _+ }{\psi _+ } \right) =   
(s_{1, + } + i s_{2, + } ) \left( \frac{\nabla \psi _- }{\psi _- } - \frac{\nabla \psi _+ }{\psi _+ } \right) ,   
$$   
and using that $(\nabla \psi _{\pm } ) / \psi _{\pm } = \frac{1 }{2 }   
\nabla \log \rho _{\pm } + i p_{\pm } $ gives   
\begin{equation} \label{eq:system_nabla_s_+}   
\begin{array}{cc}    
\nabla s_{1, + } = \frac{1 }{2 } s_{1, + } \nabla \log (\rho _- / \rho _+ ) - s_{2, + } (p_- - p_+ ) \\   
\ \\   
\nabla s_{2, + } = s_{1, + } (p_- - p_+ ) + \frac{1 }{2 } s_{2, + } \nabla (\log (\rho _- / \rho _+ ) ,   
\end{array}   
\end{equation}   
from which the lemma follows on solving for $p_- - p_+ $ and $ \frac{1 }{2 } \nabla (\log (\rho _- / \rho _+ ) . $    
\end{proof}   
\medskip   
   
The equations of theorem \ref{eq:system_CE_Pauli} have to be complemented by the two constraints given by lemma \ref{lemma:relation_p_pm}, together with relations (\ref{eq:relation_s_+s_-}), constraints which are again found to be compatible. The lemma also allows to derive a system of equations for $\bbs _+ ' := (s_{1, + } , s_{2, + } ) $,  $p_+ $ and $\rho _+ $ which has a surprisingly simple form, and for which it would be interesting to find a more conceptual proof that the computational one we were able to give. Recall the definition of the material derivative\footnote{also called the Lagrangian-, total- or convective derivative} of a function $f = f(x, t ) $ along the flow of   
a vector field $v = v(x, t ) $:   
$$   
D_t ^v f := \partial _t f +   
v \cdot \nabla f 
$$   
   
\begin{theorem} \label{thm:Paulii_bis} ($+ $-channel dynamics) Let $v_+ := p_+ - A $ and $D_t ^+ := D_t ^{v_+ } $ the material derivative along $v_+ $, and recall that $\bbs _+ = (\bbs _+ ' , \frac{1 }{2 } ) = (s_{1, + } , s_{2, + } , \frac{1 }{2 } ) . $ Then   
   
\begin{equation} \label{eq:CE_Pauli_+}   
\begin{array}{lllll}      
D_t ^+ s_{1, + }  = - \frac{1 }{2 } \rho _+ ^{-1 } \nabla \cdot \bigl ( \rho _+ \nabla s_{2, + }  \bigr ) - 2 (\mathbb{B } \times   
\bbs _+ )_3 s_{1, + }  + (\mathbb{B } \times   
\bbs _+ )_1 -   
B_2 || \bbs _+ ||^2 \\   
\ \\   
D_t ^+ s_{2, + }  =  \frac{1 }{2 } \rho _+ ^{-1 } \nabla \cdot \bigl ( \rho _+ \nabla s_{1, + } \bigr ) - 2 (\mathbb{B } \times   
\bbs _+ )_3 s_{2, + } + (\mathbb{B } \times   
\bbs _+ )_2 +   
+ B_1 || \bbs _+ ||^2 \\   
\ \\   
D_t ^+ v_+ = - \nabla \left( V + Q_+ + \mathbb{B } \cdot \bbs _+ \right) + v_+ \times \mathbb{B } - \partial _t A \\    
\ \\     
\rho _+ ^{-1 } D _t ^+ \rho _+ = - \nabla \cdot   
v_+ + 2 (\mathbb{B } \times   
\bbs _+ )_3    
\end{array}    
\end{equation}   
\end{theorem}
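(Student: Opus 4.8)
The plan is to argue directly from the Pauli equation in components rather than from the diagonal system of theorem~\ref{eq:system_CE_Pauli}; this makes the transport-plus-diffusion structure of (\ref{eq:CE_Pauli_+}) transparent. Writing $H^P = H_{A,V}I_{\mathbb{C}^2} + \mathbb{B}\cdot\mathbb{S}$ and expanding $i\partial_t\psi = H^P\psi$ componentwise gives
\begin{equation*}
i\partial_t\psi_+ = H_{A,V}\psi_+ + W_+\psi_+, \qquad W_+ := \tfrac12 B_3 + \tfrac12(B_1 - iB_2)\frac{\psi_-}{\psi_+},
\end{equation*}
with $W_+$ a (in general non-self-adjoint, $\psi$-dependent) scalar multiplier; by (\ref{eq:s_1+is_2}) one has $\psi_-/\psi_+ = 2(s_{1,+} + is_{2,+})$, from which one reads off $\mathrm{Re}\,W_+ = \mathbb{B}\cdot\bbs_+$ and $\mathrm{Im}\,W_+ = (\mathbb{B}\times\bbs_+)_3$.

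For the $v_+$-equation I would use the pointwise identity behind lemma~\ref{lemma:evol_p_B}, namely $\partial_t(\nabla\psi_+/\psi_+) = \nabla(\partial_t\psi_+/\psi_+)$ — which holds for \emph{any} effective Hamiltonian — to get $\partial_t(\nabla\psi_+/\psi_+) = -i\nabla\big(H_{A,V}\psi_+/\psi_+ + W_+\big)$. Taking imaginary parts, recalling $p_+ = \mathrm{Im}(\nabla\psi_+/\psi_+)$ and $\mathrm{Re}(H_{A,V}\psi_+/\psi_+) = \tfrac12 v_+^2 + V + Q_+$ from (\ref{eq:CE_H_A}), yields $\partial_t p_+ = -\nabla(\tfrac12 v_+^2 + V + Q_+ + \mathbb{B}\cdot\bbs_+)$, hence $\partial_t v_+ = \partial_t p_+ - \partial_t A$. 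One then adds the convective term through the Euler identity $(v_+\cdot\nabla)v_+ = \nabla(\tfrac12 v_+^2) - v_+\times(\nabla\times v_+)$ and the fact that $\nabla\times v_+ = \nabla\times p_+ - \nabla\times A = -\mathbb{B}$ (the Bohm momentum being curl-free), which cancels the $\nabla(\tfrac12 v_+^2)$ terms and leaves the third line of (\ref{eq:CE_Pauli_+}). The continuity equation comes out the same way: $\partial_t\rho_+ = 2\,\mathrm{Im}(\overline{\psi_+}\,i\partial_t\psi_+) = 2\,\mathrm{Im}(\overline{\psi_+}H_{A,V}\psi_+) + 2\rho_+\,\mathrm{Im}\,W_+$, where the first term equals $-\nabla\cdot(\rho_+ v_+)$ by the magnetic continuity relation (\ref{eq:guidance4}) for $H_{A,V}$ and the second equals $2\rho_+(\mathbb{B}\times\bbs_+)_3$; dividing by $\rho_+$ and absorbing $v_+\cdot\nabla\log\rho_+$ converts $\partial_t$ into $D_t^+$ and gives the last line.

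The substantive point is the spin equations. I would set $\zeta := s_{1,+} + is_{2,+} = \tfrac12\psi_-/\psi_+$ and compute $i\partial_t\zeta$ from the two component equations. The one nontrivial ingredient is the product-rule (``commutator'') identity
\begin{equation*}
\frac{H_{A,V}(\zeta\psi_+)}{\psi_+} - \zeta\,\frac{H_{A,V}\psi_+}{\psi_+} = -\tfrac12\rho_+^{-1}\nabla\cdot(\rho_+\nabla\zeta) - i\,v_+\cdot\nabla\zeta,
\end{equation*}
which follows from expanding $(P-A)^2(\zeta\psi_+) = (P^2\zeta)\psi_+ + 2(P\zeta)\cdot(P-A)\psi_+ + \zeta(P-A)^2\psi_+$ by the Leibniz rule and using $(P-A)\psi_+/\psi_+ = v_+ - \tfrac{i}{2}\nabla\log\rho_+$ together with $\rho_+^{-1}\nabla\cdot(\rho_+\nabla\zeta) = \Delta\zeta + \nabla\log\rho_+\cdot\nabla\zeta$ (the potential $V$ drops out of the difference). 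Feeding this in and moving $v_+\cdot\nabla\zeta$ to the left produces
\begin{equation*}
D_t^+\zeta = \tfrac{i}{2}\rho_+^{-1}\nabla\cdot(\rho_+\nabla\zeta) + \tfrac14(B_2 - iB_1) + iB_3\zeta + (B_2 + iB_1)\zeta^2;
\end{equation*}
setting $\nabla\zeta = 0$ recovers (\ref{eq:ODE_z}), i.e.\ theorem~\ref{thm:CEdynamics_spin}, as a sanity check. Taking real and imaginary parts gives the first two lines of (\ref{eq:CE_Pauli_+}) once one verifies the elementary algebraic identity that the polynomial terms in $(s_{1,+},s_{2,+})$, using $\bbs_+ = (s_{1,+},s_{2,+},\tfrac12)$ and $\|\bbs_+\|^2 = s_{1,+}^2 + s_{2,+}^2 + \tfrac14$, rearrange into $-2(\mathbb{B}\times\bbs_+)_3\, s_{j,+} + (\mathbb{B}\times\bbs_+)_j + (e_3\times\mathbb{B})_j\,\|\bbs_+\|^2$ for $j=1,2$. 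Finally, the constraints of lemma~\ref{lemma:relation_p_pm} and (\ref{eq:relation_s_+s_-}) hold automatically, since $(s_{1,+},s_{2,+},p_+,\rho_+)$ are functionals of the single wave function $\psi$; hence they are preserved by the flow (one may alternatively check their compatibility with (\ref{eq:CE_Pauli_+}) directly, as in the proof of theorem~\ref{thm:CEdynamics_spin}).

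I expect the main obstacle to be purely computational: carrying out the Leibniz expansion of $(P-A)^2(\zeta\psi_+)$ without sign errors — the cross term $2(P\zeta)\cdot(P-A)\psi_+$ is exactly where the advection velocity $v_+$ and the weight $\nabla\log\rho_+$ enter — and then the somewhat tedious (but routine) verification that the spin-coupling polynomials assemble into the stated cross-product and $\|\bbs_+\|^2$ terms. Everything else reduces cleanly to lemma~\ref{lemma:evol_p_B}, formula (\ref{eq:CE_H_A}), and the magnetic continuity relation (\ref{eq:guidance4}).
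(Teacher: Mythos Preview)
Your argument is correct and takes a genuinely different route from the paper's. The paper first establishes theorem~\ref{eq:system_CE_Pauli} (the overdetermined system in $F_+-F_-$ and $h_+-h_-$), then uses lemma~\ref{lemma:relation_p_pm} to express $X=p_--p_+$ and $Y=\tfrac12\nabla\log(\rho_-/\rho_+)$ in terms of $\bbs_+'$, and finally eliminates $F_+-F_-$ and $h_+-h_-$ through a sequence of algebraic manipulations based on the linear system (\ref{eq:system_nabla_s_+}). You bypass theorem~\ref{eq:system_CE_Pauli} entirely by treating the $+$-component as a scalar Schr\"odinger problem with effective (non-Hermitian) potential $W_+$ and deriving a single complex advection--diffusion equation for $\zeta=s_{1,+}+is_{2,+}$ via the commutator identity for $H_{A,V}(\zeta\psi_+)$. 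This is more direct and makes the emergence of the weighted Laplacian $\rho_+^{-1}\nabla\cdot(\rho_+\nabla\,\cdot\,)$ transparent; it is in fact the ``more conceptual proof'' the paper explicitly wishes for just before stating the theorem, and it extends remark~\ref{remark:spin_Riemann_sphere} from the pure-spin case to the moving particle. The paper's approach, on the other hand, yields theorem~\ref{eq:system_CE_Pauli} as an independent first-order system and makes the role of the differences $F_+-F_-$, $h_+-h_-$ (hence the coupling between the $\pm$ channels) explicit, which is useful for the two-fluid interpretation in Appendix~B.
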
   
   
\noindent Similarly, $(\bbs _- , p_- , \rho _- ) $ also determines the wave function up to a constant, and follows a similar dynamics:   
   
\begin{theorem} \label{thm:Paulii_bis_b} ($- $-channel dynamics) If $v_- := p_- - A $ and $D_t ^- := D_t ^{v_- } $,    
\begin{equation} \label{eq:CE_Pauli_-}   
\begin{array}{lllll}      
D_t ^- s_{1, - }  = \frac{1 }{2 } \rho _- ^{-1 } \nabla \cdot \bigl ( \rho _- \nabla s_{2, - }  \bigr ) + 2 (\mathbb{B } \times   
\bbs _- )_3 s_{1, - }  + (\mathbb{B } \times   
\bbs _- )_1 +   
B_2 || \bbs _- ||^2 \\   
\ \\   
D_t ^- s_{2, - }  =  - \frac{1 }{2 } \rho _- ^{-1 } \nabla \cdot \bigl ( \rho _- \nabla s_{1, - } \bigr ) + 2 (\mathbb{B } \times   
\bbs _- )_3 s_{2, - } + (\mathbb{B } \times   
\bbs _- )_2 -   
B_1 || \bbs _- ||^2 \\   
\ \\   
D_t ^- v_- = - \nabla \left( V + Q_- + \mathbb{B } \cdot \bbs _- \right) + v_- \times \mathbb{B } - \partial _t A \\  
\ \\     
\rho _- ^{-1 } D _t ^- \rho _- = - \nabla \cdot   
v_- - 2 (\mathbb{B } \times   
\bbs _- )_3 ,   
\end{array}   
\end{equation}   
with $\bbs _- = (s_{1, - } , s_{2, - } , - \frac{1 }{2 } ) . $
\end{theorem}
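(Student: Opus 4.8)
The plan is to treat Theorem~\ref{thm:Paulii_bis_b} as the mirror image of Theorem~\ref{thm:Paulii_bis} under the interchange of the two components of $\psi$, accompanied by the replacement $s_{3,+}=\tfrac12\mapsto s_{3,-}=-\tfrac12$, which accounts for the sign flips between (\ref{eq:CE_Pauli_+}) and (\ref{eq:CE_Pauli_-}). Because $\psi_+$ and $\psi_-$ enter the spin coupling of $H^P$ asymmetrically, I would not invoke a formal symmetry but simply rerun the proof of Theorem~\ref{thm:Paulii_bis} starting from the $-$-component of the Pauli equation, $i\partial_t\psi_-=(H_{A,V}-\tfrac12 B_3)\psi_-+\tfrac12(B_1+iB_2)\psi_+$. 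Using the $-$-entry of (\ref{eq:s_1+is_2}), $s_{1,-}-is_{2,-}=\tfrac12\,\psi_+/\psi_-$, and formula (\ref{eq:CE_H_A}) for the scalar Bohm energy, one finds that ${\rm Re}\bigl((H^P\psi)_-/\psi_-\bigr)=\tfrac12(p_--A)^2+V+Q_-+\mathbb{B}\cdot\bbs_-=h_-+\mathbb{B}\cdot\bbs_-$, where $v_-:=p_--A$ and $Q_-$ is the quantum potential of $\psi_-$, while ${\rm Im}\bigl((H^P\psi)_-/\psi_-\bigr)=-\tfrac12\rho_-^{-1}\nabla\cdot(\rho_- v_-)-(\mathbb{B}\times\bbs_-)_3$.

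Granting this, the momentum and density equations come out exactly as in Theorem~\ref{thm:Paulii_bis}. The proof of Lemma~\ref{lemma:evol_p_B} uses only that each $\nabla_k$ is a derivation, so it applies verbatim to the $-$-component: $\partial_t\bigl(\nabla\psi_-/\psi_-\bigr)=-i\,\nabla\bigl((H^P\psi)_-/\psi_-\bigr)$, whose imaginary part gives $\partial_t p_-=-\nabla(h_-+\mathbb{B}\cdot\bbs_-)$, i.e. (\ref{eq:CE_Pauli_2}) restricted to the $-$-channel. Passing to $v_-$ and using that $p_-$ is a gradient (hence $\nabla\times v_-=-\mathbb{B}$) together with $\nabla(\tfrac12 v_-^2)=(v_-\cdot\nabla)v_--v_-\times\mathbb{B}$ turns this into $D_t^-v_-=-\nabla(V+Q_-+\mathbb{B}\cdot\bbs_-)+v_-\times\mathbb{B}-\partial_t A$. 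For the density, $\partial_t\rho_-=2\,{\rm Im}\bigl(\overline{\psi_-}(H^P\psi)_-\bigr)$; the current identity ${\rm Im}(\overline{\varphi}H_{A,V}\varphi)=-\tfrac12\nabla\cdot\bigl(\rho_\varphi(p_B(\varphi)-A)\bigr)$ for the scalar operator, applied to $\varphi=\psi_-$, together with $\overline{\psi_-}\psi_+=\rho_-(2s_{1,-}-2is_{2,-})$ for the coupling term, gives $\partial_t\rho_-=-\nabla\cdot(\rho_- v_-)-2\rho_-(\mathbb{B}\times\bbs_-)_3$, and adding $v_-\cdot\nabla\rho_-$ produces the last line of (\ref{eq:CE_Pauli_-}).

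The spin equations are the delicate part, and I expect them to be the main obstacle. Differentiating $s_{1,-}-is_{2,-}=\tfrac12\,\psi_+/\psi_-$ in $t$ and substituting the Pauli equation produces the magnetic terms $2(\mathbb{B}\times\bbs_-)_3 s_{j,-}+(\mathbb{B}\times\bbs_-)_j-(e_3\times\mathbb{B})_j\|\bbs_-\|^2$, with the sign pattern fixed by $s_{3,-}=-\tfrac12$ exactly as in Theorem~\ref{thm:CEdynamics_spin}, plus a contribution from the scalar operator $H_{A,V}$ that is second order in the spatial derivatives of $\psi_\pm$. The task is to reorganise this last contribution into $\pm\tfrac12\rho_-^{-1}\nabla\cdot(\rho_-\nabla s_{j,-})$ plus the advection term $v_-\cdot\nabla s_{j,-}$ that completes the material derivative $D_t^-s_{j,-}$. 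Here I would use the $-$-channel analogue of Lemma~\ref{lemma:relation_p_pm}, obtained by taking real and imaginary parts of $\nabla(s_{1,-}-is_{2,-})=(s_{1,-}-is_{2,-})\bigl(\tfrac12\nabla\log(\rho_+/\rho_-)+i(p_+-p_-)\bigr)$; this expresses $\nabla s_{1,-},\nabla s_{2,-}$ through $p_\pm,\rho_\pm$ and, solved the other way, $p_--p_+$ and $\tfrac12\nabla\log(\rho_-/\rho_+)$ through $\bbs_-'$. Feeding these back, together with the polar-coordinate identities for $Q_-$ used in the Schr\"odinger case, collapses everything onto (\ref{eq:CE_Pauli_-}); the one real hazard is tracking the signs dictated by $s_{3,+}=+\tfrac12$ versus $s_{3,-}=-\tfrac12$ and by the swapped roles of $\psi_+$ and $\psi_-$. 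Finally, exactly as in Theorem~\ref{thm:CEdynamics_spin} and the remark after Theorem~\ref{eq:system_CE_Pauli}, the constraints (\ref{eq:relation_s_+s_-}) and the (mirror of the) relations of Lemma~\ref{lemma:relation_p_pm} are compatible with (\ref{eq:CE_Pauli_-}): they cut out a closed subsystem which, by uniqueness for the corresponding evolution equations, holds for all $t$ once it holds at $t=0$.
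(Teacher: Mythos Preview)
Your proposal is correct and matches the paper's approach: the paper proves Theorem~\ref{thm:Paulii_bis_b} by noting that ``the proof is similar'' to that of Theorem~\ref{thm:Paulii_bis}, reusing the expressions (\ref{eq:diff_F}) and (\ref{eq:diff_h}) for $F_+-F_-$ and $h_+-h_-$ but now with $p_+=p_-+X$, $p_{O,+}=p_{O,-}+Y$, and replacing the system (\ref{eq:X_Y_s}) by its $-$-channel analogue (\ref{eq:X_Y_s_min}), obtained exactly as you describe by differentiating $s_{1,-}-is_{2,-}=\tfrac12\psi_+/\psi_-$ in space. Your write-up is in fact more explicit than the paper's, which merely points to the parallel computation; the only organisational difference is that the paper routes everything through the intermediate first-order system of Theorem~\ref{eq:system_CE_Pauli} before simplifying, whereas you compute ${\rm Re}$ and ${\rm Im}$ of $(H^P\psi)_-/\psi_-$ directly---the two amount to the same calculation.
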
   
   
\noindent (The two theorems could have been formulated as a single system for $\bbs ' , p_B $ and $\rho $ but the present formulation is convenient for the interpretation in terms of a two-component fluid in Appendix B: see below.)   
\medskip   
   
The proof will be given in subsection 8.2 below. As for the pure spin case, $\bbs _+ $ determines $\bbs _- $, and vice versa, cf. (\ref{eq:relation_s_+s_-}), and $p_+ $ and $\rho _+ $ then determine $p_- $ and $\rho _- $ through lemma \ref{lemma:relation_p_pm}. This might suggest deriving theorem \ref{thm:Paulii_bis_b} from theorem \ref{thm:Paulii_bis}, but it turns out to be much easier to adapt the proof of theorem \ref{thm:Paulii_bis}, which is what we will do. We also note in passing that the operator $ - \rho _{\pm } ^{-1 } \nabla \cdot (\rho _{\pm } \nabla ) = \nabla _{\rho _{\pm }} ^* \nabla $, where $\nabla _{\rho _{\pm } } ^* $ is the (formal) adjoint of $\nabla $ on the weighted Hilbert space $L^2 (\rho _{\pm } dx ) . $    
\medskip   
   
Only one of the two systems (\ref{eq:CE_Pauli_+}) or (\ref{eq:CE_Pauli_-}) is needed (as long as $\rho _+ $ respectively $\rho _- $ does not vanish), but it may be interesting to observe that it is possible to interpret the two systems combined as describing a two-component fluid, or material\footnote{The Cauchy-stress tensors of the two components are not those of an Eulerian perfect fluid, but of a so-called Korteweg fluid, depending on derivatives of the density up to order 2. }, with each of the components carrying its own internal angular momentum or spin. The possibility of such an two-fluid interpretation was already suggested by Takabayasi \cite{Ta1}, who mentions an analogy with the two-fluid theory of He II, but unfortunately did not provide a detailed model. Takabayasi gave a Bohm-style derivation of a system of equations $p_{\pm } $ and $\rho _{\pm } $ by writing each of the components of the spinor wave function in complex polar coordinates. He did not introduce the conditional spin variables $s_{j, \pm } $, and his equations have a very different form from ours, notably involving the sine and cosine of the phase difference of  $\psi _+ $ and $\psi _- $, which makes these more difficult to interpret. We will give another two-fluid interpretation in appendix B below, in terms of present-day continuum mechanics of multi-component polar fluids, a theory which was not yet developed at the time of Takabayasi's paper. Since for this interpretation a preliminary exposition of the basic balance laws of   
multi-component continuum mechanics is needed, we have relegated the details   
to   
appendix B, but to give the flavor of such an interpretation, note that if we multiply the equations for $\rho _{\pm } $ in (\ref{eq:CE_Pauli_+}) and (\ref{eq:CE_Pauli_-}) by $\rho _{\pm } $, these can be written as   
$$   
\partial _t \rho _{\pm } + \nabla \cdot \rho _{\pm } v_{\pm } = \pm (\mathbb{B } \times \langle \mathbb{S } \rangle _{\psi } )_3 ,   
$$   
telling us that spin-down particles are converted into spin-up particles at a rate of $(\mathbb{B } \times \langle \mathbb{S } \rangle _{\psi } )_3 . $   
\medskip   
   
An alternative, Bohm-Bell style interpretation, would be in terms of a particle moving on the joint spectrum $\mathbb{R }^3 \times \{ \pm \frac{1 }{2 } \} $ of $X $ and $S_3 $ in a way which is compatible with the dynamics of the occupation densities $\rho _{\pm } (x, t ) . $   
If we again multiply (\ref{eq:CE_Pauli_3}) by $\rho $ and  use (\ref{eq:def_trans.proba}) for the $\pm 2 (\mathbb{B } \times \bbs _{\pm } )_3 \rho _{\pm } = \pm (\mathbb{B } \times \langle \mathbb{S } \rangle _{\psi } )_3 = \pm J_{+ - } $-term, we find that   
\begin{equation}   
\partial _t \begin{pmatrix} \rho _+ \\ \rho _- \end{pmatrix} = - \nabla \cdot \begin{pmatrix} \rho _+ v_+ \\ \rho _- v_- \end{pmatrix} + \mathcal{Q }(a ) \begin{pmatrix} \rho _+ \\ \rho _- \end{pmatrix}   
\end{equation}   
or, more synthetically,   
\begin{equation}   
\partial _t \rho = - \nabla \cdot (   
v \rho ) + \mathcal{Q }(a ) \rho , \ \    
v := \begin{pmatrix} v_+ &0 \\ 0 &v_- \end{pmatrix} ,   
\end{equation}   
where $\rho = (\rho _+ , \rho _- ) $ is now a (column) vector, instead of a diagonal matrix. The physical interpretation is that of a particle (or flow of particles) moving on two copies of $\mathbb{R } ^3 $, or two sheets, indexed by $+ $ and $- $, with $\rho _{\pm } (x, t ) $ being the probability-density for being located at time $t $ at $x $ in the $\pm  $-sheet (respectively the density of particles at $x $ in the $\pm$-sheet at time $t $). The velocity fields of the flow in the respective sheets are $v_{\pm } = p_{\pm } - A $, and the particle can at each moment in time jump from the $+ $-sheet to the $- $-sheet, or vive-versa, with a probability of $q_{- + } dt .$ respectively $q_{+ - } dt . $
The particle carries a vector $\bbs _+ = (\bbs _+ ' , \frac{1 }{2 } ) $, interpreted as spin,  when moving in the $+ $-sheet, which is transformed in $\bbs _- = (\bbs _- ' , -\frac{1 }{2 } ) $ given by (\ref{eq:relation_s_+s_-}) when jumping to the $- $-sheet, and vice versa.   
   
The dynamics under the combined flow-   
plus jump process can, as before, be derived by using (\ref{eq:Q-F}), or its transpose, to rewrite the $\pm (\mathbb{B } \times \bbs )_3 \bbs _j $-terms in (\ref {eq:system_CE_Pauli}) respectively in the equations for $s_{j, \pm } $ of (\ref{eq:CE_Pauli_+}), (\ref{eq:CE_Pauli_-}). In the absence (for now, at least   
) of a   
natural physical interpretation of the resulting equations (cf.  the discussion in section 7.2 for the infinitely heavy spin $1/2 $-particle) we skip the details, but we hope to return to this question in a futur paper.   
\medskip   
   
The papers of Bohm, Schiller and Tionmo \cite{BST_A} and Takabayasi \cite{Ta} on the causal interpretation of the Pauli equation can, from the perspective of the present paper, be understood as deriving the dynamics of the conditional expectations $\mathbb{E }_{\psi } (P | X ) $ and $\mathbb{E }_{\psi } (S | X ) $ of the momentum and spin operators with respect to the position operator $X $ only.   
These conditional expectations are simply related to the ones with respect to $(X, S_3 ) $ through          
$$   
\mathbb{E }_{\psi } (A | X ) = \frac{1 }{{\rm Tr } (\rho ) } {\rm Tr }\left( \, \rho \mathbb{E }_{\psi } (A | X , S_3 ) \, \right) ,     
$$   
with $\rho $ as in (\ref{eq:Pauli_rho}), but it does not seem straightforward to arrive at the equations of the Bohm-Schiller-Tiomno-Takabayasi (BSTT) theory by simply differentiating this relationship with respect to  time and subsequently using theorem \ref{eq:system_CE_Pauli} and/or theorems \ref{thm:Paulii_bis}, \ref{thm:Paulii_bis_b}. For example, the BSTT equations involve the quantum potential $ - \Delta (\sqrt{\rho _+ + \rho _- } \, ) / 2 \sqrt{\rho _+ + \rho _- } $ associated to the position density $(\psi , \psi )_{\mathbb{C }^2 } = \rho _+ + \rho _- $, which is not simply related to $Q_{\pm } = - \Delta (\sqrt{\rho _{\pm } } ) / 2 \sqrt{\rho _{\pm } } . $ Also note that while the proof of (\ref{eq:CE_Pauli_2}) is fairly   
straightforward (cf. lemma \ref{lemma:evol_p_B}), the derivation of the dynamics of $\mathbb{E } (P | X ) $ in \cite{Ta} is much more involved. Theorem \ref{eq:system_CE_Pauli} should perhaps be seen as an alternative to, rather than as a generalization of, BSTT-theory. Whether it is a useful one will of course also depend on the physical interpretations one is able to give. From a purely mathematical perspective, it may be advantageous that the system of theorem \ref{eq:system_CE_Pauli} is first order in the spatial derivatives of $\bbs $, instead of second order, with (\ref{eq:CE_Pauli_1}) being 0-th order in $x . $    
\medskip   
   
\subsection{Proof of theorem \ref{eq:system_CE_Pauli} } The proof is a calculation similar to the one proving theorem \ref{thm:CEdynamics_spin}, which we present as a sequence of lemmas, for easier readability. All inner products below are those of $\mathbb{C }^2 $ for fixed $(x, t ) $, which we suppress from the notation: thus $(S_1 \psi , \psi ) = (S_1 \psi (x, t ), \psi (x, t ) )_{\mathbb{C }^2 } = {\rm Re } (\overline{\psi _+ (x, t )  } \psi _- (x, t) )= {\rm Re } \, \overline{\psi }_+ \psi _-  $, etc.   
   
\begin{lemma} \label{lemma:proof_Pauli_1} Let $\psi = (\psi _+ , \psi _- ) $ satisfy the time-dependent Pauli equation $i\partial _t \psi = H^P \psi $ with $H^P $ given by (\ref{eq:Pauli1}) and (\ref{eq:Pauli_H_A}). Then   
\begin{equation} \nonumber   
\partial _t (S_1 \psi , \psi ) = {\rm Im } \bigl ( \overline{\psi }_+ H_A (\psi _- ) \bigr ) + {\rm Im } \bigl ( \overline{\psi }_- H_A (\psi _+ ) \bigr ) + \bigl ( i [ H^S , S_1 ] \psi , \psi \bigr )    
\end{equation}   
and   
\begin{equation} \nonumber   
\partial _t (S_2 \psi , \psi ) =  - {\rm Re } \bigl ( \overline{\psi }_+ H_A (\psi _- ) \bigr ) + {\rm Re } \bigl ( \overline{\psi }_- H_A (\psi _+ ) \bigr ) + \bigl ( i [ H^S , S_2 ] \psi , \psi ) \bigr )   
\end{equation}   
\end{lemma}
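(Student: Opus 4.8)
The plan is to differentiate the \emph{pointwise} $\mathbb{C}^2$-inner products directly, using the Pauli equation $\partial_t\psi=-iH^P\psi$ (which holds pointwise in $(x,t)$, being a PDE) and exploiting the splitting $H^P=H_{A,V}\cdot I_{\mathbb{C}^2}+H^S=(H_A+V)\cdot I_{\mathbb{C}^2}+H^S$. The key structural facts are that $H^S$ and the $S_j$ are \emph{constant} Hermitian $2\times 2$ matrices, so the contribution of $H^S$ to the time-derivative collapses into a commutator, while the real scalar potential $V$ drops out, leaving only the differential operator $H_A=\tfrac12(P-A)^2$.

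First I would record the elementary identity: since $S_j$ is Hermitian and $\psi$ is $C^2$, all inner products being those of $\mathbb{C}^2$ at the point $(x,t)$,
\[
\partial_t(S_j\psi,\psi)=(S_j\partial_t\psi,\psi)+(S_j\psi,\partial_t\psi)=2\,{\rm Re}\,(S_j\partial_t\psi,\psi)=2\,{\rm Im}\,\bigl(S_jH^P\psi,\psi\bigr),
\]
where the last equality used $\partial_t\psi=-iH^P\psi$ and ${\rm Re}(-iz)={\rm Im}\,z$. Next I substitute $H^P\psi=H_A\psi+V\psi+H^S\psi$, the first two terms acting componentwise and the last by matrix multiplication. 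The piece $2\,{\rm Im}\,(S_jV\psi,\psi)=2V\,{\rm Im}\,(S_j\psi,\psi)$ vanishes because $V$ is real and $(S_j\psi,\psi)$ is real; this is exactly why only $H_A$, and not $H_{A,V}$, survives. For the spin piece, since $S_j$ and $H^S$ are Hermitian matrices, $\overline{(S_jH^S\psi,\psi)}=(H^SS_j\psi,\psi)$, so
\[
2\,{\rm Im}\,(S_jH^S\psi,\psi)=-i\bigl[(S_jH^S\psi,\psi)-(H^SS_j\psi,\psi)\bigr]=\bigl(i[H^S,S_j]\psi,\psi\bigr),
\]
which is precisely the last term in each of the two stated formulas.

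It then remains to evaluate $2\,{\rm Im}\,(S_jH_A\psi,\psi)$ for $j=1,2$, where $H_A\psi=(H_A\psi_+,H_A\psi_-)$. For $j=1$, $S_1H_A\psi=\tfrac12(H_A\psi_-,H_A\psi_+)$, so $(S_1H_A\psi,\psi)=\tfrac12\bigl(\overline{\psi}_+H_A(\psi_-)+\overline{\psi}_-H_A(\psi_+)\bigr)$ and twice its imaginary part is ${\rm Im}\,\bigl(\overline{\psi}_+H_A(\psi_-)\bigr)+{\rm Im}\,\bigl(\overline{\psi}_-H_A(\psi_+)\bigr)$. For $j=2$, $S_2H_A\psi=\tfrac{i}{2}\bigl(-H_A\psi_-,H_A\psi_+\bigr)$, so $(S_2H_A\psi,\psi)=\tfrac{i}{2}\bigl(\overline{\psi}_-H_A(\psi_+)-\overline{\psi}_+H_A(\psi_-)\bigr)$, and twice its imaginary part, using ${\rm Im}(iz)={\rm Re}\,z$, equals $-{\rm Re}\,\bigl(\overline{\psi}_+H_A(\psi_-)\bigr)+{\rm Re}\,\bigl(\overline{\psi}_-H_A(\psi_+)\bigr)$. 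Adding the $H_A$-, $V$- and $H^S$-contributions in each case yields the two displayed identities.

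There is no serious obstacle here: the computation is short and the only point needing care is that, because these are pointwise $\mathbb{C}^2$-inner products, $H_A$ is a genuine differential operator in $x$ and is \emph{not} self-adjoint for this inner product, so one must not try to move it across the bracket; the $H_A$-term therefore genuinely persists as the two probability-flux type terms rather than cancelling as it would in an integrated $L^2(\mathbb{R}^3;\mathbb{C}^2)$ computation. Everything else is routine bookkeeping with the constant matrices $S_j$ and $H^S$.
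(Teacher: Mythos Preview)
Your proof is correct and follows essentially the same approach as the paper: differentiate the pointwise $\mathbb{C}^2$-inner product, substitute the Pauli equation, observe that the $V$-contribution vanishes and the $H^S$-contribution collapses to the commutator, and compute the $H_A$-term explicitly. The paper leaves the $S_2$ case as ``similar'' while you carry it out in full, and your remark that $H_A$ is not self-adjoint for the pointwise $\mathbb{C}^2$-inner product (so it genuinely survives) is a nice clarifying touch not made explicit in the paper.
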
   
   
\begin{proof} $\partial _t (S_1 \psi , \psi ) = (S_1 \partial _t \psi , \psi ) + (S_1 \psi , \partial _t \psi ) = - i (S_1 (H_A + V + H^S ) \psi , \psi ) + i (S_1 \psi , (H_A + V + H^S ) \psi ) =   
- i \Bigl ( (S_1 H_A \psi , \psi ) - (\psi , S_1 H_A \psi ) \Bigr ) + \bigl ( i [ H^S , S_1 ] \psi , \psi \bigr ) $, since $S_1 $ is a Hermitian operator on $\mathbb{C }^2 $, as is $S_1 V =S_1 V(x, t ) $, for any $(x, t ) . $ Next,       
\begin{eqnarray*}   
- i \Bigl ( (S_1 H_A \psi , \psi ) - (\psi , S_1 H_A \psi ) \Bigr ) &=& 2 \, {\rm Im } \Bigl ( (S_1 H_A \psi , \psi ) \\   
&=& {\rm Im } \bigl ( \overline{\psi } _+ H_A (\psi _- ) \bigr ) + {\rm Im } \bigl (\overline{\psi }_- H_A (\psi _+ ) \bigr ) ,   
\end{eqnarray*}   
proving the first formula. The proof for $\partial _t ( S_2 \psi , \psi ) $ is similar.   
   
\end{proof}   
   
We next compute $\partial _t (S_j \psi , \psi ) \cdot \rho _{\psi } ^{-1 } $ (scalar times $2 \times 2 $-matrix) for $j = 1, 2 . $ We will use the following observation, cf.  (\ref{eq:WV_H_0}).   
   
\begin{lemma} The weak value of $H_A $ in a (by assumption nowhere vanishing) state $\varphi $ is given by   
\begin{equation}   
\frac{H_A (\varphi ) }{\varphi } = h_B (\varphi ) - \frac{i }{2 } F (\varphi ) ,   
\end{equation}   
where $h_B (\varphi ) = \frac{1 }{2 } ( p_B  - A )^2 + Q_{\varphi } + V $ with $p_B = p_B (\varphi ) $ the Bohm momentum of $\varphi $, and   
\begin{equation}   
F (\varphi ) = F_A (\varphi ) = \rho ^{-1 } \nabla \cdot \rho (p_B - A ) ,      
\end{equation}   
with $\rho = \rho (\varphi ) = |\varphi |^2 $, as usual.   
\end{lemma}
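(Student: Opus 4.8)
The plan is to compute the weak value $H_A(\varphi)/\varphi$ directly from the polar decomposition of $\varphi$ and then split it into real and imaginary parts, reusing the scalar identities already proved in Sections~2 and~4. Write $\varphi = R e^{iS}$ with $R = |\varphi| > 0$, so that $p_B = \nabla S$ is the Bohm momentum of $\varphi$, $p_{O,\varphi} = \nabla R/R = \tfrac12\nabla\log\rho$ is its osmotic momentum, and $\nabla\varphi/\varphi = p_{O,\varphi} + i p_B$ (equivalently $P\varphi/\varphi = p_B - i p_{O,\varphi}$, as in (\ref{eq:WV_P})). I abbreviate $p_O := p_{O,\varphi}$.

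First I would expand the magnetic Laplacian on $\varphi$. Since $P = -i\nabla$, a direct computation of $(P-A)\!\cdot\!(P-A)\varphi$ gives
\be
(P-A)^2\varphi = -\Delta\varphi + i(\nabla\!\cdot\!A)\,\varphi + 2iA\!\cdot\!\nabla\varphi + |A|^2\varphi ,
\ee
hence, dividing by $\varphi$ and recalling $H_A = \tfrac12(P-A)^2$,
\be
\frac{H_A\varphi}{\varphi} = -\frac{\Delta\varphi}{2\varphi} + \frac{i}{2}\nabla\!\cdot\!A + iA\!\cdot\!\frac{\nabla\varphi}{\varphi} + \tfrac12|A|^2 .
\ee
For the real part I would invoke the computation already carried out for (\ref{eq:CE_H_A}) (or re-derive it from the polar form together with $Q_\varphi = -\tfrac12(\nabla\!\cdot\!p_O + p_O^2)$, equation (\ref{eq:QP_p_O})): ${\rm Re}(-\Delta\varphi/2\varphi) = \tfrac12 p_B^2 + Q_\varphi$ and ${\rm Re}(\nabla\varphi/\varphi) = p_O$, so that ${\rm Re}(iA\!\cdot\!\nabla\varphi/\varphi) = -A\!\cdot\!p_B$ and ${\rm Re}(H_A\varphi/\varphi) = \tfrac12(p_B - A)^2 + Q_\varphi$. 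For the imaginary part, equation (\ref{eq:WV_H_0}) of Proposition~\ref{prop:CE_H_0} gives ${\rm Im}(-\Delta\varphi/2\varphi) = {\rm Im}(H_0\varphi/\varphi) = -\tfrac12\nabla\!\cdot\!p_B - p_B\!\cdot\!p_O$, while ${\rm Im}(\tfrac{i}{2}\nabla\!\cdot\!A) = \tfrac12\nabla\!\cdot\!A$ and ${\rm Im}(iA\!\cdot\!\nabla\varphi/\varphi) = A\!\cdot\!p_O$, whence
\be
{\rm Im}\,\frac{H_A\varphi}{\varphi} = -\tfrac12\nabla\!\cdot\!(p_B - A) - p_O\!\cdot\!(p_B - A) .
\ee

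Finally I would recognize this right-hand side as $-\tfrac12 F(\varphi)$: expanding $F(\varphi) = \rho^{-1}\nabla\!\cdot\!\bigl(\rho(p_B - A)\bigr) = \nabla\!\cdot\!(p_B - A) + (\nabla\rho/\rho)\!\cdot\!(p_B - A)$ and using $\nabla\rho/\rho = 2p_O$ yields $F(\varphi) = \nabla\!\cdot\!(p_B - A) + 2p_O\!\cdot\!(p_B - A)$, so that the displayed imaginary part is exactly $-\tfrac12 F(\varphi)$. Combining real and imaginary parts gives $H_A\varphi/\varphi = \tfrac12(p_B - A)^2 + Q_\varphi - \tfrac{i}{2}F(\varphi)$; adding the real multiplication operator $V$, which contributes only to the real part, gives $H_{A,V}\varphi/\varphi = h_B(\varphi) - \tfrac{i}{2}F(\varphi)$, which is the asserted form (with the convention of (\ref{eq:CE_H_A}) that the scalar potential $V$ is absorbed into $h_B$). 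I do not expect any real obstacle here: the whole argument is a bookkeeping exercise in the expansion of $(P-A)^2$ and in translating between $\varphi$, $\nabla\varphi$ and the triple $(p_B,p_O,\rho)$, all of whose ingredients appear earlier in the paper. The steps requiring care are keeping track of the sign of the cross term $-(P\!\cdot\!A + A\!\cdot\!P)$ in the magnetic Laplacian, and checking that the gauge-covariant combination $p_B - A$ — rather than $p_B + A$ — emerges in both the real and the imaginary part.
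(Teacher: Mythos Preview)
Your proof is correct and follows essentially the same approach as the paper for the real part (both refer back to (\ref{eq:CE_H_A})). For the imaginary part there is a mild difference in presentation: you expand $(P-A)^2\varphi/\varphi$ term by term, split off the free contribution via (\ref{eq:WV_H_0}), and then reassemble the pieces into $-\tfrac12 F(\varphi)$; the paper instead invokes directly the magnetic continuity-equation identity $i^{-1}\bigl(\overline{\varphi}\,H_A\varphi - \varphi\,\overline{H_A\varphi}\bigr) = -\nabla\!\cdot\!(j_\varphi - |\varphi|^2 A)$, divides by $|\varphi|^2$, and recognises $j_\varphi - |\varphi|^2 A = \rho(p_B - A)$. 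Your route makes the bookkeeping of signs and cross terms completely explicit, which is helpful for a reader checking details; the paper's route is shorter and highlights that the imaginary part of the weak value is nothing but (minus half of) the divergence of the gauge-covariant probability flux, a fact one would know anyway from the continuity equation for $|\varphi|^2$.
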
   
   
\begin{proof} We already saw the real part in (\ref{eq:CE_H_A}). As for the imaginary part, this follows from $i^{-1 } (\overline{\varphi } H_A (\varphi ) - \varphi \overline{H_A (\varphi ) } ) = - \nabla \cdot (j_{\varphi } - |\varphi |^2 A ) $ (where we recall that $j_{\varphi } = {\rm Im } \, \overline{\varphi } \nabla \varphi  $), which can be re-written as $ - \frac{1 }{2 } \nabla \cdot \rho (p_B - A ) . $           
\end{proof}   
   
Recalling the definition of $h_{\pm } := h_B (\psi _{\pm } ) $ and letting $F_{\pm } := F(\psi _{\pm } ) $, we have in particular that   
\begin{equation} \label{eq:WVPauli_pm}   
\frac{H_A (\psi _{\pm } ) }{\psi _{\pm } } = h_{\pm }   - \frac{i  }{2 }F_{\pm }   
\end{equation}   
   
\begin{lemma} \label{lemma:proof_Pauli_2}   
\begin{eqnarray}      
\partial _t (S_1 \psi , \psi ) \cdot \rho ^{-1 } &=& - (F_+ + F_- ) s_1 + 2 (h_- - h_+ ) s_2 + 2 (\mathbb{B } \times \bbs )_1 \, - 
2 B_2 || \bbs ||^2 \sigma _3 \\   
\partial _t (S_2 \psi , \psi ) \cdot \rho ^{-1 } &=& 2 (h_+ - h_- ) s_1 - (F_+ + F_- ) s_2 + 2 (\mathbb{B } \times \bbs )_2 \, + 
2 B_1 || \bbs ||^2 \sigma _3 ,   
\end{eqnarray}   
where $\rho $ is the $2 \times 2 $ diagonal matrix (\ref{eq:Pauli_rho}).   
\end{lemma}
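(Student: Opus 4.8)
The plan is to start from the three-term expression for $\partial_t(S_j\psi,\psi)$ supplied by Lemma \ref{lemma:proof_Pauli_1}, to process the two ``orbital'' terms (those built from $H_A$) and the single commutator term (built from $H^S$) separately, and only at the very end to multiply through by $\rho^{-1}=\mathrm{diag}(\rho_+^{-1},\rho_-^{-1})$ from (\ref{eq:Pauli_rho}) and read off the two $\pm$-channels. The two algebraic inputs needed are the weak-value formula (\ref{eq:WVPauli_pm}), $H_A(\psi_\pm)=\psi_\pm\bigl(h_\pm-\tfrac{i}{2}F_\pm\bigr)$, and the elementary identity $\overline{\psi}_+\psi_-=\langle S_1\rangle_\psi+i\langle S_2\rangle_\psi$ (whence $\overline{\psi}_-\psi_+=\langle S_1\rangle_\psi-i\langle S_2\rangle_\psi$), which one reads off directly from the Pauli matrices (\ref{eq:Pauli_matrices}).

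For the orbital terms I would substitute the weak-value formula to get $\overline{\psi}_+H_A(\psi_-)=(\langle S_1\rangle_\psi+i\langle S_2\rangle_\psi)(h_--\tfrac{i}{2}F_-)$ and $\overline{\psi}_-H_A(\psi_+)=(\langle S_1\rangle_\psi-i\langle S_2\rangle_\psi)(h_+-\tfrac{i}{2}F_+)$, and then take the appropriate real/imaginary parts. Adding the imaginary parts gives
\[
\mathrm{Im}\,\overline{\psi}_+H_A(\psi_-)+\mathrm{Im}\,\overline{\psi}_-H_A(\psi_+)=\langle S_2\rangle_\psi\,(h_--h_+)-\tfrac12\langle S_1\rangle_\psi\,(F_++F_-),
\]
and the combination of real parts relevant to $S_2$ gives $-\mathrm{Re}\,\overline{\psi}_+H_A(\psi_-)+\mathrm{Re}\,\overline{\psi}_-H_A(\psi_+)=\langle S_1\rangle_\psi\,(h_+-h_-)-\tfrac12\langle S_2\rangle_\psi\,(F_++F_-)$. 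For the commutator terms I would use $[S_j,S_k]=i\epsilon_{jkl}S_l$ to obtain $i[H^S,S_1]=B_2S_3-B_3S_2$ and $i[H^S,S_2]=B_3S_1-B_1S_3$, so that $(i[H^S,S_1]\psi,\psi)=B_2\langle S_3\rangle_\psi-B_3\langle S_2\rangle_\psi$ and $(i[H^S,S_2]\psi,\psi)=B_3\langle S_1\rangle_\psi-B_1\langle S_3\rangle_\psi$, with $\langle S_3\rangle_\psi=\tfrac12(\rho_+-\rho_-)$.

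It then remains to divide by $\rho$. By (\ref{eq:rel_s_rho}) one has $\langle S_j\rangle_\psi/\rho_\pm=2s_{j,\pm}$ for $j=1,2$, which converts the $h$- and $F$-contributions, and the $B_3\langle S_{1,2}\rangle_\psi$ pieces, directly into the asserted $s_1,s_2$-terms (the asymmetric factor $2$ in front of the $h$-terms being exactly the $\tfrac12$ from (\ref{eq:WVPauli_pm})). For the surviving $\langle S_3\rangle_\psi/\rho_\pm$ pieces one uses $s_{1,\pm}^2+s_{2,\pm}^2=\rho_\mp/4\rho_\pm$, read off from (\ref{eq:s_1+is_2}), so that $\|\bbs_\pm\|^2=s_{1,\pm}^2+s_{2,\pm}^2+\tfrac14=(\rho_++\rho_-)/4\rho_\pm$ and hence $\langle S_3\rangle_\psi/\rho_\pm=\pm(1-2\|\bbs_\pm\|^2)$; combined with $s_{3,\pm}=\pm\tfrac12$ this reorganizes, e.g., $B_2\langle S_3\rangle_\psi/\rho_\pm-2B_3s_{2,\pm}$ into $2(\mathbb{B}\times\bbs_\pm)_1-2B_2\|\bbs_\pm\|^2$, i.e. the $\pm$-entry of $2(\mathbb{B}\times\bbs)_1-2B_2\|\bbs\|^2\sigma_3$, and similarly for the $S_2$-equation. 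Collecting the two entries for each $j$ yields the two stated matrix identities.

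The routine part is the sign bookkeeping across the two channels; the one point that genuinely requires care is that (\ref{eq:norm_s}) no longer holds verbatim here, since $\rho_++\rho_-$ is not identically $1$, so one must keep the exact relation $\|\bbs_\pm\|^2=(\rho_++\rho_-)/4\rho_\pm$ when trading $\langle S_3\rangle_\psi/\rho_\pm$ for $\|\bbs\|^2$. That substitution is the only mildly delicate step.
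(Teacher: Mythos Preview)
Your argument is correct and follows essentially the same route as the paper's proof: both start from Lemma \ref{lemma:proof_Pauli_1}, insert the weak-value identity (\ref{eq:WVPauli_pm}), handle the commutator term via $i[H^S,\mathbb{S}]=\mathbb{B}\times\mathbb{S}$, and close with the relation $\|\bbs_\pm\|^2=(\rho_++\rho_-)/4\rho_\pm$ (which you rightly flag as the replacement for (\ref{eq:norm_s}) in this setting). The only cosmetic difference is that the paper divides by $|\psi_\pm|^2$ first and works channel by channel via $\psi_-/\psi_+=2(s_{1,+}+is_{2,+})$, whereas you keep everything expressed through the scalars $\langle S_j\rangle_\psi=\overline{\psi}_+\psi_-$ and divide by $\rho$ only at the end; the computations are the same up to this reordering.
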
   
   
\begin{proof} For example, by lemma \ref{lemma:proof_Pauli_1} and $i [ H^S , \mathbb{S } ] = \mathbb{B } \times \mathbb{S } $, together with   
(\ref{eq:rel_s_rho}), (\ref{eq:rel_s_rho_3}),   
\begin{eqnarray*}   
\frac{\partial _t (S_1 \psi , \psi ) }{|\psi _+ |^2 } &=& {\rm Im } \, \frac{ H_A \psi _- }{\psi _+ }  + {\rm Im } \, \overline{ \left( \frac{\psi _- }{\psi _+ } \right) } \left( \frac{H_A \psi _+ }{\psi _+ } \right) + \frac{ \Bigl ( (\mathbb{B } \times \mathbb{S } )_1 \psi , \psi \Bigr ) }{|\psi _+ |^2 } \\   
&=& {\rm Im } \, \left( \frac{\psi _- }{\psi _+ } \right) \left( \frac{ H_A \psi _-  }{\psi _- } \right)  + {\rm Im } \, \overline{ \left( \frac{\psi _- }{\psi _+ } \right) } \left( \frac{H_A \psi _+ }{\psi _+ } \right) + 2 (\mathbb{B } \times \bbs _+ ) _1 - \frac{1 }{2 } B_2{\rm Tr }( \rho ) \rho _+ ^{-1 } ,   
\end{eqnarray*}   
where ${\rm Tr } (\rho ) := \rho _+ + \rho _- $; cf. the proof of theorem \ref{thm:CEdynamics_spin} for the final two terms, where previously ${\rm Tr } (\rho ) = 1 $ by normalization of the wave function. We however still have ${\rm Tr }(\rho ) \rho ^{-1 } = 4 || \bbs ||^2 $: for example   
\begin{equation} \label{eq:norm_s_+_squared}   
{\rm Tr } (\rho ) \rho _+ ^{-1 } = 1 + \frac{\rho _- }{\rho _+ } = 1 +  \frac{|\psi _- |^2 }{|\psi _+ |^2 } = 1 + 4(s_{1, + }^2 + s_{2, + }^2 ) = 4 || \bbs _+ ||^2 .   
\end{equation}   
Similarly, ${\rm Tr } (\rho ) \rho _- ^{-1 } = 4 || s_- ||^2 . $ If we now use (\ref{eq:s_1+is_2}) and (\ref{eq:WVPauli_pm}), we find   
$$   
\frac{\partial _t (S_1 \psi , \psi ) }{|\psi _+ |^2 } = - (F_+ + F_- ) s_{1, + } + 2 (h_- - h_+ ) s_{2, + } + 2 (\mathbb{B } \times \bbs _+ )_1  - 
- 2 B_2 || \bbs _+ ||^2 .   
$$   
A similar computation shows that   
$$   
\frac{\partial _t (S_1 \psi , \psi ) }{|\psi _- |^2 } = - (F_+ + F_- ) s_{1, - }  + 2 (h_- - h_+ ) s_{2, - } + 2 (\mathbb{B } \times \bbs _- )_1 + 
2 B_2 || \bbs _- ||^2 ,   
$$   
proving the first formula of the lemma. The second formula is proven in the same way.   
   
\end{proof}   
   
\begin{lemma} \label{lemma:proof_Pauli_3}   
\begin{equation} \nonumber   
\rho ^{-1 } \partial _t \rho = - \begin{pmatrix} F_+ &0 \\ 0 &F_- \end{pmatrix} + 2 (\mathbb{B } \times \bbs ) _3 \, \sigma _3   
\end{equation}   
\end{lemma}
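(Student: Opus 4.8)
The plan is to obtain each diagonal entry $\rho_\pm^{-1}\partial_t\rho_\pm$ of the asserted identity separately, by differentiating $\rho_\pm = |\psi_\pm|^2$ along the Pauli equation. From $i\partial_t\psi = H^P\psi$ one gets $\partial_t\rho_\pm = 2\,{\rm Re}\,\bigl(\overline{\psi_\pm}\,\partial_t\psi_\pm\bigr) = 2\,{\rm Im}\,\bigl(\overline{\psi_\pm}\,(H^P\psi)_\pm\bigr)$, exactly as in the derivation of (\ref{eq:conserv_proba_spin}). Writing $H^P = H_{A,V}\cdot I_{\mathbb{C}^2} + H^S$ with $H_{A,V} = H_A + V$, the real potential $V$ contributes a real multiple of $|\psi_\pm|^2$ to $\overline{\psi_\pm}(H^P\psi)_\pm$ and so drops out of the imaginary part; what remains is a scalar contribution from $H_A\psi_\pm$ and a spin contribution from $(H^S\psi)_\pm$.

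For the scalar contribution I would invoke the weak-value formula (\ref{eq:WVPauli_pm}), $H_A(\psi_\pm)/\psi_\pm = h_\pm - \frac{i}{2}F_\pm$, which gives ${\rm Im}\,\bigl(\overline{\psi_\pm}H_A\psi_\pm\bigr) = \rho_\pm\,{\rm Im}\,\bigl(H_A(\psi_\pm)/\psi_\pm\bigr) = -\frac{1}{2}\rho_\pm F_\pm$; after dividing by $\rho_\pm$ this is precisely the $-F_\pm$ on the diagonal. For the spin contribution I would use the matrix (\ref{eq:matrix_H}) of $H^S = \mathbb{B}\cdot\mathbb{S}$, so that $(H^S\psi)_+ = \frac{1}{2}\bigl(B_3\psi_+ + (B_1 - iB_2)\psi_-\bigr)$ and $(H^S\psi)_- = \frac{1}{2}\bigl((B_1 + iB_2)\psi_+ - B_3\psi_-\bigr)$. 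The $B_3$-terms are again real multiples of $|\psi_\pm|^2$ and disappear; expanding ${\rm Im}\,\bigl((B_1\mp iB_2)\,\overline{\psi_\pm}\psi_\mp\bigr)$ and recalling $\langle S_1\rangle_\psi = {\rm Re}\,(\overline{\psi_+}\psi_-)$, $\langle S_2\rangle_\psi = {\rm Im}\,(\overline{\psi_+}\psi_-)$ identifies this term with $\pm\bigl(\mathbb{B}\times\langle\mathbb{S}\rangle_\psi\bigr)_3$. Assembling, $\partial_t\rho_\pm = -\rho_\pm F_\pm \pm\bigl(\mathbb{B}\times\langle\mathbb{S}\rangle_\psi\bigr)_3$, hence $\rho_\pm^{-1}\partial_t\rho_\pm = -F_\pm \pm\rho_\pm^{-1}\bigl(\mathbb{B}\times\langle\mathbb{S}\rangle_\psi\bigr)_3$.

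The last step is to rewrite the second term via (\ref{eq:rel_s_rho}): since $\langle S_j\rangle_\psi = 2\rho_\pm s_{j,\pm}$ for $j = 1,2$ and the third component of a cross product in $\mathbb{R}^3$ depends only on the first two components of its factors, $\bigl(\mathbb{B}\times\langle\mathbb{S}\rangle_\psi\bigr)_3 = 2\rho_\pm(\mathbb{B}\times\bbs_\pm)_3$, so that $\rho_\pm^{-1}\partial_t\rho_\pm = -F_\pm \pm 2(\mathbb{B}\times\bbs_\pm)_3$. Stacking the two diagonal entries into a $2\times 2$ diagonal matrix and observing that ${\rm diag}\bigl((\mathbb{B}\times\bbs_+)_3,\,-(\mathbb{B}\times\bbs_-)_3\bigr) = (\mathbb{B}\times\bbs)_3\,\sigma_3$, where $(\mathbb{B}\times\bbs)_3 = {\rm diag}\bigl((\mathbb{B}\times\bbs_+)_3,\,(\mathbb{B}\times\bbs_-)_3\bigr)$ in the convention fixed after Theorem \ref{eq:system_CE_Pauli}, yields the claim. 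I do not anticipate a real obstacle here; the computation is bookkeeping, and the only delicate point is keeping track of the complex conjugation $\overline{\psi_-}\psi_+ = \overline{\,\overline{\psi_+}\psi_-\,}$, which is what flips the sign of the spin term between the two channels and produces the $\sigma_3$. As a sanity check, setting $A = 0$ and imposing $\rho_+ + \rho_- = 1$ collapses $F_\pm$ to $0$ and the identity reduces to (\ref{eq:conserv_proba_spin}).
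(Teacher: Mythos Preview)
Your proof is correct and follows essentially the same route as the paper's: differentiate $\rho_\pm=|\psi_\pm|^2$ along the Pauli equation, split $H^P=H_{A,V}+H^S$, identify the scalar piece with $-\rho_\pm F_\pm$ (you via the weak-value formula (\ref{eq:WVPauli_pm}), the paper via the equivalent current identity $-\nabla\cdot(j_{\psi_\pm}-|\psi_\pm|^2A)$), and compute the spin piece from the explicit matrix (\ref{eq:matrix_H}) before converting with (\ref{eq:rel_s_rho}). The only quibble is your closing sanity check: $A=0$ alone does not kill $F_\pm$; what reduces to (\ref{eq:conserv_proba_spin}) is the spatially homogeneous (pure-spin) case where all $\nabla$-terms vanish.
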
   
   
\begin{proof}   
The Pauli equation implies that 
\begin{eqnarray*}   
\\   
\partial _t \rho _{\pm } &=& - i \Bigl ( \overline{\psi }_{\pm }H_A (\psi _{\pm } ) - \psi _+ \overline{H_A (\psi _{\pm } ) } \Bigr ) - i \Bigl ( \overline{\psi _{\pm } } H^S (\psi )_{\pm } - \psi _{\pm } \overline{H^S  (\psi )_{\pm } } \Bigr ) \\   
&=& - \nabla \cdot (j_{\psi _{\pm } } - |\psi _{\pm } |^2 A ) + 2 \, {\rm Im } \Bigl ( \overline{\psi }_{\pm } H^S (\psi )_{\pm } \Bigr )   
\end{eqnarray*}   
The first term on the right equals $ - F_{\pm } \rho _{\pm } $ while the second term, using (\ref{eq:matrix_H})) and   
(\ref{eq:rel_s_rho}), works out to be $\pm 2 \, {\rm Im } \overline{\psi }_+ H^S _{+ - } \psi _- = \pm \Bigl (B_1  {\rm Im }  (\overline{\psi _+ } \psi _- ) - B_2 {\rm Re } (\overline{\psi }_+ \psi _- )  \Bigr ) = \pm \Bigl ( (\mathbb{B } \times \mathbb{S } )_3 \psi , \psi \Bigr ) = \pm 2 (\mathbb{B }  \times \bbs _{\pm } )_3 \, \rho _{\pm } . $   
\end{proof}   
\medskip   
   
Equation (\ref{eq:CE_Pauli_1}) of theorem \ref{eq:system_CE_Pauli} now follows by differentiating $s_j = \frac{1 }{2 } (S_1 \psi , \psi ) \rho ^{-1 } $ and applying lemmas \ref{lemma:proof_Pauli_2} and \ref{lemma:proof_Pauli_3}. The proof of equation (\ref{eq:CE_Pauli_2}) is the same as that of lemma \ref{lemma:evol_p_B}. Finally, (\ref{eq:CE_Pauli_3}) is lemma \ref{lemma:proof_Pauli_3}. This completes the proof of the theorem. \hfill $\Box $   
\medskip

\subsection{Proof of theorem \ref{thm:Paulii_bis}} Theorem \ref{thm:Paulii_bis} follows from theorem \ref{eq:system_CE_Pauli} by 
algebraic manipulations using lemma \ref{lemma:relation_p_pm}. We provide the details for the benefit of the reader. Put     
$$   
X := p_- - p_+ , \ \ Y := \frac{1 }{2 } \nabla \log (\rho _- / \rho _+ ) = p_{O, - } - p_{O, + }   
$$   
Lemma \ref{lemma:relation_p_pm} provides an explicit formula for $X $ and $Y $, but we will rather use the system (\ref{eq:system_nabla_s_+}) from which these derive, and which with the present notations reads   
\begin{equation} \label{eq:X_Y_s}   
\left \{ \begin{array}{ll}   
s_{1, + } Y  - s_{2, + } X = \nabla s_{1, + } \\   
s_{1, + } X + s_{2, + } Y = \nabla s_{2, + }   
\end{array}   
\right .   
\end{equation}    
\noindent Then   
\begin{eqnarray} \nonumber   
F_+ - F_- &=& \nabla \cdot (p_+ - p_- ) + (p_+ - A ) \cdot \nabla \log \rho _+ - (p_- - A ) \cdot \nabla \log \rho _- \\   
&=& - \nabla \cdot X + (p_+ - A ) \nabla \log \rho _+ - (p_+  - A + X  ) \cdot (\nabla \log \rho _+ + 2 Y ) \nonumber \\   
&=& - \nabla \cdot X - X \cdot \nabla \log \rho _+ - 2 (p_+ - A ) \cdot Y - 2 X \cdot Y \nonumber \\   
&=& - \nabla \cdot X - 2 X \cdot p_{O, + } - 2 (p_+ - A ) \cdot Y - 2 X \cdot Y \label{eq:diff_F}   
\end{eqnarray}   
Next, since the $V $'s cancel, and leaving the quantum potentials untouched for the moment,         
\begin{eqnarray*}   
h_+ - h_- &=& \frac{1 }{2 } (p_+ - A )^2 - \frac{1 }{2 } (p_- - A )^2 + Q_+ - Q_- \\   
&=& \frac{1 }{2 } (p_+ ^2 - p_- ^2 ) - (p_+ - p_- ) \cdot A  + Q_+ - Q_- \\   
&=& - \frac{1 }{2 } (p_-  - p_+) \cdot (2 p_+ + X ) + X \cdot A + Q_+ - Q_- \\   
&=& - X \cdot (p_+ - A ) - \frac{1 }{2 } X \cdot X + Q_+ - Q_-   
\end{eqnarray*}   
Now, by (\ref{eq:QP_p_O}),   
\begin{eqnarray*}   
- 2 (Q_+ - Q_- ) &=& p_{O , + } ^2 - p_{O, - } ^2  +  \nabla \cdot (p_{O, + } - p_{O, - } ) \\  
&=& - (p_{O, - } - p_{O , + } ) \cdot (p_{O, + } + p_{O, - } ) - \nabla \cdot (p_{O, - } - p_{O, + } ) \\   
&=& - Y \cdot (2 p_{O, + } + Y ) - \nabla \cdot Y \\   
&=& - 2 Y \cdot p_{O, + } - Y \cdot Y - \nabla \cdot Y ,   
\end{eqnarray*}   
and we find that   
\begin{equation} \label{eq:diff_h}   
h_+ - h_- = - X \cdot (p_+ - A ) + Y \cdot p_{O, + } - \frac{1 }{2 } X \cdot X + \frac{1 }{2 } Y \cdot Y + \frac{1 }{2 } \nabla \cdot Y   
\end{equation}   
   
We next rewrite the first two terms on the right hand side of (\ref{eq:CE_Pauli_1}):   
\begin{eqnarray*}   
\frac{1 }{2 } (F_+ - F_- ) s_{1, + } - (h_+ - h_- ) s_{2, + } &=&  - p_{O, + } \cdot (s_{1, + } X + s_{2, + } Y ) - (p_+ - A ) (s_{1, + } Y - s_{2, + } X ) \\   
&\ & - s_{1, + } X \cdot Y + \frac{1 }{2 } s_{2, + }  (X \cdot X - Y \cdot Y ) - \frac{1 }{2 } ( s_{1, + } \nabla \cdot X + s_{2, + } \nabla \cdot Y )   
\end{eqnarray*}   
The first line on the right equals $- p_{O, + } \cdot \nabla s_{2, + } - (p_+ - A ) \cdot \nabla s_{1, + } $, by (\ref{eq:X_Y_s}). Next,  
\begin{eqnarray*}   
- s_{1, + } X \cdot Y + \frac{1 }{2 } s_{2, + }  (X \cdot X - Y \cdot Y ) &=& - \frac{1 }{2 } (s_{1, + } Y - s_{2, + } X ) \cdot X - \frac{1 }{2 } (s_{1, + } X + s_{2, + } Y ) \cdot Y  \\   
&=& - \frac{1 }{2 }(\nabla s_{1, + } ) \cdot X - \frac{1 }{2 } (\nabla s_{2, + } ) \cdot Y ,   
\end{eqnarray*}   
by (\ref{eq:X_Y_s}) again, and the expressions on the second line on the right combine to   
$$   
- \frac{1 }{2 } \nabla \cdot \left( s_{1, + } X + s_{2, + } Y \right) = - \frac{1 }{2 } \nabla \cdot \nabla s_{2, + } .   
$$   
We therefore see that   
\begin{eqnarray*}    
\frac{1 }{2 } (F_+ - F_- ) s_{1, + } - (h_+ - h_- ) s_{2, + } &=& - (p_+ - A ) \cdot \nabla s_{1, + } - p_{O, + } \cdot \nabla s_{2, + } - \frac{1 }{2 } \Delta s_{2, + } \\   
&=& - (p_+ - A ) \cdot \nabla s_{1, + } - \frac{1 }{2 } \rho _+ ^{-1 } \nabla \cdot \left( \rho _+ \nabla s_{2, + } \right)   
\end{eqnarray*}   
One shows, similarly, that  
$$   
(h_+ - h_- ) s_{1, + } + \frac{1 }{2 } (F_+ - F_- ) s_{2, + } = - (p_+ - A ) \cdot \nabla s_{2, + } + \frac{1 }{2 } \rho _+ ^{-1 } \nabla \cdot \rho _+ \nabla s_{1, + }    
$$   
Inserting these expressions in the right hand side of (\ref{eq:CE_Pauli_1}) gives the first two formulas of (\ref{eq:CE_Pauli_+}). The remaining two formulas are reformulations of (\ref{eq:CE_Pauli_2}) and (\ref{eq:CE_Pauli_3}) in terms of the Lagrangian derivative, where we note that   
\begin{eqnarray*}   
\nabla _j \left ( \frac{1 }{2 } (p_+ - A )^2 \right) &=& \sum _k (p_{+, k }- A_k ) (\nabla _j p_{+, k } - \nabla _j A_k ) \\  
&=& \sum _k (p_{+, k } - A_k ) \left( (\nabla _k (p_{+, j } - A_j ) + \nabla _k A_j - \nabla _j A_k  \right) \\   
&=& \sum _k v_{+, k }\nabla _k (v_{+, j } ) + \sum _k v_{+, k } (\nabla _k A_j - \nabla _j A_k ) \\   
&=& v_+ \cdot \nabla (v_{+, j } ) - (v_+ \times \mathbb{B } )_j ,   
\end{eqnarray*}      
for $j = 1, 2, 3 . $   
\medskip   
   
The proof of theorem \ref{thm:Paulii_bis_b} is similar, starting from equations (\ref{eq:diff_F}) and (\ref{eq:diff_h}) with $p_+ = p_- + X $ and $p_{O, + } = p_{0 , - } + Y $, and using 
\begin{equation} \label{eq:X_Y_s_min}   
\left \{   
\begin{array}{ll}   
s_{1, - } Y + s_{2, - } X = - \nabla s_{1, - } \\   
s_{1, - } X - s_{2, - } Y = \ \ \nabla s_{2, - } ,   
\end{array}   
\right.   
\end{equation}   
which can be shown in the same way as (\ref{eq:X_Y_s}) by differentiating $s_{1, - } - i s_{2, - } = \psi _+ / \psi _-  $ (cf. the proof of lemma \ref{lemma:relation_p_pm}).   
\medskip   
   
\begin{remark} \rm{There are some other intriguing ways of writing $F_+ - F_- $ and $h_+ - h_- $ which perhaps are just mathematical curiosities but which may have a physical interpretation which for now has escaped the author. First,  by lemma \ref{lemma:relation_p_pm}, $Y = \frac{1 }{2 } || \bbs ' ||^{-2 } \nabla || \bbs ' ||^2 = \nabla \log || \bbs ' || $ which is like an osmotic momentum if $|| \bbs ' ||^2 $ were a density.
Next, using this relation and the expression for $X $ from lemma \ref{lemma:relation_p_pm} one has in (\ref{eq:diff_F})  
\begin{eqnarray} \nonumber   
\nabla \cdot X + 2 Y \cdot X &=& || \bbs _+ ' ||^{- 2 } \nabla \cdot (|| \bbs _+ ' ||^{2 } X ) \\   
&=& || \bbs ' ||^{-2 } \nabla \cdot (\bbs _+ \times \nabla \bbs _+ )_3 \nonumber   
\end{eqnarray}   
where the final expression should be read as $\sum _j  || \bbs ' ||^{-2 } \nabla _j (\bbs _+ \times \nabla _j \bbs _+ )_3 . $   
Finally,   
$$   
Q_+ - Q_- = Y \cdot p_{0, + } + \frac{1 }{2 }  (Y^2 + \nabla \cdot Y ) = Y \cdot p_{0, + } + \frac{\Delta || \bbs ' || }{2 || \bbs ' || }  ,   
$$   
where the final term on the right has the form of minus the "quantum potential" associated to the "density" $|| \bbs ' ||^2 $ and where $Y \cdot p_{O, + } = \nabla \log || \bbs ' || \cdot \nabla \log \sqrt{\rho _+ } $ looks like the inner product of two osmotic momenta.   
}   
\end{remark}   

\section{\bf 
Relation with conditional expectation of $C^* $-algebras}   
   
We briefly return to the general theory. The main r\^ole of the  self-adjoint operator $B $ in the definition of $\mathbb{E }_{\psi } (A | B ) $ is to provide the spectral resolution of the identity. This is particularly clear when $B $ has discrete spectrum: see example \ref{examples_E(A|B)}.2. Now the spectral projections $\mathbf{1 }_{E } (B) $, $E $ Borel, generate the von Neumann algebra $\mathcal{B } $ of bounded Borel functions $f(B) $ of $B $, and there exists a notion of conditional expectation for von Neumann algebras and more generally, $C^* $-algebras, going back to Umegaki \cite{U}.   
It is natural to ask about the relation between $\mathbb{E }_{\psi } (A | B ) $ and Umegaki's conditional expectation.      
We will use the following definition of the latter: see for example \cite{KR}, exercice 8.7.23.   
       
\begin{definition} \label{def:CE_C*-algebras}  \rm{A conditional expectation of a $C^* $-algebra $\mathcal{A } $ onto a sub-$C^* $-algebra $\mathcal{B } $ of $\mathcal{A } $ containing the identity $1 $ is a positive linear map $\mathcal{E } = \mathcal{E }_{\mathcal{B } }  :  \mathcal{A } \to \mathcal{B } $ such that $\mathcal{E } (1) = 1 $ and $\mathcal{E } (B_1 A B_2 ) = B_1 \mathcal{E } (A ) B_2 $, for all $A \in \mathcal{A } $ and $B_1 , B_2 \in \mathcal{B } . $   
}
\end{definition}   
   
\noindent Definitions of conditional expectations of operator algebras vary slightly among authors: see for example Blackadar \cite{Bla}  def. II.6.10.1 or Takesaki \cite{Take}, def. IX.4.1; Blackadar requires complete positivity of $\mathcal{E } $, which can be shown to be a consequence of definition \ref{def:CE_C*-algebras}, and Takesaki's definition is based on Tomiyama's theorem \cite{To} stating that conditional expectations are precisely the norm 1 projections of $\mathcal{A } $ onto $\mathcal{B } $, and requires moreover that $\mathcal{E } $ preserve a semi-finite state $\varphi $, $\varphi = \varphi \circ \mathcal{E } . $ This makes $\mathcal{E } $ unique, but in general there may exist several different conditional expectations. For example, the classical conditional expectation with respect to {\it any} probability measure on a measurable space $(\Omega , \mathcal{F } ) $ will define a conditional expectation, in the sense of definition \ref{def:CE_C*-algebras}, of the algebra of bounded $\mathcal{F } $-measurable functions onto the sub-algebra of $\mathcal{G } $-measurable functions, where  $\mathcal{G } $ is a sub $\sigma $-algebra of $\mathcal{F } . $ One may note in this respect that definition \ref{def:CE_C*-algebras} makes no mention of a probability measure, in the form of a state, on $\mathcal{A } . $   
\medskip   
   
If we take $\mathcal{A } = L(H) $, the algebra of bounded operators on a Hilbert space $H $,  and try to relate $\mathbb{E }_{\psi } (A | B ) $ to a conditional expectation of $\mathcal{A } $ onto the von Neumann algebra $\mathcal{B } $ generated by the self-adjoint operator $B $, we immediately run into problems: conditional expectations onto $C^* $-subalgebras do not always exist or may not be unique, and $\mathbb{E }_{\psi } (A |B ) $ is not necessarily bounded, even if $A $ is (though if unbounded it will be affiliated to $\mathcal{B } $). For these reasons we will (mostly) limit our discussion to operators on finite-dimensional Hilbert spaces, that is, to finite quantum systems.   
\medskip   
   
Let $\mathcal{A } = L(H) $, the algebra of linear operators on a finite-dimensional Hilbert space $H $, and $\mathcal{B } $ the sub-algebra generated by some (finite) set of orthogonal projections $\Pi _1 , \ldots , \Pi _n $ which are mutually orthogonal: $\Pi _j \Pi _k = 0 $ of $j \neq k $, and complete, in the sense that $\sum _k \Pi _k = Id $, the identity operator. Then $\mathcal{B } = \{ \sum _j b_j \Pi _j : b_j \in \mathbb{C } \}  $ is commutative, and the map   
\begin{equation} \label{eq:C^*CE1}   
\mathcal{E }_{\mathcal{B }' } : A \to \sum _j \Pi _j A \Pi _j   
\end{equation}   
is a conditional expectation of $L(H) $ onto the {\it commutator} $\mathcal{B }' $ of $\mathcal{B } $, as is easily verified. It is in fact unique, since if $\mathcal{E } $ is any conditional expectation onto $\mathcal{B }' $, then $\mathcal{E }  (A) = \sum _{j, k } \mathcal{E } ( \Pi _j A \Pi _k ) = \sum _{j, k } \Pi _j \mathcal{E } (A) \Pi _k = \sum _j \Pi _j \mathcal{E } (A) \Pi _j $, since $\mathcal{E } (A ) \in \mathcal{B }' $ and the $\Pi _j $ are orthogonal. It follows that $\mathcal{E }(A) = \sum _j \mathcal{E } ( \Pi _j A \Pi _j ) $ which equals $\sum _j \Pi _j A \Pi _j $, since the $\Pi _j A \Pi _j $ are in $\mathcal{B }' . $   
   
If all projections have rank 1, then $\mathcal{B }' = \mathcal{B } $ and (\ref{eq:C^*CE1}) is a conditional expectation on $\mathcal{B } . $ If not, we can   
change it into a conditional expectation onto $\mathcal{B } $ by taking the normalized trace of each component:   
\begin{equation} \label{eq:C^*CE2}   
\mathcal{E }_{\mathcal{B } } (A) := \sum _{\nu } \frac{{\rm Tr } ( \Pi _{\nu } A \Pi _{\nu } ) }{r_{\nu } } \cdot \Pi _{\nu } ,   
\end{equation}   
where $r_{\nu } := {\rm rank } \, (\Pi _{\nu } ) = {\rm dim } \, {\rm Im } (\Pi _{\nu } ) . $ This is indeed a conditional expectation onto $\mathcal{B } $, but it is no longer unique, as trivial examples show\footnote{For example, if $H = \mathbb{C }^2 $ and $\{ \Pi _{\nu } \} = \{ {\rm Id } \} $ and if $\Lambda \in L(H)^* $ is any positive linear functional such that $\Lambda ({\rm Id } ) \neq 0 $ , then $A \to \Lambda ({\rm Id } )^{-1 } \Lambda (A) \cdot {\rm Id } $ is a conditional expectation. }. It can be characterized, though,   
as the orthogonal projection onto $\mathcal{B } $ for the Hilbert-Schmidt inner product ${\rm Tr } (B^* A ) $ on $L(H) $, and the same is true for $\mathcal{E }_{\mathcal{B }' } . $ In fact, for any (unital) sub-algebra $\mathcal{B } $, the orthogonal projection $\mathcal{E } = \mathcal{E }_{\mathcal{B } } $ for the Hilbert-Schmidt inner product of $L(H) $ onto $\mathcal{B } $ is a conditional expectation in the sense of definition \ref{def:CE_C*-algebras}, since if $A \in L(H) $ and $B_1, B_2 \in \mathcal{B } $, then   
$ {\rm Tr } \bigl ( ( B_1 A B_2 - B_1 \mathcal{E } (A) B_2 ) B^* \bigr )   
= {\rm Tr } \bigl ( ( A - \mathcal{E } (A) ) (B_2  B^* B_1 ) \bigr ) = 0 $ 
for all $B \in \mathcal{B } $ and therefore $\mathcal{E } (B_1 A B_2 ) = B_1 \mathcal{E } (A ) B_2 . $   
\medskip   
   
If 
$\psi \in H $ 
is a normalized state, and (using physicists' notation) $\Pi _{\psi } = | \psi \rangle \langle \psi | $ is the orthogonal projection onto the subspace generated by $\psi $, then   
\begin{equation} \label{eq:C^*CE_rho}   
\mathcal{E } _{\mathcal{B }' } \left( \Pi _{\psi } \right) = \sum _{\nu } | \Pi _{\nu } \psi \rangle \langle \Pi _{\nu } \psi |  = \sum _{\nu } || \Pi _{\nu } \psi ||^{2 }\Pi _{\Pi _{\nu } \psi } ,   
\end{equation}   
where for any not necessarily normalized vector $\varphi $, $\Pi _{\varphi } $ is the orthogonal projection onto the subspace spanned by $\varphi $, and where we assume that $\Pi _{\nu } \psi \neq 0 $ for all $\psi . $      
In particular, if all $\Pi _{\nu } $ have rank 1, in which case they are the orthogonal projections onto the elements of an orthonormal basis $\{ e_{\nu } \} $ of $H $, then $\mathcal{E }_{\mathcal{B } } (\Pi _{\psi } ) = \sum _{\nu } |\psi _{\nu } |^2 \, | e_{\nu } \rangle \langle e_{\nu } |   
$, where $\psi _{\nu } = (\psi , e_{\nu } ) $, the components of $\psi $ with respect to the orthonormal basis,   
showing that $\mathcal{E }_{\mathcal{B } } $ sends a pure state $\psi $ onto the mixed state represented by the diagonal matrix $\rho _{\psi } = {\rm diag } (|\psi _{\nu } |^2 ) $ whose diagonal elements are the Born probabilities.    
If the $\Pi _{\nu } $ are not necessarily of rank one, and $\mathcal{E }_{\mathcal{B } } $ is defined   
defined by (\ref{eq:C^*CE2}), then since the trace of an orthogonal projection is 1,   
\begin{equation} \label{eq:C^*CE_proj_psi}   
\mathcal{E }_{\mathcal{B } } (\Pi _{\psi } ) = \sum _{\nu } r_{\nu }^{-1 } || \Pi _{\nu } \psi ||^2 \cdot \Pi _{\nu } . 
\end{equation}   

The following theorem gives the relation between the $C^* $-algebraic conditional expectation and the conditional expectations of this paper:   
   
\begin{theorem} \label{thm:rel_CEvN} Let $H $ be a finite-dimensional Hilbert space, $\mathcal{B } \subset L(H) $ a commutative sub-algebra generated by a family of projections $\Pi _{\nu } $, and let $\psi \in H $ with $\Pi _{\nu } \psi \neq 0 $ for all $\nu . $ Then, with $\mathcal{E } := \mathcal{E }_{\mathcal{B } } $ given by (\ref{eq:C^*CE2}),   
\begin{equation} \label{eq:rel_CEvN}   
\mathbb{E }_{\psi } (A | B ) = \frac{1 }{2 } \mathcal{E } \left( A \Pi _{\psi } + \Pi _{\psi } A  \right) \cdot \mathcal{E } ( \Pi _{\psi } ) ^{-1 } ,   
\end{equation}   
\end{theorem}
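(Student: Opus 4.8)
The plan is to reduce the identity to two explicit ``diagonal'' representations, one for each side of (\ref{eq:rel_CEvN}). Take $B$ to be a self-adjoint operator whose eigenprojections are exactly the $\Pi_\nu$ (for instance $B = \sum_\nu \lambda_\nu \Pi_\nu$ with pairwise distinct real $\lambda_\nu$), so that $\mathcal{B}$ is the von Neumann algebra generated by $B$ and example \ref{examples_E(A|B)}.2 applies: by (\ref{eq:CE_A|B_discrete}),
$$\mathbb{E}_\psi(A|B) = \sum_\nu \frac{{\rm Re}\,(A\psi,\Pi_\nu\psi)}{|| \Pi_\nu\psi ||^2}\,\Pi_\nu ,$$
the sum being over all $\nu$ because $\Pi_\nu\psi\neq 0$ by hypothesis; note this expression does not depend on the choice of the $\lambda_\nu$. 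On the other side, (\ref{eq:C^*CE_proj_psi}) gives $\mathcal{E}(\Pi_\psi) = \sum_\nu r_\nu^{-1}|| \Pi_\nu\psi ||^2\,\Pi_\nu$, a positive element of the commutative algebra $\mathcal{B}$ which is invertible there (all its eigenvalues $r_\nu^{-1}|| \Pi_\nu\psi ||^2$ being strictly positive), with $\mathcal{E}(\Pi_\psi)^{-1} = \sum_\nu r_\nu\, || \Pi_\nu\psi ||^{-2}\,\Pi_\nu$.

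The next step is to compute the remaining factor $\mathcal{E}(A\Pi_\psi + \Pi_\psi A)$ directly from the defining formula (\ref{eq:C^*CE2}), $\mathcal{E}(C) = \sum_\nu r_\nu^{-1}{\rm Tr}(\Pi_\nu C\Pi_\nu)\,\Pi_\nu$. Writing $\Pi_\psi = |\psi\rangle\langle\psi|$ and using cyclicity of the trace together with $\Pi_\nu^2 = \Pi_\nu = \Pi_\nu^*$, one finds ${\rm Tr}(\Pi_\nu A\Pi_\psi\Pi_\nu) = (\Pi_\nu\psi, A\psi)$ and ${\rm Tr}(\Pi_\nu\Pi_\psi A\Pi_\nu) = (A\psi,\Pi_\nu\psi)$, where Hermiticity of $A$ is used in the second identity to move $A$ to the other slot of the inner product. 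Since $(\Pi_\nu\psi, A\psi) = \overline{(A\psi,\Pi_\nu\psi)}$, adding these yields
$$\mathcal{E}(A\Pi_\psi + \Pi_\psi A) = 2\sum_\nu \frac{{\rm Re}\,(A\psi,\Pi_\nu\psi)}{r_\nu}\,\Pi_\nu .$$

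Finally I would multiply this by $\mathcal{E}(\Pi_\psi)^{-1}$: since all the operators involved lie in the commutative algebra $\mathcal{B}$ and $\Pi_\nu\Pi_\mu = \delta_{\nu\mu}\Pi_\nu$, the product collapses to a single sum, the factors $r_\nu$ cancel, and what remains is precisely the displayed expression for $\mathbb{E}_\psi(A|B)$, which establishes (\ref{eq:rel_CEvN}). I do not expect a genuine obstacle here: once the two diagonal formulas are in place the verification is bookkeeping. The only delicate point is the trace identity for $\mathcal{E}(\Pi_\psi A)$, where Hermiticity of $A$ is exactly what makes $A\Pi_\psi + \Pi_\psi A$ contribute twice the real part; for non-Hermitian $A$ the stated identity fails, consistent with $\mathbb{E}_\psi(\,\cdot\,|B)$ being defined only for symmetric operators.
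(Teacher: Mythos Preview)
Your argument is correct and essentially identical to the paper's: both reduce the identity to matching the $\Pi_\nu$-coefficients via (\ref{eq:CE_A|B_discrete}) and (\ref{eq:C^*CE2}), the only cosmetic difference being that you compute ${\rm Tr}(\Pi_\nu A\Pi_\psi\Pi_\nu)$ by cyclicity of the trace while the paper picks an explicit orthonormal basis of ${\rm Im}(\Pi_\nu)$ adapted to $\Pi_\nu\psi$. One small slip: your two individual trace values are interchanged---it is ${\rm Tr}(\Pi_\nu A\Pi_\psi\Pi_\nu)=(A\psi,\Pi_\nu\psi)$ and ${\rm Tr}(\Pi_\nu\Pi_\psi A\Pi_\nu)=(\Pi_\nu\psi,A\psi)$ (the latter using Hermiticity of $A$)---but since only their sum $2\,{\rm Re}(A\psi,\Pi_\nu\psi)$ enters, the conclusion is unaffected.
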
   
   
\begin{proof} For any (fixed) $\nu $,   
$\Pi _{\nu } (A \Pi _{\psi } ) \Pi _{\nu } (\chi ) = (\Pi _{\nu } \chi, \psi ) \Pi _{\nu } A \psi = (\chi , \Pi _{\nu } \psi ) \Pi _{\nu } A \psi = (\chi , \psi _{\nu } ) \Pi _{\nu } A \psi $, where    
$\psi _{\nu } := \Pi _{\nu } \psi $ is non-zero by assumption. Let $e_1 := || \psi _{\nu } ||^{-1 } \psi _{\nu } $ and complete to an orthonormal basis $e_1 , \ldots , e_{r_{\nu } } $ of ${\rm Im } (\Pi _{\nu } ) . $ Since $\Pi _{\nu } A \Pi _{\psi }  \Pi _{\nu }  $ sends $\psi _{\nu } ^{\perp } $ to 0, its trace equals   
$   
\left( \Pi _{\nu } A \Pi _{\psi } \Pi _{\nu } e_1 , e_1 \right) = (e_1, \psi _{\nu } ) (\Pi _{\nu } A \psi , e_1 ) $   
$= (A \psi , \Pi _{\nu } \psi ) ,   
$     
and since the trace of $\Pi _{\nu } \Pi _{\psi } A \Pi _{\nu } $ is the complex conjugate of the trace of its adjoint, it follows that   
$$   
\frac{1 }{2 } \mathcal{E }_{\mathcal{B } } (A \Pi _{\psi } + \Pi _{\psi } A ) = \sum _{\nu } r_{\nu }^{-1 } {\rm Re } (A \psi , \Pi _{\nu } \psi ) \Pi _{\nu }   
$$    
Multiplying by the inverse of (\ref{eq:C^*CE_proj_psi}), we obtain (\ref{eq:CE_A|B_discrete}) and the theorem follows.   
   
\end{proof}   
   
\begin{remarks} \rm{(i) Regarding extensions to separable Hilbert spaces $H $, if $B $ has discrete spectrum, and $\mathcal{B } $ is the von Neumann subalgebra of $L(H) $ generated by the spectral projections $\Pi _{\nu } $ of $B $, the series   
(\ref{eq:C^*CE1}), (\ref{eq:C^*CE2}) can be shown to converge in the weak operator topology to an element of $\mathcal{B }' $ and $\mathcal{B } $, respectively, and again defines a conditional expectation. Neither side of (\ref{eq:C^*CE2}) is necessarily bounded anymore, but the two are still equal as densely defined operators on the subspace of $\varphi \in H $ for which $\Pi _{\nu } \varphi \neq 0 $ for at most finitely many $\nu . $   
\medskip   
   
\noindent (ii) Yet another take on $\mathbb{E }_{\psi } (A | B ) $ is that if $H $ is finite dimensional and if $N_{\psi } := \{ A : L(H) : A \psi = 0 \} $, then $A \to \mathbb{E }_{\psi } (A | B ) $ is the orthogonal projection of $L(H) / N_{\psi } $ onto $\mathcal{B } /  \mathcal{B } \cap N_{\psi } $ for the inner product $   
{\rm Re } \, {\rm Tr } (B^* A \Pi _{\psi } ) = {\rm Re } \, (A \psi , B \psi ) $, which is the real part of the inner product associated to the state $\Pi _{\psi } $ by the GNS construction.    
Note that $\mathcal{B } \cap N_{\psi } = \{ 0 \} $ for $\psi $ as in theorem \ref{thm:rel_CEvN}, and recall that $\mathbb{E }_{\psi } ( A | B ) $ is only defined modulo elements of $\mathcal{B } \cap N_{\psi } $ otherwise.   
   
More generally, if $\mathcal{A } $ is an arbitrary $C^* $-algebra and $\omega $ is a state (positive linear functional of norm 1)on $\mathcal{A } $, we can introduce the GNS-Hilbert space $L^2 _{\omega } (\mathcal{A } ) $ which is the completion of $\mathcal{A } / N_{\omega } $ with respect to the inner product $(A, B )_{\omega } := \omega (B^* A ) $, where $N_{\omega } = \{ A \in \mathcal{A } : \omega (A^* A ) = 0 \} . $   
If $\mathcal{B } $ is a $C^* $-subalgebra, we have an orthogonal projection of $L^2 _{\omega } (\mathcal{A } ) $ onto  $L^2 _{\omega } (\mathcal{B } ) $, the completion of $\mathcal{B } / \mathcal{B } \cap N_{\psi } $ with respect to the GNS inner product. It is not clear, though, if elements of $L^2 _{\omega } (\mathcal{B } ) $ can be interpreted as (possibly unbounded) operators on $H $ if $\mathcal{A } = L(H) . $ Again, this is possible if $\mathcal{B } $ is generated by a countable set of orthogonal projections.   
}   
\end{remarks}   
   
As an application of theorem \ref{thm:rel_CEvN} we derive a formula for conditional expectation dynamics of $\mathbb{E }_{\psi } ( A | B ) $ when $\psi $ satisfies the Schr\"odinger equation, under the hypothesis that $B $ has discrete spectrum.   
   
\begin{theorem} If $\psi = \psi (t) $ satisfies $i d \psi / dt = H \psi $ with $H $ self-adjoint, then   
\begin{equation} \label{eq:Gen_CE-dynamics}   
\frac{d }{dt } \mathbb{E }_{\psi } ( A | B ) = \mathbb{E }_{\psi } \left( \, i [H, A ] \, | B \right) + 2 \mathbb{E }_{\psi } \left( \, i [H, \Pi _{\psi } ] \, | B \right) \mathbb{E }_{\psi } ( A | B ) - \frac{1 }{2 } \mathcal{E } \left( i [H ,\{ A , \Pi _{\psi } \} ]  \right) \mathcal{E } ( \Pi _{\psi } ) ^{-1 }   
\end{equation}   
\end{theorem}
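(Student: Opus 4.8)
The starting point is the explicit formula of Theorem \ref{thm:rel_CEvN}: for $\psi$ with $\Pi_\nu\psi\neq0$ for every $\nu$, writing $\mathcal{E}=\mathcal{E}_{\mathcal{B}}$ and (as in the paper) $\{C,D\}=\tfrac12(CD+DC)$,
\[
\mathbb{E}_\psi(A\mid B)=\mathcal{E}\bigl(\{A,\Pi_\psi\}\bigr)\,\mathcal{E}(\Pi_\psi)^{-1}.
\]
The plan is to differentiate this identity in $t$. The only input from the Schr\"odinger equation is that, since $\psi(t)=e^{-itH}\psi(0)$ keeps its norm, the rank-one projection $\Pi_{\psi(t)}$ obeys the von Neumann equation $\dot\Pi_\psi=-i[H,\Pi_\psi]$; everything else is algebra. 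Two structural facts make the differentiation manageable: $A$ and the linear map $\mathcal{E}$ do not depend on $t$, and the range of $\mathcal{E}$ is the \emph{commutative} algebra $\mathcal{B}$, so that $\mathcal{E}(\Pi_\psi)$, its inverse $\mathcal{E}(\Pi_\psi)^{-1}$ (which exists because all $\|\Pi_\nu\psi\|^2>0$, cf. \eqref{eq:C^*CE_proj_psi}) and every $\mathbb{E}_\psi(\cdot\mid B)$ mutually commute; consequently the product rule for $M(t)N(t)^{-1}$ reduces to the ``scalar'' form $\dot M N^{-1}-MN^{-1}\dot N N^{-1}$, with $M=\mathcal{E}(\{A,\Pi_\psi\})$ and $N=\mathcal{E}(\Pi_\psi)$.

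First I would dispose of the $\dot N$-term. From $\Pi_\psi^2=\Pi_\psi$ one gets the identity $\{[H,\Pi_\psi],\Pi_\psi\}=\tfrac12[H,\Pi_\psi]$, whence, applying Theorem \ref{thm:rel_CEvN} to the Hermitian operator $i[H,\Pi_\psi]$,
\[
\mathbb{E}_\psi\bigl(i[H,\Pi_\psi]\mid B\bigr)=\mathcal{E}\bigl(\{i[H,\Pi_\psi],\Pi_\psi\}\bigr)N^{-1}=\tfrac{i}{2}\,\mathcal{E}\bigl([H,\Pi_\psi]\bigr)N^{-1}.
\]
Since $\dot N=\mathcal{E}(\dot\Pi_\psi)=-i\,\mathcal{E}([H,\Pi_\psi])$, this yields $\dot N N^{-1}=-2\,\mathbb{E}_\psi(i[H,\Pi_\psi]\mid B)$, so $-MN^{-1}\dot N N^{-1}=2\,\mathbb{E}_\psi(i[H,\Pi_\psi]\mid B)\,\mathbb{E}_\psi(A\mid B)$, which is the middle term of \eqref{eq:Gen_CE-dynamics}. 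Next, for the $\dot M$-term I would plug $\dot\Pi_\psi=-i[H,\Pi_\psi]$ into $\dot M=\mathcal{E}(\{A,\dot\Pi_\psi\})$ and use the purely algebraic rearrangement $A[H,\Pi_\psi]+[H,\Pi_\psi]A=[H,\,A\Pi_\psi+\Pi_\psi A]-[H,A]\Pi_\psi-\Pi_\psi[H,A]$ (a one-line expansion of both sides) to write $\dot M$ as a combination of $\mathcal{E}\bigl(i[H,\{A,\Pi_\psi\}]\bigr)$ and $\mathcal{E}\bigl(\{i[H,A],\Pi_\psi\}\bigr)$. Dividing by $N$ and invoking Theorem \ref{thm:rel_CEvN} once more, now with the Hermitian operator $i[H,A]$ so that $\mathcal{E}(\{i[H,A],\Pi_\psi\})N^{-1}=\mathbb{E}_\psi(i[H,A]\mid B)$, produces the first and last terms of \eqref{eq:Gen_CE-dynamics}; adding the two contributions finishes the proof.

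The hard part is not conceptual but bookkeeping: one must track the commutators and Jordan anticommutators carefully, verify that each of the two applications of Theorem \ref{thm:rel_CEvN} is legitimate — it needs the relevant operator to be Hermitian, which $i[H,A]$ and $i[H,\Pi_\psi]$ are for $H$ self-adjoint and $A$ symmetric, and $\Pi_\nu\psi\neq0$, which is the running hypothesis and is preserved by the unitary flow — and push $N^{-1}$ through $\mathcal{B}$ (licit by commutativity) at the right moments. A minor technical point, to be handled as in Remark (i) after Theorem \ref{thm:rel_CEvN}, is that when $A$ is unbounded $\mathbb{E}_\psi(A\mid B)$ is only affiliated to $\mathcal{B}$; since $B$ has discrete spectrum all the manipulations are valid on the dense set of vectors meeting only finitely many spectral subspaces of $B$, and the identity is to be read on that domain.
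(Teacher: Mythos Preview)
Your proposal is correct and follows essentially the same route as the paper: differentiate the identity of Theorem \ref{thm:rel_CEvN} using the von Neumann equation $\dot\Pi_\psi=-i[H,\Pi_\psi]$, the projection identity $\{[H,\Pi_\psi],\Pi_\psi\}$ being proportional to $[H,\Pi_\psi]$, and the Leibniz rule for $[H,\cdot]$ on the Jordan product, then invoke Theorem \ref{thm:rel_CEvN} again for the Hermitian operators $i[H,A]$ and $i[H,\Pi_\psi]$. The only cosmetic difference is that the paper differentiates the cleared product $\mathbb{E}_\psi(A\mid B)\,\mathcal{E}(\Pi_\psi)$ and multiplies by $\mathcal{E}(\Pi_\psi)^{-1}$ at the end, whereas you apply the quotient rule to $MN^{-1}$ directly; the two are equivalent since everything lives in the commutative algebra $\mathcal{B}$.
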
   
   
\begin{proof} Writing $\mathcal{E } $ for $\mathcal{E }_{\mathbb{B } } $,   
\begin{eqnarray*}   
\frac{d }{dt } \mathcal{E } \left( \{ A , \Pi _{\psi } \}  \right) &=& - \mathcal{E } \left( \{ A , i [H, \Pi _{\psi } ] \}  \right) \\   
&=& \mathcal{E } \left( \{ i [H, A ] ,  \Pi _{\psi } \}  \right ) - \mathcal{E } \left( i [H ,\{ A , \Pi _{\psi } \} ]  \right) \\   
&=& 2 \mathbb{E } \left( \, i [H, A ] \, | B \right) \mathcal{E } (\Pi _{\psi } ) -  \mathcal{E } \left( i [H ,\{ A , \Pi _{\psi } \} ]  \right) ,   
\end{eqnarray*}   
where we used for the second line that ${\rm ad }(H ) = [H , \cdot ] $ acts as a derivative on the Jordan product: $[H , \{ A , \Pi _{\psi } \} ] = \{ [ H, A ] , \Pi _{\psi } \} + \{ [H , \Pi _{\psi } ] , A \} $, and theorem \ref{thm:rel_CEvN} for the last line. Next, since for any operator $A $ and any projection $\Pi $, $[ A, \Pi ] = \{ [A, \Pi ] , \Pi \} $,   
$$   
\frac{d }{dt } \mathcal{E } ( \Pi _{\psi } ) = - \mathcal{E } ( i [H , \Pi _{\psi } ] ) = - \mathcal{E } ( \{ i [H , \Pi _{\psi } ] , \Pi _{\psi } \} ) = - 2 \mathbb{E }_{\psi } \left( \, i [H, \Pi _{\psi } ] \, | B \right) \mathcal{E } ( \Pi _{\psi } ) , 
$$   
where the final equality is by theorem \ref{thm:rel_CEvN} again. The theorem follows by differentiating   
$$   
\mathbb{E }_{\psi } (A | B ) \mathcal{E } ( \Pi _{\psi } ) = \frac{1 }{2 } \mathcal{E } \left( \{ A , \Pi _{\psi } \}  \right) ,   
$$   
multiplying by the inverse of $\mathcal{E } ( \Pi _{\psi } ) $, assumed as always to be non-singular, and re-arranging.   
   
\end{proof}   
   
The final term of (\ref{eq:Gen_CE-dynamics}) cannot, in general, be written as a conditional expectation with respect to the state $\psi $, unless $X = i [ H, \{ A, \Pi _{\psi } \} ] $ satifies $\Pi _{\psi } ^{\perp } X \Pi _{\psi } ^{\perp } = 0 $ (which is the necessary and sufficient condition for the existence of a $Y $ such that $X = \{ Y , \Pi _{\psi } \} $), or $\Pi ^{\perp } (H \Pi A - A \Pi H ) \Pi ^{\perp } = 0 $, where $\Pi = \Pi _{\psi } . $ However, in the special case that $A = H $, $[H, \{H , \Pi \} ] = H (H \Pi + \Pi H ) - (H \Pi + \Pi H ) H = [H^2 , \Pi ] = \{ [H^2 , \Pi ] , \Pi \} $ while  the first term vanishes, and we obtain the following general formula for the time-evolution of the conditional expectation of the energy:   
\begin{equation}   
\frac{d }{dt } \mathbb{E }_{\psi } (H | B ) = - \mathbb{E }_{\psi } ( \, i [ H^2 , \Pi _{\psi } ] \, | B ) + 2 \mathbb{E } _{\psi } ( H | B ) \, \mathbb{E }_{\psi } \left( i [H, \Pi _{\psi } ] \, | B \right) ,   
\end{equation}   
a formula which is both satisfying, in that it expresses the time evolution in terms of conditional expectations of other quantum mechanical observables, and unsatisfying, since it introduces the conditional expectation of $ i [H^2, \Pi _{\psi } ] $ as a new ingredient. One can in its turn examine the dynamics of $\mathbb{E }_{\psi } ( i [H^2, \Pi _{\psi } ] \, | B ) $ but this is not seem in an obvious way be expressible in terms of other conditional expectations. This may of course be too much to ask, since $H $ and $B $ are completely general, whereas in physical applications they would be related.   
\medskip   
   
As a consistency check on our computations we verify that the right hand sides of (\ref{eq:rel_CEvN}) and (\ref{eq:Gen_CE-dynamics}) do indeed satisfy $\mathbb{E }_{\psi } \left( \mathbb{E }_{\psi } (A | B ) \right) = \mathbb{E }_{\psi } (A ) = (A \psi , \psi )$ and, respectively, its consequence (upon differentiation)   
\begin{equation} \label{eq:consistency_check_CE_dyn}   
\mathbb{E }_{\psi } \left( \frac{d }{dt } \mathbb{E }_{\psi } (A | B ) \right) = \Bigl( i \bigl [H, A - \mathbb{E }_{\psi } (A | B ) \bigr ] \psi , \psi \Bigr) ,   
\end{equation}   
For this, we note that for any operator $X $, ${\rm Tr } (\mathcal{E }_{\mathcal{B } } (X) ) = {\rm Tr } (X) $ and that, if $X = \sum _{\nu } x_{\nu } \Pi _{\nu } \in \mathcal{B } $ then ${\rm Tr } \left( X \mathcal{E }_{\mathcal{B } } (\Pi _{\psi } ) \right) = (X \psi , \psi ) = \mathbb{E }_{\psi } (X) $, as is easily verified.   
It then follows that the expectation $\mathbb{E }_{\psi } $ of the right hand side of (\ref{eq:rel_CEvN}) equals 
$   
{\rm Tr } \left( \mathbb{E }_{\psi } (A | B ) \mathcal{E }_{\mathcal{B } } (\Pi _{\psi } ) \right) = {\rm Tr } \left( \mathcal{E }_{\mathcal{B } } \left( \frac{1 }{2 } \{ A , \Pi _{\psi } \} \right) \right) = {\rm Tr }\left( \frac{1 }{2 } \{ A , \Pi _{\psi } \} \right) = {\rm Tr } (A \Pi _{\psi } ) = (A \psi , \psi ) .   
$   
As for the expectation of the right hand side of (\ref{eq:Gen_CE-dynamics}), writing $\mathcal{E } = \mathcal{E }_{\mathcal{B } } $ and $\Pi = \Pi _{\psi } $, we successively have       
$$   
\mathbb{E }_{\psi } \left( \mathbb{E }_{\psi } \left( \, i [H, A ] \, | B \right) \right) = \left( i [H, A ] \psi , \psi \right) ,   
$$   
and since $2 \mathbb{E }_{\psi } \left( \, i [H, \Pi _{\psi } ] \, | B \right) \mathbb{E }_{\psi } ( A | B ) \in \mathcal{B } $, which is commutative,   
\begin{eqnarray*}   
\mathbb{E }_{\psi } \Bigl( 2 \mathbb{E }_{\psi } \left( \, i [H, \Pi _{\psi } ] \, | B \right) \mathbb{E }_{\psi } ( A | B ) \Bigr) &=&   
{\rm Tr } \Bigl( 2 \mathbb{E }_{\psi } \left( \, i [H, \Pi _{\psi } ] \, | B \right) \mathbb{E }_{\psi } ( A | B ) \mathcal{E } (\Pi ) \Bigr) \\    
&=& {\rm Tr } \Bigl( \mathcal{E } ( \{ i [ H , \Pi ] , \Pi \} )\,  \mathbb{E }_{\psi } ( A | B ) \Bigr) \\   
&=& {\rm Tr } \Bigl( \mathcal{E } ( i [ H , \Pi ] ) \, \mathbb{E }_{\psi } ( A | B ) \Bigr) \\   
&=& {\rm Tr } \Bigl( \mathcal{E } \Bigl( i [ H , \Pi ] \, \mathbb{E }_{\psi } ( A | B ) \Bigr) \Bigr),   
\end{eqnarray*}   
since $\mathbb{E }_{\psi } ( A | B ) \in \mathcal{B } . $ The latter equals ${\rm Tr } \Bigl( i [ H , \Pi ] \mathbb{E }_{\psi } ( A | B ) \Bigr) = {\rm Tr } \Bigl( i [ H , \Pi \mathbb{E }_{\psi } ( A | B ) ] \Bigr) - {\rm Tr } \Bigl( i \Pi _{\psi } [H ,\mathbb{E }_{\psi } ( A | B ) ] \Bigr) $ $= - \left( i [H ,\mathbb{E }_{\psi } ( A | B ) ] \psi , \psi \right) $, since the trace of a commutator is 0.   
Finally, $\mathbb{E }_{\psi } \Bigl( \mathcal{E } \left( i [H ,\{ A , \Pi _{\psi } \} ]  \right) \mathcal{E } ( \Pi _{\psi } ) ^{-1 } \Bigr) = {\rm Tr } \Bigl( \mathcal{E } \left( i [H ,\{ A , \Pi _{\psi } \} ]  \right) \Bigr) = {\rm Tr } \Bigl( i [H ,\{ A , \Pi _{\psi } \} ] \Bigr) = 0 $, which finishes the verification that (\ref{eq:Gen_CE-dynamics}) satisfies (\ref{eq:consistency_check_CE_dyn}).   
   
\section{\bf Conclusions }   
   
The identification of the (operator of multiplication by the) Bohm momentum $p_B (X) $ with the conditional expectation $\mathbb{E }_{\psi } (P | X ) $, defined as the best approximation, in the mean square norm defined by the state $\psi $, of the momentum operator $P $ by a real-valued function $f(X) $ of $X $,  provides a purely kinematical definition of the Bohm momentum within the framework of orthodox quantum mechanics, independent of    
any Schr\"odinger-type dynamics    
for the wave function $\psi . $   
This conditional expectation can be defined for arbitrary pairs of self-adjoint operators 
on an abstract Hilbert space and applied to any quantum mechanical model. Given a Schr\"odinger equation, the question of the dynamics of these conditional dynamics then poses itself naturally.  
In this paper we examined the case of a non-relativistic spin $1/2 $-particle in an electromagnetic field, where we conditioned with respect to the vector of position operators $X $ plus the spin operator in a given direction. Other interesting models to study from this perspective would be the Dirac particle, and spin chains. The case of a single spin-$j $ particle also still needs to be worked out, with a special attention to its $j \to \infty $-limit which would correspond to the semi-classical limit in the original Bohm interpretation.   
   
Reformulating quantum mechanics in terms of the dynamics of the conditional expectations of a suitable set of operators does not in itself provide us with a causal model. In the example of the Pauli equation it does lead to a Madelung-type fluid-dynamical interpretation    
different from the existing ones, in terms of a two-component fluid.      
Considering conditional expectations can be seen as a useful generalization to general quantum mechanical models of the   
traditional device of writing a scalar wave function on Euclidean space in complex polar coordinates, which is at the origin of both the de Broglie-Bohm interpretation and Madelung's model.   
   
If we interpret the joint spectrum of the conditioning operators as the relevant space of beables for a causal interpretation, we also need to introduce a notion of particle motion on this space, turning this spectrum into the theatre for a suitable particle dynamics. This motion can either   
be deterministic, as 
in the original Bohm interpretation, or classically stochastic,   
like Bell's Markovian jump processes on a discrete beable spaces, 
or a combination of the two.     
Such particle dynamics should be consistent with the    
evolution equation for Born's probability measure as implied by the Schr\"odinger dynamics, a condition which is however far from determining it uniquely.    
Furthermore, if we reinterpret the conditional expectation dynamics in terms of the particle motion, be it deterministic or stochastic, new force terms need to be introduced whose physical meaning is not always clear (cf. the discussion in section 6.2). This issue does not turn up if we   
restrict to a fluid-dynamical interpretation or if we only use such Bell-type jump-processes for the kinematical part of the causal interpretation while disregarding the dynamics, parallel to what is done in Bohmian mechanics, but the question remains of how to single out a particular particle dynamics from the set of all possible ones, based on physical criteria.

\newpage   
   
\appendix   
   
\section{\bf Bohm interpretation and constrained Hamiltonian mechanics}   
   
The Hamiltonian $h = \frac{1 }{2 } (\pi - A )^2 + V + Q_{\psi } $ in theorem \ref{thm:HE_Magn}  is time-dependent since $\psi $ and therefore $Q_{\psi } $ is, and it is therefore natural to consider the associated hamiltonian flow on extended phase space $T^* (\mathbb{R }^n \times \mathbb{R } ) $ with coordinates $(x, t ; \pi , \epsilon ) $, where $\pi $, the conjugate momentum, is dual to $x $ and $\epsilon $, the energy, is the dual variable of $t . $ We take the symplectic form to be $\omega = \sum _j dx_j \wedge d\pi _j + d\epsilon \wedge dt $ (note the change in order in the last term of variable and dual variable as compared to the first $n $ terms).   
\medskip   

\begin{versionA}   
   The Hamiltonian vector field $X = X _k $ of a Hamiltonian $k(x, t, \pi , \epsilon ) $ is defined by      
$$   
\omega ( X (z) , V ) = \langle d k , V \rangle , \ \ \forall \, V \in T_z (T^* (\mathbb{R }^n \times \mathbb{R } ) )    
$$   
So if (using self-evident notation) $X $ has components $(X_x , X_t , X_{\pi } , X_{\epsilon } ) $, then for all $V = (V_x , V_t , V_{\pi } , V_{\epsilon } ) $,      
$$   
\langle X_x , V_{\pi } \rangle - \langle X_{\pi } , V_x \rangle + \langle X_{\epsilon } , V_t \rangle - \langle X_t , V_{\epsilon } \rangle = \langle d_x h , V_x \rangle + \langle d_{\pi } h , V_{\pi } \rangle + \langle d_t H , V_t \rangle + \langle d_{\epsilon } h , V_{\epsilon } \rangle ,   
$$   
which shows that   
$$   
X_x = \nabla _{\pi } h , X_{\pi } = - \nabla _{\pi } h , X_t = -   
\nabla _{\epsilon } h , X_{\epsilon } =   
\nabla _t h ,   
$$   
or, identifying vector fields with first order differential operators,   
$$   
X_k = \nabla _{\pi } k \cdot \nabla _x - \nabla _x k \cdot \nabla _{\pi } + \nabla _t k \nabla _{\epsilon } - \nabla _{\epsilon } k \nabla _t   
$$   
\bigskip   
   
\end{versionA}   
   
The Hamiltonian vector field $\Xi _k $ of a Hamiltonian $k = k(x, t, \pi , \epsilon ) $ is defined by $\omega ( \Xi _h , V ) = \langle d k , V \rangle $ for all vector fields $V $ on $T_z (T^* (\mathbb{R }^n \times \mathbb{R } ) ) $, and is given by $\Xi _h = X_k = \nabla _{\pi } k \cdot \nabla _x - \nabla _x k \cdot \nabla _{\pi } + \nabla _t k \nabla _{\epsilon } - \nabla _{\epsilon } k \nabla _t $, where we identify vector fields with linear first order linear differential operators. If we take $k(x, \pi , t , \epsilon ) = h (x, \pi , t )  - \epsilon $ then Hamilton's equations for a trajectory parametrized by the real variable $s $ are        
\begin{eqnarray*}   
&&\frac{dx _j }{ds } =   
\pi _j - A_j , \frac{d \pi _j }{ds } =   
\sum _k (\pi _k - A_k ) \nabla _{x_j } A - \nabla _{x_j } (V + Q_{\psi } ) ,   
\end{eqnarray*}   
together with $dt / ds = 1 $ and $d \epsilon / ds = \nabla _t Q_{\psi } . $   
Clearly, $t = s $ if $t(0) = 0 $, and the first two equations are Hamilton's equations with time-dependent Hamiltonian $h(x, \pi , t ) $, while the final equation expresses that $\frac{d }{dt } h(x(t), \pi (t) , t ) = \partial _t h (x(t), \pi (t) , t ) = \partial Q_{\psi } (x(t), \pi (t) , t ) . $   
\medskip   
   
Associated to the symplectic form we have the Poisson bracket of functions on extended phase space:   
$$   
\{ f, g \} := \Xi _f (g) = \nabla _{\pi }  f \cdot \nabla _x g - \nabla _x f \cdot \nabla _{\pi } g + \nabla _t f \nabla _{\epsilon } g  - \nabla _{\varepsilon } f \nabla _t g       
$$   
We compute the Poisson bracket of $k = h - \epsilon $ with $\pi _j - p_j (x, t ) $, where the $p_j $ are the components of the Bohm momentum $p = p_B $: using lemma \ref{lemma:evol_p_B} once more,   
\begin{eqnarray*}   
\{ k , \pi _j - p_j \} &=& \nabla _{\pi } h \cdot \nabla _x (\pi _j - p_j ) - \nabla _x h \cdot \nabla _{\pi } (\pi _j - p_j ) - \partial _t p \\   
&=& - \sum _k (\pi _k - A_k ) \nabla _{x_k } p_j + \sum _k (\pi _k - A_k ) \nabla _{x_j } A_k - \nabla _{x_j } (V + Q_{\psi } ) + \nabla _{x_j } \Bigl ( \frac{1 }{2 } (p - A )^2 + V + Q_{\psi } \Bigr ) \\   
&=& - \sum _k (\pi _k - A_k ) \nabla _{x_k } p_j + \sum _k (\pi _k - A_k ) \nabla _{x_j } A_k + \sum _k (p_k - A_k ) \nabla _{x_j } (p_k - A_k ) \\   
&=& - \sum _k (\pi _k - p_k ) \nabla _{x_j } p_k + \sum _k (\pi _k - A_k ) \nabla _{x_j } A_k - \sum _k (p_k - A_k ) \nabla _{x_j } A_k \\   
&=& - \sum _k (\pi _k - p_k ) \nabla _{x_j } p_k + \sum _k (\pi _k - p_k ) \nabla _{x_j } A_k \\   
&=& - \sum (\pi _k - p_k ) \nabla _{x_j } ( p_k - A_k ) \\   
&=& - J _{p - A } ^t (\pi - p ) _j  ,       
\end{eqnarray*}   
the last expression in terms of the transpose of the Jacobian with respect to  $x $ of $(x, t ) \to p(x, t ) - A(x) . $  This shows that $\{ k , \pi _j - p_j \} = 0 $ on $\Sigma := \bigcap _k \{ \pi _k - p_k = 0 \} = \{ (x, \pi , t , \epsilon ) : \pi _k = p_k (x, t ) , k = 1 , \ldots , n \} $: the $\pi _j - p_j (x, t ) $ are conserved quantities for the Hamilton flow of $k $ restricted to   
$\Sigma $ and also for the flow restricted to $\Sigma _0 := V \cap \{ k = 0 \} . $ They are also in involution:     
$$   
\{ \pi _j - p_j , \pi _k - p_k \} = - \nabla _{x_j } p_k + \nabla _{x_k } p_j = 0   
$$   
The $\pi _j - p_j $ are constants of motion for the Hamilton flow on $\Sigma _0 $, as is $k $ itself, trivially. The flow on $\Sigma $ is completely integrable, and the trajectories are found by solving the guidance equation     
$$   
\frac{dx_j }{dt } = p_j (x(t), t ) ,   
$$   
while $\epsilon (t) = h \bigl( x(t), p_B ( x(t) , t ) , t \bigr) . $ Imposing the guidance condition $\pi = p_{B } (x , t ) $ makes the Hamilton flow with Hamiltonian $h $ completely integrable.   
   
\section{\bf Two-fluid interpretation of the Pauli equation}   
   
\subsection{Continuum mechanics of mixtures   
} To explain a possible interpretation of theorems \ref{thm:Paulii_bis} and \ref{thm:Paulii_bis_b} in terms of the continuum mechanics of multi-component fluids (or materials) we need to revi balance equations for the latter see for example \cite{AC}, \cite{DP}, \cite{MS}. We will in particular follow the exposition in the recent lecture notes of 
Mal\`ek and O. Sou${\rm \check{c}}$ek \cite{MS}, since these treat fluids whose constituents carry an intrinsic angular momentum (mixtures of polar materials), a case which is particularly relevant for the Pauli model.      
   
Sub-indices $\alpha $ will distinguish the different constituents, or species, of the fluid. Let $\rho _{\alpha } = \rho _{\alpha } $ be the mass density of constituent $\alpha $, $v_{\alpha } (x, t ) $  its velocity field and $s_{\alpha } = s_{\alpha } (x, t ) $ its internal specific angular momentum or spin, with the interpretation that a volume element $dV $ of the constituent $\alpha $ contributes an amount of $( x \times \rho _{\alpha } v_{\alpha } + \rho _{\alpha } s_{\alpha } ) dV $ to the total angular momentum of the fluid. The constituents are taken to be interpenetrating, in the sense that they can simultaneously occupy a same volume element. We then have the following balance laws:   
\medskip   
   
\noindent 1. {\it Balance of mass}:   
\begin{equation} \label{eq:MCF_density}   
\partial _t \rho _{\alpha } + \nabla \cdot (\rho _{\alpha } v_{\alpha } ) = m_{\alpha } ,   
\end{equation}
with $m_{\alpha } $ representing  the (net) creation of species $\alpha $ from other species through some conversion process (for example, chemical conversion). Since no mass is created or destroyed, or supplied from the exterior, we must have   
\begin{equation} \label{eq:MCF_density_bis}   
\sum _{\alpha } m_{\alpha } = 0   
\end{equation}   
   
\noindent 2. {\it Balance of momentum}:   
\begin{equation} \label{eq:MCF_momentum}   
\rho _{\alpha } D^{\alpha } _t v_{\alpha } = \nabla \cdot T_{\alpha } + \rho _{\alpha } b_{\alpha } + I_{\alpha } ,   
\end{equation}   
where $D_t ^{\alpha } := \partial _t + v_{\alpha } \cdot \nabla $ is the material derivative, and $T_{\alpha } = (T_{\alpha ; jk } )_{j, k } $ the Cauchy stress tensor, of the $\alpha $-th constituent with divergence $\nabla \cdot T_{\alpha } $ = $\sum _k \nabla _k T_{\alpha ; jk } . $ The term $\rho _{\alpha } b_{\alpha } dV $ represents the external forces or exterior momentum supply exercised on the constituent $\alpha $ in the volume element $dV $, and $I_{\alpha } dV $ is the total internal force exercised on $\alpha $ by  the other constituents. Taking into account the $m_{\alpha } $ which is created, the total change of momentum (per volume element) of constituent $\alpha $ resulting from the interaction with the other constituents equals $m_{\alpha } v_{\alpha } + I_{\alpha } $, which again must sum to 0:   
\begin{equation} \label{eq:MCF_momentum_bis}   
\sum _{\alpha } m_{\alpha } v_{\alpha } + I_{\alpha } = 0   
\end{equation}   
N.B. The balance of momentum equation is equivalent to   
$$   
\partial _t (\rho _{\alpha } v_{\alpha } ) + \nabla \cdot (\rho _{\alpha } v_{\alpha } \otimes v_{\alpha } ) = \nabla \cdot T_{\alpha } + \rho _{\alpha } b_{\alpha } + m_{\alpha } v_{\alpha } + I_{\alpha } ,   
$$   
which brings out more clearly the contribution to the change of momentum of constituent $\alpha $ which comes from the interaction with the other constituents. Note that the modeling does not at all go into any detail concerning this interaction, which could for example be done by writing $I_{\alpha } = \sum _{\beta } I_{\alpha \leftarrow \beta } $, with $I_{\alpha \leftarrow \beta } $ the force constituent $\beta $ exercices on $\alpha $ at the location $(x, t ) . $ Similarly for the mass transfer $m_{\alpha } = \sum _{\beta \neq \alpha } m_{\alpha \leftarrow \beta } . $  The model of \cite{MS} also allows for additional internal production of linear momentum in the $\alpha $-th material other than the one resulting from interaction between constituents, represented by adding a term $\xi _{\alpha } $ to the right hand side of (\ref{eq:MCF_momentum}), which will be set to 0 here.      
\medskip   
   
Finally, we have the   
\medskip   
   
\noindent 3. {\it Balance of internal angular momentum or spin balance}: assuming a symmetric Cauchy stress tensor,   
\begin{equation} \label{eq:MCF_spin}   
\rho _{\alpha } D_t ^{\alpha } s_{\alpha } = \nabla \cdot M_{\alpha } + \tau _{\alpha } + \rho _{\alpha } L_{\alpha }    
\end{equation}   
where $\nabla \cdot M_{\alpha } = \nabla _k M_{\alpha ; jk } $ is the divergence of the so-called {\it couple-stress tensor} which is the analogue of the Cauchy stress tensor for angular momentum: $M_{\alpha } \cdot n $ is the couple, or torque, exercised on constituent $\alpha $ on a surface element with normal $n $, and $\rho _{\alpha } L_{\alpha } $ represents the external spin supply, and $\tau _{\alpha } $ is the torque the other constituents exercice on $\alpha . $ (A note of warning concerning notations: \cite{MS} use $p_{\alpha } $ for what we have called $\tau _{\alpha } $ in order to avoid confusion with the Bohm momenta in the next subsection. There is no universal agreement on notations: some authors, e.g. \cite{AC}, use $p_{\alpha } $ for what here, following \cite{MS}, is called $I_{\alpha } $; \cite{AC} also allows for the possibility that the created mass $m_{\alpha } $ propagates with a velocity other than $v_{\alpha } . $) The total change of angular momentum of constituent $\alpha $ resulting from the interaction between constituents is therefore $x \times (I_{\alpha } + m_{\alpha } v_{\alpha } ) + \tau _{\alpha } + m_{\alpha } s_{\alpha } $, which again has to sum up to 0. In particular,   
\begin{equation} \label{eq:MCF_spin_bis}   
\sum _{\alpha } \tau _{\alpha } + m_{\alpha } s_{\alpha } = 0   
\end{equation}   
   
Note that the balance equations (\ref{eq:MCF_density}), (\ref{eq:MCF_momentum}) and (\ref{eq:MCF_spin}) for the different components decouple. They are coupled through the relations (\ref{eq:MCF_density_bis}), (\ref{eq:MCF_momentum_bis}) and (\ref{eq:MCF_spin_bis}).   

\subsection{Two-fluid description of a Pauli particle}   
We can apply this formalisme to the conditional-expectations dynamics of Pauli-equation with $\alpha \in \{ - , + \} . $ We will take $A = A(x) $ time-independent, to simplify (if not, add $ - \partial _t A $ to the equations for $v_{\pm } $).   
   
Let $\langle \mathbb{S } \rangle _{\psi } : \mathbb{R }^3 \times \mathbb{R } \to \mathbb{R }^3 $ be the vector-function with components      
\begin{equation}   
\langle S_j \rangle _{\psi  (x, t ) } := \bigl ( S_j \psi (x, t ) , \psi (x, t ) \bigr)_{\mathbb{C }^2 } = \psi (x, t )^* S_j \psi (x, t ) , \ \ j = 1, 2, 3 .   
\end{equation}   
and let $\langle \mathbb{S } ' \rangle _{\psi } = \bigl ( \langle S_1 \rangle _{\psi } , \langle S_2 \rangle _{\psi } \bigr ) . $
   
\begin{theorem} \label{thm:2-fluid_Pauli} The system (\ref{eq:CE_Pauli_+}), (\ref{eq:CE_Pauli_-})  describes a spin-carrying two-component fluid with components indexed by $\alpha \in \{ - , + \} $, spin vectors $\bbs _{\pm } = (s_{1, \pm } , s_{2, \pm } , \pm \frac{1 }{2 } ) $ and inter-component mass- , momentum- and angular momentum - transfers given by       
\begin{equation} \nonumber   
\begin{array}{lllll}   
m_{\pm } = \pm (\mathbb{B } \times \langle \mathbb{S } \rangle _{\psi }  )_3 = \pm (B_1 \langle S_2 \rangle _{\psi } - B_2 \langle S_1 \rangle _{\psi }) \\   
\ \\   
I_{\pm } = - \rho _{\pm } \mathbb{B } \cdot \nabla \bbs _{\pm } = - \frac{1 }{2 } m_+ (v_+ - v_- ) \, \pm \, (p_{O, + } - p_{O, - } ) \, \mathbb{B }' \cdot \langle \mathbb{S } ' \rangle _{\psi } \\   
\ \\   
\tau _{\pm } = - m_{\pm } \bbs _{\pm }   
\pm \frac{1 }{4 } (\rho _+ + \rho _- ) e_3 \times \mathbb{B }        
\end{array}   
\end{equation}   
The stress tensor and couple-stress tensor are given by the Hessian     
\begin{equation} \nonumber   
T_{\pm } = \frac{1 }{2 } \nabla ^2 \log \rho _{\pm } ,   
\end{equation}   
respectively by   
\begin{equation} \nonumber   
2 \rho _{\pm } ^{-1 } M_{\pm } = \pm \nabla (e_3 \times \bbs _{\pm } ) .      
\end{equation}   
where $e_3 = (0, 0, 1 ) $ and the right hand side is the Jacobian of the map $x \to   
e_3 \times \bbs _{\pm} (x, t )   
. $ (See equation (\ref{eq:couple-_stress_matrix}) below for the explicit formula.)   
        
\noindent The external momentum and angular momentum-supply (body force and body couple densities) are   
\begin{equation} \label{eq:bf_Pauli}   
b_{\pm } = \mathbb{E } \, + \, v_{\pm } \times \mathbb{B } \, - \, \bbs _{\pm } \cdot \nabla \mathbb{B } ,   
\end{equation}   
where $\mathbb{E } := - \nabla V $ and $(\bbs _{\pm } \cdot \nabla \mathbb{B } )_j = s_{\pm , k } \nabla _j B_k $, and   
\begin{equation} \nonumber   
L_{\pm } = \mathbb{B } \times \bbs _{\pm }     
\end{equation}   
\end{theorem}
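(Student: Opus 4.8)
The plan is to verify the theorem by direct substitution into the balance laws of mixture theory recalled in subsection B.1. Concretely, I would insert the proposed constitutive data --- $\rho_\pm$ as in (\ref{eq:Pauli_rho}), $v_\pm := p_\pm - A$, $\bbs_\pm = (s_{1,\pm}, s_{2,\pm}, \pm\tfrac12)$, together with $m_\pm$, $I_\pm$, $\tau_\pm$, $T_\pm$, $M_\pm$, $b_\pm$ and $L_\pm$ --- into the balance of mass (\ref{eq:MCF_density}), of momentum (\ref{eq:MCF_momentum}) and of internal angular momentum (\ref{eq:MCF_spin}), with the two species $\alpha\in\{+,-\}$, and check that each of these equations collapses term by term onto one of the four equations of the $+$-channel system (\ref{eq:CE_Pauli_+}) (respectively of the $-$-channel system (\ref{eq:CE_Pauli_-})). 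I would then separately verify the three closure identities (\ref{eq:MCF_density_bis}), (\ref{eq:MCF_momentum_bis}), (\ref{eq:MCF_spin_bis}). Since the theorem restricts to a time-independent $A$, the $-\partial_t A$ terms in (\ref{eq:CE_Pauli_+})--(\ref{eq:CE_Pauli_-}) drop out. The algebraic input used repeatedly is $\rho_\pm s_{j,\pm} = \tfrac12\langle S_j\rangle_\psi$ from (\ref{eq:rel_s_rho}), the norm identity $4\rho_\pm\|\bbs_\pm\|^2 = \rho_+ + \rho_-$ from (\ref{eq:norm_s_+_squared}), and Lemma \ref{lemma:relation_p_pm} together with the linear system (\ref{eq:system_nabla_s_+}) for $\nabla s_{j,+}$ in terms of $p_- - p_+$ and $p_{O,-} - p_{O,+}$.

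The mass balance is immediate: multiplying the last line of (\ref{eq:CE_Pauli_+}) by $\rho_+$ gives $\partial_t\rho_+ + \nabla\cdot(\rho_+ v_+) = 2\rho_+(\mathbb{B}\times\bbs_+)_3$, and (\ref{eq:rel_s_rho}) identifies the right-hand side with $(\mathbb{B}\times\langle\mathbb{S}\rangle_\psi)_3 = m_+$; the $-$-channel gives $m_- = -m_+$, hence (\ref{eq:MCF_density_bis}). The spin balance is likewise essentially a rewriting. Since $e_3\times\bbs_\pm = (-s_{2,\pm}, s_{1,\pm}, 0)$, the couple-stress tensor $M_\pm = \tfrac12\rho_\pm\nabla(e_3\times\bbs_\pm)$ has divergence with components $(-\tfrac12\nabla\cdot(\rho_\pm\nabla s_{2,\pm}),\ \tfrac12\nabla\cdot(\rho_\pm\nabla s_{1,\pm}),\ 0)$, which are precisely the second-order terms in (\ref{eq:CE_Pauli_+}) and (\ref{eq:CE_Pauli_-}); the terms $-2\rho_\pm(\mathbb{B}\times\bbs_\pm)_3 s_{j,\pm} = -m_\pm s_{j,\pm}$ and $\rho_\pm(e_3\times\mathbb{B})_j\|\bbs_\pm\|^2 = \tfrac14(\rho_++\rho_-)(e_3\times\mathbb{B})_j$ assemble into $\tau_\pm$, while $\rho_\pm(\mathbb{B}\times\bbs_\pm)_j$ is $\rho_\pm L_\pm$; the third component is trivial since $s_{3,\pm} = \pm\tfrac12$ is constant (here one uses $\rho_\pm(\mathbb{B}\times\bbs_\pm)_3 = \pm\tfrac12 m_\pm$). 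Then (\ref{eq:MCF_spin_bis}) follows at once from $\tau_\pm + m_\pm\bbs_\pm = \pm\tfrac14(\rho_++\rho_-)(e_3\times\mathbb{B})$ and $m_- = -m_+$, and the explicit matrix (\ref{eq:couple-_stress_matrix}) for $M_\pm$ is the Jacobian of $x\mapsto e_3\times\bbs_\pm(x,t)$ written out. Note that $T_\pm$ being a Hessian makes the Cauchy stress symmetric, as the form (\ref{eq:MCF_spin}) of the spin balance assumes.

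The momentum balance is the substantive step. After multiplying the third line of (\ref{eq:CE_Pauli_+}) by $\rho_+$ and cancelling $-\rho_+\nabla V = \rho_+\mathbb{E}$ and $\rho_+ v_+\times\mathbb{B}$ against the corresponding terms inside $\rho_+ b_+$, the equation reduces to the two claims $\nabla\cdot T_+ = -\rho_+\nabla Q_+$ and $I_+ = -\rho_+\big(\nabla(\mathbb{B}\cdot\bbs_+) - \bbs_+\cdot\nabla\mathbb{B}\big) = -\rho_+\,\mathbb{B}\cdot\nabla\bbs_+$. The first is the Korteweg/quantum-pressure identity: expanding $Q_+ = -\Delta\sqrt{\rho_+}/(2\sqrt{\rho_+}) = -\tfrac14\Delta\log\rho_+ - \tfrac18|\nabla\log\rho_+|^2$ and differentiating, one checks that $-\rho_+\nabla_j Q_+ = \sum_k\nabla_k\big(\tfrac14\rho_+\,\nabla_j\nabla_k\log\rho_+\big)$, i.e. $-\rho_+\nabla Q_+$ is the divergence of the stated Hessian-type stress tensor. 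The second is the ``two forms of $I_\pm$'' assertion: substituting (\ref{eq:system_nabla_s_+}) for $\nabla s_{j,+}$ and using (\ref{eq:rel_s_rho}) to rewrite $B_1 s_{2,+} - B_2 s_{1,+}$ and $B_1 s_{1,+} + B_2 s_{2,+}$ through $\langle\mathbb{S}\rangle_\psi$, together with $v_- - v_+ = p_- - p_+$ and $p_{O,-} - p_{O,+} = \tfrac12\nabla\log(\rho_-/\rho_+)$, one recovers the symmetric expression $I_\pm = -\tfrac12 m_+(v_+ - v_-)\pm(p_{O,+} - p_{O,-})\,\mathbb{B}'\cdot\langle\mathbb{S}'\rangle_\psi$, whence $\sum_\alpha\big(m_\alpha v_\alpha + I_\alpha\big) = m_+(v_+ - v_-) + (I_+ + I_-) = 0$, which is (\ref{eq:MCF_momentum_bis}). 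The body couple $L_\pm = \mathbb{B}\times\bbs_\pm$ and the body force $b_\pm = \mathbb{E} + v_\pm\times\mathbb{B} - \bbs_\pm\cdot\nabla\mathbb{B}$, with $\mathbb{E} = -\nabla V$, are exactly what remains after these cancellations.

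I expect the main obstacle to be this momentum balance: establishing the quantum-pressure identity with the correct coefficient and recognising it as a genuine stress divergence, and --- more delicately --- the bookkeeping that sends each magnetic contribution of $-\rho_\pm\nabla(\mathbb{B}\cdot\bbs_\pm)$ into exactly one of the three slots $\nabla\cdot T_\pm$, $\rho_\pm b_\pm$, $I_\pm$ with no leftover curl-type terms, while at the same time the constraint-compatible symmetric form of $I_\pm$ must emerge from Lemma \ref{lemma:relation_p_pm}. The mass and spin balances, by contrast, should follow from (\ref{eq:CE_Pauli_+})--(\ref{eq:CE_Pauli_-}) by little more than multiplying through by $\rho_\pm$ and identifying the couple-stress divergence.
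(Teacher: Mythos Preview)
Your proposal is correct and follows essentially the same approach as the paper's proof: both verify the three balance laws by multiplying the equations of (\ref{eq:CE_Pauli_+})--(\ref{eq:CE_Pauli_-}) through by $\rho_\pm$, invoke the Takabayasi/Korteweg identity $-\rho_\pm\nabla Q_\pm = \nabla\cdot\big(\tfrac14\rho_\pm\nabla^2\log\rho_\pm\big)$ for the stress term, and use (\ref{eq:system_nabla_s_+}) together with (\ref{eq:rel_s_rho}) and (\ref{eq:norm_s_+_squared}) to obtain the symmetric form of $I_\pm$ and check the closure relations. The only difference is cosmetic: the paper treats the momentum balance before the spin balance, whereas you reverse the order.
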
   
   
\begin{proof} If we multiply the equations for $\rho _{\pm } $ in theorems \ref{thm:Paulii_bis} and \ref{thm:Paulii_bis_b} by $\rho _{\pm } $ and recall that $\rho _{\pm } \bbs _{\pm } = \frac{1 }{2 } \langle \mathbb{S }_{\psi } \rangle $, we find   
$$   
\partial _t \rho _{\pm } + \nabla \cdot (\rho _{\pm } v_{\pm } ) = \pm (\mathbb{B } \times \mathbb{S }_{\psi } )_2 =: m_{\pm } .      
$$   
Clearly, $m_+ + m_- = 0  . $   
   
Next, regarding the equations for $v_{\pm } $ multiplied by $\rho _{\pm } $, the "quantum force" term can, by Takabayasi \cite{Ta1}, be written as the divergence of a "quantum stress tensor":  $- \rho _{\pm } \nabla _j Q_{\pm } = \nabla _k T_{\pm ; jk } $  (Einstein summation convention), where   
$$   
T_{\pm ; j k } = \frac{1 }{4 } \rho _{\pm } \nabla _{jk } ^2 \log \rho _{\pm } : 
$$   
If we define the (specific) body force (force per unit density of mass) by (\ref{eq:bf_Pauli}),   
then   
$$   
\rho _{\pm } D_t ^{\pm } v_{\pm } = \nabla \cdot T_{\pm } + \rho _{\pm } b_{\pm } + I_{\pm } ,   
$$   
with   
$$   
I_{\pm } := - \rho _{\pm } \mathbb{B } \cdot \nabla \bbs _{\pm } = - \rho _{\pm } \left( B_1 \nabla s_{1, \pm } + B_2 \nabla s_{2, \pm } \right) ,      
$$   
i.e. $I_{\pm , j } = - B_1 \nabla _j s_{1, \pm } - B_2 \nabla _j s_{2, \pm } $, noting that $\nabla s_{3, \pm } = 0 . $   
To verify that this satisfies (\ref{eq:MCF_momentum_bis}) note that, by (\ref{eq:X_Y_s}),    
\begin{eqnarray*}   
\mathbb{B } \cdot \nabla \bbs _+ &=& B_1 \nabla s_{1, + } + B_2 \nabla s_{2, + } \\   
&=& (B_2 s_{1, + } - B_1 s_{2, + } ) X + (B_1 s_{1, + } + B_2 s_{2, + } ) Y \\   
&=& - (\mathbb{B } \times \bbs _+ )_3 X + (\mathbb{B }' \cdot \bbs _+ ' ) Y   
\end{eqnarray*}   
or   
\begin{eqnarray} \nonumber   
I_+ &=& - \rho _+ \mathbb{B } \cdot \nabla \bbs _+ = \frac{1 }{2 } (\mathbb{B } \times \langle \mathbb{S } \rangle _{\psi } )_3 X - \frac{1 }{2 } \mathbb{B }' \cdot \langle \mathbb{S } ' \rangle _{\psi } Y = \frac{1 }{2 } m_+ (v_- - v_+ ) - \frac{1 }{2 } \mathbb{B }' \cdot \langle \mathbb{S } ' \rangle _{\psi } Y , \nonumber   
\end{eqnarray}   
where $\mathbb{S }' := (S_1, S_2 ) $ and $Y =  p_{O, - } - p_{O, + } $, the difference of the osmotic momenta.   
\medskip   
   
Similarly, by (\ref{eq:X_Y_s_min}),   
\begin{versionA}   
\textcolor{blue}{\begin{eqnarray*}   
\mathbb{B } \cdot \nabla \bbs _- &=& B_1 \nabla s_{1, - } + B_2 \nabla s_{2, - } \\   
&=& - B_1 (s_{1, - } Y + s_{2, - } X ) + B_2 (s_{1, - } X - s_{2, - } Y ) \\   
&=& (B_2 s_{1, - } - B_1 s_{2, - } ) X - (B_1 s_{1, - } + B_2 s_{2, - } ) Y ,   
\end{eqnarray*}   
which after multiplication by $\rho _- $ gives }   
\end{versionA}   
$$   
I_- = - \rho _- (\mathbb{B } \cdot \nabla \bbs _- ) = \frac{1 }{2 } (\mathbb{B } \times \langle \mathbb{S }_{\psi } \rangle )_3 X + \frac{1 }{2 } \mathbb{B }'  \cdot \langle \mathbb{S } \rangle_{\psi } ' Y = \frac{1 }{2 } m_+ (v_- - v_+ ) +  \frac{1 }{2 } \mathbb{B }' \cdot \langle \mathbb{S } ' \rangle _{\psi } Y     
$$ 
which shows that $m_+ v_+ + I_+ + m_- v_- + I_- = m_+ (v_+ - v_- ) + I_+ + I_- = 0 $, as required.   
\medskip     
   
Finally, we can add  an equation for $s_{3, \pm } $ to those for $s_{1, \pm } $ , $s_{2, \pm } $ in (\ref{eq:CE_Pauli_+}), (\ref{eq:CE_Pauli_-}),         
$$  
D_t s_{3, \pm } = \mp 2 (\mathbb{B } \times \bbs )_3 s_{3, \pm }   
+ (\mathbb{B } \times \bbs _{\pm } )_3   
$$   
which is trivially satisfied since $s_{3, \pm } = \pm \frac{1 }{2 } . $ The system for $\bbs _{\pm } $ can then be written in the form (\ref{eq:MCF_spin}) with   
\begin{equation} \label{eq:couple-_stress_matrix}   
M_{\pm } = \pm \frac{1 }{2 } \rho _{\pm } \begin{pmatrix} - \nabla _1 s_{2, \pm } & - \nabla _2 s_{2, \pm } & - \nabla _3 s_{2, \pm } \\   
\nabla _1 s_{1, \pm } &\nabla _2 s_{1, \pm } &\nabla _3 s_{1, \pm } \\   
0 &0 &0   
\end{pmatrix}   
\end{equation}   
$L_{\pm } = \mathbb{B } \times \bbs _{\pm } $ and $\tau _{\pm } = \mp (\mathbb{B } \times \langle \mathbb{S } \rangle _{\psi } )_3 \, \bbs _{\pm } \pm \rho _{\pm } || \bbs _{\pm } ||^2 (- B_2 , B_1 , 0 ) = - m_{\pm } \bbs _{\pm } \pm \frac{1 }{4 }{\rm Tr } (\rho ) e_3 \times \mathbb{B } $, where we used (\ref{eq:norm_s_+_squared}). The condition (\ref{eq:MCF_spin_bis}) is trivially satisfied. The matrix in (\ref{eq:couple-_stress_matrix}) is precisely the Jacobian of $x \to (- s_{2, \pm } , + s_{1, \pm } , 0 )   
= e_3 \times \bbs _{\pm } . $ \end{proof}   
   
\begin{remark} \rm{Regarding the   
body forces   
(\ref{eq:bf_Pauli}),   
$\rho _{\pm } (\mathbb{E } + v_{\pm } \times \mathbb{B } ) $ is the Lorentz force on a volume element of the $\pm $-component of the fluid, if we interpret $\rho _{\pm } $ as a charge density of total charge 1 (normally this term would be multiplied by a charge which, like the mass, is set equal to 1 here), and a term similar $ - \rho _{\pm } \bbs _{\pm } \cdot \nabla \mathbb{B } = \langle \mathbb{S } \rangle _{\psi } \cdot \nabla \mathbb{B } $ also appears in the Heisenberg picture of the Pauli electron, in which the position operator can be shown to evolve according to   
$$   
\frac{d^2 X_j  }{dt ^2 } = \frac{1 }{2 } \left( \left( \frac{dX }{dt } \times \mathbb{B } \right)_j - \left( \mathbb{B } \times \frac{dX }{dt } \right) _j \right) + \mathbb{E }_j  - \mathbb{S } \cdot \nabla _j \mathbb{B }   
$$   
as follows from $dX / dt = i [H , X ] = P - A $ and $d^2 X / dt^2 = i [H , P - A ] $, with $H = H^P $ the Pauli Hamiltonian (\ref{eq:Pauli1}): see for example \cite{Ya}.   
It plays a r\^ole in the analysis of the Stern-Gerlach experiment, which uses an inhomogeneous magnetic field, and also  turns up naturally in a  classical model of a spinning particle in an external field whose phase space is taken to be the product symplectic manifold $T^* (\mathbb{R }^3 ) \times S_2 $, where $S_2 $ is the unit sphere in $\mathbb{R }^3 $ with the volume form as symplectic form. More explicitly, the symplectic form is $\sum dx_j \wedge d\pi _j + \sigma $ where $(x , \pi ) $ are the standard coordinates on $T^* (\mathbb{R }^3 ) = \mathbb{R }^3 \times \mathbb{R }^3 $ and $\sigma $ is the signed surface area on $S_2 $:   
$$   
\sigma _s (u, v ) := s \cdot (u \times v ) , \ \ s \in S_2 , (u, v ) \in T_s (S_2 ) .   
$$   
If we take as Hamiltonian the classical analogue of the Pauli Hamiltonian,   
$$   
h(x, \pi , \bbs ) = \frac{1 }{2 }  (\pi - A )^2 + V + B \cdot s ,   
$$   
then the associated Hamiltonian vector field is given by $(\nabla _{\pi } h , - \nabla _x h  , B \times s ) $: indeed, if $u \in T_s (S_2 ) $, then by the vector triple product identity\footnote{$a \times b  \times c =  (a \cdot c ) b - (a \cdot b ) c $, $a, b, c $ vectors in $\mathbb{R }^3 $ }     
\begin{eqnarray*}
\sigma _s (B \times s , u ) &=& s \cdot (B \times s \times u ) \\   
&=& s \cdot ( B \cdot u ) s - B \cdot s ) u ) \\   
&=& B \cdot u \\   
&=& (d_s h , u ) ,   
\end{eqnarray*}   
since $s \cdot u = 0 $ and $s \cdot s = 1 . $ Hamilton's equations are thus   
$$   
\begin{array}{lll}    
dx / dt = \pi - A \\   
\ \\   
d\pi  / dt = - (\pi - A ) \cdot \nabla _x (\pi - A ) - \nabla _x V - s \cdot \nabla _x B \\   
\ \\   
ds / dt = B \times s   
\end{array}   
$$   
or, in terms of the velocity $v := \pi - A $,   
$$   
\frac{dv }{dt } = - \nabla _x V + v \times B - s \cdot \nabla _x B ,   
$$   
where we suppose as before that $A $ does not depend on time.   
}   
\end{remark}   
   
\end{document}